\def \defvmargin{1.88cm}
\def \defhmargin{2cm}
\def \arxiv{1}
\def \todotoggle{0}
\providecommand{\todotoggle}{1} 
\crefname{equation}{Equation}{Equations}
\providecommand{\arxiv}{0}
\providecommand{\defvmargin}{1in} 
\providecommand{\defhmargin}{1in} 
\newcommand{\mytodo}[1]{\ifnum\todotoggle=1{#1}\fi}
\newcommandx{\unsure}[2][1=]{\mytodo{\todo[linecolor=blue,backgroundcolor=blue!25,bordercolor=blue,#1]{#2}}}
\newcommandx{\change}[2][1=]{\mytodo{\todo[linecolor=cyan,backgroundcolor=cyan!25,bordercolor=cyan,#1]{#2}}}
\newcommandx{\info}[2][1=]{\mytodo{\todo[linecolor=green,backgroundcolor=green!25,bordercolor=green,#1]{#2}}}
\newcommandx{\improve}[2][1=]{\mytodo{\todo[linecolor=red,backgroundcolor=red!25,bordercolor=red,#1]{#2}}}
\newcommandx{\thiswillnotshow}[2][1=]{\todo[disable,#1]{#2}}
\newcommand{\tableoftodos}{\ifnum\todotoggle=1 \listoftodos[Comments/To Do's] \fi}
\tikzstyle{box} = [shape=rectangle,draw=black,thick]
\newcolumntype{L}[1]{>{\raggedright\let\newline\\\arraybackslash\hspace{0pt}}m{#1}}
\newcolumntype{C}[1]{>{\centering\let\newline\\\arraybackslash\hspace{0pt}}m{#1}}
\newcolumntype{R}[1]{>{\raggedleft\let\newline\\\arraybackslash\hspace{0pt}}m{#1}}
\newcommand{\myignore}[1]{}
\newcommand{\TealBlue}[1]{\textcolor{TealBlue}{#1}}
\newcommand{\Peach}[1]{\textcolor{Peach}{#1}}
\newcommand{\Cyan}[1]{\textcolor{cyan}{#1}}
\newcommand{\Red}[1]{\textcolor{red}{#1}}
\newcommand{\Navy}[1]{\textcolor{Blue}{#1}}
\newcommand{\Blue}[1]{\textcolor{Blue}{#1}}
\newcommand{\Green}[1]{\textcolor{OliveGreen}{#1}}
\newcommand{\Black}[1]{\textcolor{black}{#1}}
\newcommand{\White}[1]{\textcolor{white}{#1}}
\newcommand{\Code}[1]{\Cyan{\texttt{#1}}}
\newcommand{\Gray}[1]{\textcolor{gray}{#1}}
\newcommand{\Magenta}[1]{\textcolor{magenta}{#1}}
\newcommand{\tTealBlue}[1]{\ifnum\todotoggle=1\TealBlue{#1}\else{#1}\fi}
\newcommand{\tPeach}[1]{\ifnum\todotoggle=1\Peach{#1}\else{#1}\fi}
\newcommand{\tCyan}[1]{\ifnum\todotoggle=1\Cyan{#1}\else{#1}\fi}
\newcommand{\tRed}[1]{\ifnum\todotoggle=1\Red{#1}\else{#1}\fi}
\newcommand{\tNavy}[1]{\ifnum\todotoggle=1\Navy{#1}\else{#1}\fi}
\newcommand{\tBlue}[1]{\ifnum\todotoggle=1\Blue{#1}\else{#1}\fi}
\newcommand{\tGreen}[1]{\ifnum\todotoggle=1\Green{#1}\else{#1}\fi}
\newcommand{\tBlack}[1]{\ifnum\todotoggle=1\Black{#1}\else{#1}\fi}
\newcommand{\tWhite}[1]{\ifnum\todotoggle=1\White{#1}\else{#1}\fi}
\newcommand{\tCode}[1]{\ifnum\todotoggle=1\Code{#1}\else{#1}\fi}
\newcommand{\tGray}[1]{\ifnum\todotoggle=1\Gray{#1}\else{#1}\fi}
\newcommand{\tMagenta}[1]{\ifnum\todotoggle=1\Magenta{#1}\else{#1}\fi}
\newcommand{\inbrace}[1]{\left \{ #1 \right \}}
\newcommand{\inparen}[1]{\left ( #1 \right )}
\newcommand{\insquare}[1]{\left [ #1 \right ]}
\newcommand{\inangle}[1]{\left \langle #1 \right \rangle}
\newcommand{\infork}[1]{\left \{ \begin{matrix} #1 \end{matrix} \right .}
\newcommand{\inmat}[1]{\begin{matrix} #1 \end{matrix}}
\newcommand{\inbmat}[1]{\begin{bmatrix} #1 \end{bmatrix}}
\newcommand{\inabs}[1]{\begin{vmatrix} #1 \end{vmatrix}}
\newcommand{\norm}[2]{\begin{Vmatrix} #2 \end{Vmatrix}_{#1}}
\newcommand{\set}[1]{\inbrace{#1}}
\newcommand{\setdef}[2]{\set{#1 : #2}}
\newcommand{\defeq}{\stackrel{\mathrm{def}}{=}}
\DeclareMathOperator*{\Ex}{\mathbb{E}}
\DeclareMathOperator*{\Var}{\mathsf{Var}}
\DeclareMathOperator*{\Cov}{\mathsf{Cov}}
\DeclareMathOperator*{\Stab}{\mathsf{Stab}}
\newcommand{\DSBS}{\mathrm{DSBS}}
\newcommand{\sAND}{\ \mathrm{and} \ }
\newcommand{\sIF}{\ \mathrm{if} \ }
\newcommand{\indicator}{\mathbf{1}}
\newcommand{\bit}{\set{0,1}}
\newcommand{\sbit}{\set{1, -1}}
\newcommand{\wtilde}[1]{\widetilde{#1}}
\newcommand{\what}[1]{\widehat{#1}}
\newcommand{\dTV}{d_{\mathrm{TV}}}
\newcommand{\eps}{\varepsilon}
\newcommand{\bbN}{{\mathbb N}}
\newcommand{\bbR}{{\mathbb R}}
\newcommand{\bbZ}{{\mathbb Z}}
\newcommand{\ba}{{\bm a}}
\newcommand{\bb}{{\bm b}}
\newcommand{\bc}{{\bm c}}
\newcommand{\be}{{\bm e}}
\newcommand{\bu}{{\bm u}}
\newcommand{\bv}{{\bm v}}
\newcommand{\bw}{{\bm w}}
\newcommand{\bx}{{\bm x}}
\newcommand{\by}{{\bm y}}
\newcommand{\m}{{\bm m}}
\newcommand{\bU}{\mathbf{U}}
\newcommand{\bV}{\mathbf{V}}
\newcommand{\bX}{\mathbf{X}}
\newcommand{\bY}{\mathbf{Y}}
\newcommand{\bZ}{\mathbf{Z}}
\newcommand{\bsigma}{{\boldsymbol \sigma}}
\newcommand{\balpha}{{\boldsymbol \alpha}}
\newcommand{\calA}{\mathcal{A}}
\newcommand{\calD}{\mathcal{D}}
\newcommand{\calF}{\mathcal{F}}
\newcommand{\calG}{\mathcal{G}}
\newcommand{\calN}{\mathcal{N}}
\newcommand{\calR}{\mathcal{R}}
\newcommand{\calX}{\mathcal{X}}
\newcommand{\calY}{\mathcal{Y}}
\newcommand{\calZ}{\mathcal{Z}}
\newcommand{\poly}{\mathrm{poly}}
\newcommand{\Inf}{\mathrm{Inf}}
\newcommand{\Supp}{\mathrm{Supp}}
\newcommand{\Ber}{\mathrm{Ber}}
\newtheorem{theorem}{Theorem}[section]
\newtheorem{defn}[theorem]{Definition}
\newtheorem{definition}[theorem]{Definition}
\newtheorem{lem}[theorem]{Lemma}
\newtheorem{proposition}[theorem]{Proposition}
\newtheorem{cor}[theorem]{Corollary}
\newtheorem{fact}[theorem]{Fact}
\newtheorem{problem}[theorem]{Problem}
\newtheorem{remark}[theorem]{Remark}
\newenvironment{proof-overview}{\noindent {\em Proof Overview.} \hspace*{1mm}}{\hfill $\Box$}
\newenvironment{proofof}[1]{\noindent {\em Proof of #1.} \hspace*{1mm}}{\hfill $\Box$}
\newenvironment{proof-sketch}{\noindent {\em Sketch of Proof.} \hspace*{1mm}}{\hfill $\Box$}
\newenvironment{proof-idea}{\noindent{\em Proof Idea.}\hspace*{1mm}}{\qed\bigskip}
\newenvironment{proof-attempt}{\noindent{\em Proof Attempt.}\hspace*{1mm}}{\qed\bigskip}
\definecolor{Gred}{RGB}{219, 50, 54}
\definecolor{Ggreen}{RGB}{60, 186, 84}
\definecolor{Gblue}{RGB}{72, 133, 237}
\definecolor{Gyellow}{RGB}{247, 178, 16}
\newcommand{\ml}{\mathrm{ml}}
\newcommand{\nml}{\mathrm{nml}}
\newcommand{\ANIS}{\text{\sc Gap-NIS}}
\newcommand{\fullver}[2]{\ifnum\arxiv=0 {#1} \else {#2} \fi}
\begin{document}

\title{Dimension Reduction for Polynomials over Gaussian Space\\ and Applications}

\def\authorsize{\normalsize}
\author{
	\authorsize Badih Ghazi \thanks{MIT, Supported in parts by NSF CCF-1650733 and CCF-1420692. Email: \texttt{badih@mit.edu}} \and
	\authorsize Pritish Kamath \thanks{MIT. Supported in parts by NSF CCF-1420956, CCF-1420692, CCF-1218547 and CCF-1650733. Email: \texttt{pritish@mit.edu}} \and
	\authorsize Prasad Raghavendra \thanks{UC Berkeley. Research supported by Okawa Research Grant and NSF CCF-1408643. Email: \texttt{raghavendra@berkeley.edu}}
}

\maketitle

\begin{abstract}
	In this work we introduce a new technique for reducing the dimension of the ambient space of low-degree polynomials in the Gaussian space while preserving their relative correlation structure. As applications, we address the following problems:
	
	\begin{enumerate}[(I)]
		\item {\bf Computability of the Approximately Optimal Noise Stable function over Gaussian space.} The goal here is to find a partition of $\bbR^n$ into $k$ parts, that maximizes the noise stability. An $\eps$-optimal partition is one which is within additive $\eps$ of the optimal noise stability.
		
		De, Mossel \& Neeman (CCC 2017) raised the question of an explicit (computable) bound on the dimension $n_0(\eps)$ in which we can find an $\eps$-optimal partition.
		
		De et al. already provide such an explicit bound. Using our dimension reduction technique, we are able to obtain improved explicit bounds on the dimension $n_0(\eps)$.
		
		\item {\bf Decidability of Approximate Non-Interactive Simulation of Joint Distributions.} A {\em non-interactive simulation} problem is specified by two distributions $P(x,y)$ and $Q(u,v)$: The goal is to determine if two players, Alice and Bob, that observe sequences $X^n$ and $Y^n$ respectively where $\set{(X_i, Y_i)}_{i=1}^n$ are drawn i.i.d. from $P(x,y)$ can generate pairs $U$ and $V$ respectively  (without communicating with each other) with a joint distribution that is arbitrarily close in total variation to $Q(u,v)$. Even when $P$ and $Q$ are extremely simple, it is open in several cases if $P$ can simulate $Q$.
		
		Ghazi, Kamath \& Sudan (FOCS 2016) formulated a gap problem of deciding whether there exists a non-interactive simulation protocol that comes $\eps$-close to simulating $Q$, or does every non-interactive simulation protocol remain $2\eps$-far from simulating $Q$? The main underlying challenge here is to determine an explicit (computable) upper bound on the number of samples $n_0(\eps)$ that can be drawn from $P(x,y)$ to get $\eps$-close to $Q$ (if it were possible at all).
		
		While Ghazi et al. answered the challenge in the special case where $Q$ is a joint distribution over $\bit \times \bit$, it remained open to answer the case where $Q$ is a distribution over larger alphabet, say $[k] \times [k]$ for $k > 2$. Recently De, Mossel \& Neeman (in a follow-up work), address this challenge for all $k \ge 2$. In this work, we are able to recover this result as well, with improved explicit bounds on $n_0(\eps)$.
	\end{enumerate}
	

Our technique of dimension reduction for low-degree polynomials is simple and analogous to the Johnson-Lindenstrauss lemma, and could be of independent interest.
	
\end{abstract}

\thispagestyle{empty}
\newpage
\thispagestyle{empty}
\tableofcontents
\tableoftodos
\newpage
\setcounter{page}{1}
\newcommand{\R}{\mathbb{R}}

\section{Introduction} \label{sec:intro}

\subsection{Gaussian Isoperimetry \& Noise Stability} 

Isoperimetric problems over the Gaussian space have become central in various areas of theoretical computer science such as hardness of approximation and learning. 
In its simplest and classic form, the central question in isoperimetry is to determine what is the smallest possible surface area for a body of a given volume.
Alternately, isoperimetric problems can be formulated in terms of the notion of {\em Noise stability}.

Fix a real number $\rho \in [0,1]$.
Suppose $f : \R^n \to [0,1]$ denotes the indicator function of a subset (say $\calA_f$) of the $n$-dimensional Gaussian space ($\mathbb{R}^n$ with the Gaussian measure), then its noise stability $\Stab_\rho(f)$ is the probability that two $\rho$-correlated Gaussians $\bX$, $\bY$ both fall in $\calA_f$.
Specifically, if $\calG_\rho^{\otimes n}$ denotes the distribution of $\rho$-correlated Gaussians in $n$ dimensions, that is, $\bX \sim \gamma_n$ and $(\bY | \bX) \sim (\rho \bX + \sqrt{1-\rho^2} \bZ)$ for $\bZ \sim \gamma_n$.
Then, we can equivalently define noise stability as, $\Stab_\rho(f) = \Pr_{(\bX, \bY) \sim \calG_\rho^{\otimes n}} [f(\bX) = f(\bY)]$.  
More formally, the Ornstein-Uhlenbeck operator $U_\rho$, defined for each $\rho \in [0,1]$, acts on any $f : \bbR^n \to \bbR$ as
\begin{equation*}
(U_\rho f)(\bX) = \int\limits_{\bZ \in \mathbb{R}^n} f\inparen{\rho \cdot \bX + \sqrt{1-\rho^2} \cdot \bZ} \ d\gamma_n(\bZ),
\end{equation*}
The noise stability is then defined as $\Stab_\rho(f) \defeq \Ex_{\bX \sim \gamma_n} [f(\bX) \cdot U_\rho f (\bX)]$.

In terms of noise stability, the simplest isoperimetric problem is to determine, what is the largest possible value of $\Stab_\rho(f)$ for a function $f: \R^n \to [0,1]$ with a given expectation $\Ex[f] = \mu$.
The seminal isoperimetric theorem of Borell \cite{borell1985geometric} shows that indicator function of halfspaces are the most noise-stable among all functions $f: \R^n \to [0,1]$ with a given expectation.

Borell's theorem (along with the invariance principle of \cite{mossel2005noise, mossel2010gaussianbounds}) has had fundamental applications in theoretical computer science, e.g., in the hardness of approximation for Max-Cut under the Unique Games conjecture \cite{khot2007optimal}, and in voting theory \cite{mossel2010gaussianbounds}.


In this work, we will be interested in higher analogues of Borell's theorem for partitions of the Gaussian space in to more than two subsets, or equivalently noise stability of functions $f$ taking values over $[k] = \{0,\ldots, k-1\}$.  
%
%
%
%
Towards stating these higher analogues of Borell's theorem, let's state Borell's theorem in a more general notation. Let $\Delta_k$ be the probability simplex in $\bbR^k$ (i.e. convex hull of the basis vectors $\set{\be_1, \ldots, \be_k}$). The Ornstein-Uhlenbeck operator naturally extends to vector valued functions $f : \bbR^n \to \bbR^k$ as $U_\rho f = (U_\rho f_1, \ldots, U_\rho f_k)$ (where $f = (f_1, \ldots, f_k)$). The noise stability of functions $f : \bbR^n \to \Delta_k$, is now defined as $\Stab_{\rho}(f) := \Ex_{\bX \sim \gamma_n}[\inangle{f(\bX), U_{\rho}f(\bX)}]$ where $\inangle{ \cdot, \cdot}$ denotes the inner product over $\mathbb{R}^k$. We can similarly define the noise stability of a function $f:\mathbb{R}^n \to [k]$ by embedding $[k]$ in $\Delta_k$, i.e., identifying coordinate $i \in [k]$ with the standard basis vector $\be_i \in \Delta_k$. We can now state Borell's theorem in this notation as follows:\\

\vspace*{-0.2cm}

\noindent {\bf Borell's Theorem} \cite{borell1985geometric}. {\em For any $f : \bbR^n \to \Delta_2$, consider the halfspace function $h = (h_1, h_2) : \bbR^n \to \Delta_2$ given by $h_1(\bX) = \indicator_{\set{\inangle{a, \bX} \ge b}}$ and $h_2(\bX) = 1 - h_1(\bX)$, for suitable $a \in \bbR^n$, $b \in \bbR$ such that $\Ex[f] = \Ex[h]$.\\ Then, $\Stab_\rho(f) \le \Stab_\rho(h)$.}\\

\noindent {\bf Question.} [Maximum Noise Stability (MNS)] Given a positive integer $k \geq 2$ and a tuple $\balpha \in \Delta_k$, what is the maximum noise stability of a function $f:\bbR^n \to \Delta_k$ satisfying the constraint that $\Ex[f] = \balpha$?\\

\vspace*{-0.2cm}

The above question remains open even for $k = 3$. In the particular case where $\balpha = (\frac{1}{k}, \ldots, \frac{1}{k})$, the {\em Standard Simplex Conjecture}\footnote{also referred to as the {\em Peace-Sign Conjecture} when $k =3$.} posits that the maximum noise stability is achieved by a ``standard simplex partition'' \cite{khot2007optimal, isaksson2012maximally}. Even in the special case when $k=3$ and $\balpha = (\frac{1}{3},\frac{1}{3},\frac{1}{3})$, the answer is still tantalizingly open. In fact, a suprising result of \cite{heilman2016standard} shows that when the $\alpha_i$'s are not all equal, the standard simplex partition (an appropriately shifted version thereof) \emph{does not} achieve the maximum noise stability. This indicates that the case $k \geq 3$ is fundamentally different than the case where $k = 2$. The fact that we don't understand optimal partitions for $k \ge 3$, led De, Mossel \& Neeman \cite{DMN_NoiseStabilityComputable} to ask whether the optimal partition is realized in any finite dimension. More formally:\\

\vspace*{-0.2cm}

\noindent {\bf Question.} Given $k \ge 2$, $\rho \in (0,1)$, and $\balpha \in \Delta_k$, let $S_n(\balpha)$ be the optimal noise stability of a function $f : \bbR^n \to \Delta_k$, subject to $\Ex[f] = \balpha$. Is there an $n_0$ such that $S_n(\balpha) = S_{n_0}(\balpha)$ for all $n \ge n_0$?\\

\vspace*{-0.2cm}

Even the above question remains open as of now! In this light, De, Mossel \& Neeman \cite{DMN_NoiseStabilityComputable} ask whether one can obtain an {\em explicitly computable} $n_0 = n_0(k,\rho,\eps)$ such that $S_{n_0}(\balpha) \ge S_n(\balpha) - \eps$ for all $n \in \bbN$, in other words, there exists a function $f : \bbR^{n_0} \to \Delta_k$ that comes $\eps$-close to the maximum achievable noise stability. Note that the challenge is really about $n_0$ being ``explicit'', since some $n_0(\rho,k,\eps)$ always exists, as $S_n(\balpha)$ is a converging sequence as $n \to \infty$.

Indeed, De, Mossel and Neeman obtain such an {\em explicitly computable} function. To do so, they use and build on the theory of \emph{eigenregular polynomials} that were previously studied by \cite{de2014efficient}, which in turn uses other tools such as Malliavin calculus. 

In this work, we introduce fundamentally different techniques (elaborated on shortly), thereby recovering the result of \cite{DMN_NoiseStabilityComputable}. In particular, we show the following (we use $\calR : \bbR^k \to \Delta_k$ to denote the ``rounding operator'', as in \Cref{def:rounding_op}).

\begin{theorem}[Dimension Bound on Approximately Optimal Noise Stable Function]\label{th:noise-stability-informal}
	Given parameters $k \ge 2$, $\rho \in [0,1]$ and $\eps > 0$, there exists an explicitly computable $n_0 = n_0(\rho, k, \eps)$ such that the following holds:
	
	\noindent For any $n \in \bbN$, let $f : \bbR^n \to \Delta_k$. Then, there exists a function $\wtilde{f} : \bbR^{n_0} \to \Delta_k$ such that
	\begin{enumerate}
		\item $\norm{1}{\Ex[f] - \Ex[\wtilde{f}]} \le \eps$.
		\item $\Stab_\rho(\wtilde{f}) \ge \Stab_\rho(f) - \eps$.
	\end{enumerate}
	Moreover, there exists an explicitly computable $d_0 = d_0(\rho, k, \eps)$ for which there is a degree-$d_0$ polynomial $g : \bbR^{n_0} \to \bbR^k$, such that, $\wtilde{f}(\ba) = \calR\inparen{g\inparen{\frac{\ba}{\|\ba\|_2}}}$.\\
	
	\noindent The explicit $n_0$ and $d_0$ are upper bounded as $n_0 \le \exp \inparen{\poly\inparen{k, \frac{1}{1-\rho}, \frac{1}{\eps}}}$ and $d_0 \le \poly\inparen{k, \frac{1}{1-\rho}, \frac{1}{\eps}}$.
\end{theorem}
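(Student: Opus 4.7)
The plan is to reduce the problem to a statement about low-degree polynomials and then apply a Johnson--Lindenstrauss-type dimension reduction tailored to polynomials in Gaussian space. I would proceed in four stages. First, I would smooth $f$: replace it by $U_{1-\delta}f$ for a suitably small $\delta = \delta(\eps,\rho,k)$. Since $U_\rho = U_{\rho/(1-\delta)} U_{1-\delta}$ (when legal) or by an elementary semigroup argument, one loses at most $O(\eps)$ in noise stability while gaining that each coordinate of the smoothed function has rapidly decaying Hermite tails. Second, I would truncate: let $f^{\le d}$ be the projection of the smoothed function onto Hermite polynomials of degree at most $d_0 = \poly(k,1/(1-\rho),1/\eps)$. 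By Parseval and the $(1-\delta)^{2d}$ factor picked up by $U_{1-\delta}$ on the degree-$d$ part, the tail contribution to both $\Ex[f]$ and $\Stab_\rho(f)$ is at most $\eps$, reducing the problem to one about a vector-valued Gaussian polynomial $p : \bbR^n \to \bbR^k$ of degree $\le d_0$.

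Third, I would perform the dimension reduction. Write each component of $p$ in the multilinear Hermite basis as a symmetric tensor $T^{(s)} \in (\bbR^n)^{\otimes s}$ for $s \le d_0$, and draw a random $n_0 \times n$ Gaussian matrix $A$ with $n_0 = \exp(\poly(k, 1/(1-\rho), 1/\eps))$ rescaled so that $A$ acts approximately isometrically on a suitable net. Define $g : \bbR^{n_0} \to \bbR^k$ coordinatewise by contracting each $T^{(s)}$ through $A$ (equivalently, $g(\bz)$ agrees in distribution, up to $\eps$, with $p(A^\top \bz)$ viewed as a polynomial in $\bz$). The central analytic claim, which is exactly the dimension-reduction lemma advertised in the abstract, is that for such an $A$ the joint distribution of $(g(\bZ), U_\rho g(\bZ))$ for $\bZ \sim \gamma_{n_0}$ is within $\eps$ in the relevant correlation-preserving sense to that of $(p(\bX), U_\rho p(\bX))$ for $\bX \sim \gamma_n$; this follows from hypercontractivity and JL-type concentration applied to the relevant bilinear/multilinear forms arising in the Hermite expansion. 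The normalization $\ba/\|\ba\|_2$ in the theorem statement is the natural way to insert this polynomial back in as a spherical function, using concentration of $\|\bZ\|_2$ around $\sqrt{n_0}$ to argue that the evaluation is essentially unchanged.

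Fourth, I would apply the rounding operator $\calR : \bbR^k \to \Delta_k$ to define $\wtilde f(\ba) \defeq \calR(g(\ba/\|\ba\|_2))$. The resulting function lands in $\Delta_k$, and the two bounds $\|\Ex[\wtilde f] - \Ex[f]\|_1 \le \eps$ and $\Stab_\rho(\wtilde f) \ge \Stab_\rho(f) - \eps$ follow by tracking the cumulative errors through the smoothing, truncation, and dimension reduction. The main obstacle will be the rounding step: $g$ takes values in $\bbR^k$, not in $\Delta_k$, so $\calR$ is only nonexpansive on points already close to the simplex and in general can amplify errors along the normal directions. To control this I would show that $\Ex\|g - p\|_2^2$ is small after dimension reduction (so that $g$ is close to a simplex-valued function in $L^2$), and then combine an $L^2$-to-$L^1$ comparison via hypercontractivity with a Lipschitz-type estimate on $\calR$ to bound the loss in both the mean and the noise stability by $\eps$. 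Coordinating the parameters $\delta$, $d_0$, and $n_0$ so that the accumulated error at each stage is $O(\eps)$ while $n_0$ remains explicit is the quantitative heart of the argument.
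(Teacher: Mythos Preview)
Your outline matches the paper's approach closely --- smooth, truncate to low degree, apply a JL-type dimension reduction with a random Gaussian matrix, then round --- and the paper indeed derives this theorem as an immediate corollary of the Gaussian NIS theorem (Theorem~1.2) by taking $A = B = f$ and using that the transformation is oblivious, so $\wtilde{A} = \wtilde{B}$.

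There is, however, one genuine gap: you omit the \emph{multilinearization} step. After smoothing and truncating, your polynomial $p$ has degree at most $d_0$ but is not multilinear; its Hermite expansion generally contains terms $H_\bsigma$ with some $\sigma_i \ge 2$. The paper's dimension reduction (Theorem~3.1) is stated and proved only for multilinear polynomials, and the mean/variance calculation behind it proceeds monomial-by-monomial over products $\prod_{i \in S} X_i$ indexed by \emph{subsets} $S \subseteq [N]$. The orthogonality that makes $\Ex_M F(M)$ close to $\langle A,B\rangle$ and the hypercontractivity step that controls $\Var_M F(M)$ both exploit this subset structure. Your phrase ``multilinear Hermite basis'' elides the issue --- the Hermite basis is not multilinear --- and while the symmetric-tensor language you use does accommodate repeated indices, you give no argument for how the concentration would go through for diagonal tensor entries. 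The paper inserts an explicit multilinearization lemma between truncation and dimension reduction: replace each $X_i$ by $(X_{i,1}+\cdots+X_{i,t})/\sqrt{t}$ with $t = O(k^2 d^2/\delta^2)$, expand via the Hermite multinomial identity, and discard the non-multilinear part, incurring only $O(\delta)$ in $L^2$. This blows up the ambient dimension by a factor of $t$, but that is harmless since dimension reduction follows immediately.

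A minor point: the quantity $\Ex\|g-p\|_2^2$ is ill-formed since $g$ and $p$ have different domains. What you actually need (and what the paper proves, Proposition~3.2) is that $\|\calR(g)-g\|_2$ remains small with high probability over the random matrix $M$. This follows because for any fixed $\ba$ the vector $M\ba/\|\ba\|_2$ is exactly an $N$-dimensional standard Gaussian, so $\Ex_M\|\calR(A_M)-A_M\|_2^2 = \|\calR(A)-A\|_2^2$, and then Markov gives the pointwise bound.
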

\paragraph{Remarks.}
\begin{enumerate}[(i)]
	\item While we do obtain an actual {\em explicit} bound on $n_0$ and $d_0$, we skip it in the theorem statement in order to stress on the qualitative nature of the bound. In contrast, it is mentioned in \cite{DMN_NoiseStabilityComputable} that their bound on $n_0$ is not primitive recursive and has an Ackermann-type growth (which is introduced by their application of the regularity lemma from \cite{de2014efficient}).
	\item A subtle point in our theorem is that the range of $\wtilde{f}$ is $\Delta_k$ and not $[k]$. Interestingly however, it follows from a thresholding lemma in \cite[Lemma 15 \& 16]{DMN_NoiseStabilityComputable} that any such $\wtilde{f}$ can be modified to have range $[k]$, while preserving $\Ex[\wtilde{f}]$ without decreasing the noise stability.
\end{enumerate}

\noindent The above theorem has an immediate application of showing that approximately most-stable voting schemes (among all low-influential voting schemes) can be computed efficiently. We refer the reader to \cite{DMN_NoiseStabilityComputable} for the details of this application.

In order to prove \Cref{th:noise-stability-informal}, we in fact turn to the more general and seemingly harder problem of {\em non-interactive simulation of joint distributions}.

\subsection{Non-Interactive Simulation of Joint Distributions}

Suppose that two players, Alice and Bob, observe the sequence of random variables $(x_1, \ldots, x_n)$ and $(y_1, \ldots, y_n)$ respectively, where each pair $(x_i, y_i)$ is independently drawn from a {\em source} joint distribution $\mu(x,y)$. The fundamental question here is to understand which other {\em target} joint distributions $\nu$ can Alice and Bob simulate, without communicating with each other? How many samples from $\mu$ are needed for the same, or in other words, what is the {\em simulation rate}?

This setup, referred to as the \emph{Non-Interactive Simulation (NIS) of Joint Distributions}, has been extensively studied in Information Theory, and more recently in Theoretical Computer Science. The history of this problem goes back to the classical works of G\'acs and K\"orner \cite{gacs1973common} and Wyner \cite{Wyner_CommonInfo}. Specifically, consider the distribution $\mathsf{Eq}$ over $\bit \times \bit$ where both marginals are $\Ber(1/2)$ and the bits identical with probability $1$. G\'acs and K\"orner studied the special case of this problem corresponding to the target distribution $\nu = \mathsf{Eq}$. They characterized the simulation rate in this case, showing that it is equal to what is now known as the {\em G\'acs-K\"orner common information} of $\mu$. On the other hand, Wyner studied the special case corresponding to the source distribution $\mu = \mathsf{Eq}$. He characterized the simulation rate in this case, showing that it is equal to what is now known as {\em Wyner common information} of $\nu$.

Another particularly important work was by Witsenhausen \cite{witsenhausen1975sequences} who studied the case where the target distribution $\nu = \calG_{\rho}$ is the distribution of $\rho$-correlated Gaussians. In this case, he showed that the largest correlation (i.e., largest value of $\rho$) that can be simulated is exactly the well-known ``maximal correlation coeffcient'' $\rho(\mu)$ (see \Cref{def:max_corr}) which was first introduced by Hirschfeld \cite{hirschfeld1935connection} and Gebelein \cite{gebelein1941statistische} and then studied by R{\'e}nyi \cite{renyi1959measures}. This immediately gives a polynomial time algorithm to decide if $\calG_\rho$ can be simulated from samples from a given $\mu$, since the maximal correlation coefficient $\rho(\mu)$ is efficiently computable. In the same work \cite{witsenhausen1975sequences}, Witsenhausen also considered the case where the target distribution $\nu = \mathrm{DSBS}_\rho$, which is a pair of $\rho$-correlated bits (i.e. a pair of $\pm 1$ random variables with correlation $\rho$), and gave an approach to simulate correlated bits by first simulating $\calG_{\rho}$ starting with samples from $\mu$, and then applying half-space functions to get outputs in $\set{\pm 1}$. Starting with $\mu$, such a approach simulates $\DSBS_{\rho'}$ where $\rho' = 1 - \frac{2\arccos \rho(\mu)}{\pi}$. Indeed, this is morally same as the rounding technique employed in Goemans-Williamson's approximation algorithm for MaxCut~\cite{goemans1995maxcut} 20 years later!

We will consider the modern formulation of the NIS question as defined in \cite{kamath2015non}. This formulation ignores the simulation rate, and only focuses on whether simulation is even possible or not, given infinitely many samples from $\mu$ -- that is, whether the simulation rate is non-zero or not.

\begin{defn}[Non-interactive Simulation of Joint Distributions \cite{kamath2015non}]
	Let $(\calZ \times \calZ, \mu)$ and $([k] \times [k], \nu)$ be two joint probability spaces. We say that the distribution $\nu$ can be {\em non-interactively simulated} from distribution $\mu$, if there exists a sequence of functions\footnote{we will often refer to such functions as {\em strategies} of players Alice and Bob.} $\set{A^{(n)} : \calZ^n \to [k]}_{n \in \bbN}$ and $\set{B^{(n)} : \calZ^n \to [k]}_{n \in \bbN}$ such that the joint distribution $\nu_n = (A^{(n)}(\bx), B^{(n)}(\by))_{(\bx, \by) \sim \mu^{\otimes n}}$ over $[k] \times [k]$ is such that $\lim\limits_{n \to \infty}\dTV(\nu_n, \nu) = 0$.
\end{defn}

\begin{figure}
\begin{center}
\begin{tikzpicture}[scale=0.8, transform shape]

\def \h{1.2}
\def \w{3}
\node[box, minimum width=2.25cm, minimum height=1.5cm, fill=Gred!20] (A) at (0,\h) {\Large \bf Alice};
\node[box, minimum width=2.25cm, minimum height=1.5cm, fill=Gblue!10] (B) at (0,-\h) {\Large \bf Bob};

\node[left] (X) at (-\w, \h)  {\Large $\bx$} edge[thick,-latex] (A);
\node[left] (Y) at (-\w, -\h) {\Large $\by$} edge[thick,-latex] (B);
\node at ([shift={(-0.88,0.1)}]X) {\Large $\calZ^n \ni$};
\node at ([shift={(-0.88,0.1)}]Y) {\Large $\calZ^n \ni$};

\node (U) at (\w, \h)   {\Large $u$}   edge[thick,latex-] (A);
\node (V) at (\w, -\h)  {\Large $v$}   edge[thick,latex-] (B);
\node at ([shift={(0.85,0.03)}]U) {\Large $\in [k]$};
\node at ([shift={(0.85,0)}]V) {\Large $\in [k]$};

\node at (0,2.5*\h) {private randomness} edge[thick,-latex] (A);
\node at (0,-2.5*\h) {private randomness} edge[thick,-latex] (B);

\draw[snake, segment length=2mm] (X) -- (Y) node[left,midway] {\Large $\mu^{\otimes n}\ \ $};
\draw[snake, segment length=2mm] (U) -- (V) node[right,midway] {\Large $\ \ \nu\ ?$};

\end{tikzpicture}
\caption{Non-Interactive Simulation, e.g., as studied in \cite{kamath2015non}}
\label{fig:non_int_sim}
\end{center}
\end{figure}
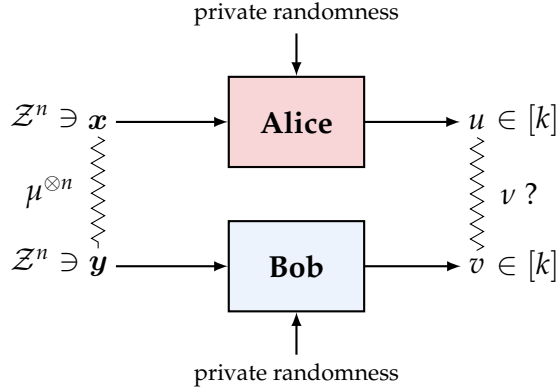

The notion of non-interactive simulation is summarized in \Cref{fig:non_int_sim}. Note that even though the definition itself doesn't give Alice and Bob access to private randomness, they can nevertheless take extra samples and use them as private randomness. We will model the use of private randomness, by allowing $A^{(n)}$ and $B^{(n)}$ to map to the simplex $\Delta_k$, instead of $[k]$. We will then interpret $A^{(n)}_i(\bx)$ (resp. $B^{(n)}_j(\bx)$) as the probability of Alice outputting $i$ (resp. Bob outputting $j$). For convenience, we will still use $(A^{(n)}(\bx), B^{(n)}(\by))_{(\bx, \by) \sim \mu^{\otimes n}}$ to denote the joint distribution generated over $[k] \times [k]$.

A central question that was left open following the work of Witsenhausen is: given distributions $\mu$ and $\nu$, can $\nu$ be non-interactively simulated from $\mu$? Can this be decided algorithmically? 
Even when $\mu$ and $\nu$ are extremely simple, e.g. $\mu$ is uniform on the triples $\{(0,0), (0,1), (1,0)\}$ and $\nu$ is the doubly symmetric binary souce $\mathrm{DSBS}_{0.49}$, it is open if $\mu$ can simulate $\nu$. This problem was formalized as a natural gap-version of the non-interactive simulation problem in a work by a subset of the authors along with Madhu Sudan \cite{GKS_NIS_decidable}. Here we state a slightly more generalized version.

\begin{problem}[$\ANIS((\calZ \times \calZ, \mu), V, k, \eps)$, cf. \cite{GKS_NIS_decidable}] \label{prob:approx_decide_full}
	Given a joint probability space $(\calZ \times \calZ, \mu)$ and another family of joint probability spaces $V$ supported over $[k] \times [k]$, and an error parameter $\eps > 0$, distinguish between the following cases:
	\begin{enumerate}[(i)]
		\item there exists $N$, and functions $A : \calZ^N \to \Delta_k$ and $B : \calZ^N \to \Delta_k$, for which the distribution $\nu' $ of $(A(\bx), B(\by))_{(\bx, \by) \sim \mu^{\otimes N}}$ is such that $\dTV(\nu', \nu) \le \eps$ for some $\nu \in V$.
		\item for all $N$ and all functions $A : \calZ^N \to \Delta_k$ and $B : \calZ^N \to \Delta_k$, the distribution $\nu'$ of $(A(\bx), B(\by))_{(\bx, \by) \sim \mu^{\otimes N}}$ is such that $\dTV(\nu', \nu) > 2 \eps$ for all $\nu \in V$. \footnote{the choice of constant $2$ is arbitrary. Indeed we could replace it by any constant greater than $1$.}
	\end{enumerate}
\end{problem}

\subsection{NIS from Gaussian Sources \& the MNS question}

We now remark on why the NIS question is a more general question than the Maximum Noise Stability question. For any distribution $\nu$, define the agreement probability of $\nu$ supported over $[k] \times [k]$ as $\mathrm{agr}(\nu) = \Pr_{(u,v) \sim \nu} [u = v]$. Recall that for $f : \bbR^n \to \Delta_k$, the stability can equivalently be defined as $\Stab_\rho(f) = \Ex_{(\bX, \bY) \sim \calG_\rho^{\otimes n}} \inangle{f(\bX), f(\bY)} = \sum_{i=1}^{k} \Ex_{(\bX, \bY) \sim \calG_\rho^{\otimes n}} [f_i(\bX), f_i(\bY)]$. Basically, the MNS question can be interpretted as asking: what is the maximum ``agreement probability'' of any distribution $\nu$ that can be non-interactively simulated from $\mu = \calG_\rho$, with both marginal distributions given by $\balpha$, and with an additional constraint that both Alice and Bob use the same strategy, i.e. $A = B = f$. Thus, to understand the MNS question, we turn to understanding which target distributions $\nu$ can be non-interactively simulated with the source distribution $\mu = \calG_{\rho}$; we will ignore, for the moment, the restriction that Alice and Bob need to use the same strategy.

Recall that implicit in \cite{witsenhausen1975sequences}, was an approach to non-interactively simulate target distributions $\nu$ over $\bit \times \bit$ from $\mu = \calG_{\rho}$ using {\em half-space} functions (using only one sample of $\mu$). Combining Witsenhausen's approach and Borell's theorem \cite{borell1985geometric} gives us an exact characterization of all distributions $\nu$ over $\bit \times \bit$ that can be simulated from $\mu = \calG_{\rho}$. Moreover, any distribution $\nu$ that can be simulated from $\calG_{\rho}$ can in fact be simulated using only one sample from $\calG_{\rho}$ (potentially in addition to private randomness).

For $k > 2$, we do not have such an exact characterization of the distributions $\nu$ that can be simulated from $\calG_{\rho}$. The challenges underlying here are the same as those underlying in the {\em Standard Simplex Conjecture}. Nevertheless, we prove a bound on the number of samples needed to come $\eps$-close to simulating $\nu$, if it were possible at all, in the form of the following theorem,

\begin{theorem}[NIS from correlated Gaussian source]\label{th:NIS_Gaussian_src}
	Given parameters $k \ge 2$, $\rho \in (0,1)$ and $\eps > 0$, there exists an explicitly computable $n_0 = n_0(\rho, k, \eps)$ such that the following holds:
	
	\noindent For any $N$, and any $A : \bbR^N \to \Delta_k$ and $B : \bbR^N \to \Delta_k$, there exist functions $\wtilde{A} : \bbR^{n_0} \to \Delta_k$ and $\wtilde{B} : \bbR^{n_0} \to \Delta_k$ such that,
	\[\dTV\inparen{(A(\bX),B(\bY))_{(\bX,\bY) \sim \calG_\rho^{\otimes N}},\ (\wtilde{A}(\ba), \wtilde{B}(\bb))_{(\ba, \bb) \sim \calG_\rho^{\otimes n_0}}} ~\le~ \eps\;.\]
	Moreover, there exists an explicitly computable $d_0 = d_0(\rho, k, \eps)$ for which there are degree-$d_0$ polynomials $A_0 : \bbR^{n_0} \to \bbR^k$ and $B_0 : \bbR^{n_0} \to \bbR^k$, such that, $\wtilde{A}(\ba) = \calR\inparen{A_0\inparen{\frac{\ba}{\|\ba\|_2}}}$ and $\wtilde{B}(\bb) = \calR\inparen{B_0\inparen{\frac{\bb}{\|\bb\|_2}}}$.\\
	The explicit $n_0$ and $d_0$ are upper bounded as $n_0 \le \exp \inparen{\poly\inparen{k, \frac{1}{1-\rho}, \frac{1}{\eps}}}$ and $d_0 \le \poly\inparen{k, \frac{1}{1-\rho}, \frac{1}{\eps}}$.\\
	
	\noindent In fact, the transformation satisfies a stronger property that there exists an ``oblivious'' randomized transformation (with a shared random seed) to go from $A$ to $\wtilde{A}$ and from $B$ to $\wtilde{B}$, which works with probability at least $1-\eps$. Since the same transformation is applied on $A$ and $B$ simultaneously with the same random seed, if $A = B$, then the transformation gives $\wtilde{A}=\wtilde{B}$ as well.
\end{theorem}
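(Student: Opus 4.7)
The plan is to pass from the arbitrary pair $(A, B)$ on $\bbR^N$ to degree-bounded polynomial pairs on $\bbR^{n_0}$ via three successive approximations, culminating in an invocation of the paper's promised Johnson--Lindenstrauss-style dimension reduction for low-degree polynomials over Gaussian space.

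\textbf{Smoothing and Hermite truncation.} First apply the Ornstein--Uhlenbeck operator $U_{1-\eta}$ (with $\eta$ a small polynomial in $\eps$, $1-\rho$, $1/k$) to both $A$ and $B$, producing $A' = U_{1-\eta}A$ and $B' = U_{1-\eta}B$. A direct coupling computation shows that the joint output distribution of $(A'(\bX), B'(\bY))$ under $\calG_\rho^{\otimes N}$ equals that of $(A(\bX''), B(\bY''))$ under $\calG_{(1-\eta)^2\rho}^{\otimes N}$; the Hermite expansion rewrites the shift in each joint moment as $\sum_\sigma \hat A_{i,\sigma} \hat B_{j,\sigma}\bigl(\rho^{|\sigma|} - ((1-\eta)^2\rho)^{|\sigma|}\bigr)$, which I bound by splitting at degree $D \approx \log(1/\eta)/\log(1/\rho)$ to obtain an $O(k^2 \eta \log(1/\eta)/(1-\rho))$ TV change. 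Next, truncate the Hermite series at degree $d_0 = O(\log(k/\eps)/\eta) = \poly(k, 1/(1-\rho), 1/\eps)$; since $U_{1-\eta}$ damps degree-$d$ coefficients by $(1-\eta)^d$, the remaining $L^2$ tail is at most $(1-\eta)^{2d_0} \le \eps^2/k$, and the Cauchy--Schwarz bound $\dTV \le \tfrac{\sqrt k}{2}\bigl(\|A' - A''\|_2 + \|B' - B''\|_2\bigr)$ costs an additional $O(\eps)$ in TV.

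\textbf{Dimension reduction and rounding.} Now $A'', B'' : \bbR^N \to \bbR^k$ are degree-$d_0$ polynomials, and this is where the paper's main dimension reduction lemma enters: via a shared-randomness linear map $\Pi : \bbR^{n_0} \to \bbR^N$ (e.g.\ a rescaled random Gaussian matrix), define $A_0(\ba) = A''(\Pi\ba)$ and $B_0(\bb) = B''(\Pi\bb)$, which remain degree $d_0$. The lemma ensures that for $(\ba, \bb) \sim \calG_\rho^{\otimes n_0}$ the joint distribution of $(A_0(\ba), B_0(\bb))$ is within $O(\eps)$ TV of the original on $\calG_\rho^{\otimes N}$, provided $n_0 \le \exp\bigl(\poly(k, 1/(1-\rho), 1/\eps)\bigr)$. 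Because $\Pi$ is applied identically to both players via the \emph{same} random seed, the transformation is oblivious and automatically yields $\wtilde A = \wtilde B$ whenever $A = B$. Finally, convert back into $\Delta_k$ by setting $\wtilde A(\ba) = \calR\bigl(A_0(\ba/\|\ba\|_2)\bigr)$ and $\wtilde B(\bb) = \calR\bigl(B_0(\bb/\|\bb\|_2)\bigr)$; Gaussian concentration of $\|\ba\|_2$ around $\sqrt{n_0}$ makes the sphere-restriction essentially free, and data processing guarantees the rounding cannot increase TV distance.

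\textbf{Main obstacle.} The dimension-reduction step is the technical heart. Standard Johnson--Lindenstrauss-style arguments trivially preserve marginal moments of low-degree polynomials via rotation invariance of the Gaussian measure, but here one must preserve the \emph{joint} low-degree cross-moments $\sum_\sigma \hat A_{i,\sigma} \hat B_{j,\sigma} \rho^{|\sigma|}$ simultaneously across both players under $\rho$-correlated inputs. The two analyses are coupled through the single projection $\Pi$, and bounding the second moment of each joint cross-moment over $\Pi$---carefully accounting for the tensor structure of degree-$d_0$ Hermite polynomials---is what both forces the exponential dependence of $n_0$ and motivates passing to polynomials on the unit sphere for cleaner concentration estimates.
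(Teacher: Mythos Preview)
Your high-level pipeline (smooth $\to$ truncate $\to$ dimension-reduce $\to$ round) matches the paper, but two essential ingredients are missing or misstated.

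First, you omit the \emph{multilinearization} step. The paper's dimension-reduction lemma (Theorem~3.1) is proved only for \emph{multilinear} degree-$d$ polynomials: the mean and variance bounds in Appendix~A expand $A,B$ over subsets $S\subseteq[N]$ via $\bX_S=\prod_{i\in S}X_i$, and the key combinatorial step (bounding $|\overline{S}|\le d^2 D^{d-1}$ in Lemma~A.4) uses that each variable appears at most once. After Hermite truncation your $A'',B''$ are degree-$d_0$ but not multilinear, so the lemma you plan to invoke does not apply. The paper inserts Lemma~5.1 here, replacing each $X_i$ by $(X_1^{(i)}+\cdots+X_t^{(i)})/\sqrt{t}$ and then truncating the non-multilinear part; this blows up the ambient dimension by a factor $t=O(k^2 d^2/\delta^2)$ but yields genuinely multilinear inputs.

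Second, your rounding argument is not right. The dimension reduction does \emph{not} give total-variation closeness of $(A_0(\ba),B_0(\bb))$ to $(A''(\bX),B''(\bY))$ as $\bbR^k\times\bbR^k$-valued distributions; it only preserves the $k^2$ scalar correlations $\langle A_i,B_j\rangle_{\calG_\rho}$. So ``data processing under $\calR$'' is not available. What you need instead is that $A_0,B_0$ remain close to $\Delta_k$ in expected $\ell_2^2$ distance, so that rounding perturbs each correlation by $O(\sqrt\delta)$ (Lemma~2.4). The paper gets this via Proposition~3.2, which uses the crucial fact that for any \emph{fixed} $\ba$, the vector $M\ba/\|\ba\|_2$ is \emph{exactly} distributed as $\gamma_N$; this is precisely why the normalization is by $\|\ba\|_2$ rather than $\sqrt{D}$, and why it is not ``essentially free'' but rather the mechanism that transports $\|\calR(A)-A\|_2$ through the reduction.
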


\noindent It is now easy to see that \Cref{th:noise-stability-informal} follows simply as a corollary of the above theorem, when applied on functions $A = B = f$.

By an ``oblivious'' randomized transformation, we mean that to obtain $\wtilde{A}$ from $A$, we only need to know $A$ and a shared random seed $M$. That is, the transformation doesn't use the knowledge of $B$. Similarly, to obtain $\wtilde{B}$ from $B$, we only need to know $B$ and the same shared random seed $M$. This hinted at in \cite{GKS_NIS_decidable} as a potential barrier for showing decidability of $\ANIS$ when $k \ge 2$. Indeed our transformation overcomes this barrier and we elaborate more on this in \Cref{subsec:proof_outline}.

\subsection{NIS from Arbitrary Discrete Sources}
In prior work \cite{GKS_NIS_decidable}, it was shown that $\ANIS$ for discrete distributions $\mu$ and $\nu$ is decidable, in the special case where $k = 2$. This was done by introducing a framework, which reduced the problem to only understanding $\ANIS$ for the special case where $\mu = \calG_\rho$. Indeed, the reason why the case of $k=2$ was easier was precisely because combining Witsenhausen \cite{witsenhausen1975sequences} and Borell's theorem \cite{borell1985geometric}, gives an exact characterization of the distributions over $[2] \times [2]$ that can be simulated from $\calG_\rho$. The lack of understanding of the distributions over $[k] \times [k]$ that can be simulated from $\calG_\rho$ was suggested in \cite{GKS_NIS_decidable} as a barrier for extending their result to $k > 2$.

Following up on \cite{DMN_NoiseStabilityComputable}, De, Mossel \& Neeman were able to extend their techniques to show the decidability of $\ANIS$ for all $k \ge 2$ \cite{DMN_NIS_decidable}. To do so, they follow the same high level framework of using a Regularity Lemma and Invariance Principle introduced in \cite{GKS_NIS_decidable}. In addition, they build on the tools developed in \cite{DMN_NoiseStabilityComputable} along with a new smoothing argument inspired by boosting procedures in learning theory and potential function arguments in complexity theory and additive combinatorics.

In this work, we are able to recover this result using a fundamentally different and more elementary approach, by only using \Cref{th:NIS_Gaussian_src} along with the framework introduced in \cite{GKS_NIS_decidable}, thereby showing decidability of $\ANIS$ for all $k \ge 2$. The central underlying theorem to prove decidability of $\ANIS$ is the following.

\begin{theorem}[NIS from Discrete Sources]\label{th:non-int-sim}
	Let $(\calZ \times \calZ, \mu)$ be a joint probability space. Given parameters $k \ge 2$ and $\eps > 0$, there exists an explicitly computable $n_0 = n_0(\mu, k, \eps)$ such that the following holds:
	
	Let $A : \calZ^N \to \Delta_k$ and $B : \calZ^N \to \Delta_k$. Then there exist functions $\wtilde{A} : \calZ^{n_0} \to \Delta_k$ and $\wtilde{B} : \calZ^{n_0} \to \Delta_k$ such that,
	\[\dTV\inparen{(A(\bx),B(\by))_{(\bx,\by)\sim \mu^{\otimes N}}, \ (\wtilde{A}(\ba), \wtilde{B}(\bb))_{\ba, \bb \sim \mu^{\otimes n_0}}} \le \eps\;.\]
	In particular, $n_0$ is an explicit function upper bounded by $\exp\inparen{\poly\inparen{k, \frac{1}{\eps}, \frac{1}{1-\rho_0}, \log\inparen{\frac{1}{\alpha}}}}$, where $\alpha = \alpha(\mu)$ is the smallest atom in $\mu$ and $\rho_0 = \rho(\mu)$ is the maximal correlation of $\mu$.
\end{theorem}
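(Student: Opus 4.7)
The plan is to reduce the discrete-source problem to the Gaussian-source problem and then invoke \Cref{th:NIS_Gaussian_src}, following the modular framework from \cite{GKS_NIS_decidable} (which handled $k=2$ by using Borell's theorem as the Gaussian black-box). Because \Cref{th:NIS_Gaussian_src} is precisely the ``oblivious, dimension-reducing'' Gaussian statement for general $k$ that the framework needs, plugging it in should yield the result almost for free.

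First I would apply a regularity lemma to the multilinear expansions of $A$ and $B$ in an orthonormal basis of $L^2(\calZ, \mu)$. This identifies a set $J \subset [N]$ of ``heavy'' coordinates with $|J| \le J_0(\mu, k, \eps)$, such that after conditioning on $(\bx|_J, \by|_J)$, the remaining strategies are well-approximated by degree-$d_0(\mu, k, \eps)$ low-influence polynomials on $[N] \setminus J$. For each junta restriction I then invoke the multidimensional invariance principle of \cite{mossel2010gaussianbounds} to replace the $\mu$-samples on $[N] \setminus J$ by $\rho_0$-correlated Gaussians, with $\rho_0 = \rho(\mu)$, producing strategies $A', B' : \bbR^{N-|J|} \to \Delta_k$. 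A combination of mild mollification (by $U_{1-\delta}$) and the rounding operator $\calR$ lifts the low-degree moment match guaranteed by the invariance principle to a total-variation approximation of the joint output distribution on $[k] \times [k]$.

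Having reached the Gaussian-source world, apply \Cref{th:NIS_Gaussian_src} to the pair $(A', B')$. This compresses the Gaussian dimension from $N - |J|$ down to $n_0^G = n_0(\rho_0, k, \eps)$ while changing the joint output distribution by at most $\eps$ in total variation. The ``oblivious transformation with shared random seed'' property emphasized in the statement of \Cref{th:NIS_Gaussian_src} is essential here: both players apply the compression simultaneously without communication, and the shared-seed guarantee is exactly what lets the pair $(\wtilde{A}', \wtilde{B}')$ inherit a joint (not merely marginal) closeness bound. Finally, reverse the invariance step to pull the Gaussian strategies back to strategies $\wtilde{A}, \wtilde{B}$ on $\calZ^{|J| + n_0^G}$ and reassemble over the $|\supp(\mu)|^{2|J|}$ junta restrictions. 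Tracking the three stages yields $n_0 = |J| + n_0^G \le \exp(\poly(k, 1/\eps, 1/(1-\rho_0), \log(1/\alpha)))$, as claimed.

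The main obstacle is \emph{not} the Gaussian dimension reduction --- that is delegated wholesale to \Cref{th:NIS_Gaussian_src} --- but rather the TV-preservation in the invariance step, since the multidimensional invariance principle natively approximates only smooth test functions of bounded derivative. Boosting this to a total-variation guarantee between $[k]$-valued outputs requires a careful balance between the mollification parameter $\delta$, the degree cutoff $d_0$, and the rounding loss, in the spirit of \cite{GKS_NIS_decidable}. A further subtlety is using the maximal correlation $\rho_0 < 1$ rather than the trivial bound $\rho = 1$ in the invariance error; this is what ultimately produces the $1/(1-\rho_0)$ dependence in the final $n_0$, and it is what makes the bound finite whenever $\mu$ possesses no G\'acs--K\"orner common randomness.
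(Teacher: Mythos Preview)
Your outline is the paper's proof almost verbatim: smooth, apply a regularity lemma to isolate a constant-size head $H$, pass to $\rho_0$-correlated Gaussians on each good restriction via invariance, compress with \Cref{th:NIS_Gaussian_src}, then come back. There is, however, one genuine step you are skipping. You write ``reverse the invariance step to pull the Gaussian strategies back to $\calZ^{|J|+n_0^G}$,'' but the output of \Cref{th:NIS_Gaussian_src} is $\calR\!\big(A_0(\ba/\|\ba\|_2)\big)$ --- a rounding of a composition, not a low-degree multilinear low-influence polynomial --- and the invariance principle simply does not apply to such objects. The paper fixes this by inserting two additional transformations before the backward invariance: a second smoothing over Gaussian space (\Cref{lem:smoothing_main}) to recover a degree-$d$ approximation, followed by multilinearization (\Cref{lem:multilin_main}) to force multilinearity \emph{and} push every influence below the threshold $\tau$. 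Only then can \Cref{lem:our_invariance} be run in reverse. This is also why the paper's final count is $n_0 = h + D\cdot t$ rather than your $|J|+n_0^G$: the extra factor $t$ is the blow-up from the multilinearization step.

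A second, smaller point: your explanation of why obliviousness is essential (``both players apply the compression simultaneously without communication'') is not the real crux. The issue is that there is a \emph{different} restricted pair $\big((A')^{\bx_H},(B')^{\by_H}\big)$ for every head assignment, and the final protocol must use a \emph{single} seed $M$ that succeeds simultaneously on a $(1-O(\delta))$-fraction of them (under $\mu^{\otimes h}$). Obliviousness plus the high-probability guarantee yields this by averaging: for each good restriction the reduction works with probability $1-O(\delta)$ over $M$, so some fixed $M$ is ``lucky'' for most restrictions. The paper tracks this ``good and lucky'' bookkeeping explicitly when reassembling $\wtilde A,\wtilde B$ over the head coordinates.
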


\noindent The decidability of $\ANIS$ follows quite easily from the above theorem. The main idea is, once we know a bound on the number of samples of $\mu$ that are needed to get $\eps$-close to $\nu$ (if it were possible at all), we can brute force over all possible strategies of Alice and Bob. For completeness, we provide a proof of the following theorem in \Cref{sec:decidability}.

\begin{theorem}[Decidability of $\ANIS$] \label{thm:decidability}
	Given a joint probability space $(\calZ \times \calZ, \mu)$ and a family of joint probability spaces $V$ supported over $[k] \times [k]$, and an error parameter $\eps > 0$, there exists an algorithm that runs in time $T(\mu, k, \eps)$ (which is an explicitly computable function), and decides $\ANIS((\calZ \times \calZ, \mu), V, k, \eps)$.
	
	\noindent The run time $T(\mu, k, \eps)$ is upper bounded by $\exp\exp\exp\inparen{\poly\inparen{k, \ \frac{1}{\eps}, \ \frac{1}{1-\rho_0}, \ \log\inparen{\frac{1}{\alpha}}}}$,
	where $\rho_0 = \rho(\mu)$ is the maximal correlation of $(\calZ \times \calZ, \mu)$ and $\alpha \defeq \alpha(\mu)$ is the minimum non-zero probability in $\mu$.
\end{theorem}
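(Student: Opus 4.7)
\begin{proof-overview}
The plan is to apply \Cref{th:non-int-sim} to reduce $\ANIS$ to a bounded brute-force search over strategy pairs on at most $n_0$ samples, where $n_0$ depends only on $(\mu, k, \eps)$ and is explicitly given. Concretely, I would first set $\eps' = \eps/10$ (any sufficiently small constant multiple of $\eps$ works) and invoke \Cref{th:non-int-sim} to obtain $n_0 = n_0(\mu, k, \eps')$. In the YES case of $\ANIS$, where strategies $A, B$ on some $N$ samples produce $\nu'$ with $\dTV(\nu', \nu) \le \eps$ for some $\nu \in V$, the theorem guarantees strategies $\wtilde{A}, \wtilde{B}$ on $n_0$ samples producing $\nu''$ with $\dTV(\nu'', \nu') \le \eps'$, hence $\dTV(\nu'', \nu) \le \eps + \eps'$ by the triangle inequality. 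So it suffices to search only over strategies on exactly $n_0$ samples, with a slightly relaxed acceptance threshold.

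Next, I would discretize. A strategy $A : \calZ^{n_0} \to \Delta_k$ is specified by $|\calZ|^{n_0}$ points in the probability simplex $\Delta_k$, and snapping each point to the nearest vertex of an $\eps'$-net of $\Delta_k$ changes the induced output distribution over $[k] \times [k]$ by at most $O(\eps')$ in total variation. Since $\Delta_k$ admits an $\eps'$-net of size $(O(k/\eps'))^k$, the total number of discretized strategy pairs is bounded by $\exp\!\inparen{O(k |\calZ|^{n_0} \log(k/\eps'))}$; using $\log |\calZ| \le \log(1/\alpha)$ together with the bound $n_0 \le \exp\inparen{\poly\inparen{k, 1/\eps, 1/(1-\rho_0), \log(1/\alpha)}}$ from \Cref{th:non-int-sim}, this works out to $\exp\exp\exp\inparen{\poly\inparen{k, 1/\eps, 1/(1-\rho_0), \log(1/\alpha)}}$, matching the claimed runtime. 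For each enumerated discretized pair, I compute the output distribution $\nu'$ over $[k] \times [k]$ exactly by summing over the $|\calZ|^{n_0}$ inputs (a product of a polynomial-size-per-cell table and a doubly-exponential outer loop), and accept iff $\min_{\nu \in V} \dTV(\nu', \nu) \le \eps + 2\eps'$.

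For correctness, in case (i) the discretized approximation of the $\wtilde{A}, \wtilde{B}$ guaranteed by \Cref{th:non-int-sim} yields a distribution within $\eps + 2\eps' < 3\eps/2$ of some $\nu \in V$, so the algorithm accepts; in case (ii), every strategy (including any on $n_0$ samples) produces output at total variation $> 2\eps$ from every $\nu \in V$, and discretization perturbs this by only $O(\eps')$, so the acceptance threshold is not met. The main subtlety I foresee is handling the minimization $\min_{\nu \in V} \dTV(\nu', \nu)$ for a general family $V$: to stay within the claimed time bound one needs $V$ to be specified sufficiently explicitly, e.g., $V$ is a finite list of distributions, or $V$ is a compact set described by an explicit parametrization so that it too can be netted. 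In the common case where $V$ is a singleton $\set{\nu}$ the computation is immediate, and more generally one simply composes the net over $V$ with the net over strategies, affecting only the polynomial inside the triple exponential.
\end{proof-overview}
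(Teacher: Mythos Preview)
Your proposal is correct and follows essentially the same approach as the paper: invoke \Cref{th:non-int-sim} to bound the number of samples by $n_0$, discretize $\Delta_k$ via an $\eps'$-net, and brute-force over all discretized strategy pairs on $\calZ^{n_0}$, with the triple-exponential runtime arising from $|C|^{O(|\calZ|^{n_0})}$. The only differences are cosmetic (you use $\eps' = \eps/10$ where the paper uses $\eps/3$), and your explicit remark about how to handle a general family $V$ is a point the paper leaves implicit.
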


\subsection{Dimension Reduction for Low-Degree Polynomials over Gaussian Space} \label{subsec:dimred}

We now describe the main technique of ``{\em dimension reduction for low-degree polynomials}'' that we introduce in this work, which could be of independent interest. 

Let's start with \Cref{th:NIS_Gaussian_src}, and see how we might even begin proving it. We are given two vector-valued functions\footnote{recall that we think of a vector valued function $A : \bbR^n \to \bbR^k$ as a tuple $(A_1, \ldots, A_k)$, where each $A_i : \bbR^n \to \bbR$} $A: \bbR^n \to \Delta_k$ and $B: \bbR^n \to \Delta_k$. We wish to reduce the dimension $n$ of the Gaussian space on which $A$ and $B$ act, while preserving the joint distribution $(A(\bX), B(\bY))_{(\bX, \bY) \sim \calG_\rho^{\otimes n}}$ over $[k] \times [k]$. Observe that $\Ex_{(\bX, \bY) \sim \calG_\rho^{\otimes n}} [A_i(\bX) \cdot B_j(\bY)]$ is the probability of the event [{\em Alice outputs $i$ and Bob outputs $j$}]. We succinctly write this expectation as $\inangle{A_i, B_j}_{\calG_\rho^{\otimes n}}$. In order to approximately preserve the joint distribution $(A(\bX), B(\bY))_{(\bX, \bY) \sim \calG_\rho^{\otimes n}}$, it suffices to approximately preserve $\inangle{A_i, B_j}_{\calG_\rho^{\otimes n}}$ for each $(i, j) \in [k] \times [k]$.

Thus, to prove \Cref{th:NIS_Gaussian_src}, we wish to find an explicit constant $n_0 = n_0(\rho, k, \eps)$, along with functions $\wtilde{A}: \bbR^{n_0} \to \Delta_k$ and $\wtilde{B}: \bbR^{n_0} \to \Delta_k$ such that $\inangle{\wtilde{A}_i, \wtilde{B}_j}_{\calG_{\rho}^{\otimes n_0}} \approx_\eps \inangle{A_i, B_j}_{\calG_{\rho}^{\otimes n}}$. Achieving this directly is highly unclear, since a priori, we have no structural information about $A$ and $B$! To get around this, we show that it is possible to first do a structural transformation on $A$ and $B$ to make them low-degree multilinear polynomials (see \Cref{subsec:analysis_prelim} for formal definitions) -- such transformations are described in \Cref{sec:smoothing,sec:non-multilinear}. This however creates a new problem that the transformed $A$ and $B$ no longer map to $\Delta_k$. Nevertheless, we show that after the said transformations we still have that the outputs of $A$ and $B$ are close to $\Delta_k$ in expected $\ell_2^2$ distance (for now, let's informally denote this by $\mathrm{dist}(A, \Delta_k)$). We show that this ensures that ``rounding'' the outputs of $A$ and $B$ to $\Delta_k$ will approximately preserve the correlations $\inangle{A_i, B_j}_{\calG_\rho^{\otimes n}}$.

We are now able to revise our objective as follows: Given two (vector-valued) degree-$d$ polynomials $A: \bbR^n \to \bbR^k$ and $B: \bbR^n \to \bbR^k$, does there exist an explicitly computable function $D$ of $k$, $d$, and $\delta$, along with polynomials $\wtilde{A}: \bbR^D \to \bbR^k$ and $\wtilde{B}: \bbR^D \to \bbR^k$ that $\delta$-approximately preserves (i) the correlation $\inangle{A_i, B_j}_{\calG_{\rho}^{\otimes n}}$ for all $(i,j) \in [k] \times [k]$ and (ii) closeness of the outputs of $A$ and $B$ to $\Delta_k$ in expected $\ell_2^2$ distance, that is, $\mathrm{dist}(A, \Delta_k)$ and $\mathrm{dist}(B, \Delta_k)$.

We introduce a very simple and natural dimension-reduction procedure for low-degree multilinear polynomials over Gaussian space. Specifically, for an i.i.d. sequence of $\rho$-correlated Gaussians  $(a_1,b_1)$, $(a_2,b_2)$, $\cdots$, $(a_{D},b_{D})$, we set
\begin{equation}\label{eqn:dim-red-substitution}
\wtilde{A}(\ba) := A\inparen{\frac{M \ba}{\norm{2}{\ba}}} \qquad \sAND \qquad \wtilde{B}(\bb) := B\inparen{\frac{M \bb}{\norm{2}{\bb}}}
\end{equation}
where $M$ is a randomly sampled $N \times D$ matrix with i.i.d. standard Gaussian entries. Our main dimension-reduction theorem for low-degree polynomials is stated as follows,

\begin{theorem}[Dimension Reduction Over Gaussian Space]\label{th:dim_red_Gauss_space}
	Given parameters $k \ge 2$, $d \in \calZ_{\ge 0}$, $\rho \in (0,1)$ and $\delta > 0$, there exists an \emph{explicitly computable} $D = D(d, k, \delta)$, such that the following holds:
	
	Let $A: \bbR^N \to \bbR^k$ and $B: \bbR^N \to \bbR^k$ be degree-$d$ multilinear polynomials. Additionally, suppose that $\mathrm{dist}(A, \Delta_k), \mathrm{dist}(B, \Delta_k) \le \delta$. Consider the functions $\wtilde{A} : \bbR^D \to \bbR^k$ and $\wtilde{B} : \bbR^D \to \bbR^k$ as defined in \Cref{eqn:dim-red-substitution}. With {\em probability} at least $1-3\delta$ over the random choice of $M \sim \calN(0,1)^{N \times D}$, the following will hold:
	\begin{itemize}
		\item For every $i, j \in [k]$ : $\inabs{\inangle{A_i, B_j}_{\calG_\rho^{\otimes N}} ~-~ \inangle{\wtilde{A}_i, \wtilde{B}_j}_{\calG_\rho^{\otimes D}}} ~\le~ \delta$.
		\item $\mathrm{dist}(\wtilde{A}, \Delta_k) \le \sqrt{\delta}$ and $\mathrm{dist}(\wtilde{B}, \Delta_k) \le \sqrt{\delta}$.
	\end{itemize}
	In particular, $D$ is an explicit function upper bounded by $\exp\inparen{\poly\inparen{d, \log k, \log(\frac{1}{\delta})}}$.
\end{theorem}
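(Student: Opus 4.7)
The plan is to treat each of the quantities to be controlled (the $k^2$ correlations $\inangle{\wtilde{A}_i,\wtilde{B}_j}_{\calG_\rho^{\otimes D}}$ and the two expected squared distances $\mathrm{dist}(\wtilde{A},\Delta_k),\mathrm{dist}(\wtilde{B},\Delta_k)$) as a random variable in the Gaussian matrix $M$, to argue that its $M$-expectation matches the target value and its $M$-variance is small, and then to apply Markov/Chebyshev together with a union bound. The overall failure probability of $3\delta$ then follows once $D$ is large enough that the inner products of the random unit vectors $\bu=\ba/\|\ba\|_2$, $\bv=\bb/\|\bb\|_2$ (for $(\ba,\bb)\sim\calG_\rho^{\otimes D}$) concentrate sharply enough around their limiting values to control all $d$-th order moments.

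The distance-to-simplex bound is the simpler of the two. For any fixed unit vector $\bu\in S^{D-1}$, the vector $M\bu$ is a standard Gaussian in $\bbR^N$ (its covariance is $\|\bu\|_2^2\,I_N=I_N$), so marginally over $(M,\ba)$ the point $M\ba/\|\ba\|_2$ is distributed exactly as $\gamma_N$. Fubini then gives $\Ex_M[\mathrm{dist}(\wtilde{A},\Delta_k)]=\mathrm{dist}(A,\Delta_k)\le \delta$, and Markov's inequality yields $\Pr_M[\mathrm{dist}(\wtilde{A},\Delta_k)\ge \sqrt{\delta}]\le \sqrt{\delta}$, and analogously for $\wtilde{B}$.

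For the correlations, the key identity comes from conditioning on $(\bu,\bv)$: the pair $(M\bu,M\bv)\in\bbR^N\times\bbR^N$ is then coordinate-wise i.i.d.\ as a standard $\inangle{\bu,\bv}$-correlated Gaussian pair. Expanding $A_i,B_j$ in the multilinear monomial basis $\bx^S=\prod_{p\in S}x_p$ (which is orthonormal under $\gamma_N$ and diagonalizes the Ornstein--Uhlenbeck operator with eigenvalue $\rho^{|S|}$) and swapping expectations gives
\[
\Ex_M\bigl[\inangle{\wtilde{A}_i,\wtilde{B}_j}_{\calG_\rho^{\otimes D}}\bigr]\;=\;\sum_{S:\,|S|\le d}\what{A}_i(S)\what{B}_j(S)\,\Ex_{\bu,\bv}\bigl[\inangle{\bu,\bv}^{|S|}\bigr].
\]
Since $\inangle{\bu,\bv}$ concentrates at $\rho$ with variance $O(1/D)$, standard moment calculations give $|\Ex[\inangle{\bu,\bv}^T]-\rho^T|$ small for each $T\le d$; combined with the a-priori bound $\sum_S|\what{A}_i(S)\what{B}_j(S)|\le \tfrac12(\|A_i\|_2^2+\|B_j\|_2^2)=O(1)$ (which follows from $\mathrm{dist}(A,\Delta_k)\le \delta$ and $\|\Delta_k\|_2\le 1$), the $M$-expectation matches the target $\inangle{A_i,B_j}_{\calG_\rho^{\otimes N}}=\sum_S\what{A}_i(S)\what{B}_j(S)\rho^{|S|}$ up to additive $\delta/2$ once $D$ is sufficiently large.

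The main obstacle is bounding the variance of $\inangle{\wtilde{A}_i,\wtilde{B}_j}_{\calG_\rho^{\otimes D}}$ over $M$. Expanding the square using an independent copy $(\ba',\bb')\sim\calG_\rho^{\otimes D}$ and applying the conditional-Gaussian identity to the four-tuple $(M\bu,M\bv,M\bu',M\bv')$ writes $\Ex_M[\inangle{\wtilde{A}_i,\wtilde{B}_j}^2]$ as an expectation over $(\bu,\bv,\bu',\bv')$ of a polynomial in the six pairwise inner products. The four cross-copy inner products $\inangle{\bu,\bu'},\inangle{\bv,\bv'},\inangle{\bu,\bv'},\inangle{\bu',\bv}$ each concentrate (near $0$ or $\rho$) with deviations that are small powers of $1/D$; tracking the contributions monomial-by-monomial using orthogonality of the multilinear basis shows that the leading term cancels against $(\Ex_M[\inangle{\wtilde{A}_i,\wtilde{B}_j}])^2$, leaving a residual of the form $\poly(d)/D^{\Omega(1/\poly(d))}$ (the exponential dependence of $D$ on $d$ arises from needing to control moments up to order $d$ of these cross inner products). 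Chebyshev then gives per-pair failure probability $\delta/k^2$ for $D\le \exp(\poly(d,\log k,\log(1/\delta)))$, and a union bound over the $k^2$ correlation pairs together with the two distance estimates completes the argument.
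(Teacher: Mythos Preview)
Your outline is essentially the paper's own approach: Proposition~3.2 handles the distance-to-simplex via the same ``$M\bu\sim\gamma_N$ for fixed unit $\bu$, then Markov'' argument you give, and Theorem~3.1 handles the correlations via exactly the mean/variance\,+\,Chebyshev\,+\,union-bound strategy you describe. Your mean calculation, conditioning on $(\bu,\bv)$ to get $\Ex_M[A_i(M\bu)B_j(M\bv)]=\sum_S\what{A}_i(S)\what{B}_j(S)\langle\bu,\bv\rangle^{|S|}$ and then using concentration of $\langle\bu,\bv\rangle$ near $\rho$, is the same as the paper's Lemma~A.4.

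The one place your sketch is too thin is the variance bound, which is where all the real work lies. You correctly observe that after expanding with an independent copy and integrating out $M$, only tuples $(S,T,S',T')$ with $S\triangle T\triangle S'\triangle T'=\emptyset$ survive, and that the ``leading'' contribution (cross-copy inner products set to $0$) collapses to $S=T,\ S'=T'$ and cancels against $(\Ex_M F)^2$. But the residual is a sum over \emph{all} tuples with empty symmetric difference, weighted by $\what{A}_S\what{B}_T\what{A}_{S'}\what{B}_{T'}$, and ``orthogonality'' alone does not bound this sum: the number of such tuples can be huge in $N$. The paper's Lemma~A.5 bounds each per-tuple covariance by $\delta/9^d$ (this is where the delicate Lemma~A.1 about products of normalized-Gaussian inner products enters), and then closes the argument with a genuinely separate step you do not mention: it rewrites
\[
\sum_{S\triangle T\triangle S'\triangle T'=\emptyset}\bigl|\what{A}_S\what{B}_T\what{A}_{S'}\what{B}_{T'}\bigr|
\;=\;\Ex_{x\sim\{\pm1\}^N}\bigl[f(x)^2g(x)^2\bigr]
\]
for the Boolean polynomials $f,g$ with the same coefficients, and applies degree-$d$ hypercontractivity to get the $9^d\|A\|_2^2\|B\|_2^2$ bound. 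Without this (or an equivalent fourth-moment control), the ``residual of the form $\poly(d)/D^{\Omega(1/\poly(d))}$'' does not follow from your sketch.

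A minor bookkeeping point: with your interpretation of $\mathrm{dist}$ as expected \emph{squared} distance, Markov gives failure probability $\sqrt{\delta}$ per side, so your total failure budget is $\delta+2\sqrt{\delta}$ rather than $3\delta$. The paper takes $\mathrm{dist}$ to be the $L^2$ norm $\|\calR(A)-A\|_2\le\delta$, applies Markov to its square, and gets $\delta$ per side, which is how the stated $1-3\delta$ arises.
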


It is clear from the construction of $\wtilde{A}$ and $\wtilde{B}$ that this theorem is giving us an ``oblivious'' randomized transformation, as also remarked in \Cref{th:NIS_Gaussian_src}. The proof of \Cref{th:dim_red_Gauss_space} is obtained by combining \Cref{thm:dim-reduction} and \Cref{prop:dim-red-close-to-simplex} in \Cref{sec:dim-reduction}.

\paragraph{Analogy with the Johnson-Lindenstrauss lemma.} We will now highlight a few parallels between \Cref{th:dim_red_Gauss_space} and the proof of the Johnson-Lindenstrauss lemma. Suppose we have two unit vectors $u, v \in \bbR^n$. We wish to obtain a randomized transformation $\Psi_s : \bbR^n \to \bbR^D$ (where $s$ is the random seed), such that, $\inangle{u,v} \approx_\delta \inangle{\Psi_s(u), \Psi_s(v)}$ holds with probability $1-\delta$, over the randomness of seed $s$; note that here $\inangle{\cdot, \cdot}$ denotes the inner product over $\bbR^n$ and $\bbR^D$ respectively. Indeed, there is such a transformation, namely, $\Psi_M(u) = \frac{M \cdot u}{\sqrt{D}}$ where $M \sim \calN(0,1)^{\otimes D \times n}$. Let $F(M) = \inangle{\Psi_M(u), \Psi_M(v)}$. Such a transformation satisfies that,
\[ \Ex_{M} [F(M)] = \inangle{u,v} \quad \sAND \quad \Var_M \inparen{F(M)} ~=~ \frac{\inangle{u,v}^2 + \|u\|_2^2 \|v\|_2^2}{D} ~\le~ \frac{2}{D},\]
where we use that $u$ and $v$ are unit vectors. Thus, if we choose $D = 2/\delta^3$, then we can make the variance smaller than $\delta^3$. Thereby, using Chebyshev's inequality, we get that with probability $1-\delta$, it holds that $|\inangle{\Psi_M(u), \Psi_M(v)} - \inangle{u,v}| \le \delta$. Thus, we have a {\em oblivious} randomized dimension reduction that reduced the dimension of any pair of unit vectors to $O(1/\delta^3)$, independent of $n$. Note that, instead of using Chebyshev's inequality, we could use a much sharper concentration bound to show that $D = O(1/\eps^2 \log (1/\delta))$ suffices to preserve the inner product upto an additive $\eps$, with probability $1-\delta$. However, we described the Chebyshev's inequality version as this is what our proof of \Cref{th:dim_red_Gauss_space} does at a high level.

The problem we are facing, although morally similar, is technically entirely different. We want the reduce the dimension of the domain of a pair of polynomials $A : \bbR^n \to \bbR$ and $B : \bbR^n \to \bbR$. For the moment, consider the transformation such that $\Psi_M(A) : \bbR^D \to \bbR$ is given by $A(M\ba / \sqrt{D})$. Similarly, $\Psi_M(B) = B(M\bb/\sqrt{D})$. Our proof of \Cref{th:dim_red_Gauss_space} proceeds along similar lines as the above proof of Johnson-Lindenstrauss Lemma, that is, by considering $F(M) = \Ex_{(\ba,\bb)\sim \mu^{\otimes D}} [\Psi_M(A)(\ba) \cdot \Psi_M(B)(\bb)]$, and proving bounds on $\Ex_M [F(M)]$ and $\Var(F(M))$. This turns out to be quite delicate! We don't even have $\Ex_M[F(M)] = \Ex_{(\bx,\by)\sim \mu^{\otimes n}} [A(\bx) \cdot B(\by)]$. What we do show is that,
\[ \inabs{\Ex_M[F(M)] ~-~ \Ex_{(\bx,\by)\sim \mu^{\otimes n}} [A(\bx) \cdot B(\by)]} ~\le~ o_D(1) \quad \sAND \quad \Var_M(F(M)) ~\le~ o_D(1)\;, \]
that is, both are decreasing functions in $D$ (with some dependence on $d$, which is the degree of $A$ and $B$). Interestingly however, in the case of $d=1$, it turns out that $F(M)$ is in fact an unbiased estimator. Indeed, this is not a coincidence! We leave it to the interested reader to figure out that in the case of $d=1$, our tranformation is in fact identical to the above described Johnson-Lindenstrauss transformation on the $n$-dimensional space of Hermite coefficients of $A$ and $B$.

Our actual transformation is slightly different, namely $\Psi_M(A) = A(M\ba/\|\ba\|_2)$. This is to ensure the second point in \Cref{th:dim_red_Gauss_space}, about preserving the closeness of the output of $A$ to $\Delta_k$. The proof gets a little more technical, but this is intuitively similar to the above transformation since $\|\ba\|_2$ is tightly concentrated around $\sqrt{D}$.

\subsection{Related Work and Other Motivations}

\textbf{Information Theory.} We point out that several previous works in information theory and theoretical computer science study ``non-interactive simulation'' type of questions. For instance, the non-interactive simulation of joint distributions question studied in this work is a generalization of the Non-Interactive Correlation Distillation problem\footnote{which considered the problem of maximizing agreement on a single bit, in various multi-party settings.} which was studied by \cite{mossel2004coin, mossel2006non}. Moreover, recent works in the information theory community \cite{kamath2015non,beigi2015duality} derive analytical tools (based on hypercontractivity and the so-called \emph{strong data processing constant}) to prove impossibility results for NIS. While these results provide stronger bounds for some sources, they do not give the optimal bounds in general. Finally, the ``non-interactive agreement distillation'' problem studied by \cite{bogdanov2011extracting} can also be viewed as a particular case of the NIS setup.

\textbf{Randomness in Computation.} As discussed in \cite{GKS_NIS_decidable}, one motivation for studying NIS problems stems from the study of the role of randomness in distributed computing. Specifically, recent works in
cryptography \cite{ahlswede1993common,ahlswede1998common,brassard1994secret,csiszar2000common,maurer1993secret, renner2005simple}, quantum computing \cite{nielsen1999conditions, chitambar2008tripartite, delgosha2014impossibility} and communication
complexity \cite{bavarian2014role,CGMS_ISR,ghazi2015communication} study how the ability to solve various computational tasks gets affected by weakening the source of randomness. In this context, it is very natural to ask how well can a source of randomness be transformed into another (more structured) one, which is precisely the setup of non-interactive simulation.

\textbf{Tensor Power problems.} Another motivation comes from the fact that NIS belongs to the class of {\em tensor power} problems, which have been very challenging to analyze. In such questions, the goal is to understand how some combinatorial quantity behaves in terms of the dimensionality of the problem as the dimension tends to infinity. A famous instance of such problems is the \emph{Shannon capacity of a graph} \cite{shannon1956zero,lovasz1979shannon} where the aim is to understand how the independence number of the power of a graph behaves in terms of the exponent. The question of showing the computability of the Shannon capacity remains open to this day \cite{alon2006shannon}. Other examples of such open problems (which are more closely related to NIS) arise in the problems of {\em local state transformation of quantum entanglement} \cite{Beigi_QuantumMaximalCorrelation, DelgoshaBeigi_QuantumHypercontractivity}, the problem of computing the {\em entangled value of a 2-prover 1-round game} (see for, e.g., \cite{KKMTV_HardnessApprox_EntangledGames} and also the open problems \cite{openQIwiki_all_bell_inequalities}) and the problem of computing the {\em amortized value of parallel repetitions of a 2-prover 1-round game} \cite{Raz_ParallelRep,Holenstein_ParallelRep,Rao_ParallelRep,Raz_CounterexampleStrongParallelRep,BHHRRS_RoundingParallelRepUG}. Yet another example of a tensor-power problem is the task of computing the {\em amortized communication complexity of a communication problem}. Braverman-Rao \cite{braverman2011information} showed that this equals the information complexity of the communication problem, however the computability of information complexity was shown only recently \cite{braverman2015information}.

We hope that the recent progress on the Non-Interactive Simulation problem would stimulate progress on these other notable tensor-power problems. A concrete question is whether the techniques used for NIS (regularity lemma, invariance principle, etc.) can be translated to any of the above mentioned setups.

\subsection{Comparisons with recent works of De, Mossel \& Neeman}

Our main theorems \Cref{th:noise-stability-informal} and \Cref{th:non-int-sim} were proved by De, Mossel \& Neeman \cite{DMN_NoiseStabilityComputable, DMN_NIS_decidable} (only qualitatively, although with worse explicit bounds on $n_0$). Our work was inspired by \cite{DMN_NoiseStabilityComputable,DMN_NIS_decidable} through several high-level ideas, such as the use of smoothing and multilinearization transformations (although these tranformations are technically different in our case, so we state and prove our lemmas from scratch). However, the authors hold the opinion that the key insight into ``{\em why dimension reduction is possible}'' provided by the works of De Mossel \& Neeman and the current work are fundamentally different. 
%

The key insight for dimension reduction in the work of De, Mossel \& Neeman is (quoting \cite{DMN_NoiseStabilityComputable}): {\em ``the fact that a collection of homogeneous polynomials can be replaced by polynomials in bounded dimensions is a tensor analogue of the fact that for any $k$ vectors in $\bbR^n$, there exist $k$ vectors in $\bbR^k$ with the same matrix of inner products''}. In our work, the main intuition for the dimension reduction is an ``oblivious'' dimension reduction technique, much similar to the Johnson-Lindenstrauss Lemma, as described in \Cref{subsec:dimred}.

	While inspired by the works of De, Mossel \& Neeman, we believe that our technique offers a fresh perspective on why it is possible to obtain explicit bounds for the above problems. Moreover, our bound on $n_0(\eps)$ in both cases is ``merely'' exponential in the parameters of the problem, whereas, the bounds in the works of De et al. are not primitive recursive and have an Ackermann-type growth.

\subsection{Outline of Proofs} \label{subsec:proof_outline}
\paragraph{Dimension Reduction for Polynomials.} We being with describing the main ideas behind \Cref{th:dim_red_Gauss_space}. For polynomials $A : \bbR^N \to \bbR$ and $B : \bbR^N \to \bbR$, we apply a second-moment argument to the random variable
\[F(M) := \inangle{A_M, B_M}_{\calG_{\rho}^{\otimes D}},\]
where $A_M$ and $B_M$ are the substitutions in \Cref{eqn:dim-red-substitution}. Specifically, we compute bounds on the mean and variance of $F(M)$ (\Cref{lem:mean_var_bound}); the key point being that these bounds go to $0$ as $D$ gets larger. Thus, we can get an explicit bound on how large $D$ needs to be in order to make the mean deviation and the variance small. Assuming \Cref{lem:mean_var_bound}, we simply apply Chebyshev's inequality in order to upper-bound the probability that this random variable significantly deviates from its mean.

\Cref{lem:mean_var_bound} is the most technical and novel part of this work, and is proved in \Cref{sec:app_dim_red}. To prove these mean and variance bounds, we first analyze the case when $A$ and $B$ are multi-linear monomials (\Cref{subsec:bds_for_multilinear_monom}). Then, via a simple application of hypercontractivity, we use the monomial calculations in order to obtain bounds on the mean and variance for general multilinear polynomials (\Cref{subec:bds_for_gen_multilinear_polys}).

\paragraph{NIS from Gaussian Sources.} We now turn to the proof of \Cref{th:NIS_Gaussian_src} (which immediately implies \Cref{th:noise-stability-informal}). We are given $A : \bbR^N \to \Delta_k$ and $B : \bbR^N \to \Delta_k$, and we want to construct functions $\wtilde{A} : \bbR^{n_0} \to \Delta_k$ and $\wtilde{B} : \bbR^{n_0} \to \Delta_k$ such that the joint distribution of $(\wtilde{A}, \wtilde{B})$ is close (in total variation distance) to that of $(A,B)$.

For any $i, j \in [k]$, we consider the quantity $\Ex_{XY} [A_i(X) \cdot B_j(Y)]$ which is the probability of the event that [{\em Alice outputs $i$ and Bob outputs $j$}]. Across multiple steps, we modify Alice's and Bob's strategies while approximately preserving this quantity for every $i, j$. Note that if we preserve this quantity for every $i, j$ up to an additive $\eps/k^2$, then this ensures that the joint distribution of Alice and Bob's outputs is preserved up to a total variation distance of $\eps$. The first step is a smoothing operation (\Cref{lem:smoothing_main}) that transforms $A$ and $B$ into polynomials $A^{(1)}, B^{(1)}: \bbR^N \to \bbR^k$ that are guaranteed to have (constant) degree $d$. In the second step, we use a multilinearization operation (\Cref{lem:multilin_main}) to convert $A^{(1)}, B^{(1)}$ into \emph{multilinear} degree-$d$ polynomials $A^{(2)}, B^{(2)} : \bbR^{Nt} \to \bbR^k$ (this operation increases the number of variables by a multiplicative $t$ factor). Both these operations preserve the correlation $\inangle{A_i, B_j}$, and keeps the expected $\ell_2^2$ distance of $A$ and $B$ from $\Delta_k$ small. We then apply our main dimension-reduction procedure (\Cref{th:dim_red_Gauss_space}) to obtain \emph{constant-dimensional} functions $A^{(3)}, B^{(3)} : \bbR^{n_0} \to \bbR^k$ that preserve the structure and correlations of $A^{(2)}$ and $B^{(2)}$. At the final step, we set $\wtilde{A}$ and $\wtilde{B}$ to be the roundings of $A^{(3)}$ and $B^{(3)}$ (repsectively) to the closest functions mapping to $\Delta_k$. Our analysis ensures that in each of the above steps, the two correlations of the two functions as well as their individual distances to the probability simplex are approximately preserved.

\begin{figure}
\begin{center}
\begin{tikzpicture}[scale=0.7, transform shape]

\def\ColSmooth{Gblue}
\def\ColMultiLin{Gred}
\def\ColDimRed{Ggreen}
\def\ColRound{Gyellow}
\def\ColRegularity{Ggreen}
\def\ColInvariance{Gyellow}
\def\Brightness{40}

\node[box,text width=3.3cm, align=center] (prelim) at (-1,0) {\large \Cref{sec:prelim}};
\node[box,above,text width=3.3cm, align=center, fill=\ColRound!\Brightness] (prelimtext) at (prelim.north) {Preliminaries: Rounding Lemmas};

\node[box,text width=2.6cm, align=center] (smooth) at (2.5,0) {\large \Cref{sec:smoothing}};
\node[box,above,text width=2.6cm, align=center, fill=\ColSmooth!\Brightness] (smoothtext) at (smooth.north) {Transformation to low-degree};

\node[box,text width=2.6cm, align=center] (smoothapx) at ([shift={(0,4)}]smooth) {\large \Cref{apx:smoothing}}
edge[thick, -latex, dashed] (smoothtext);
\node[box,above,text width=2.6cm, align=center, fill=\ColSmooth!\Brightness] (smoothapxtext) at (smoothapx.north) {Proofs};

\node[box,text width=2.6cm, align=center] (multilin) at (5.7,0) {\large \Cref{sec:non-multilinear}};
\node[box,above,text width=2.6cm, align=center, fill=\ColMultiLin!\Brightness] (multilintext) at (multilin.north) {Transformation to multi-linear};

\node[box,text width=2.6cm, align=center] (multilinapx) at ([shift={(0,4)}]multilin) {\large \Cref{apx:non-multilinear}}
edge[thick, -latex, dashed] (multilintext);
\node[box,above,text width=2.6cm, align=center, fill=\ColMultiLin!\Brightness] (multilinapxtext) at (multilinapx.north) {Proofs};

\node[box,text width=4cm, align=center] (dimred) at (9.6,0) {\large \Cref{sec:dim-reduction}};
\node[box,above,text width=4cm, align=center, fill=\ColDimRed!\Brightness] (dimredtext) at (dimred.north) {Dimension Reduction for low-degree multi-linear polynomials \\ (\Cref{th:dim_red_Gauss_space})};

\node[box,text width=4cm, align=center] (dimredapx) at ([shift={(0,4)}]dimred) {\large \Cref{sec:app_dim_red}}
edge[thick, -latex, dashed] (dimredtext);
\node[box,above,text width=4cm, align=center, fill=\ColDimRed!\Brightness] (dimredapxtext) at (dimredapx.north) {Proofs};

\draw[black!50, thick] ([shift={(-0.3,0.3)}]smoothapxtext.north west) rectangle ([shift={(0.3,-0.3)}]dimredapx.south east);

\draw[thick, -latex] (prelimtext) -- ([shift={(-0.3,-0.3)}]smoothapx.south west);

\node[box,text width=3.3cm, align=center] (reg) at (-1,-8) {\large \Cref{sec:regularity}};
\node[box,above,text width=3.3cm, align=center, fill=\ColRegularity!\Brightness] (regtext) at (reg.north) {Regularity Lemma};

\node[box,text width=3.5cm, align=center] (inv) at (3,-8) {\large \Cref{sec:invariance}};
\node[box,above,text width=3.5cm, align=center, fill=\ColInvariance!\Brightness] (invtext) at (inv.north) {Invariance Principle};

\node[box,text width=4.8cm, align=center] (GaussNIS) at (8,-4.25) {\large \Cref{sec:noise-stability}};
\node[box,above,text width=4.8cm, align=center, upper right=\ColDimRed!\Brightness,upper left=\ColSmooth!\Brightness, lower right=\ColMultiLin!\Brightness,lower left=\ColRound!\Brightness] (GaussNIStext) at (GaussNIS.north) {NIS from Gaussian Sources \\ (\Cref{th:NIS_Gaussian_src}, and as corollary, \Cref{th:noise-stability-informal})}
edge[thick, latex-] (prelim)
edge[thick, latex-] (smooth)
edge[thick, latex-] (multilin)
edge[thick, latex-] (dimred);

\node[box,text width=4.8cm, align=center] (GenNIS) at (1,-4) {\large \Cref{sec:non-int-sim}}
edge[thick, latex-] (regtext)
edge[thick, latex-] (invtext);
\node[box,above,text width=4.8cm, align=center, upper right=\ColDimRed!\Brightness,upper left=\ColSmooth!\Brightness, lower right=\ColMultiLin!\Brightness,lower left=\ColRound!\Brightness] (GenNIStext) at (GenNIS.north) {NIS from Discrete Sources \\ (\Cref{th:non-int-sim})}
edge[thick, latex-] (prelim)
edge[thick, latex-] (smooth)
edge[thick, latex-] (multilin)
edge[thick, latex-] (GaussNIStext);

\node[box,dashed,text width=4.5cm, align=center] (decide) at (7.8,-8) {\large \Cref{sec:decidability}};
\node[box,above,dashed,text width=4.5cm, align=center, fill=Gred!\Brightness] (decidetext) at (decide.north) {Decidability of $\ANIS$\\ (\Cref{thm:decidability})}
edge[thick, latex-] (GenNIS);

\end{tikzpicture}
\caption{Organization of the paper, where arrows indicate the dependencies of the Sections/Appendices on each other (dashed lines indicate the dependencies that could be skipped on first reading; dashed box indicates an optional appendix)}
\label{fig:outline}
\end{center}
\end{figure}
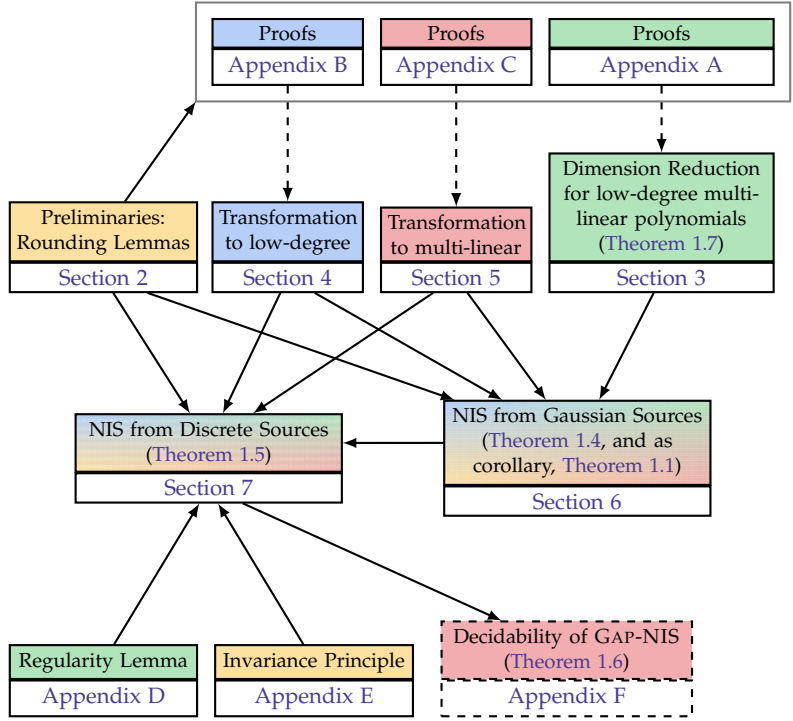

\paragraph{NIS from Arbitrary Discrete Sources.} Our proof of \Cref{th:non-int-sim} proceeds by a reduction to \Cref{th:NIS_Gaussian_src}. The reduction uses the framework already developed in \cite{GKS_NIS_decidable}) of using the invariance principle (\Cref{sec:invariance}) as obtained in \cite{mossel2010gaussianbounds,isaksson2012maximally} and a Regularity Lemma (\Cref{sec:regularity}) inspired by \cite{diakonikolas2010regularity}. We also need to use additional smoothing and multilinearization steps (the full details are in \Cref{sec:non-int-sim}).

One key point about \Cref{th:NIS_Gaussian_src} that is crucial for this application is the ``{\em oblivious}'' nature of the dimension reduction. In the framework of \cite{GKS_NIS_decidable}, we need to apply \Cref{th:NIS_Gaussian_src} on a family of strategies $\set{A^{(1)}, \ldots, A^{(T)}}$ and $\set{B^{(1)}, \ldots, B^{(T)}}$, where each $A^{(i)}, B^{(i)} : \bbR^N \to \bbR^k$. That is, we want to be able to reduce the dimensionality of all the $A^{(i)}$'s and $B^{(i)}$'s while simultaneously preserving the joint distribution $(A^{(i)}, B^{(j)})$ for at least a $(1-\eps)$ fraction of the pairs $(i, j) \in [T] \times [T]$. The {\em oblivious} randomized transformation in \Cref{th:NIS_Gaussian_src} gives us that the tranformation done on $A^{(i)}$ depends only on the random seed and not on which $B^{(j)}$ we are comparing it against. Moreover, this transformation works with ``high'' probability, so in expectation we get that the joint distribution is approximately preserved for atleast a $(1-\eps)$-fraction of the pairs $(i,j) \in [T] \times [T]$.

\subsection{Organization of the Paper}
In \Cref{sec:prelim}, we summarize some useful definitions, and prove a couple of simple lemmas that will be useful in the paper. In \Cref{sec:dim-reduction}, we state our main technique of dimension reduction for polynomials, i.e. \Cref{th:dim_red_Gauss_space}, with the key lemmas and proofs in \Cref{sec:app_dim_red}. In \Cref{sec:smoothing,sec:non-multilinear} we describe the transformations to make functions low-degree and multilinear, with proofs deferred to \Cref{apx:smoothing,apx:non-multilinear}.

In \Cref{sec:noise-stability}, we prove \Cref{th:NIS_Gaussian_src}, deriving \Cref{th:noise-stability-informal} as a corollary. Finally, in \Cref{sec:non-int-sim}, we prove \Cref{th:non-int-sim}, for which we need more tools such as the Regularity Lemma and the Invariance Principle, which we provide in \Cref{sec:regularity,sec:invariance}. Finally, for sake of completeness, the proof of \Cref{thm:decidability} is provided in \Cref{sec:decidability}.

To ease the task of navigating the paper, we provide an outline of the paper in \Cref{fig:outline}.

\section{Preliminaries} \label{sec:prelim}

\subsection{Probability Spaces : Discrete and Gaussian}

We will mostly use script letter $\calZ$ to denote a finite set of size $q$, and $\mu$ will usually denote a probability distribution. We use small letters $x$, $y$, etc. to denote elements of $\calZ$, and bold small letters $\bx$, $\by$, etc. to denote elements in $\calZ^n$. We use $x_i$, $y_i$ to denote individual coordinates of $\bx$, $\by$ respectively. For a probability space $(\calZ, \mu)$, we will use the following definitions and notations,
\begin{itemize}
\item The pair $(\calZ^n, \mu^{\otimes n})$ denotes the product space $\calZ \times \calZ \times \cdots \times \calZ$ endowed with the product distribution.
\item The support of $\mu$ is $\Supp(\mu) := \setdef{x}{\mu(x) >0}$. We assume w.l.o.g. that $\Supp(\mu) = \calZ$.
\item $\alpha(\mu)$ denotes the minimum non-zero probability atom in $\mu$.
\item $L^2(\calZ,\mu)$ denotes the space of all functions from $\calZ$ to $\bbR$.
\item The inner product on $L^2(\calZ,\mu)$ is denoted by $\inangle{f,g}_{\mu} := \Ex\limits_{x \sim \mu}[f(x) g(x)]$.
\item The $\ell_p$-norm by $\norm{p}{f} := \insquare{\Ex\limits_{x \sim \mu}|f(x)|^p}^{1/p}$. Also, $ \norm{\infty}{f} := \max_{\mu(x) >0} |f(x)|$. 
\item For two distributions $\mu$ and $\nu$, $\dTV(\mu, \nu)$ is the total variation distance between $\mu$ and $\nu$.
\end{itemize}
$(\calZ \times \calZ, \mu)$ denotes a joint probability space. We use $\mu_A$ and $\mu_B$ to denote the marginal distributions of $\mu$. The correlation between functions acting on parts of a joint distribution is defined as follows.
\begin{defn}[Correlation between strategies] \label{def:corr_strategies}
	Let $(\calZ \times \calZ, \mu)$ be any joint probability space. For functions $A \in L^2(\calZ, \mu_A)$ and $B \in L^2(\calZ, \mu_B)$, the correlation between $A$ and $B$ over distribution $\mu$ is defined as,
	\[ \inangle{A, B}_{\mu} ~=~ \Ex\limits_{(x, y) \sim \mu} \insquare{A(x) \cdot B(y)}\;.\]
	More generally, if we have functions $A \in L^2(\calZ^n, \mu_A^{\otimes n})$ and $B \in L^2(\calZ^n, \mu_B^{\otimes n})$, the correlation between $A$ and $B$ over distribution $\mu^{\otimes n}$ is defined as,
	\[ \inangle{A, B}_{\mu^{\otimes n}} ~=~ \Ex\limits_{(\bx, \by) \sim \mu^{\otimes n}} \insquare{A(\bx) \cdot B(\by)}\;.\]
\end{defn}
\noindent {\bf Remark.} While $\inangle{A, B}_{\mu^{\otimes n}}$ is the correlation over the joint distribution, the term $\inangle{A, A'}_{\mu_A^{\otimes n}}$ is the correlation as defined earlier over the marginal space. To make this distinction clear, from now on, $\mu$ always refers to the joint distribution on $\calZ \times \calZ$, and we will use $\mu_A$ or $\mu_B$ to indicate distributions over $\calZ$.

\noindent An important quantity associated to any joint distribution is that of the {\em maximal correlation coefficient}, which was first introduced by Hirschfeld \cite{hirschfeld1935connection} and Gebelein \cite{gebelein1941statistische} and then studied by R{\'e}nyi \cite{renyi1959measures}.

\begin{defn}[Maximal correlation] \label{def:max_corr}
	Given a joint probability space $(\calZ \times \calZ, \mu)$, we define its {\em maximal correlation} $\rho(\calZ \times \calZ; \mu)$ (or simply $\rho(\mu)$) as follows,
	\[ \rho(\calZ \times \calZ; \mu) ~=~ \sup_{f, g} \set{\inangle{f,g}_{\mu} \ \ \bigg | \ \ \inmat{f : \calZ \to \bbR, & \Ex[f] = \Ex[g]= 0\\ g : \calZ \to \bbR, & \Var(f) = \Var(g) = 1}} \]
\end{defn}

Although the above definitions were stated for distributions over finite sets, they extend naturally to the case where $\calZ = \bbR$, equipped with the Gaussian measure $\calN(0,1)$ (also denoted as $\gamma_1$). To distinguish between discrete and Gaussian spaces, we will use capital letters $X$, $Y$, etc. to denote elements of $\bbR$, and bold letters $\bX$, $\bY$, etc. to denote elements in $\bbR^n$. In this case, we use $X_i$, $Y_i$ to denote individual coordinates of $\bX$, $\bY$ respectively. The pair $(\bbR^n, \gamma_n)$ denotes the product space $\bbR^n$ endowed with the standard $n$-dimensional Gaussian measure. Unless explicitly mentioned otherwise, all the functions with domain $\bbR^n$ that we consider will be in $L^2(\bbR^n, \gamma_n)$, which is the space of $\ell_2$-integrable functions with respect to the $\gamma_n$ measure.

Over the space of reals, we will primarily consider the joint distribution of $\rho$-correlated Gaussians $(\bbR \times \bbR, \calG_\rho)$. This is a 2-dimensional Gaussian distributions $(X,Y)$, where $X$ and $Y$ are marginally distributed according to $\gamma_1$, with $\Ex[XY] = \rho$. It is well-known that the maximal correlation of $\calG_\rho$ is $\rho$.


\subsection{Fourier \& Hermite Analysis} \label{subsec:analysis_prelim}

\paragraph{Fourier analysis for discrete product spaces.} We recall some background in Fourier analysis that will be useful to us. Let $(\calZ,\mu_A)$ be a finite probability space with $|\calZ| = q$. Let $\calX_0, \cdots, \calX_{q-1}: \calZ \to \bbR$ be an orthonormal basis for the space $L^2(\calZ,\mu_A)$ with respect to the inner product $\inangle{.,.}_{\mu_A}$. Furthermore, we require that this basis has the property that $\calX_0 = {\bf 1}$, i.e., the function that is identically $1$ on every element of $\calZ$.

For $\bsigma = (\sigma_1, \cdots, \sigma_n) \in \bbZ_q^n$, define $\calX_{\bsigma} : \calZ^n \to \bbR^n$ as follows,
\[ \calX_{\bsigma}(x_1,\dots,x_n) = \prod_{i \in [n]} \calX_{\sigma_i}(x_i) \]
It is easily seen that the functions $\setdef{\calX_{\bsigma}}{\bsigma \in \bbZ_q^n}$ form an orthonormal basis for the product space $L^2(\calZ^n, \mu_A^{\otimes n})$. Thus, every function $A \in L^2(\calZ^n, \mu_A^{\otimes n})$ has a {\em Fourier expansion} given by
$$A(\bx) = \sum\limits_{\bsigma \in \bbZ_q^n} \what{A}(\bsigma) \calX_{\bsigma}(\bx)\;,$$
where $\what{A}(\bsigma)$'s are the {\em Fourier coefficients} of $A$, which can be obtained as $\what{A}(\bsigma) = \inangle{A,\calX_{\bsigma}}_{\mu_A}$. Although we will work with an arbitrary (albeit fixed) basis, many of the important properties of the Fourier transform are basis-independent. For example, Parseval's identity states that $\norm{2}{A}^2 = \sum_{\bsigma \in \bbZ_{\ge 0}^n} \what{A}(\bsigma)^2$.

For a joint probability space $(\calZ \times \calZ, \mu)$, we let $\calX_0, \cdots, \calX_{q-1}: \calZ \to \bbR$ be an orthonormal basis for the space $L^2(\calZ,\mu_A)$, and $\calY_0, \cdots, \calY_{q-1}: \calZ \to \bbR$ be an orthonormal basis for the space $L^2(\calZ,\mu_B)$. Although we could choose these basis independently, it is helpful to choose the basis such that, $\inangle{\calX_i, \calY_j}_{\mu} = \rho_i \cdot \mathbf{1}_{i = j}$, where $\rho_{q-1} \le \cdots \le \rho_1 = \rho(\mu)$ (here, $\rho(\mu)$ is the maximal correlation of $\mu$ as in \Cref{def:max_corr}).

For $\bsigma \in \mathbb{Z}_q^n$, the {\em degree} of $\bsigma$ is denoted by $\inabs{\bsigma} \defeq \inabs{\setdef{i \in [n]}{\sigma_i \neq 0}}$. We say that the degree of a function\footnote{we will interchangeably use the word {\em polynomial} to talk about any function in $L^2(\calA^n, \mu^{\otimes n})$.} $A \in L^2(\calZ^n, \mu_A^{\otimes n})$, denoted by $\deg(A)$, is the largest value of $|\bsigma|$ such that $\what{A}(\bsigma) \ne 0$.

\paragraph{Hermite Analysis for Gaussian space.} Analogous to discrete spaces, the set of Hermite polynomials $\setdef{H_r : \bbR \to \bbR}{r \in \bbZ_{\ge 0}}$ form an orthonormal basis for functions in $L^2(\bbR, \gamma_1)$ with respect to the inner product $\inangle{\cdot, \cdot}_{\gamma_1}$. The Hermite polynomial $H_r : \bbR \to \bbR$ (for $r \in \bbZ_{\ge 0}$) is defined as,
\[ H_0(x) = 1; \quad H_1(x) = x; \quad H_r(x) = \frac{(-1)^r}{\sqrt{r!}} e^{x^2/2} \cdot \frac{d^r}{dx^r}e^{-x^2/2}\;. \]
Hermite polynomials can also be obtained via the generating function, $e^{xt - \frac{t^2}{2}} = \sum_{r=0}^{\infty} \frac{H_r(x)}{\sqrt{r!}} \cdot t^r$.

\noindent For any $\bsigma = (\sigma_1, \ldots, \sigma_n) \in \bbZ_{\ge 0}^n$, define $H_\bsigma : \bbR^n \to \bbR$ as
\[ H_\bsigma(\bX) ~=~ \prod_{i=1}^{n} H_{\sigma_i}(X_i).\]
It is easily follows that the set $\setdef{H_\bsigma}{\bsigma \in \bbZ_{\ge 0}^n}$ forms an orthonormal basis for $L^2(\bbR^n, \gamma_n)$. Thus, every $A \in L^2(\bbR^n, \gamma_n)$ has a {\em Hermite expansion} given by
\[ A(\bX) = \sum_{\sigma \in \bbZ_{\ge 0}^n} \what{A}(\bsigma) \cdot H_{\bsigma}(\bX)\;, \]
where the $\what{A}(\bsigma)$'s are the {\em Hermite coefficients} of $A$, which can be obtained as $\what{A}(\bsigma) = \inangle{A,H_{\bsigma}}_{\gamma_n}$. The degree of $\bsigma$ is defined as $|\bsigma| := \sum_{i \in [n]} \sigma_i$, and the degree of $A$ is the largest $|\bsigma|$ for which $\what{A}(\bsigma) \ne 0$. Analogous to Boolean functions, we have Parseval's identity, that is, $\norm{2}{A}^2 = \sum_{\bsigma \in \bbZ_{\ge 0}^n} \what{A}(\bsigma)^2$. We say that $A \in L^2(\bbR^n, \gamma_n)$ is {\em multilinear} if $\what{A}(\bsigma)$ is non-zero only if $\sigma_i \in \bit$ for all $i \in [n]$.

\subsection{Vector-valued functions}

We will extensively work with vector-valued functions. For any function $A : \calD \to \bbR^k$ (for any domain $\calD$), we will write $A = (A_1, \cdots, A_k)$, where $A_i : \calD \to \bbR$ is the $i$-th coordinate of the output of $A$. That is, $A_i(x) = (A(x))_i$ for any $x \in \calD$.

The definitions of Fourier analysis and Hermite analysis extend naturally to vector-valued functions. For $A : \bbR^n \to \bbR^k$, we use $\what{A}(\bsigma)$ to denote the vector $\inparen{\what{A}_1(\bsigma), \ldots, \what{A}_k(\bsigma)}$. In this setting, $\norm{2}{A}^2 := \Ex_{\bX \sim \gamma_n} \norm{2}{A(\bX)}^2 = \norm{2}{A_1}^2 + \cdots + \norm{2}{A_k}^2 = \sum_{\bsigma \in \bbZ_{\ge 0}^n} \norm{2}{\what{A}(\bsigma)}^2$. Also, $\deg(A)$ is defined as $\max_{i \in [k]} \deg(A_i)$. Again, unless explicitly mentioned otherwise, all the vector-valued functions with domain $\bbR^n$ that we consider will be such that the function in each coordinate is in $L^2(\bbR^n, \gamma_n)$.

For $k \in \bbN$, and $i \in [k]$, let $\be_i$ be the unit vector along coordinate $i$ in $\bbR^k$. The simplex $\Delta_k$ is defined as the convex hull formed by $\set{\be_i}_{i \in [k]}$. Equivalently, $\Delta_k = \setdef{\bv \in \bbR^k}{\norm{1}{\bv} = 1}$ is the set of probability distributions over $[k]$. While we consider vector-valued functions mapping to $\bbR^k$, we are primarily interested in functions which map to $\Delta_k$. We use the rouding operator, defined as follows, in order to change the range of a function from $\bbR^k$ to $\Delta_k$.

\begin{defn}[Rounding operator]\label{def:rounding_op}
	The {\em Rounding operator} $\calR^{(k)} : \bbR^k \to \Delta_k$ maps any $\bv \in \bbR^k$ to its closest point in $\Delta_k$. In particular, it is the identity map on $\Delta_k$. We will drop the superscript, as $k$ is fixed throughout this paper.
	
	As for vector-valued functions, we use $\calR_i$ to denote the $i$-th coordinate of $\calR$. Thus, while the $i$-th coordinate of $A$ is denoted by $A_i$, the $i$-th coordinate of $\calR(A)$ is denoted by $\calR_i(A)$.
\end{defn}

\subsubsection*{Useful lemmas for $\ell_2$-close strategies}

An important relaxation in our work is to consider strategies that do not map to $\Delta_k$, but instead map to $\bbR^k$. For such strategies to be meaningful, we will require that the outputs are {\em usually close} to $\Delta_k$. In this case, we will be {\em rounding} them to the simplex $\Delta_k$.

The following simple lemmas are going to be very useful. The first lemma says that if we modify the strategies of Alice and Bob such that they remain close in $\ell_2$-distance, then the correlation between their strategies does not change significantly.

\begin{lem}[Close strategies in $\ell_2$, have similar correlations]\label{lem:close-strategies}
	Given any joint probability space $(\calZ \times \calZ, \mu)$. Let $A, \wtilde{A} \in L^2(\calZ^n, \mu_A^{\otimes n})$ and $B, \wtilde{B} \in L^2(\calZ^n, \mu_B^{\otimes n})$ such that $\norm{2}{A}, \norm{2}{\wtilde{A}}, \norm{2}{B},\norm{2}{\wtilde{B}} \le 1$.\\
	If $\norm{2}{A - \wtilde{A}} \le \eps$ and $\norm{2}{B - \wtilde{B}} \le \eps$, then it holds that,
	\[ \inabs{\inangle{\wtilde{A}, \wtilde{B}}_{\mu^{\otimes n}} - \inangle{A, B}_{\mu^{\otimes n}}} ~\le~ 2\eps\]
\end{lem}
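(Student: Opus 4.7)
The plan is to decompose the difference of correlations into two telescoping pieces and bound each via Cauchy--Schwarz with respect to the joint measure. Concretely, I would write
\[
\inangle{\wtilde{A}, \wtilde{B}}_{\mu^{\otimes n}} - \inangle{A, B}_{\mu^{\otimes n}} ~=~ \inangle{\wtilde{A} - A,\ \wtilde{B}}_{\mu^{\otimes n}} ~+~ \inangle{A,\ \wtilde{B} - B}_{\mu^{\otimes n}},
\]
which follows directly from bilinearity of $\inangle{\cdot,\cdot}_{\mu^{\otimes n}}$ (unfolded as $\Ex_{(\bx,\by) \sim \mu^{\otimes n}}[\cdot]$).

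Next I would bound each of the two terms using Cauchy--Schwarz applied to $\Ex_{(\bx,\by)\sim \mu^{\otimes n}}[\,\cdot\,]$. For the first term, since the factor $(\wtilde{A}-A)(\bx)$ depends only on $\bx$ and $\wtilde{B}(\by)$ only on $\by$, Cauchy--Schwarz over the joint distribution gives
\[
\inabs{\inangle{\wtilde{A} - A,\ \wtilde{B}}_{\mu^{\otimes n}}} ~\le~ \sqrt{\Ex_{\bx \sim \mu_A^{\otimes n}}\!\!\big[(\wtilde{A}-A)(\bx)^2\big]} \cdot \sqrt{\Ex_{\by \sim \mu_B^{\otimes n}}\!\!\big[\wtilde{B}(\by)^2\big]} ~=~ \norm{2}{\wtilde{A}-A} \cdot \norm{2}{\wtilde{B}} ~\le~ \eps.
\]
An identical estimate gives $\inabs{\inangle{A,\ \wtilde{B} - B}_{\mu^{\otimes n}}} \le \norm{2}{A} \cdot \norm{2}{\wtilde{B}-B} \le \eps$, where we used the hypotheses $\norm{2}{A}, \norm{2}{\wtilde{B}} \le 1$ and $\norm{2}{\wtilde{A}-A}, \norm{2}{\wtilde{B}-B} \le \eps$. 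Combining the two bounds via the triangle inequality yields the claimed $2\eps$ bound.

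There is essentially no obstacle here; the only subtle point is verifying that Cauchy--Schwarz under the joint measure $\mu^{\otimes n}$ reduces cleanly to products of $\ell_2$-norms under the marginals $\mu_A^{\otimes n}$ and $\mu_B^{\otimes n}$, which is immediate because each factor in the inner product depends on only one of the two coordinates of $(\bx,\by)$. The role of the norm bounds $\norm{2}{A}, \norm{2}{\wtilde{B}} \le 1$ is simply to absorb the ``other'' factor in each Cauchy--Schwarz application; the bounds on $\norm{2}{\wtilde{A}}$ and $\norm{2}{B}$ are not actually needed in this argument, but are listed for symmetry and will be used elsewhere.
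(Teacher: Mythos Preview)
Your proof is correct and is essentially identical to the paper's own proof: the same telescoping decomposition $\inangle{\wtilde{A},\wtilde{B}} - \inangle{A,B} = \inangle{\wtilde{A}-A,\wtilde{B}} + \inangle{A,\wtilde{B}-B}$ followed by Cauchy--Schwarz on each term. Your remark that only $\norm{2}{A}$ and $\norm{2}{\wtilde{B}}$ are actually used is a valid observation.
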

\begin{proof}
	The proof follows very easily from the Cauchy-Schwarz inequality. In particular,
	\begin{align*}
	\inabs{\inangle{\wtilde{A}, \wtilde{B}}_{\mu^{\otimes n}} - \inangle{A, B}_{\mu^{\otimes n}}}
	&~=~ \inabs{\inangle{(\wtilde{A} - A), \ \wtilde{B}}_{\mu^{\otimes n}} + \inangle{A,\ (\wtilde{B} - B)}_{\mu^{\otimes n}}}\\
	&~\le~ \inabs{\inangle{(\wtilde{A} - A), \ \wtilde{B}}_{\mu^{\otimes n}}} + \inabs{\inangle{A,\ (\wtilde{B} - B)}_{\mu^{\otimes n}}}\\
	&~\le~ \norm{2}{\wtilde{A} - A} \cdot \norm{2}{\wtilde{B}} + \norm{2}{\wtilde{B} - B} \cdot \norm{2}{A} \qquad \ldots \text{(Cauchy-Schwarz inequality)}\\
	&~\le~ 2\eps\;. \qedhere
	\end{align*}
\end{proof}
\noindent The second lemma says that if we have two strategies which are close in $\ell_2$-distance, and one of them is {\em close} to the simplex $\Delta_k$, then so is the other. The proof follows by a straightforward triangle inequality.

\begin{lem}\label{lem:close-to-simplex}
	Given any joint probability space $(\calZ \times \calZ, \mu)$. Let $A : \calZ^n \to \bbR^k$ and $\wtilde{A} : \calZ^n \to \bbR^k$ such that $\norm{2}{A}, \norm{2}{\wtilde{A}} \le 1$. Then, for the rounding operator $\calR : \bbR^k \to \Delta_k$, it holds that,
	\[ \norm{2}{\calR(\wtilde{A}) - \wtilde{A}} ~\le~ \norm{2}{\calR(A) - A} + \norm{2}{A - \wtilde{A}}\;.\]
\end{lem}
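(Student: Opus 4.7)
The plan is to first establish the inequality pointwise at each $x\in\calZ^n$ inside the $k$-dimensional range space $\bbR^k$, and then lift to the $L^2$ norm by Minkowski's inequality.

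For the pointwise step, I fix $x\in\calZ^n$ and write $u = A(x),\ \wtilde u = \wtilde A(x)\in\bbR^k$. Since $\calR(\wtilde u)$ is by definition the closest point to $\wtilde u$ in the convex set $\Delta_k$ (see \Cref{def:rounding_op}), and $\calR(u)$ happens to lie in $\Delta_k$, the projection property gives
\[
\bigl\|\calR(\wtilde u)-\wtilde u\bigr\|_2 \;\le\; \bigl\|\calR(u)-\wtilde u\bigr\|_2.
\]
Now I apply the triangle inequality in $\bbR^k$ on the right-hand side, inserting $u$:
\[
\bigl\|\calR(u)-\wtilde u\bigr\|_2 \;\le\; \bigl\|\calR(u)-u\bigr\|_2 + \bigl\|u-\wtilde u\bigr\|_2.
\]
Combining these two inequalities yields the pointwise bound
\[
\bigl\|\calR(\wtilde A(x))-\wtilde A(x)\bigr\|_2 \;\le\; \bigl\|\calR(A(x))-A(x)\bigr\|_2 + \bigl\|A(x)-\wtilde A(x)\bigr\|_2
\]
for every $x\in\calZ^n$.

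To finish, I square, take expectations over $x\sim\mu_A^{\otimes n}$, and take the square root. Because the pointwise inequality is between nonnegative functions of $x$, Minkowski's inequality (the triangle inequality for the $L^2(\mu_A^{\otimes n})$ norm) turns the pointwise triangle inequality into the desired global statement
\[
\norm{2}{\calR(\wtilde A)-\wtilde A} \;\le\; \norm{2}{\calR(A)-A} + \norm{2}{A-\wtilde A}.
\]
There is no real obstacle here: the essential content is just that $\calR$ is the closest-point projection onto the convex set $\Delta_k$, a property that is inherited coordinatewise (in $x$) and then aggregated by Minkowski. The norm bounds $\norm{2}{A},\norm{2}{\wtilde A}\le 1$ stated in the lemma are not actually needed for the inequality itself; they only contextualize the setting where the lemma is applied.
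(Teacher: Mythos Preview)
Your proof is correct and follows essentially the same approach as the paper's: both use that $\calR(\wtilde A(x))$ is the nearest point of $\Delta_k$ to $\wtilde A(x)$ to bound by $\|\calR(A)-\wtilde A\|_2$, then apply the triangle inequality. You carry this out pointwise first and then lift via Minkowski, whereas the paper compresses these steps directly at the $L^2$ level; your remark that the hypotheses $\norm{2}{A},\norm{2}{\wtilde A}\le 1$ are unused is also correct.
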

\begin{proof}
	The proof follows very easily from a triangle inequality. In particular,
	\begin{align*}
	\norm{2}{\calR(\wtilde{A}) - \wtilde{A}}
	&~\le~ \norm{2}{\calR(A) - \wtilde{A}} && \text{(since $\calR(\wtilde{A}(\bx))$ closest in $\Delta_k$ to $\wtilde{A}(\bx)$)}\\
	&~\le~ \norm{2}{\calR(A) - A} + \norm{2}{A - \wtilde{A}} && \text{(Triangle inequality)} \qedhere
	\end{align*}
\end{proof}

\section{Dimension Reduction for Low-Degree Multilinear Polynomials} \label{sec:dim-reduction}

In this section, we present our main technique which is a dimension reduction for low-degree multilinear polynomials over Gaussian space, and prove \Cref{th:dim_red_Gauss_space}, which is obtained immediately as a combination of \Cref{thm:dim-reduction} and \Cref{prop:dim-red-close-to-simplex} stated below.

\begin{restatable}{theorem}{dim-reduction}\label{thm:dim-reduction}
	Given parameters $d \in \bbZ_{> 0}$, $\rho \in [0,1]$ and $\delta > 0$, there exists an \emph{explicitly computable} $D = D(d, \delta)$, such that the following holds:
	
	Let $A: \bbR^N \to \bbR$ and $B: \bbR^N \to \bbR$ be degree-$d$ multilinear polynomials, such that $\norm{2}{A}, \norm{2}{B} \le 1$.\\
	For column vectors $\ba, \bb \in \bbR^D$ and $M \in \bbR^{N \times D}$, define the functions $A_M : \bbR^D \to \bbR$ and $B_M : \bbR^D \to \bbR$ as
	\[A_M(\ba) ~=~ A\inparen{\frac{M \ba}{\norm{2}{\ba}}} \qquad \sAND \qquad B_M(\bb) = B\inparen{\frac{M \bb}{\norm{2}{\bb}}}\]
	
	\noindent Sample $M \sim \calN(0,1)^{\otimes (N \times D)}$. Then, with probability at least $1-\delta$ over the choice of $M$, it holds that,
	\[\inabs{\inangle{A_M, B_M}_{\calG_{\rho}^{\otimes D}} ~-~ \inangle{A, B}_{\calG_{\rho}^{\otimes N}}} ~<~ \delta\;.\]
	In particular, one may take $D = \frac{d^{O(d)}}{\delta^4}$.\\
	
	\noindent In other words, for a typical choice of $M \sim \calN(0,1)^{\otimes (N \times D)}$, the correlation between $A$ and $B$ is approximately preserved if we replace $(\bX,\bY) \sim \calG_{\rho}^{\otimes N}$ by $(M \ba/\norm{2}{\ba}, M \bb/\norm{2}{\bb})$, where $(\ba, \bb) \sim \calG_{\rho}^{\otimes D}$. Intuitively, $M$ can be thought of as a means to ``stretch'' $D$ coordinates of $\calG_{\rho}$ into effectively $N$ coordinates of $\calG_{\rho}$, while ``fooling'' correlations between degree-$d$ multilinear polynomials.
\end{restatable}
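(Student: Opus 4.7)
The proof proceeds by the second-moment method. Set $F(M) := \inangle{A_M, B_M}_{\calG_\rho^{\otimes D}}$ and view this as a random variable depending on the Gaussian matrix $M \sim \calN(0,1)^{\otimes (N \times D)}$. The plan is to show that $\Ex_M[F(M)]$ is within $\delta/2$ of the target $\inangle{A, B}_{\calG_\rho^{\otimes N}}$ and that $\Var_M[F(M)] \le d^{O(d)}/D$; Chebyshev's inequality then yields the desired tail bound once $D \ge d^{O(d)}/\delta^4$.

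For the mean, I would first swap the expectations over $M$ and $(\ba, \bb) \sim \calG_\rho^{\otimes D}$. For any fixed $\ba, \bb \in \bbR^D$, writing $u := \ba/\norm{2}{\ba}$ and $v := \bb/\norm{2}{\bb}$, the pair $(Mu, Mv) \in \bbR^N \times \bbR^N$ is jointly Gaussian with $Mu, Mv$ individually distributed as $\gamma_N$ (because $u, v$ are unit vectors), coordinatewise correlation $\tau := \inangle{u, v} = \inangle{\ba, \bb}/(\norm{2}{\ba}\norm{2}{\bb})$, and independence across coordinates; that is, conditionally on $(\ba, \bb)$, the joint distribution of $(Mu, Mv)$ is exactly $\calG_\tau^{\otimes N}$. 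Since $A, B$ are degree-$d$ multilinear, the Hermite orthogonality identity $\Ex_{\calG_\tau}[H_\bsigma(\bX) H_{\bsigma'}(\bY)] = \tau^{|\bsigma|} \indicator_{\bsigma = \bsigma'}$ gives
\[ \Ex_M[A(Mu) B(Mv)] ~=~ \sum_{\bsigma \in \bit^N,\; |\bsigma| \le d} \what{A}(\bsigma)\what{B}(\bsigma)\, \tau^{|\bsigma|}. \]
Taking the outer expectation over $(\ba, \bb)$ and comparing with $\inangle{A, B}_{\calG_\rho^{\otimes N}} = \sum_\bsigma \what{A}(\bsigma)\what{B}(\bsigma)\, \rho^{|\bsigma|}$ reduces the problem to bounding $\inabs{\Ex[\tau^k] - \rho^k}$ for each $0 \le k \le d$. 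Because $\inangle{\ba, \bb}/D$, $\norm{2}{\ba}^2/D$, and $\norm{2}{\bb}^2/D$ are averages of $D$ iid bounded-variance random variables, a direct moment calculation shows $\tau$ concentrates around $\rho$ with fluctuations of order $1/\sqrt{D}$, delivering $\inabs{\Ex[\tau^k] - \rho^k} \le d^{O(d)}/D$; Parseval and Cauchy--Schwarz (using $\norm{2}{A}, \norm{2}{B} \le 1$) then give the mean bound.

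The variance will be the main technical obstacle. Expanding $F(M)^2$ produces a quadruple integral over independent copies $(\ba, \bb)$ and $(\ba', \bb')$ of $\calG_\rho^{\otimes D}$, and for fixed values of these four vectors the normalized images $Mu_1, Mu_2, Mu_3, Mu_4$ (with $u_1 = \ba/\norm{2}{\ba}$, $u_2 = \bb/\norm{2}{\bb}$, and analogously $u_3, u_4$) form a jointly Gaussian system whose full covariance structure is determined by the six pairwise inner products $\tau_{ij} = \inangle{u_i, u_j}$. Expanding $A, B$ in the multilinear Hermite basis yields a sum indexed by quadruples $(S_1, S_2, S_3, S_4)$ of subsets of $[N]$ of size at most $d$, each term being a product of four Hermite coefficients weighted by an expectation that is an explicit polynomial in the $\tau_{ij}$'s. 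The ``intra-sample'' correlations $\tau_{12}, \tau_{34}$ are both distributed like $\tau$ and already appear inside $(\Ex_M[F(M)])^2$, so they should cancel up to lower-order terms; the remaining ``cross'' correlations $\tau_{13}, \tau_{14}, \tau_{23}, \tau_{24}$ are inner products of unit vectors built from disjoint $\calG_\rho^{\otimes D}$-samples, hence have mean zero and magnitude $O(1/\sqrt{D})$ in higher moments. The plan is to first bound the contribution of a single monomial quadruple using these moment estimates, then combine across all quadruples via hypercontractivity for degree-$d$ Gaussian polynomials (which only pays a multiplicative $d^{O(d)}$ factor) to arrive at $\Var_M[F(M)] \le d^{O(d)}/D$. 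The subtlest point will be organizing the expansion so that the $\Theta(1)$-size diagonal contribution to $\Ex_M[F(M)^2]$ is exactly cancelled by $(\Ex_M[F(M)])^2$, leaving only terms of order $1/D$.

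Combining the mean and variance estimates, Chebyshev's inequality gives
\[ \Pr_M\!\left[\, \inabs{F(M) - \inangle{A, B}_{\calG_\rho^{\otimes N}}} > \delta \,\right] ~\le~ \frac{\Var_M[F(M)] + \inparen{\Ex_M[F(M)] - \inangle{A, B}_{\calG_\rho^{\otimes N}}}^2}{\delta^2} ~\le~ \frac{d^{O(d)}}{D\,\delta^2}, \]
and choosing $D = d^{O(d)}/\delta^4$ makes the right-hand side at most $\delta$, which is the claimed bound.
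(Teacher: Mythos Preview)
Your proposal is correct and follows essentially the same second-moment strategy as the paper: control the mean and variance of $F(M)$ via Hermite expansion and concentration of the normalized inner products, then apply Chebyshev. Your packaging of the mean bound---recognizing that conditionally on $(\ba,\bb)$ the pair $(M\ba/\norm{2}{\ba},\,M\bb/\norm{2}{\bb})$ is distributed exactly as $\calG_\tau^{\otimes N}$---is a clean shortcut over the paper's monomial-by-monomial computation, but the required moment estimate $\Ex[\tau^k]\approx\rho^k$ and the variance analysis (monomial-level covariance bounds combined with degree-$d$ hypercontractivity) coincide with the paper's \Cref{lem:mean_var_bound} and its supporting lemmas.
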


\noindent Before we prove the above theorem, we prove a simple proposition which shows that if this dimension reduction were applied to vector-valued functions whose outputs lie close to the simplex $\Delta_k$, then with high probability, even the dimension-reduced functions will also have outputs close to the simplex. More formally,

\begin{proposition}\label{prop:dim-red-close-to-simplex}
	Let $A : \bbR^N \to \bbR^k$ and $B: \bbR^N \to \bbR^k$, such that $\norm{2}{\calR(A) - A}, \norm{2}{\calR(B) - B} \le \delta$. Let $A_M : \bbR^D \to \bbR^k$ and $B_M : \bbR^D \to \bbR^k$ be defined analogously to \Cref{thm:dim-reduction}. For $M \sim \calN(0,1)^{\otimes (N \times D)}$, with probability at least $1-2\delta$, it holds that,
	\[ \norm{2}{\calR(A_M) - A_M} \le \sqrt{\delta} \qquad \sAND \qquad \norm{2}{\calR(B_M) - B_M} \le \sqrt{\delta}\;. \]
\end{proposition}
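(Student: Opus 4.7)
\begin{proof-overview}
The plan is to bound the expected squared distance from the simplex in closed form and then apply Markov's inequality, with a union bound over $A$ and $B$.

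First, I would unpack the quantity we want to control:
\[ \norm{2}{\calR(A_M) - A_M}^2 ~=~ \Ex_{\ba \sim \gamma_D}\insquare{\norm{2}{\calR\inparen{A\inparen{\tfrac{M\ba}{\|\ba\|_2}}} - A\inparen{\tfrac{M\ba}{\|\ba\|_2}}}^2}. \]
Now the key observation is a rotational-symmetry fact about the substitution $\bX = M\ba/\|\ba\|_2$: for \emph{any} fixed unit vector $\bu \in S^{D-1}$, if $M \sim \calN(0,1)^{\otimes(N\times D)}$ then $M\bu \sim \gamma_N$, because the distribution of $M$ is invariant under right multiplication by an orthogonal matrix and $M\be_1$ is exactly the first column of $M$, which is $\gamma_N$. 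Applying this to $\bu = \ba/\|\ba\|_2$ (which is a deterministic function of $\ba$ that can be fixed before drawing $M$), we get for every fixed $\ba$:
\[ \Ex_M \insquare{\norm{2}{\calR\inparen{A\inparen{\tfrac{M\ba}{\|\ba\|_2}}} - A\inparen{\tfrac{M\ba}{\|\ba\|_2}}}^2} ~=~ \Ex_{\bX \sim \gamma_N}\insquare{\norm{2}{\calR(A(\bX)) - A(\bX)}^2} ~=~ \norm{2}{\calR(A) - A}^2 ~\le~ \delta^2. \]

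Second, I would swap the order of expectations via Fubini to obtain
\[ \Ex_M \insquare{\norm{2}{\calR(A_M) - A_M}^2} ~=~ \Ex_{\ba \sim \gamma_D}\Ex_M\insquare{\norm{2}{\calR\inparen{A\inparen{\tfrac{M\ba}{\|\ba\|_2}}} - A\inparen{\tfrac{M\ba}{\|\ba\|_2}}}^2} ~\le~ \delta^2. \]
Finally, Markov's inequality on the non-negative random variable $\norm{2}{\calR(A_M) - A_M}^2$ gives
\[ \Pr_M\insquare{\norm{2}{\calR(A_M) - A_M}^2 > \delta} ~\le~ \frac{\delta^2}{\delta} ~=~ \delta, \]
i.e., $\norm{2}{\calR(A_M) - A_M} \le \sqrt{\delta}$ except with probability $\delta$. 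The same argument applied to $B$ yields the analogous bound, and a union bound gives that both hold simultaneously with probability at least $1 - 2\delta$.

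There is no real obstacle here: the only subtle point is noticing that the rotational invariance lets us swap the random matrix $M$ for a single Gaussian vector $\bX \sim \gamma_N$ without any loss, so that the mean of $\norm{2}{\calR(A_M) - A_M}^2$ is exactly $\norm{2}{\calR(A) - A}^2$. After that the argument is just Fubini plus Markov. (Note in particular that unlike the correlation bound in \Cref{thm:dim-reduction}, no hypercontractivity or multilinearity/degree assumption is needed for this proposition; it holds for arbitrary $L^2$ functions $A, B$.)
\end{proof-overview}
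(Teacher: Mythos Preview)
Your proposal is correct and follows essentially the same approach as the paper: observe that for fixed $\ba$ the vector $M\ba/\|\ba\|_2$ is exactly $\gamma_N$-distributed, deduce $\Ex_M\norm{2}{\calR(A_M)-A_M}^2 = \norm{2}{\calR(A)-A}^2 \le \delta^2$, then apply Markov's inequality and a union bound over $A,B$. Your write-up spells out the rotational-invariance justification and the Fubini swap a bit more explicitly than the paper does, but the argument is the same.
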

\begin{proof}
	We first observe that for any fixed $\ba \in \bbR^D$, the distribution of $\frac{M\ba}{\norm{2}{\ba}}$ is identical to that of a standard $N$-variate Gaussian distribution. Thus, we immediately have that,
	\[ \Ex_M \Ex_{\ba} \norm{2}{\calR\inparen{A\inparen{\frac{M\ba}{\norm{2}{\ba}}}} - A\inparen{\frac{M\ba}{\norm{2}{\ba}}}}^2 ~=~ \Ex_{\bX} \norm{2}{\calR\inparen{A\inparen{\bX}} - A\inparen{\bX}}^2\]
	Alternately,
	\[ \Ex_M \norm{2}{\calR(A_M) - A_M}^2 = \norm{2}{\calR(A) - A}^2 \le \delta^2\]
	Thus, using Markov's inequality, we get that with probability at least $1 - \delta$,
	\[ \norm{2}{\calR(A_M) - A_M} \le \sqrt{\delta}. \]
	We similarly argue for $B_M$, and a union bound completes the proof.
\end{proof}

\noindent To prove \Cref{thm:dim-reduction}, we primarily use the second moment method (i.e., Chebyshev's inequality). In particular, let $F(M)$ be defined as,
\[F(M) ~\defeq~ \inangle{A_M, B_M}_{\calG_{\rho}^{\otimes D}}\]
The most technical part of this work is to show sufficently good bounds on the mean and variance of $F(M)$ for a random choice of $M \sim \calN(0,1)^{\otimes (N \times D)}$, given by the following lemma.

\begin{restatable}{lem}{meanvarbound}{\em (Bound on Mean \& Variance).}\label{lem:mean_var_bound}
	Given parameters $d$ and $\delta$, there exists $D := D(d, \delta)$ such that the following holds: For $M \sim \calN(0,1)^{\otimes (N \times D)}$,
	\begin{align*}
	\inabs{\Ex\limits_{M} F(M) - \inangle{A, B}_{\calG_{\rho}^{\otimes N}}} &~\le~ \delta \qquad \text{\bf (Mean bound)}\\
	\Var_M \inparen{F(M)} &~\le~ \delta \qquad \text{\bf (Variance bound)}
	\end{align*}
	In particular, one may take $D = \frac{d^{O(d)}}{\delta^2}$.
\end{restatable}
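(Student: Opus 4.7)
The plan is to reduce both bounds to a direct second-moment computation at the level of multilinear Hermite monomials, and then extend to general polynomials by Parseval together with a standard Gaussian hypercontractive inequality. Writing the Hermite expansions $A = \sum_{S : |S| \le d}\what{A}(S)\chi_S$ and $B = \sum_{T : |T| \le d}\what{B}(T)\chi_T$ with $\chi_S(\bX) := \prod_{i \in S} X_i$, Parseval gives $\sum_S \what{A}(S)^2,\, \sum_T \what{B}(T)^2 \le 1$. The central structural observation, driving all subsequent calculations, is that for any fixed $\ba, \bb \in \bbR^D$ and $M \sim \calN(0,1)^{\otimes (N \times D)}$, the two $N$-vectors $\bX := M\ba/\norm{2}{\ba}$ and $\bY := M\bb/\norm{2}{\bb}$ are jointly Gaussian with $\Ex[X_i X_{i'}] = \delta_{ii'}$, $\Ex[Y_j Y_{j'}] = \delta_{jj'}$, and $\Ex[X_i Y_j] = \delta_{ij}\what{\rho}(\ba,\bb)$, where $\what{\rho}(\ba,\bb) := \inangle{\ba,\bb}/(\norm{2}{\ba}\,\norm{2}{\bb})$. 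Thus $(\bX,\bY)$ is distributed as a product of $N$ independent $\what{\rho}(\ba,\bb)$-correlated standard Gaussian pairs, and Wick's formula immediately yields $\Ex_M[\chi_S(\bX)\chi_T(\bY)] = \indicator_{S=T}\cdot \what{\rho}(\ba,\bb)^{|S|}$ for multilinear monomials.

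For the mean bound I interchange expectations and use linearity to get the exact identity $\Ex_M[F(M)] = \sum_S \what{A}(S)\what{B}(S)\,\Ex_{(\ba,\bb)\sim\calG_\rho^{\otimes D}}[\what{\rho}(\ba,\bb)^{|S|}]$, to be compared with $\inangle{A,B}_{\calG_\rho^{\otimes N}} = \sum_S \what{A}(S)\what{B}(S)\,\rho^{|S|}$. Cauchy--Schwarz on the coefficient sequence, together with $\sum_S \what{A}(S)^2,\,\sum_S \what{B}(S)^2 \le 1$, reduces the task to bounding $\max_{k \le d}|\Ex\,\what{\rho}^k - \rho^k|$. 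The empirical correlation $\what{\rho}(\ba,\bb)$ satisfies $\Ex(\what{\rho}-\rho)^2 = O(1/D)$ by a direct moment calculation (the numerator $\inangle{\ba,\bb}/D$ has variance $O(1/D)$, and the normalizers $\norm{2}{\ba}^2/D,\,\norm{2}{\bb}^2/D$ concentrate at $1$); the algebraic identity $\what{\rho}^k - \rho^k = (\what{\rho}-\rho)\sum_{j=0}^{k-1}\what{\rho}^j\rho^{k-1-j}$ together with $|\what{\rho}|,\rho \le 1$ then gives $|\Ex\,\what{\rho}^k - \rho^k| \le k\cdot O(D^{-1/2})$, and $D = \Omega(d^2/\delta^2)$ suffices.

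For the variance bound I expand $\Ex_M[F(M)^2]$ over two independent copies $(\ba,\bb),(\ba',\bb')\sim\calG_\rho^{\otimes D}$ and again apply Wick's formula in $M$ to $\chi_S(\bX)\chi_T(\bY)\chi_{S'}(\bX')\chi_{T'}(\bY')$. The resulting sum over perfect matchings splits into \emph{diagonal} matchings---those that pair only within $(S,T)$ and within $(S',T')$---and \emph{cross} matchings---those containing at least one pair bridging $\{\ba,\bb\}$ and $\{\ba',\bb'\}$. A diagonal matching can be non-vanishing only if $S=T$ and $S'=T'$, and it contributes $\what{\rho}(\ba,\bb)^{|S|}\what{\rho}(\ba',\bb')^{|S'|}$; by independence $(\ba,\bb)\perp(\ba',\bb')$, summing and taking expectations reconstructs exactly $(\Ex_M[F(M)])^2$, which cancels against the variance. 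Every remaining (cross) matching contains at least one factor of the form $\what{\rho}(\ba,\ba'),\,\what{\rho}(\bb,\bb'),\,\what{\rho}(\ba,\bb'),$ or $\what{\rho}(\bb,\ba')$, and since each such pair of vectors is independent, by rotational symmetry the corresponding cross-correlation has mean zero and $\Ex|\cdot|^{2r} = O_r(D^{-r})$, so every cross matching contributes $O_d(D^{-1/2})$ at the monomial level. The main obstacle will be extending this per-monomial savings to general polynomials, since the number of quadruples $(S,T,S',T')$ is enormous: I plan to group the Isserlis matchings by their combinatorial type, reorganise the resulting sums as inner products of auxiliary polynomials in $(\ba,\bb,\ba',\bb')$ whose $L^2$-norms are controlled by $\sum_S \what{A}(S)^2 \le 1$, and then invoke the Gaussian hypercontractive inequality $\norm{4}{P} \le 3^{d/2}\norm{2}{P}$ (for degree-$d$ Gaussian polynomials) to bound the higher-order norms needed to close the chain of Cauchy--Schwarz estimates. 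This should yield a global bound of $d^{O(d)}/\sqrt{D}$, giving $\Var_M(F(M)) \le \delta$ for $D = d^{O(d)}/\delta^2$.
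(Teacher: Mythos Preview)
Your approach is correct and matches the paper's architecture: reduce to multilinear monomials, compute the monomial moments, and aggregate using hypercontractivity. The substantive difference is at the monomial level. Your central observation---that for fixed $\ba,\bb$ the pair $(M\wtilde{\ba},M\wtilde{\bb})$ is \emph{exactly} distributed as $\calG_{\what{\rho}(\ba,\bb)}^{\otimes N}$---lets Wick's formula deliver $\Ex_M[\chi_S(\bX)\chi_T(\bY)] = \indicator_{S=T}\,\what{\rho}^{|S|}$ and the four-point version in one line. The paper reaches the equivalent conclusions (its Lemmas on mean/covariance of monomials) through a longer route: a ``meta-lemma'' on products of inner products of normalized Gaussian vectors, plus separate lemmas on negative moments of $\norm{2}{\bw}$ and on interchanging product with expectation. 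Your route is shorter, and for the mean bound it even yields the sharper $D=O(d^2/\delta^2)$ in place of $d^{O(d)}/\delta^2$.

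Two comments on your variance step. First, your phrase ``auxiliary polynomials in $(\ba,\bb,\ba',\bb')$'' is slightly off: the $\what{\rho}$ factors carry the normalizers $1/\norm{2}{\cdot}$ and are not polynomials. You do not need them to be. Since every factor satisfies $|\what{\rho}|\le 1$, each cross matching has $\bigl|\Ex\bigl[\prod\what{\rho}\bigr]\bigr|\le \Ex\bigl|\what{\rho}_{\mathrm{cross}}\bigr|\le \sqrt{\Ex\,\what{\rho}_{\mathrm{cross}}^2}=O(D^{-1/2})$, and there are at most $(4d)!!=d^{O(d)}$ matchings per quadruple; combined with the Isserlis parity constraint $S\triangle T\triangle S'\triangle T'=\emptyset$, this gives a uniform per-quadruple covariance bound of $d^{O(d)}/\sqrt{D}$. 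Second, to close the sum over quadruples you need the combinatorial inequality $\sum_{S\triangle T\triangle S'\triangle T'=\emptyset}|\what A_S\what B_T\what A_{S'}\what B_{T'}|\le 9^d$, which is exactly what the paper proves by passing to auxiliary functions $f=\sum_S|\what A_S|\chi_S$, $g=\sum_T|\what B_T|\chi_T$ and using $\Ex[f^2g^2]\le(\Ex f^4)^{1/2}(\Ex g^4)^{1/2}\le 9^d\norm{2}{f}^2\norm{2}{g}^2$. Gaussian or Boolean hypercontractivity are interchangeable here. With these two clarifications your plan goes through and is, if anything, a cleaner execution of the same proof.
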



\noindent The proof of \Cref{lem:mean_var_bound} appears in \Cref{sec:app_dim_red}. Assuming \Cref{lem:mean_var_bound}, we can easily prove \Cref{thm:dim-reduction}.

\begin{proof}[Proof of \Cref{thm:dim-reduction}]
	We invoke \Cref{lem:mean_var_bound} with parameters $d$ and $\delta^2/2$, and we get a choice of $D = \frac{d^{O(d)}}{\delta^4}$. Using Chebyshev's inequality and using the Variance bound in \Cref{lem:mean_var_bound}, we have that for any $\eta > 0$,
	\[ \Pr_M \insquare{\inabs{F(M) - \Ex_M F(M)} > \eta} ~\le~ \frac{\delta^2}{2\eta}\;.\]
	Using the triangle inequality, and the Mean bound in \Cref{lem:mean_var_bound}, we get
	\begin{align*}
		&\Pr_M \insquare{\inabs{F(M) - \inangle{A, B}_{\calG_{\rho}^{\otimes N}}} > \delta}\\
		& ~\le~ \Pr_M \insquare{\inabs{F(M) - \Ex_M F(M)} + \inabs{\Ex_M F(M) - \inangle{A, B}_{\calG_{\rho}^{\otimes N}}} > \delta}\\
		& ~\le~ \Pr_M \insquare{\inabs{F(M) - \Ex_M F(M)} > \delta - \delta^2}\\
		& ~\le~ \delta.\qedhere
	\end{align*}
\end{proof}

\section{Reduction from General Polynomials to Low-Degree Polynomials} \label{sec:smoothing}

In this section, we state a lemma which says that functions in $L^2(\bbR^n, \gamma_n)$, and more generally in $L^2(\calZ^n, \mu^{\otimes n})$, can be converted to low-degree polynomials while approximately preserving correlations with other functions and also not deviating much from the simplex $\Delta_k$. This technique is considered quite standard and was also used in \cite{GKS_NIS_decidable, DMN_NoiseStabilityComputable, DMN_NIS_decidable} for the same reason. For completeness, we provide the proof in \Cref{apx:smoothing}.


\begin{lem}[Main Smoothing Lemma] \label{lem:smoothing_main}
	Let $\rho \in [0,1]$, $\delta > 0$, $k \in \bbN$ be any given constant parameters. There exists an explicit $d = d(\rho, k, \delta)$ such that the following holds:\\
	
	\noindent {\bf (Correlated Discrete Hypercube):} Let $(\calZ \times \calZ, \mu)$ be a joint probability space, with $\rho(\calZ, \calZ; \mu) = \rho$. Let $A : \calZ^n \to \bbR^k$ and $B : \calZ^n \to \bbR^k$, such that, for any $j \in [k]$ : $\Var(A_j), \Var(B_j) \le 1$.\\
	Then, there exist functions $A^{(1)} : \calZ^n \to \bbR^k$ and $B^{(1)} : \calZ^n \to \bbR^k$ such that statements 1-4 below hold.\\
	
	\noindent {\bf (Correlated Gaussian):} Let $A : \bbR^n \to \bbR^k$ and $B : \bbR^n \to \bbR^k$, such that, for any $j \in [k]$ : $\Var(A_j), \Var(B_j) \le 1$.\\
	Then, there exist functions $A^{(1)} : \bbR^n \to \bbR^k$ and $B^{(1)} : \bbR^n \to \bbR^k$ such that statements 1-4 below hold.

	\begin{enumerate}
	\item $A^{(1)}$ and $B^{(1)}$ have degree at most $d$.
	\item For any $i \in [k]$, it holds that $\Var(A^{(1)}_i) \le \Var(A_i) \le 1$ and $\Var(B^{(1)}_i) \le \Var(B_i) \le 1$.
	\item $\norm{2}{\calR(A^{(1)}) - A^{(1)}} ~\le~ \norm{2}{\calR(A) - A} + \delta$ and $\norm{2}{\calR(B^{(1)}) - B^{(1)}} ~\le~ \norm{2}{\calR(B) - B} + \delta$
	\item For every $i, j \in [k]$,
	\begin{align*}
		& \inabs{\inangle{A^{(1)}_i, B^{(1)}_j}_{\mu^{\otimes n}} - \inangle{A_i, B_j}_{\mu^{\otimes n}}} ~\le~ \frac{\delta}{\sqrt{k}} && \text{\bf (Correlated Discrete Hypercube)} \\
		& \inabs{\inangle{A^{(1)}_i, B^{(1)}_j}_{\calG_\rho^{\otimes n}} - \inangle{A_i, B_j}_{\calG_\rho^{\otimes n}}} ~\le~ \frac{\delta}{\sqrt{k}} && \text{\bf (Correlated Gaussian)}
	\end{align*}
	\end{enumerate}

	\noindent In particular, one may take $d = O\inparen{\frac{\sqrt{k}\log^2 (k/\delta)}{\delta (1 - \rho)}}$. 
\end{lem}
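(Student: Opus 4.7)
The plan is to define $A^{(1)}$ and $B^{(1)}$ by first applying a noise operator and then truncating at degree $d$:
\[ A^{(1)} ~:=~ (T_{1-\gamma} A)^{\le d} \qquad \sAND \qquad B^{(1)} ~:=~ (T_{1-\gamma} B)^{\le d}, \]
where $T_{1-\gamma}$ is the Bonami--Beckner noise operator in the discrete case and the Ornstein--Uhlenbeck operator $U_{1-\gamma}$ in the Gaussian case, applied coordinate-wise. I will set $\gamma = \Theta\!\inparen{\delta(1-\rho)/(\sqrt{k}\log(k/\delta))}$ and $d = \Theta\!\inparen{\log(\sqrt{k}/\delta)/\gamma}$, which simplifies to the stated bound $d = O\!\inparen{\sqrt{k}\log^2(k/\delta)/(\delta(1-\rho))}$.

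Conditions 1 and 2 are immediate from the construction: the truncation bounds the degree by $d$, and both $T_{1-\gamma}$ and the truncation shrink each Fourier/Hermite coefficient in absolute value, so variances only decrease. For condition 3, the key observation is that $T_{1-\gamma}$ is an averaging operator and $\Delta_k$ is convex, so $T_{1-\gamma}\calR(A)$ takes values in $\Delta_k$ almost surely; hence
\[ \norm{2}{\calR(T_{1-\gamma}A) - T_{1-\gamma}A} ~\le~ \norm{2}{T_{1-\gamma}\calR(A) - T_{1-\gamma}A} ~=~ \norm{2}{T_{1-\gamma}(\calR(A) - A)} ~\le~ \norm{2}{\calR(A) - A}. \]
Combining this with \Cref{lem:close-to-simplex} applied to $T_{1-\gamma}A$ and $A^{(1)}$, together with the truncation tail bound $\norm{2}{T_{1-\gamma}A - A^{(1)}} \le (1-\gamma)^{d+1}\norm{2}{A} \le \delta$ (by the choice of $d$), yields condition 3; the argument for $B$ is identical.

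The main work is condition 4, which I split into a noise-induced error and a truncation-induced error. Working in the diagonalising basis from \Cref{subsec:analysis_prelim}, so that $\inangle{\calX_\sigma, \calY_\tau}_\mu = \indicator_{\set{\sigma = \tau}}\prod_l \rho_{\sigma_l}$ with $|\rho_i| \le \rho$ (and analogously with the Hermite basis and $U_\rho H_\sigma = \rho^{|\sigma|}H_\sigma$ in the Gaussian case), the noise part has the form
\[ \inangle{T_{1-\gamma}A_i, T_{1-\gamma}B_j}_\mu - \inangle{A_i,B_j}_\mu ~=~ \sum_\sigma \inparen{(1-\gamma)^{2|\sigma|} - 1}\what{A}_i(\sigma)\what{B}_j(\sigma)\prod_l \rho_{\sigma_l}. \]
I split this sum at a cutoff $s = \Theta(\log(\sqrt{k}/\delta)/(1-\rho))$: for $|\sigma| \le s$, the inequality $|(1-\gamma)^{2|\sigma|} - 1| \le 2\gamma s$ and Cauchy--Schwarz (using $|\prod_l \rho_{\sigma_l}| \le 1$) give a contribution at most $2\gamma s$; for $|\sigma| > s$, the bound $|\prod_l \rho_{\sigma_l}| \le \rho^{|\sigma|} \le \rho^{s+1}$, together with Cauchy--Schwarz applied to $\what{A}_i$ and $(\rho^{|\sigma|}\what{B}_j(\sigma))$, gives a contribution at most $\rho^{s+1}$. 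The truncation error is bounded by $\norm{2}{(T_{1-\gamma}A_i)^{>d}}\norm{2}{T_{1-\gamma}B_j} + \norm{2}{T_{1-\gamma}A_i}\norm{2}{(T_{1-\gamma}B_j)^{>d}} \le 2(1-\gamma)^d$. With the chosen parameters, each of these three contributions is at most $\delta/(3\sqrt{k})$, establishing condition 4.

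The main obstacle is balancing $\gamma$ and $d$: smaller $\gamma$ shrinks the noise-induced correlation error but forces $d$ to grow to keep the truncation error small, while the cutoff $s$ must be large enough to kill the high-degree piece via the maximal-correlation bound $\rho^{s+1}$. The stated scaling of $\gamma$ and $d$ is exactly the one that balances these three errors while hitting the claimed $O(\sqrt{k}\log^2(k/\delta)/(\delta(1-\rho)))$ bound on $d$.
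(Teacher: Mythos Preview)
Your proposal is correct and follows essentially the same route as the paper: both construct $A^{(1)} = (T_{1-\gamma}A)^{\le d}$, argue condition~3 via convexity of $\Delta_k$ plus \Cref{lem:close-to-simplex}, and bound condition~4 by controlling the Fourier/Hermite tail after noising. The only cosmetic difference is that for the noise-induced error you split the sum at a cutoff $s$, whereas the paper invokes Mossel's lemma (equivalently, the single uniform bound $\rho^{m}(1-\nu^{2m})\le\eps$ for all $m$); one small point to watch is that in your truncation-error Cauchy--Schwarz you should use $\sqrt{\Var(A_i)},\sqrt{\Var(B_j)}\le 1$ rather than $\|A_i\|_2,\|B_j\|_2$ (only the variances are assumed bounded), which the direct Fourier expansion of the difference handles cleanly.
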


\section{Reduction from General Polynomials to Multilinear Polynomials} \label{sec:non-multilinear}

In this section, we present a simple technique to convert low-degree (non-multilinear) polynomials into multilinear polynomials, without hurting the correlation, albeit increasing the number of variables slightly. This step is of a similar nature as \Cref{lem:smoothing_main}. In particular, note that the conditions 1, 2, 3, 5 in the following lemma are also present in \Cref{lem:smoothing_main}. This idea also appears in \cite{DMN_NoiseStabilityComputable, DMN_NIS_decidable}. Since the exact statement we desire is slightly different, we provide a proof for completeness in \Cref{apx:non-multilinear}.

\begin{lem}[Multi-linearization Lemma] \label{lem:multilin_main}
	Let $\rho \in [0,1]$, $\delta > 0$, $d, k \in \bbZ_{\ge 0}$ be any given constant parameters. There exists an explicit $t = t(k, d, \delta)$ such that the following holds:\\
	
	\noindent Let $A : \bbR^n \to \bbR^k$ and $B : \bbR^n \to \bbR^k$ be degree-$d$ polynomials, such that, for any $j \in [k]$ : $\Var(A_j), \Var(B_j) \le 1$. Then, there exist functions $A^{(1)} : \bbR^{nt} \to \bbR^k$ and $B^{(1)} : \bbR^{nt} \to \bbR^k$ such that the following hold
	
	\begin{enumerate}
		\item $A^{(1)}$ and $B^{(1)}$ are multilinear with degree $d$.
		\item For any $j \in [k]$, it holds that $\Var(A^{(1)}_j) \le \Var(A_j) \le 1$ and $\Var(B^{(1)}_j) \le \Var(B_j) \le 1$.
		\item $\norm{2}{\calR(A^{(1)}) - A^{(1)}} ~\le~ \norm{2}{\calR(A) - A} + \delta$ and $\norm{2}{\calR(B^{(1)}) - B^{(1)}} ~\le~ \norm{2}{\calR(B) - B} + \delta$
		\item For any $\ell \in [nt]$ and $j \in [k]$, it holds that $\Inf_\ell(A^{(1)}_j) \le \delta$ and $\Inf_\ell(B^{(1)}_j) \le \delta$.
		\item For every $i, j \in [k]$,
		\[ \inabs{\inangle{A^{(1)}_i, B^{(1)}_j}_{\calG_\rho^{\otimes nt}} - \inangle{A_i, B_j}_{\calG_\rho^{\otimes n}}} ~\le~ \frac{\delta}{\sqrt{k}} \]
	\end{enumerate}
	
	\noindent In particular, one may take $t = O\inparen{\frac{k^2 d^2}{\delta^2}}$. 
\end{lem}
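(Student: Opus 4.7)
The plan is to implement a standard two-step multi-linearization: first splitting each Gaussian coordinate into $t$ copies, and then projecting onto the multilinear Hermite subspace. Fix a parameter $t = t(k,d,\delta)$ to be chosen at the end. Let $\bZ = (Z_{i,j})_{i \in [n], j \in [t]}$ denote i.i.d.\ standard Gaussians and define the intermediate substitution $\hat{A} : \bbR^{nt} \to \bbR^k$ coordinate-wise by $\hat{A}(\bZ) := A\!\left(\tfrac{1}{\sqrt{t}}\sum_{j=1}^t Z_{1,j},\, \ldots,\, \tfrac{1}{\sqrt{t}}\sum_{j=1}^t Z_{n,j}\right)$, and similarly $\hat{B}$. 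Since $\tfrac{1}{\sqrt{t}}\sum_j Z_{i,j} \sim \calN(0,1)$ independently across $i$, the joint law of $(\hat{A}(\bZ), \hat{B}(\bW))$ under the $\rho$-correlated pairs $(\bZ,\bW) \sim \calG_\rho^{\otimes nt}$ is \emph{identical} to that of $(A(\bX), B(\bY))$ under $(\bX,\bY) \sim \calG_\rho^{\otimes n}$. In particular, $\Var(\hat{A}_j) = \Var(A_j)$, $\norm{2}{\calR(\hat{A}) - \hat{A}} = \norm{2}{\calR(A) - A}$, and $\inangle{\hat{A}_i, \hat{B}_j}_{\calG_\rho^{\otimes nt}} = \inangle{A_i, B_j}_{\calG_\rho^{\otimes n}}$ hold exactly; however, $\hat{A}, \hat{B}$ are degree-$d$ but not multilinear.

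Next, I would take $A^{(1)}$ and $B^{(1)}$ to be the orthogonal projections of $\hat{A}, \hat{B}$ onto the subspace of polynomials that are multilinear of degree at most $d$ in the $(Z_{i,j})$-Hermite basis. The main quantitative step is to bound $\norm{2}{\hat{A} - A^{(1)}}$. Using the generating function $e^{xs - s^2/2} = \sum_k H_k(x)\, s^k/\sqrt{k!}$, a direct computation shows that the substitution $X_i \mapsto \tfrac{1}{\sqrt{t}}\sum_{j=1}^t Z_{i,j}$ sends $H_\sigma(X_i)$ into a polynomial in the block $(Z_{i,j})_j$ whose multilinear part is $\sqrt{\sigma!/t^\sigma}\,e_\sigma(Z_{i,1},\ldots,Z_{i,t})$, having squared Hermite norm exactly $\prod_{l=0}^{\sigma-1}(1-l/t)$. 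Because substitutions for different $i$ touch disjoint blocks of $\bZ$, the $(Z_{i,j})$-Hermite supports of the images of distinct monomials $H_\bsigma$ are disjoint, so the non-multilinear squared-mass of $\hat A$ decomposes monomial-wise and is bounded by
\[
\sum_\bsigma \what{A}(\bsigma)^2 \cdot \Bigl(1 - \prod_i\prod_{l=0}^{\sigma_i-1}(1-l/t)\Bigr) ~\le~ \frac{d^2}{2t}\, \norm{2}{A}^2,
\]
using $1-\prod(1-a_k) \le \sum a_k$ together with $\sum_i \binom{\sigma_i}{2} \le d^2/2$ whenever $|\bsigma|\le d$.

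With this bound in hand, the five properties follow from the $\ell_2$-closeness lemmas of \Cref{sec:prelim}. Property~1 is by construction. Property~2 holds because the multilinear projection fixes the constant coefficient $\what{A}(\mathbf{0}) = \Ex[A]$ and can only reduce variance. Property~3 follows from \Cref{lem:close-to-simplex} applied to $\hat{A}, A^{(1)}$, combined with $\norm{2}{\calR(\hat{A}) - \hat{A}} = \norm{2}{\calR(A) - A}$. Property~5 follows from \Cref{lem:close-strategies} applied to the mean-subtracted parts $\hat{A}_i - \Ex[A_i]$, $A^{(1)}_i - \Ex[A_i]$ (and similarly for $B$), whose $L^2$-norms are $\le \sqrt{\Var(A_i)} \le 1$, together with the identity $\inangle{\hat{A}_i, \hat{B}_j}_{\calG_\rho^{\otimes nt}} = \inangle{A_i, B_j}_{\calG_\rho^{\otimes n}}$. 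For Property~4, the explicit multilinear image $\sqrt{\sigma!/t^\sigma}\,e_\sigma(Z_{i,1},\ldots,Z_{i,t})$ makes the influence computation direct: by symmetry in $j$ and a direct count of monomials containing $Z_{i,j}$, one gets $\Inf_{(i,j)}(A^{(1)}_\ell) \le \tfrac{1}{t}\sum_\bsigma \sigma_i\,\what{A_\ell}(\bsigma)^2 \le \tfrac{d}{t}\, \Var(A_\ell) \le d/t$. Choosing $t = \Theta(k^2 d^2/\delta^2)$ then makes all four small quantities ($\norm{2}{\hat{A}-A^{(1)}}$, distance-to-simplex increment, correlation deviation per pair, and maximum influence) simultaneously smaller than the targets $\delta/\sqrt{k}$ or $\delta$.

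The main obstacle I expect is the monomial-wise orthogonality used in the second paragraph: one must argue that after the substitution, the $(Z_{i,j})$-Hermite supports of the images of distinct $H_\bsigma$ do not interfere, so that the non-multilinear squared-error decomposes cleanly over $\bsigma$. This hinges on the structural fact that $H_\sigma$ of $\tfrac{1}{\sqrt{t}}\sum_j Z_{i,j}$ expands only into $(Z_{i,j})_j$-Hermite monomials whose $i$-block total degree is exactly $\sigma$, which follows from the generating-function identity above. Once this is in place, the remaining steps are routine applications of \Cref{lem:close-strategies,lem:close-to-simplex}, and the choice of $t$ is a simple optimization.
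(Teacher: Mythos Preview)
Your proposal is correct and follows essentially the same approach as the paper: replace each $X_i$ by $t^{-1/2}\sum_j Z_{i,j}$, project onto the multilinear Hermite subspace, bound the lost mass via the Hermite multinomial identity (the paper's Fact~C.2), and derive the five properties using \Cref{lem:close-strategies,lem:close-to-simplex} together with the symmetry-based influence bound $\Inf_{(i,j)} \le d/t$. Your computation of the exact multilinear squared-norm $\prod_{l=0}^{\sigma-1}(1-l/t)$ and your care in applying \Cref{lem:close-strategies} to mean-subtracted functions (since only $\Var \le 1$ is assumed, not $\|\cdot\|_2 \le 1$) are slightly more precise than the paper's treatment, but the argument is otherwise the same.
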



\section{Non-Interactive Simulation from Correlated Gaussian Sources} \label{sec:noise-stability}

In this section, we show our main theorem regarding non-interactive simulation from Correlated Gaussian sources. That is, we show \Cref{th:NIS_Gaussian_src} (restated below as \Cref{thm:gaussian-non-int-sim}), and \Cref{th:noise-stability-informal} (which follows immediately as \Cref{cor:noise-stability}). 

\begin{theorem}\label{thm:gaussian-non-int-sim}
	Given parameters $k \ge 2$, $\rho \in [0,1]$ and $\eps > 0$, there exists an explicitly computable $n_0 = n_0(\rho, k, \eps)$ such that the following holds:

	\noindent For any $N$, and any $A : \bbR^N \to \Delta_k$ and $B : \bbR^N \to \Delta_k$, there exist functions $\wtilde{A} : \bbR^{n_0} \to \Delta_k$ and $\wtilde{B} : \bbR^{n_0} \to \Delta_k$ such that,
	\[\dTV\inparen{(A(\bX),B(\bY))_{(\bX,\bY)\sim \calG_\rho^{\otimes N}},\ (\wtilde{A}(\ba), \wtilde{B}(\bb))_{(\ba, \bb) \sim \calG_\rho^{\otimes n_0}}} \le \eps\;.\]
	Moreover, there exists $d_0 = d_0(\rho, k, \eps)$ for which there are degree-$d_0$ polynomials $A_0 : \bbR^{n_0} \to \bbR^k$ and $B_0 : \bbR^{n_0} \to \bbR^k$, such that, $\wtilde{A}(\ba) = \calR\inparen{A_0\inparen{\frac{\ba}{\|\ba\|_2}}}$ and $\wtilde{B}(\bb) = \calR\inparen{B_0\inparen{\frac{\bb}{\|\bb\|_2}}}$.\\
	In particular, one may take $n_0 = \exp \inparen{\poly\inparen{k, \frac{1}{\eps}, \frac{1}{1-\rho}}}$ and $d_0 = \poly\inparen{k, \frac{1}{\eps}, \frac{1}{1-\rho}}$.\footnote{the details of the exact value of $n_0$ could be inferred from combining the bounds across various lemmas used. We skip it for brevity, and instead stress on the qualitative nature of the bound.}\\
	
	\noindent In fact, the transformation satisfies a stronger property that there exists an ``oblivious'' randomized transformation (with a shared random seed) to go from $A$ to $\wtilde{A}$ and from $B$ to $\wtilde{B}$, which works with probability at least $1-\eps$. Since the same transformation is applied on $A$ and $B$ simultaneously with the same random seed, if $A = B$, then the transformation gives $\wtilde{A}=\wtilde{B}$ as well.
	
\end{theorem}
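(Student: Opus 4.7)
The plan is to execute the four-step reduction sketched in \Cref{subsec:proof_outline}. Starting from $A, B : \bbR^N \to \Delta_k$, I first apply the smoothing \Cref{lem:smoothing_main} with parameter $\delta_1$ to obtain degree-$d$ polynomials $A^{(1)}, B^{(1)} : \bbR^N \to \bbR^k$ with $d = \poly(k, 1/\eps, 1/(1-\rho))$. Next I apply the multilinearization \Cref{lem:multilin_main} with parameter $\delta_2$ to obtain multilinear degree-$d$ polynomials $A^{(2)}, B^{(2)} : \bbR^{Nt} \to \bbR^k$, with $t = \poly(k, d, 1/\delta_2)$. Now $A^{(2)}, B^{(2)}$ fit the hypothesis of the main dimension-reduction \Cref{th:dim_red_Gauss_space}, so I apply it with parameter $\delta_3$ to obtain multilinear polynomials $A^{(3)}, B^{(3)} : \bbR^{n_0} \to \bbR^k$ via the substitution in \Cref{eqn:dim-red-substitution}, with $n_0 = \exp\inparen{\poly(d, \log k, \log(1/\delta_3))} = \exp\inparen{\poly(k, 1/\eps, 1/(1-\rho))}$. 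Finally I set $\wtilde{A} = \calR(A^{(3)})$ and $\wtilde{B} = \calR(B^{(3)})$ to land in $\Delta_k$; these automatically take the required form $\calR(A_0(\ba/\norm{2}{\ba}))$ and $\calR(B_0(\bb/\norm{2}{\bb}))$ with $d_0 = d$.

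Across all four steps I maintain two invariants: (a) for every $(i,j) \in [k]^2$, the correlation $\inangle{A^{(s)}_i, B^{(s)}_j}_{\calG_\rho^{\otimes \cdot}}$ stays within $O(\eps/k^2)$ of its initial value $\inangle{A_i, B_j}_{\calG_\rho^{\otimes N}}$; and (b) the distances $\norm{2}{\calR(A^{(s)}) - A^{(s)}}$ and $\norm{2}{\calR(B^{(s)}) - B^{(s)}}$ remain small. Invariant (a) follows directly from clause 4 of \Cref{lem:smoothing_main}, clause 5 of \Cref{lem:multilin_main}, and the correlation bound in \Cref{th:dim_red_Gauss_space}, and for the final rounding step from \Cref{lem:close-strategies} combined with invariant (b). Invariant (b) starts at $0$ (since $A, B$ map into $\Delta_k$), is preserved with additive $\delta$ loss in steps (i) and (ii) by clause 3 of both lemmas, and in the dimension-reduction step is preserved up to a $\sqrt{\cdot}$ loss by \Cref{prop:dim-red-close-to-simplex}; \Cref{lem:close-to-simplex} additionally controls how the simplex distance transfers to $\wtilde{A}, \wtilde{B}$ after rounding. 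Because the joint distribution of $(\wtilde{A}(\ba), \wtilde{B}(\bb))$ on $[k] \times [k]$ is determined by the $k^2$ values $\inangle{\wtilde{A}_i, \wtilde{B}_j}_{\calG_\rho^{\otimes n_0}}$, controlling each to within $\eps/k^2$ of $\inangle{A_i, B_j}_{\calG_\rho^{\otimes N}}$ yields the desired $\dTV \le \eps$.

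Two points require care. The first is pure bookkeeping: one needs $\delta_1, \delta_2, \delta_3$ to be appropriate polynomials in $\eps/k$ so that both the direct correlation errors and the $\sqrt{\cdot}$ blow-up in \Cref{prop:dim-red-close-to-simplex} (applied to the pre-dimension-reduction simplex distance, itself of order $\delta_1 + \delta_2$) fit within the per-correlation budget of $\eps/k^2$. Concretely, choices such as $\delta_1 = \delta_2 = \Theta(\eps^2/k^3)$ and $\delta_3 = \Theta(\eps/k^2)$ suffice, while still yielding $n_0 = \exp\inparen{\poly(k, 1/\eps, 1/(1-\rho))}$; the enabling feature is that the output dimension in \Cref{th:dim_red_Gauss_space} depends only on $d, k, \delta_3$ and is independent of $N$ and $Nt$. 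The second and perhaps only genuinely delicate point is the obliviousness of the pipeline: smoothing and multilinearization act on $A$ and $B$ through deterministic operators depending only on $\rho, k, \delta$, and the dimension-reduction step uses a single shared Gaussian matrix $M$ applied identically to both functions. Thus the whole construction is a randomized map $A \mapsto \wtilde{A}$ depending only on the input function and the shared seed $M$, so $A = B$ automatically forces $\wtilde{A} = \wtilde{B}$. The $1-\eps$ success probability then follows from the $1-3\delta_3$ guarantee already built into \Cref{th:dim_red_Gauss_space}, which by construction encompasses both the correlation preservation across all $k^2$ coordinate pairs and the simplex-distance preservation for $A$ and $B$ simultaneously.
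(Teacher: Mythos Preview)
Your proposal is correct and follows essentially the same four-step pipeline as the paper's own proof (smoothing via \Cref{lem:smoothing_main}, multilinearization via \Cref{lem:multilin_main}, dimension reduction via \Cref{th:dim_red_Gauss_space}, then rounding via \Cref{lem:close-strategies}), tracking the same two invariants and handling obliviousness in the same way. One minor arithmetic slip: your suggested $\delta_1 = \delta_2 = \Theta(\eps^2/k^3)$ yields a post-dimension-reduction simplex distance of order $\sqrt{\delta_1+\delta_2} = \Theta(\eps/k^{1.5})$, hence a per-pair rounding error of the same order, which exceeds the $\eps/k^2$ budget; the paper takes a single $\delta = O(\eps^2/k^4)$ throughout, and with that adjustment your argument goes through unchanged with the same qualitative $n_0, d_0$ bounds.
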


\noindent Before proving the theorem, we remark that it immediately implies the desired statement needed to prove a dimension bound on $\eps$-approximate noise stable function (i.e. \Cref{th:noise-stability-informal}).

\begin{cor}\label{cor:noise-stability}
	Given parameters $k \ge 2$, $\rho \in [0,1]$ and $\eps > 0$, there exists an explicitly computable $n_0 = n_0(\rho, k, \eps)$ such that the following holds:
	
	Let $f : \bbR^N \to [k]$. Then, there exists a function $\wtilde{f} : \bbR^{n_0} \to \Delta_k$ such that
	\begin{enumerate}
		\item $\norm{1}{\Ex[f] - \Ex[\wtilde{f}]} \le \eps$.
		\item $\Ex\insquare{\inangle{\wtilde{f}, U_\rho \wtilde{f}}} \ge \Ex[\inangle{f, U_\rho f}] - \eps$.
	\end{enumerate}
	Moreover, there exists $d_0 = d_0(\rho, k, \eps)$ for which there is a degree-$d_0$ polynomial $g : \bbR^{n_0} \to \bbR^k $, such that, $\wtilde{f}(\ba) = \calR\inparen{g\inparen{\frac{\ba}{\|\ba\|_2}}}$. In particular, one may take $n_0 = \exp \inparen{\poly\inparen{k, \frac{1}{\eps}, \frac{1}{1-\rho}}}$ and $d_0 = \poly\inparen{k, \frac{1}{\eps}, \frac{1}{1-\rho}}$.
\end{cor}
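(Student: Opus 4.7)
The plan is to derive Corollary~\ref{cor:noise-stability} as a direct specialization of Theorem~\ref{thm:gaussian-non-int-sim} to the symmetric case $A=B=f$. I would first view $f : \bbR^N \to [k]$ as a $\Delta_k$-valued function via the embedding $i \mapsto \be_i$, so each coordinate $f_i$ is the indicator that $f$ outputs $i$. Then I would invoke Theorem~\ref{thm:gaussian-non-int-sim} with Alice's and Bob's strategies both equal to this $f$, and with error parameter $\eps/2$ in place of $\eps$, producing $\wtilde{A}, \wtilde{B} : \bbR^{n_0} \to \Delta_k$.

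The essential ingredient is the final \emph{oblivious} clause of Theorem~\ref{thm:gaussian-non-int-sim}: because the randomized dimension-reduction depends only on a shared random seed and is applied in the same way to $A$ and $B$, feeding in the same input on both sides yields $\wtilde{A} = \wtilde{B}$. I would set $\wtilde{f} := \wtilde{A} = \wtilde{B}$ and $g := A_0 = B_0$, whereupon the ``Moreover'' clause of the theorem directly supplies the representation $\wtilde{f}(\ba) = \calR\inparen{g\inparen{\ba/\|\ba\|_2}}$ with $g$ a degree-$d_0$ polynomial and $n_0, d_0$ of the required quantitative form.

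It remains only to translate the bound $\dTV\inparen{(f(\bX),f(\bY)),\, (\wtilde{f}(\ba),\wtilde{f}(\bb))} \le \eps/2$ --- where the two copies of $\wtilde{f}$ use independent internal randomness, so that $\Ex[\wtilde{f}_i(\ba)\wtilde{f}_j(\bb)]$ is exactly the probability mass of the pair $(i,j)$ under the randomized joint output --- into the two stated conclusions. For (1): marginalization does not increase TV distance, and on finite sets the $\ell_1$ distance between two probability measures equals twice their TV distance; since $\Ex[f]$ and $\Ex[\wtilde{f}]$ are precisely these marginals viewed as elements of $\Delta_k$, this yields $\norm{1}{\Ex[f] - \Ex[\wtilde{f}]} \le 2 \cdot \eps/2 = \eps$. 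For (2): one can rewrite $\Ex\insquare{\inangle{\wtilde{f}, U_\rho \wtilde{f}}} = \sum_i \Ex_{(\ba,\bb)}[\wtilde{f}_i(\ba)\wtilde{f}_i(\bb)] = \Pr[U=V]$ where $U,V$ are the randomized outputs, and similarly $\Ex[\inangle{f, U_\rho f}] = \Pr[f(\bX) = f(\bY)]$; these two numbers differ by the change in probability of the single joint event $\set{u=v}$, hence by at most the TV distance $\eps/2 \le \eps$.

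There is no substantive obstacle: Corollary~\ref{cor:noise-stability} is essentially a packaging lemma, and the only items requiring care are the factor-of-two rescaling between TV and $\ell_1$ distance and the reinterpretation of $\Delta_k$-valued strategies as randomized $[k]$-valued strategies with independent randomness on the two sides. The oblivious-transformation clause is precisely what makes the symmetric case collapse to a single $\wtilde{f}$; without it, one would a priori obtain two distinct functions $\wtilde{A}, \wtilde{B}$ and it would be unclear how to merge them into one noise-stable function of the required form.
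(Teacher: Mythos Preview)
Your proposal is correct and follows essentially the same route as the paper: invoke Theorem~\ref{thm:gaussian-non-int-sim} with $A=B=f$ and parameter $\eps/2$, use the oblivious clause to collapse $\wtilde{A}=\wtilde{B}=:\wtilde{f}$, and then read off (1) and (2) from the TV bound. Your treatment is in fact slightly more careful than the paper's, which simply notes that both target quantities are bounded by at most twice the total variation distance without spelling out the marginalization and the $\ell_1$-versus-TV factor.
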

\begin{proof}
	We invoke \Cref{thm:gaussian-non-int-sim} with both $A$ and $B$ as $f$ and with parameter $\eps/2$, thereby obtaining functions $\wtilde{A}$, $\wtilde{B}$ which map $\bbR^{n_0} \to \Delta_k$. Note that since $A = B = f$, we also have $\wtilde{A} = \wtilde{B} = \wtilde{f}$. We get both our desired goals by observing that both $\norm{1}{\Ex[f] - \Ex[\wtilde{f}]}$ and $\inabs{\Ex\insquare{\inangle{\wtilde{f}, U_\rho \wtilde{f}}} - \Ex[\inangle{f, U_\rho f}]}$ are upper bounded by at most twice the total variation distance between the distributions $(A,B)_{\bX,\bY}$ and $(\wtilde{A}, \wtilde{B})_{\ba, \bb}$.
\end{proof}


The rest of this section is dedicated to proving \Cref{thm:gaussian-non-int-sim}. We first provide the main intuition behind the proof. Starting with functions $A : \bbR^N \to \Delta_k$ and $B: \bbR^N \to \Delta_k$, we would have liked to directly apply our dimension reduction. That would have entailed having $\wtilde{A}(\ba) = A(M\ba/\norm{2}{\ba})$ and $\wtilde{B}(\bb) = B(M\bb/\norm{2}{\bb})$, where $(\ba, \bb) \sim \calG_\rho^{\otimes n_0}$ and $M$ is a $N \times n_0$ matrix with entries sampled randomly from $\calN(0,1)$. This already gives us that the range of $\wtilde{A}$ and $\wtilde{B}$ is $\Delta_k$, since that was the range of $A$ and $B$ as well. Thus, if our dimension reduction were to approximately preserve correlations, i.e. $\inangle{A_i, B_j}_{\calG_\rho^{\otimes N}} \approx \inangle{A_i, B_j}_{\calG_\rho^{\otimes n_0}}$ for all $i, j \in [k]$ with high probability over $M$, we would have been done! However our actual dimension reduction (\Cref{thm:dim-reduction}) works only for low-degree multilinear polynomials $A$ and $B$. To get around this, we first apply the Smoothing (\Cref{lem:smoothing_main}) and Multilinearization (\Cref{lem:multilin_main}) transformations that make $A$ and $B$ both low-degree and multilinear, and then subsequently apply our dimension reduction (\Cref{thm:dim-reduction}). Unfortunately, this creates a new problem, that after these transformations, the range is no longer $\Delta_k$, but is instead $\bbR^k$. Nevertheless, we do have that these transformations ensure that the functions still output something ``close'' to the simplex $\Delta_k$. This allows us to use the standard rounding operation to get the range as $\Delta_k$ again (using \Cref{lem:close-strategies}). An overview of the transformations done is presented in \Cref{fig:Gaussian_NIS}.

\begin{proof}[Proof of \Cref{thm:gaussian-non-int-sim}]
For any $i, j \in [k]$, we focus on the quantity $\inangle{A_i, B_j}_{\calG_\rho^{\otimes n}}$ which is the probability of the event that [{\em Alice outputs $i$ and Bob outputs $j$}]. Through several steps, we modify Alice's and Bob's strategies, while approximately preserving this quantity for every $i, j$. Note that if we preserve the probability that Alice outputs $i$ and Bob outputs $j$ for every $i, j$ upto an additive $\eps/k^2$, this implies that we would preserve the joint distribution of Alice and Bob's outputs up to $\ell_1$-distance of $\eps$.

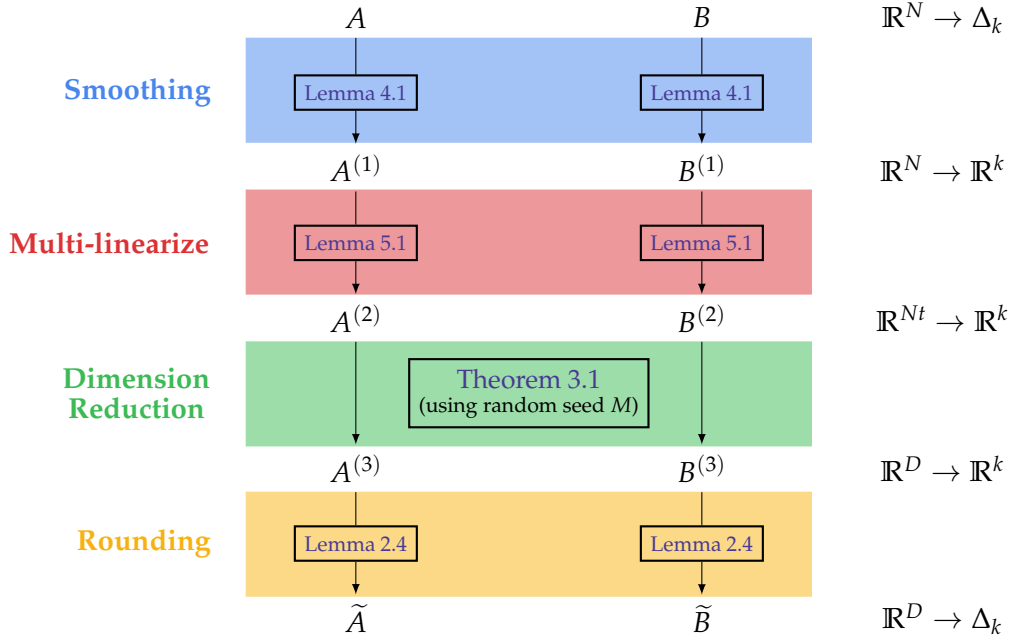
\begin{figure}
\begin{center}
\begin{tikzpicture}[scale=1, transform shape]

\def\ColSmooth{Gblue}
\def\ColMultiLin{Gred}
\def\ColDimRed{Ggreen}
\def\ColRound{Gyellow}
\def\Brightness{50}

\def \stepname{-1.8}
\def \alice{0}
\def \bob{4.6}
\def \midd{2.3}
\def \descrip{7.8}
\def \colorwidth{7.5cm}

\def \ycorr{0}
\node (A) at (\alice,\ycorr) {$A$};
\node (B) at (\bob, \ycorr) {$B$};
\node at(\descrip, \ycorr) {$\bbR^N \to \Delta_k$};

\def \ycorr{-1}
\node[box, below, minimum width=\colorwidth, minimum height=1.37cm, fill=\ColSmooth!\Brightness, draw=\ColSmooth!\Brightness] at (\midd, \ycorr+0.73) {};
\node[left] at (\stepname,\ycorr) {\textcolor{\ColSmooth}{\bf Smoothing}};
\node[box] (SmoothA) at (\alice,\ycorr) {\scriptsize  \Cref{lem:smoothing_main}} edge[-] (A);
\node[box] (SmoothB) at (\bob,\ycorr) {\scriptsize  \Cref{lem:smoothing_main}} edge[-] (B);

\def \ycorr{-2}
\node (A1) at (\alice,\ycorr) {$A^{(1)}$} edge[latex-] (SmoothA);
\node (B1) at (\bob,\ycorr) {$B^{(1)}$} edge[latex-] (SmoothB);
\node at (\descrip,\ycorr) {$\bbR^N \to \bbR^k$};

\def \ycorr{-3}
\node[box, below, minimum width=\colorwidth, minimum height=1.37cm, fill=\ColMultiLin!\Brightness, draw=\ColMultiLin!\Brightness] at (\midd, \ycorr+0.71) {};
\node[left] at (\stepname,\ycorr) {\textcolor{\ColMultiLin}{\bf Multi-linearize}};
\node[box] (MultiLinA) at (\alice,\ycorr) {\scriptsize \Cref{lem:multilin_main}} edge[-] (A1);
\node[box] (MultiLinB) at (\bob,\ycorr) {\scriptsize \Cref{lem:multilin_main}} edge[-] (B1);

\def \ycorr{-4}
\node (A2) at (\alice,\ycorr) {$A^{(2)}$} edge[latex-] (MultiLinA);
\node (B2) at (\bob,\ycorr) {$B^{(2)}$} edge[latex-] (MultiLinB);
\node at (\descrip,\ycorr) {$\bbR^{Nt} \to \bbR^k$};

\def \ycorr{-5}
\node[box, below, minimum width=\colorwidth, minimum height=1.37cm, fill=\ColDimRed!\Brightness, draw=\ColDimRed!\Brightness] at (\midd, \ycorr+0.69) {};
\node[left] at (\stepname,\ycorr) {\shortstack{\textcolor{\ColDimRed}{\bf Dimension} \\ \textcolor{\ColDimRed}{\bf Reduction}}};
\node[box] (DimRed) at (\midd,\ycorr) {\shortstack{\small \Cref{thm:dim-reduction} \\ \scriptsize (using random seed $M$)}};

\def \ycorr{-6}
\node (A3) at (\alice,\ycorr) {$A^{(3)}$} edge[latex-] (A2);
\node (B3) at (\bob,\ycorr) {$B^{(3)}$} edge[latex-] (B2);
\node at (\descrip,\ycorr) {$\bbR^D \to \bbR^k$};

\def \ycorr{-7}
\node[box, below, minimum width=\colorwidth, minimum height=1.37cm, fill=\ColRound!\Brightness, draw=\ColRound!\Brightness] at (\midd, \ycorr+0.69) {};
\node[left] at (\stepname,\ycorr) {\textcolor{\ColRound}{\bf Rounding}};
\node[box] (RoundA) at (\alice,\ycorr) {\scriptsize \Cref{lem:close-strategies}} edge[-] (A3);
\node[box] (RoundB) at (\bob,\ycorr) {\scriptsize \Cref{lem:close-strategies}} edge[-] (B3);

\def \ycorr{-8}
\node (A4) at (\alice,\ycorr) {$\wtilde{A}$} edge[latex-] (RoundA);
\node (B4) at (\bob,\ycorr) {$\wtilde{B}$} edge[latex-] (RoundB);
\node at (\descrip,\ycorr) {$\bbR^D \to \Delta_k$};

\end{tikzpicture}
\caption{Transformations for Non-interactive simulation from Correlated Gaussian Sources}
\label{fig:Gaussian_NIS}
\end{center}
\end{figure}

We transform $A$ and $B$ through each of the following steps, as illustrated in \Cref{fig:Gaussian_NIS}. At each step, we approximately preserve the correlation $\inangle{A_i, B_j}$ for every $i, j \in [k]$. Additionally, in each step $\norm{2}{\calR(A) - A}$ and $\norm{2}{\calR(B) - B}$ also doesn't increase significantly. Note that, to begin with the range of $A$ and $B$ is $\Delta_k$ and hence $\norm{2}{\calR(A) - A} = \norm{2}{\calR(B) - B} = 0$.

\begin{enumerate}
	\item {\bf Smoothing}: We apply \Cref{lem:smoothing_main} with parameter $\delta$, setting $d = d(\rho, k, \delta)$ as required, on $A$ and $B$ to get the smoothened versions $A^{(1)} : \bbR^N \to \bbR^k$ and $B^{(1)} : \bbR^N \to \bbR^k$. This guarantees that $A^{(1)}$ and $B^{(1)}$ have degree at most $d$. Moreover, we have that for every $i, j \in [k]$,
	\begin{equation}
	\inabs{\inangle{A^{(1)}_i, B^{(1)}_j}_{\calG_\rho^{\otimes N}} - \inangle{A_i, B_j}_{\calG_\rho^{\otimes N}}} \le \delta \label{eqn:GNIS-1}
	\end{equation}
	Additionally,
	\[ \norm{2}{\calR(A^{(1)}) - A^{(1)}} ~\le~ \norm{2}{\calR(A) - A} + \delta ~\le~ \delta\;.\]
	Similarly, we also have that,
	\[\norm{2}{\calR(B^{(1)}) - B^{(1)}} ~\le~ \delta\]
	
	\item {\bf Multilinearization}: We apply \Cref{lem:multilin_main} with parameter $\delta$, setting $t = t(d, k, \delta)$ as required, on $A^{(1)}$ and $B^{(1)}$ to get the multilinearized versions $A^{(2)} : \bbR^{Nt} \to \bbR^k$ and $B^{(2)} : \bbR^{Nt} \to \bbR^k$. This guarantees that both $A^{(2)}$ and $B^{(2)}$ are multilinear and have degree at most $d$, albeit over a slightly larger number of variables. We get,
	\begin{equation}
	\inabs{\inangle{A^{(2)}_i, B^{(2)}_j}_{\calG_\rho^{\otimes Nt}} - \inangle{A^{(1)}_i, B^{(1)}_j}_{\calG_\rho^{\otimes N}}} \le \delta \label{eqn:GNIS-2}
	\end{equation}
	Additionally,
	\[ \norm{2}{\calR(A^{(2)}) - A^{(2)}} ~\le~ \norm{2}{\calR(A^{(1)}) - A^{(1)}} + \delta ~\le~ 2\delta\]
	Similarly, we also have that,
	\[ \norm{2}{\calR(B^{(2)}) - B^{(2)}} ~\le~ 2 \delta\]
	
	\item {\bf Dimension reduction}: We apply \Cref{thm:dim-reduction} with parameter $\delta/k^2$, setting $D = D(d, \rho, \delta/k^2)$ as required, on individual coordinates of $A^{(2)}$ and $B^{(2)}$ to obtain functions $A^{(3)} : \bbR^D \to \bbR^k$ and $B^{(3)} : \bbR^D \to \bbR^k$. Taking a union bound, we have that with probability at least $1-\delta$, it holds for every $i, j \in [k]$ that, 
	\begin{equation}
	\inabs{\inangle{A^{(3)}_i, B^{(3)}_j}_{\calG_\rho^{\otimes D}} - \inangle{A^{(2)}_i, B^{(2)}_j}_{\calG_\rho^{\otimes Nt}}} \le \delta \label{eqn:GNIS-3}
	\end{equation}
	From \Cref{prop:dim-red-close-to-simplex}, we have that with probability $1-4\delta$,
	\[ \norm{2}{\calR(A^{(3)}) - A^{(3)}} ~\le~ \sqrt{\norm{2}{\calR(A^{(2)}) - A^{(2)}}} ~\le~ \sqrt{2\delta}\]
	\[\norm{2}{\calR(B^{(3)}) - B^{(3)}} ~\le~ \sqrt{\norm{2}{\calR(B^{(1)}) - B^{(1)}}} ~\le~ \sqrt{2\delta}\]
	
	Note that this is the only randomized procedure in the entire transformation. This reduction succeeds in obtaining the three constraints above with probability at least $1 - 5\delta$.
	
	\item {\bf Rounding to $\Delta_k$}: We obtain our final functions as $\wtilde{A} = \calR(A^{(3)})$ and $\wtilde{B} = \calR(B^{(3)})$. 
	Note that $\norm{2}{\calR(A^{(3)}) - A^{(3)}}, \norm{2}{\calR(B^{(3)}) - B^{(3)}} \le \sqrt{2\delta}$, and hence we have for any $i, j \in [k]$ that \[\norm{2}{\wtilde{A}_i - A_i^{(3)}} \le \sqrt{2 \delta} \qquad \sAND \qquad \norm{2}{\wtilde{B}_j - B_j^{(3)}} \le \sqrt{2\delta}\;.\]
	Hence we can invoke \Cref{lem:close-strategies}, to conclude that for $\wtilde{A} = \calR(A^{(3)})$ and $\wtilde{B} = \calR(B^{(3)})$,
	\begin{equation}
	\inabs{\inangle{\wtilde{A}_i, \wtilde{B}_j}_{\calG_\rho^{\otimes D}} - \inangle{A^{(3)}_i, B^{(3)}_j}_{\calG_\rho^{\otimes D}}} ~\le~ 2 \sqrt{\delta}. \label{eqn:GNIS-4}
	\end{equation}
\end{enumerate}

\noindent Finally, we choose $n_0 = D$ and $d_0 = d$ as obtained above. Note that we started with functions $A: \bbR^N \to \Delta_k$ and $B : \bbR^N \to \Delta_k$ and we ended with functions $\wtilde{A} : \bbR^{n_0} \to \Delta_k$ and $\wtilde{B} : \bbR^{n_0} \to \Delta_k$ such that for every $i, j \in [k]$, we have by combining \Cref{eqn:GNIS-1,eqn:GNIS-2,eqn:GNIS-3,eqn:GNIS-4} that,
\[ \inabs{\inangle{\wtilde{A}_i, \wtilde{B}_j}_{\calG_\rho^{\otimes D}} - \inangle{A_i, B_j}_{\calG_\rho^{\otimes N}}} ~\le~ O(\sqrt{\delta})\]

\noindent Thus, more strongly, if we instantiate $\delta = O(\eps^2/k^4)$, then we get that our entire transformation succeeds with probability $1-\eps$ in obtaining $\wtilde{A}$ and $\wtilde{B}$ such that,
\[\dTV((A(\bX),B(\bY))_{\bX,\bY}, (\wtilde{A}(\ba), \wtilde{B}(\bb))_{\ba, \bb}) \le \eps\;,\]
where recall that $(\bX, \bY) \sim \calG_\rho^{\otimes N}$ and $(\ba, \bb) \sim \calG_\rho^{\otimes n_0}$. It is easy to see that the parameters work out to be
\[ d_0 = d = \wtilde{O}\inparen{\frac{k^{4.5}}{\eps^2 (1-\rho)}}\;, \]
\[ n_0 = D = \frac{d^{O(d)}}{\delta^4} = \exp\inparen{\wtilde{O}\inparen{\frac{k^{4.5}}{\eps^2 (1-\rho)}}}\;.\]
\end{proof}

\section{Non-Interactive Simulation from Arbitrary Discrete Sources} \label{sec:non-int-sim}

In this section we prove our main theorem regarding non-interactive simulation from arbitrary discrete sources. That is, we prove \Cref{th:non-int-sim} (restated below as \Cref{thm:non-int-sim}). 

\begin{theorem}\label{thm:non-int-sim}
	Let $(\calZ \times \calZ, \mu)$ be a joint probability space. Given parameters $k \ge 2$ and $\eps > 0$, there exists an explicitly computable $n_0 = n_0(\mu, k, \eps)$ such that the following holds:
	
	Let $A : \calZ^N \to \Delta_k$ and $B : \calZ^N \to \Delta_k$. Then there exist functions $\wtilde{A} : \calZ^{n_0} \to \Delta_k$ and $\wtilde{B} : \calZ^{n_0} \to \Delta_k$ such that,
	\[\dTV\inparen{(A(\bx),B(\by))_{(\bx,\by)\sim \mu^{\otimes N}}, \ (\wtilde{A}(\ba), \wtilde{B}(\bb))_{\ba, \bb \sim \mu^{\otimes n_0}}} \le \eps\;.\]
	In particular, $n_0$ is an explicit function upper bounded by $\exp\inparen{\poly\inparen{k, \frac{1}{\eps}, \frac{1}{1-\rho}, \log\inparen{\frac{1}{\alpha}}}}$\footnote{the details of the exact value of $n_0$ could be inferred from combining the bounds across various lemmas used. We skip it for brevity, and instead stress on the qualitative nature of the bound.}, where $\alpha = \alpha(\mu)$ is the smallest atom in $\mu$ and $\rho = \rho(\mu)$ is the maximal correlation of $\mu$.
\end{theorem}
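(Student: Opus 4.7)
\medskip

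\noindent \textbf{Proof Plan.} The plan is to reduce Theorem~\ref{thm:non-int-sim} to the Gaussian case (Theorem~\ref{thm:gaussian-non-int-sim}) by following the framework of \cite{GKS_NIS_decidable}, combining a Regularity Lemma, an Invariance Principle, and our Smoothing/Multilinearization lemmas together with the \emph{oblivious} dimension-reduction guarantee of Theorem~\ref{thm:gaussian-non-int-sim}. Throughout, I keep track of two quantities for every coordinate pair $(i,j) \in [k] \times [k]$: the correlation $\inangle{A_i, B_j}_{\mu^{\otimes n}}$ and the expected $\ell_2^2$ distance $\norm{2}{\calR(A)-A}$, $\norm{2}{\calR(B)-B}$ of the strategies from the simplex $\Delta_k$. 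Preserving the former for every $(i,j)$ up to an additive $\eps/k^2$ controls the total variation distance between the joint output distributions, while preserving the latter allows a final rounding step via Lemmas~\ref{lem:close-strategies} and~\ref{lem:close-to-simplex}.

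\smallskip

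\noindent First, I apply the discrete version of the Smoothing Lemma (Lemma~\ref{lem:smoothing_main}) to replace $A$ and $B$ by degree-$d$ polynomials $A^{(1)}, B^{(1)}: \calZ^N \to \bbR^k$ with $d = d(\rho, k, \delta)$, at the cost of a $\delta/\sqrt{k}$ change in each correlation and a $\delta$ increase in the distance to $\Delta_k$. Next, I apply the Regularity Lemma from Section~\ref{sec:regularity} to each coordinate of $A^{(1)}$ and $B^{(1)}$, partitioning the variables into a constant-size ``heavy'' set $H \subseteq [N]$ of coordinates with large influences and a ``light'' set on which all restricted functions have influence at most $\tau$ (for a suitable $\tau = \tau(\rho,k,\delta)$). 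This expresses $A^{(1)}$ and $B^{(1)}$ as a convex combination, indexed by the (finitely many) restrictions $\balpha, \bbeta \in \calZ^H$, of low-influence functions $A^{(1)}_{\balpha}, B^{(1)}_{\bbeta}: \calZ^{N-|H|} \to \bbR^k$; the inner product $\inangle{A_i^{(1)}, B_j^{(1)}}_{\mu^{\otimes N}}$ decomposes as a sum over $(\balpha,\bbeta)$ weighted by $\mu^{\otimes |H|}(\balpha,\bbeta)$.

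\smallskip

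\noindent Using the Invariance Principle (Section~\ref{sec:invariance}) on each low-influence restricted function, I replace the discrete-source correlations $\inangle{A^{(1)}_{\balpha,i},\, B^{(1)}_{\bbeta,j}}_{\mu^{\otimes (N-|H|)}}$ by corresponding Gaussian-source correlations of polynomials $A^{(2)}_{\balpha}, B^{(2)}_{\bbeta}: \bbR^{N-|H|} \to \bbR^k$, up to an error that is negligible provided $\tau$ is chosen small enough compared to $d$ and $\delta$. A multilinearization step (Lemma~\ref{lem:multilin_main}) then makes these polynomials multilinear. At this point I have a family of at most $T = |\calZ|^{|H|}$ Gaussian polynomials on each side, whose pairwise correlations under $\calG_\rho^{\otimes N'}$ encode the desired joint distribution (averaged over restrictions). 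I now invoke Theorem~\ref{thm:gaussian-non-int-sim} (equivalently, Theorem~\ref{thm:dim-reduction}) with a single random matrix $M$ \emph{simultaneously} on all $T$ pairs; by a union bound over the $T^2$ pairs and the oblivious nature of the transformation, with high probability over $M$ every correlation and every distance to $\Delta_k$ is approximately preserved, yielding dimension-reduced Gaussian polynomials over $\bbR^{n_0}$. Finally, I apply the Invariance Principle in the reverse direction to transport these polynomials back to functions over $\calZ^{n_0}$, round to $\Delta_k$ using Lemma~\ref{lem:close-strategies}, and reassemble the $|H|$ ``heavy'' restrictions by drawing $|H|$ additional samples from $\mu$ (which supplies a $(\balpha,\bbeta) \sim \mu^{\otimes |H|}$ used to select the appropriate restricted strategy on each side; since $|H|$ is a constant, this only adds a constant to $n_0$).

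\smallskip

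\noindent The main obstacle is the simultaneous invocation of Theorem~\ref{thm:gaussian-non-int-sim} across the family of $T^2$ restricted-polynomial pairs produced by the Regularity Lemma: this is precisely why the \emph{oblivious} property of our dimension reduction (the fact that $M$ does not depend on which $A^{(i)}, B^{(j)}$ pair we are preserving) is indispensable. A naive non-oblivious reduction would require reducing each pair independently, producing $T^2$ different low-dimensional representations that could not be combined into a single pair of functions $\wtilde{A}, \wtilde{B}$. With the oblivious guarantee, one draw of $M$ works for the whole family with failure probability at most $T^2 \cdot \delta \ll \eps$, provided $\delta$ is set polynomially small in $\eps / T$. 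Tracking the parameters through all six steps yields the claimed bound $n_0 \le \exp(\poly(k, 1/\eps, 1/(1-\rho_0), \log(1/\alpha)))$, with the $\log(1/\alpha)$ dependence arising from the regularity lemma (controlling the size of $H$) and the $1/(1-\rho_0)$ dependence inherited from the smoothing step.
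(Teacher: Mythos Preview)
Your overall architecture is right — smooth, apply regularity, pass to Gaussians via invariance, reduce dimension, come back — but there are two genuine gaps.

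\textbf{First, the reverse invariance step is not justified as stated.} After the dimension reduction (whether you invoke Theorem~\ref{thm:gaussian-non-int-sim} or Theorem~\ref{thm:dim-reduction}), the resulting objects are \emph{not} low-degree multilinear low-influence polynomials: the output of Theorem~\ref{thm:gaussian-non-int-sim} has the form $\calR\bigl(A_0(\ba/\|\ba\|_2)\bigr)$, and even the raw output $A_M(\ba)=A(M\ba/\|\ba\|_2)$ of Theorem~\ref{thm:dim-reduction} is not a polynomial in $\ba$ at all because of the normalization. The Invariance Principle (Lemma~\ref{lem:our_invariance}) requires degree-$d$ multilinear polynomials with all influences at most $\tau$, so you cannot apply it directly to these objects. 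The paper closes this gap by inserting \emph{another} smoothing step (Lemma~\ref{lem:smoothing_main}, Gaussian version) followed by multilinearization (Lemma~\ref{lem:multilin_main}) \emph{after} the dimension reduction and \emph{before} the reverse invariance; this regenerates the low-degree, multilinear, low-influence structure needed, at the cost of an additional $\delta$ in the correlation and distance-to-simplex errors. Your proposal places multilinearization only before the dimension reduction, where it is either redundant (if you invoke Theorem~\ref{thm:gaussian-non-int-sim}) or insufficient (if you invoke Theorem~\ref{thm:dim-reduction}).

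\textbf{Second, the union bound over $T^2$ pairs destroys the quantitative claim.} You take $T=|\calZ|^{|H|}$ and set $\delta$ polynomially small in $\eps/T$. But $|H|=h$ is already of order $\exp\bigl(\poly(k,1/\eps,1/(1-\rho),\log(1/\alpha))\bigr)$ (coming from the $(1/\alpha)^{O(d)}$ dependence in Lemma~\ref{lem:joint_regularity} with $d=\poly(k,1/\eps,1/(1-\rho))$), so $T$ is doubly exponential, and forcing $D=d^{O(d)}/\delta^4\gtrsim T^8$ makes $n_0$ doubly exponential — contradicting the bound asserted in the theorem. The paper avoids this by \emph{not} union-bounding: since each good restriction is ``lucky'' with probability $\ge 1-O(\delta)$ over $M$, an averaging argument gives a single seed $M$ for which a $(1-O(\delta))$-\emph{fraction} of good restrictions are lucky. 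That suffices, because $\inangle{A_i,B_j}_{\mu^{\otimes N}}$ is an \emph{average} over restrictions and the unlucky $O(\delta)$-fraction contributes at most $O(\delta)$ (each term being bounded by $1$). With this averaging, $D$ depends only on $d$ and $\delta$, not on $T$, and the singly-exponential bound goes through.
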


\noindent Note, that this theorem is, in a way, a generalization of \Cref{thm:gaussian-non-int-sim}, where $\calZ$ was $\bbR$ and the distribution $\mu$ was $\calG_\rho$. On the other hand, this theorem is only for the case when $\calZ$ is a finite set, so in this sense it is incomparable to \Cref{thm:gaussian-non-int-sim}.\\

\noindent {\bf Proof Overview}: The proof works by a reduction to \Cref{thm:gaussian-non-int-sim}. This reduction is done along the same framework as introduced in \cite{GKS_NIS_decidable}. We first apply a Smoothing operation (\Cref{lem:smoothing_main}) similar to the Gaussian case, to make the functions $A$ and $B$ have {\em low-degree}. Next, we will apply a Regularity Lemma (\Cref{lem:joint_regularity}), to identify a constant sized subset of coordinates, such that for a random fixing of these coordinates, the restricted function is low influential on the remaining coordinates. This allows us to apply the invariance principle (\Cref{lem:our_invariance}) to replace the coordinates of Alice and Bob on the remaining coordinates by $\rho$-correlated Gaussians. We now use \Cref{thm:gaussian-non-int-sim} to reduce the number of coordinates of $\rho$-correlated Gaussians needed. Finally, we wish to get the strategies to use samples from $\mu$ instead of $\calG_{\rho}$, by simulating the correlated multivariate Gaussians using a bounded number of samples of $\mu$. However, after the transformation of \Cref{thm:gaussian-non-int-sim}, the resulting function might be none of low-degree, multilinear or low-influential. To get around this we apply Smoothing (\Cref{lem:smoothing_main}) to make it low-degree and Multilinearization (\Cref{lem:multilin_main}) to make it multilinear and low-influential, after which we can apply the invariance principle (\Cref{lem:our_invariance}). An overview of the transformations done is presented in \Cref{fig:General_NIS}.\\

\begin{proofof}{\Cref{thm:non-int-sim}}
As in the proof of \Cref{thm:gaussian-non-int-sim}, for any $i, j \in [k]$, we focus on the quantity $\inangle{A_i, B_j}_{\mu^{\otimes n}}$ which is the probability of the event that [{\em Alice outputs $i$ and Bob outputs $j$}]. Through the several steps we modify Alice's and Bob's strategy, while preserving this quantity approximately for every $i, j$. If we preserve the probability that Alice outputs $i$ and Bob outputs $j$ for every $i, j$ upto an additive $\eps/k^2$, it implies that we preserve the joint distribution of Alice and Bob's outputs up to an $\ell_1$-distance of $\eps$.

\begin{figure}
\begin{center}
\begin{tikzpicture}[scale=0.9, transform shape]

\def\ColSmooth{Gblue}
\def\ColMultiLin{Gred}
\def\ColDimRed{Ggreen}
\def\ColRound{Gyellow}
\def\ColRegularity{Ggreen}
\def\ColInvariance{Gyellow}
\def\Brightness{50}

\node[box, fill=black!20] (param) at (-3,0) {\large $(\calZ \times \calZ, \mu), \ k, \ \eps$};

\node[box, fill=\ColSmooth!\Brightness] (smooth1) at (3, 2.3) {\shortstack{Smoothing : 1\\ (\Cref{lem:smoothing_main})}}
edge[in=20, out=190, latex-]  node[above, midway] {$\delta \quad$} (param);

\node[box, fill=\ColInvariance!\Brightness] (inv1) at (-3, 3) {\shortstack{Invariance : 1\\ (\Cref{lem:our_invariance})}}
edge[in=170, out=0, latex-]  node[above, midway] {$d \gets d(\delta)$} (smooth1)
edge[in=90, out=270, latex-]  node[left, midway] {$\delta$} (param);


\node[box, fill=\ColRegularity!\Brightness] (reg) at (3, 5) {\shortstack{Regularity \\ (\Cref{lem:joint_regularity})}}
edge[in=90, out=270, latex-]  node[right, midway] {$d \gets d(\delta)$} (smooth1)
edge[in=90, out=180, latex-]  node[above, midway] {$\tau \gets \tau(d,\delta)$} (inv1);

\node[box, fill=\ColSmooth!\Brightness] (smooth2) at (3, -2.3) {\shortstack{Smoothing : 2\\ (\Cref{lem:smoothing_main})}}
edge[in=-20, out=-190, latex-]  node[below, midway] {$\delta \quad$} (param);

\node[box, fill=\ColInvariance!\Brightness] (inv2) at (-3, -3) {\shortstack{Invariance : 2 \\ (\Cref{lem:our_invariance})}}
edge[in=-170, out=0, latex-]  node[below, midway] {$d \gets d(\delta)$} (smooth2)
edge[in=270, out=90, latex-]  node[left, midway] {$\delta$} (param);


\node[box, fill=\ColMultiLin!\Brightness] (multilin) at (3, -5) {\shortstack{Multilinearize \\ (\Cref{lem:multilin_main})}}
edge[in=270, out=90, latex-]  node[right, midway] {$d \gets d(\delta)$} (smooth2)
edge[in=270, out=180, latex-]  node[below, midway] {$\delta \gets \tau(d,\delta)$} (inv2)
;

\node[box, upper right=\ColDimRed!\Brightness, upper left=\ColSmooth!\Brightness, lower right=\ColMultiLin!\Brightness,lower left=\ColRound!\Brightness] (GNIS) at (2.5, 0) {\shortstack{Gaussian NIS \\ (\Cref{thm:gaussian-non-int-sim})}}
edge[in=0, out=180, latex-]  node[above, midway] {$\eps \gets \delta$} (param);

\node[box, fill=Sepia!20] (final) at (8, 0) {$n_0 = h + D \cdot t$}
edge[in=0, out=90, latex-]  node[right, midway] {$h \gets h(d,\tau)$} (reg)
edge[in=0, out=180, latex-]  node[above, midway] {$D \gets n_0(\delta)$} (GNIS)
edge[in=0, out=270, latex-]  node[right, midway] {$t \gets t(d,\delta)$} (multilin);
\end{tikzpicture}
\caption{Dependency of parameters in the proof of \Cref{thm:non-int-sim}}
\label{fig:dependency}
\end{center}
\end{figure}

\paragraph{Choice of parameters \& bound on $n_0$.} As described in the overview, we are going to invoke several of the lemmas we have developed in our proof. We now describe the choice of parameters for which we invoke these lemmas, thereby obtaining our final explicit bound on $n_0$. We recommend consulting \Cref{fig:dependency} to follow the exact chain of dependencies among the parameters (the dependencies on $\mu$ and $k$ are suppressed in the figure for clarity).

Let $\delta$ be a running parameter, that we finalize at the end (in terms of $\mu$, $k$ and $\eps$). We wish to invoke the Smoothing operation over $\calZ^N$ (\Cref{lem:smoothing_main}) with parameter $\delta$. This dictates a value of $d = d(\rho, k, \delta)$, which is the degree of the polynomials obtained after smoothing. We will invoke the Invariance Principle (\Cref{lem:our_invariance}) with parameters $\delta$ and $d$ as obtained just now. This dictates a value of $\tau = \tau(\mu, k, d, \delta)$, which is the bound on the influence needed in order to apply the invariance principle. We will invoke the Regularity Lemma (\Cref{lem:joint_regularity}) with parameters $d$ and $\tau$ as obtained above. This dictates a value of $h = h(\mu,k,d,\tau)$, which is the bound on number of {\em head} coordinates that need to be fixed to get all influences less than $\tau$ on the remaining coordinates, for a random restriction of the head coordinates.

We will apply \Cref{thm:gaussian-non-int-sim} with the error parameter $\eps$ as $\delta$. This dictates a value of $D = n_0(k,\rho,\delta)$, which the number of coordinates of correlated Gaussians needed after dimension reduction. In order to go back from correlated Gaussians to samples from $(\calZ \times \calZ; \mu)$, we will again apply the Smoothing operation, this time over Gaussian space (\Cref{lem:smoothing_main}) with parameter $\delta$, which again dictates a value of $d = d(\rho, k, \delta)$. We will again apply the Invariance principle (\Cref{lem:our_invariance}) with parameters $\delta$ and $d$ as obtained just now. This dictates a value of $\tau = \tau(\mu, k, d, \delta)$, which is the bound on the influence needed in order to apply the invariance principle. In order to make the influences small, we will apply the Multilinearization operation (\Cref{lem:multilin_main}) with parameters $\delta \gets \tau$ and $d$ as obtained above. This dictates a value of $t = t(k,d,\delta)$, which is the blow up incurred while making the polynomials multilinear and to make them have all influences smaller than $\delta$.

Finally $n_0 = h + D \cdot t$. Recall that $h$ is the number of {\em head} coordinates in the Regularity Lemma. $D$ is the number of coordinates obtained after applying Gaussian NIS (\Cref{thm:gaussian-non-int-sim}). Finally $t$ is the blow up incurred while going back from correlated Gaussian space to $(\calZ \times \calZ; \mu)$. We will eventually choose $\delta = \eps^2/k^4$. It can be inferred by going through all the parameters carefully that $n_0(\mu, k, \eps)$ is an explicit function that can be upper bounded as $\exp \inparen{\poly \inparen{k, \frac{1}{\eps}, \frac{1}{1-\rho}, \log \inparen{\frac{1}{\alpha}}}}$. We skip this the details of this calculation for brevity.

\paragraph{Analysis of the transformations.} We now turn to the analysis of the above transformation. We wish to show that $\inangle{A_i, B_j}_{\mu^{\otimes N}} \approx \inangle{\wtilde{A}_i, \wtilde{B}_j}_{\calG_\rho^{\otimes n_0}}$ for every $i, j \in [k]$. We transform $A$ and $B$ through each of the following steps, as illustrated in \Cref{fig:General_NIS}. At each step, we approximately preserve the correlation $\inangle{A_i, B_j}$ for every $i, j \in [k]$.

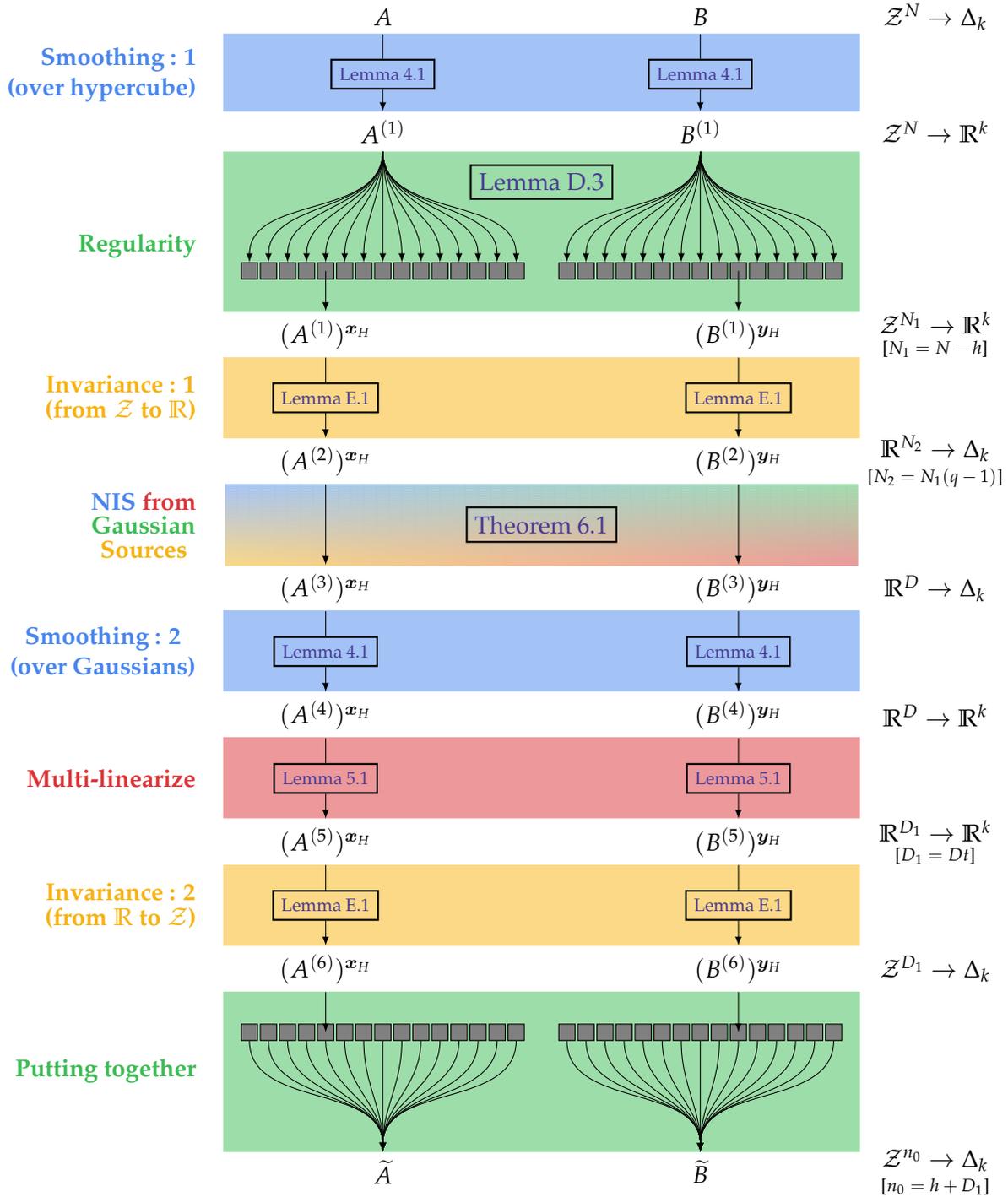
\begin{figure}
\begin{center}
\begin{tikzpicture}[scale=1, transform shape]

\def\ColSmooth{Gblue}
\def\ColMultiLin{Gred}
\def\ColDimRed{Ggreen}
\def\ColRound{Gyellow}
\def\ColRegularity{Ggreen}
\def\ColInvariance{Gyellow}
\def\Brightness{50}

\def \stepname{-2.8}
\def \alice{0}
\def \bob{5}
\def \midd{2.5}
\def \descrip{8.7}

\def \ycorr{0}
\node (A) at (\alice,\ycorr) {$A$};
\node (B) at (\bob, \ycorr) {$B$};
\node at(\descrip, \ycorr) {$\calZ^N \to \Delta_k$};

\def \ycorr{-0.9}
\node[box, below, minimum width=10cm, minimum height=1.2cm, fill=\ColSmooth!\Brightness, draw=\ColSmooth!\Brightness] at (\midd, \ycorr+0.64) {};
\node[left] at (\stepname,\ycorr) {\shortstack[r]{\textcolor{\ColSmooth}{\bf Smoothing : 1}\\ \textcolor{\ColSmooth}{\bf (over hypercube)}}};
\node[box] (SmoothA1) at (\alice,\ycorr) {\scriptsize  \Cref{lem:smoothing_main}} edge[-] (A);
\node[box] (SmoothB1) at (\bob,\ycorr) {\scriptsize  \Cref{lem:smoothing_main}} edge[-] (B);

\def \ycorr{-1.8}
\node (A1) at (\alice,\ycorr) {$A^{(1)}$} edge[latex-] (SmoothA1);
\node (B1) at (\bob,\ycorr) {$B^{(1)}$} edge[latex-] (SmoothB1);
\node at (\descrip,\ycorr) {$\calZ^N \to \bbR^k$};

\def \ycorr{-2.8}
\node[box, below, minimum width=10cm, minimum height=2.5cm, fill=\ColRegularity!\Brightness, draw=\ColRegularity!\Brightness] at (\midd, \ycorr+0.68) {};
\node[left] at (\stepname,\ycorr-0.8) {\textcolor{\ColRegularity}{\bf Regularity}};
\node[box] (Reg) at (\midd,\ycorr+0.2) {\Cref{lem:joint_regularity}};

\def \ycorr{-4}
\def \boxwidth{0.3}
\foreach \num in {-7,...,7} {
	\node[box, fill=black!50, thin, minimum width=0.2cm, minimum height=0.2cm] at (\alice+\num*\boxwidth,\ycorr) {} edge[latex-,out=90,in=270] (A1);
}
\foreach \num in {-7,...,7} {
	\node[box, fill=black!50, thin, minimum width=0.2cm, minimum height=0.2cm] at (\bob+\num*\boxwidth,\ycorr) {} edge[latex-,out=90,in=270] (B1);
}

\def \oldycorr{-4}
\def \oldalice{0}
\def \oldbob{5}
\def \ycorr{-5}
\def \alice{\oldalice - 3*\boxwidth}
\def \bob{\oldbob+2*\boxwidth}
\node (A2) at (\alice,\ycorr) {$(A^{(1)})^{\bx_H}$} edge[latex-] (\alice,\oldycorr) {};
\node (B2) at (\bob,\ycorr) {$(B^{(1)})^{\by_H}$} edge[latex-] (\bob,\oldycorr);
\node at (\descrip,\ycorr) {\shortstack{$\calZ^{N_1} \to \bbR^k$  \\ \scriptsize [$N_1 = N - h$]}};

\def\ycorr{-6}
\node[box, below, minimum width=10cm, minimum height=1.25cm, fill=\ColInvariance!\Brightness, draw=\ColInvariance!\Brightness] at (\midd, \ycorr+0.64) {};
\node[left] at (\stepname,\ycorr) {\shortstack{\textcolor{\ColInvariance}{\bf Invariance : 1} \\ \textcolor{\ColInvariance}{\bf (from $\calZ$ to $\bbR$)}}};
\node[box] (InvA1) at (\alice,\ycorr) {\scriptsize  \Cref{lem:our_invariance}} edge[-] (A2);
\node[box] (InvB1) at (\bob,\ycorr) {\scriptsize \Cref{lem:our_invariance}} edge[-] (B2);

\def \ycorr{-7}
\node (A3) at (\alice,\ycorr) {$(A^{(2)})^{\bx_H}$} edge[latex-] (InvA1);
\node (B3) at (\bob,\ycorr) {$(B^{(2)})^{\by_H}$} edge[latex-] (InvB1);
\node at (\descrip,\ycorr) {\shortstack{$\bbR^{N_2} \to \Delta_k$ \\ \scriptsize [$N_2 = N_1 (q-1)$]}};

\def\ycorr{-8}
\node[box, upper right=\ColDimRed!\Brightness,upper left=\ColSmooth!\Brightness, lower right=\ColMultiLin!\Brightness,lower left=\ColRound!\Brightness, below, minimum width=10cm, minimum height=1.32cm, draw=white] at (\midd, \ycorr+0.67) {};
\node[left] at (\stepname,\ycorr) {\shortstack{\textcolor{\ColSmooth}{\bf NIS} \textcolor{\ColMultiLin}{\bf from} \\ \textcolor{\ColDimRed}{\bf Gaussian} \\ \textcolor{\ColRound}{\bf Sources}}};
\node[box] (GNIS) at (\midd,\ycorr) {\Cref{thm:gaussian-non-int-sim}};

\def \ycorr{-9}
\node (A4) at (\alice,\ycorr) {$(A^{(3)})^{\bx_H}$} edge[latex-] (A3);
\node (B4) at (\bob,\ycorr) {$(B^{(3)})^{\by_H}$} edge[latex-] (B3);
\node at (\descrip,\ycorr) {$\bbR^{D} \to \Delta_k$};

\def\ycorr{-10}
\node[box, below, minimum width=10cm, minimum height=1.25cm, fill=\ColSmooth!\Brightness, draw=\ColSmooth!\Brightness] at (\midd, \ycorr+0.65) {};
\node[left] at (\stepname,\ycorr) {\shortstack{\textcolor{\ColSmooth}{\bf Smoothing : 2}\\ \textcolor{\ColSmooth}{\bf (over Gaussians)}}};
\node[box] (SmoothA2) at (\alice,\ycorr) {\scriptsize \Cref{lem:smoothing_main}} edge[-] (A4);
\node[box] (SmoothB2) at (\bob,\ycorr) {\scriptsize \Cref{lem:smoothing_main}} edge[-] (B4);

\def \ycorr{-11}
\node (A5) at (\alice,\ycorr) {$(A^{(4)})^{\bx_H}$} edge[latex-] (SmoothA2);
\node (B5) at (\bob,\ycorr) {$(B^{(4)})^{\by_H}$} edge[latex-] (SmoothB2);
\node at (\descrip,\ycorr) {$\bbR^{D} \to \bbR^k$};

\def \ycorr{-12}
\node[box, below, minimum width=10cm, minimum height=1.25cm, fill=\ColMultiLin!\Brightness, draw=\ColMultiLin!\Brightness] at (\midd, \ycorr+0.65) {};
\node[left] at (\stepname,\ycorr) {\textcolor{\ColMultiLin}{\bf Multi-linearize}};
\node[box] (MultiLinA) at (\alice,\ycorr) {\scriptsize \Cref{lem:multilin_main}} edge[-] (A5);
\node[box] (MultiLinB) at (\bob,\ycorr) {\scriptsize \Cref{lem:multilin_main}} edge[-] (B5);

\def \ycorr{-13}
\node (A6) at (\alice,\ycorr) {$(A^{(5)})^{\bx_H}$} edge[latex-] (MultiLinA);
\node (B6) at (\bob,\ycorr) {$(B^{(5)})^{\by_H}$} edge[latex-] (MultiLinB);
\node at (\descrip,\ycorr) {\shortstack{$\bbR^{D_1} \to \bbR^k$  \\ \scriptsize [$D_1 = D t$]}};

\def\ycorr{-14}
\node[box, below, minimum width=10cm, minimum height=1.25cm, fill=\ColInvariance!\Brightness, draw=\ColInvariance!\Brightness] at (\midd, \ycorr+0.64) {};
\node[left] at (\stepname,\ycorr) {\shortstack{\textcolor{\ColInvariance}{\bf Invariance : 2} \\ \textcolor{\ColInvariance}{\bf (from $\bbR$ to $\calZ$)}}};
\node[box] (InvA2) at (\alice,\ycorr) {\scriptsize  \Cref{lem:our_invariance}} edge[-] (A6);
\node[box] (InvB2) at (\bob,\ycorr) {\scriptsize \Cref{lem:our_invariance}} edge[-] (B6);

\def \ycorr{-15}
\node[box, below, minimum width=10cm, minimum height=2.5cm, fill=\ColRegularity!\Brightness, draw=\ColRegularity!\Brightness] at (\midd, \ycorr-0.36) {};
\node (A7) at (\alice,\ycorr) {$(A^{(6)})^{\bx_H}$} edge[latex-] (InvA2);
\node (B7) at (\bob,\ycorr) {$(B^{(6)})^{\by_H}$} edge[latex-] (InvB2);
\node at (\descrip,\ycorr) {$\calZ^{D_1} \to \Delta_k$};

\def \ycorr{-18.2}
\node (A8) at (\oldalice,\ycorr) {$\wtilde{A}$};
\node (B8) at (\oldbob,\ycorr) {$\wtilde{B}$};
\node at (\descrip,\ycorr) {\shortstack{$\calZ^{n_0} \to \Delta_k$ \\ \scriptsize [$n_0 = h + D_1$]}};

\def \ycorr{-16}
\def \boxwidth{0.3}
\foreach \num in {-7,...,7} {
	\node[box, fill=black!50, thin, minimum width=0.2cm, minimum height=0.2cm] at (\oldalice+\num*\boxwidth,\ycorr) {} edge[-latex,out=270,in=90] (A8);
}
\foreach \num in {-7,...,7} {
	\node[box, fill=black!50, thin, minimum width=0.2cm, minimum height=0.2cm] at (\oldbob+\num*\boxwidth,\ycorr) {} edge[-latex,out=270,in=90] (B8);
}
\node[left] at (\stepname,\ycorr-0.6) {\textcolor{\ColRegularity}{\bf Putting together}};

\draw[-latex] (A7) -- (\alice,\ycorr);
\draw[-latex] (B7) -- (\bob,\ycorr);

\end{tikzpicture}
\caption{Transformations for Non-interactive simulation from Arbitrary Discrete Sources}
\label{fig:General_NIS}
\end{center}
\end{figure}

\begin{enumerate}
	\item {\bf Smoothing (over hypercube)}: We apply \Cref{lem:smoothing_main} on $A$ and $B$ to get the low-degree versions $A^{(1)} : \calZ^N \to \bbR^k$ and $B^{(1)} : \calZ^N \to \bbR^k$. This guarantees that $A^{(1)}$ and $B^{(1)}$ have degree at most $d$. Moreover, we have that for every $i, j \in [k]$,
	\[ \inabs{\inangle{A^{(1)}_i, B^{(1)}_j}_{\mu^{\otimes N}} - \inangle{A_i, B_j}_{\mu^{\otimes N}}} \le \delta \]
	Additionally,
	\[ \norm{2}{\calR(A^{(1)}) - A^{(1)}} ~\le~ \norm{2}{\calR(A) - A} + \delta ~\le~ \delta\;.\]
	Similarly, we also have that,
	\[\norm{2}{\calR(B^{(1)}) - B^{(1)}} ~\le~ \delta\]
	Using \Cref{lem:close-strategies}, we can conclude that for every $i, j \in [k]$,
	\begin{equation}
	\inabs{\inangle{\calR_i(A^{(1)}), \calR_j(B^{(1)})}_{\mu^{\otimes N}} - \inangle{A_i, B_j}_{\mu^{\otimes N}}} ~\le~ 3\delta \label{eqn:NIS-1}
	\end{equation}
	
	\item {\bf Regularity Lemma}:  We apply \Cref{lem:joint_regularity} to identify a subset $H \subseteq [n]$ with $|H| = h$, such that, for a random restriction $(\bx_H, \by_H) \sim \mu^{\otimes h}$, it holds with probability at least $1-\tau$, that the restricted functions $(A^{(1)}_i)^{\bx_H} : \calZ^{N-h} \to \bbR^k$ and $(B^{(1)}_i)^{\by_H} : \calZ^{N-h} \to \bbR^k$ have all individual influences smaller than $\tau$, for any $i \in [k]$. We call a restriction $(\bx_H, \by_H)$ as ``{\em good}'' in this case, and ``{\em bad}'' otherwise. Note that any such restriction (``good'' or ``bad'') has degree at most $d$.
	
	For the rest of the steps, we will focus on a {\em good} $(\bx_H, \by_H)$. For convenience, define $N_1 = N-h$.
	
	\item {\bf Invariance Principle (from $\calZ$ to $\bbR$)}: For a {\em good} $(\bx_H, \by_H)$, we apply \Cref{lem:our_invariance} on $(A^{(1)})^{\bx_H}$ and $(B^{(1)})^{\by_H}$ to get functions $(A^{(2)})^{\bx_H} : \bbR^{N_2} \to \Delta_k$ and $(B^{(2)})^{\by_H} : \bbR^{N_2} \to \Delta_k$ (where $N_2 = N_1 \cdot (q-1)$), such that for every $i, j \in [k]$,
	\begin{equation}
		\inabs{\inangle{(A^{(2)}_i)^{\bx_H}, (B^{(2)}_j)^{\by_H}}_{\calG_\rho^{\otimes N_2}} - \inangle{\calR_i\inparen{(A^{(1)})^{\bx_H}}, \calR_j\inparen{(B^{(1)})^{\by_H}}}_{\mu^{\otimes N_1}}} ~\le~ \delta \label{eqn:NIS-2}
	\end{equation}
	Note that strictly speaking \Cref{lem:our_invariance}, as stated, gives us functions mapping to $\bbR^k$ and not $\Delta_k$. However, we consider their rounded versions, which exactly gives us the statement above.
	
	\item {\bf Dimension Reduction}: We apply \Cref{thm:dim-reduction} on $(A^{(2)})^{\bx_H}$ and $(B^{(2)})^{\by_H}$, to get functions $(A^{(3)})^{\bx_H} : \bbR^D \to \Delta_k$ and $(B^{(3)})^{\by_H} : \bbR^D \to \Delta_k$, such that, for every $i, j \in [k]$,
	\begin{equation}
		\inabs{\inangle{(A^{(3)}_i)^{\bx_H}, (B^{(3)}_j)^{\by_H}}_{\calG_\rho^{\otimes D}} - \inangle{(A^{(2)}_i)^{\bx_H}, (B^{(2)}_j)^{\by_H}}_{\calG_\rho^{\otimes N_2}}} ~\le~ \delta \label{eqn:NIS-3}
	\end{equation}
	Note that this is the only randomized step in the entire transformation. This reduction succeeds with probability at least $1 - 4\delta$ for every {\em good} $(\bx_H, \by_H)$. For a fixed choice of the random seed, we call a {\em good} $(\bx_H, \by_H)$ as ``{\em lucky}''\footnote{rather poor choice of terminology, given that most $(\bx_H, \by_H)$ end up being {\em lucky}!} if the reduction succeeds for that $(\bx_H, \by_H)$. In expectation over the choice of random seeds, a $(1-4\delta)$ fraction of the {\em good} $(\bx_H, \by_H)$ are going to be {\em lucky}. We can hence choose a choice of random seed for which indeed a $(1-4\delta)$ fraction of the {\em good} $(\bx_H, \by_H)$ are {\em lucky}.
	
	On the other hand, Regularity Lemma ensures that with probability $1-\delta$, a sampled $(\bx_H, \by_H)$ will be {\em good}. This gives that for the said choice of random seed in \Cref{thm:gaussian-non-int-sim}, with probability at least $1-5\delta$, the sampled $(\bx_H, \by_H)$ is both {\em good} and {\em lucky}. For the rest of the steps, we will focus on a {\em good} and {\em lucky} $(\bx_H, \by_H)$.
	
	\item {\bf Smoothing (over Gaussian space)}: We again apply \Cref{lem:smoothing_main} on $(A^{(3)})^{\bx_H}$ and $(B^{(3)})^{\by_H}$ to get the low-degree versions $(A^{(4)})^{\bx_H} : \bbR^D \to \bbR^k$ and $(B^{(4)})^{\by_H} : \bbR^D \to \bbR^k$. This guarantees that $(A^{(4)})^{\bx_H}$ and $(B^{(4)})^{\by_H}$ have degree at most $d$. Moreover, we have that for every $i, j \in [k]$,
	\begin{equation}
	\inabs{\inangle{(A^{(4)}_i)^{\bx_H}, (B^{(4)}_j)^{\by_H}}_{\calG_\rho^{\otimes D}} - \inangle{(A^{(3)}_i)^{\bx_H}, (B^{(3)}_j)^{\by_H}}_{\calG_\rho^{\otimes D}}} ~\le~ \delta \label{eqn:NIS-4}
	\end{equation}
	Additionally,
	\[ \norm{2}{\calR(A^{(4)}) - A^{(4)}} ~\le~ \delta \qquad \sAND \qquad \norm{2}{\calR(B^{(4)}) - B^{(4)}} ~\le~ \delta \]
	
	\item {\bf Multilinearization}: We apply \Cref{lem:multilin_main} on $(A^{(4)})^{\bx_H}$ and $(B^{(4)})^{\by_H}$ to get the multilinearized and low-influential versions $(A^{(5)})^{\bx_H} : \bbR^{D_1} \to \bbR^k$ and $(B^{(5)})^{\by_H} : \bbR^{D_1} \to \bbR^k$ (where $D_1 = Dt$). Thus, we have for every $i, j \in [k]$,
	\[ \inabs{\inangle{(A^{(5)}_i)^{\bx_H}, (B^{(5)}_j)^{\by_H}}_{\calG_\rho^{\otimes D_1}} - \inangle{(A^{(4)}_i)^{\bx_H}, (B^{(4)}_j)^{\by_H}}_{\calG_\rho^{\otimes D}}} ~\le~ \delta \]
	Additionally, combining with \Cref{lem:close-to-simplex},
	\[ \norm{2}{\calR(A^{(5)}) - A^{(5)}} ~\le~ \norm{2}{\calR(A^{(4)}) - A^{(4)}} + \norm{2}{A^{(5)} - A^{(4)}} ~\le~ 2\delta \]
	Similarly,
	\[ \norm{2}{\calR(B^{(5)}) - B^{(5)}} ~\le~ 2\delta \]
	Combining all this with \Cref{lem:close-strategies} we get that,
	\begin{equation}
	\inabs{\inangle{\calR_i\inparen{(A^{(5)})^{\bx_H}}, \calR_j\inparen{(B^{(5)})^{\by_H}}}_{\calG_\rho^{\otimes D_1}} - \inangle{(A^{(4)}_i)^{\bx_H}, (B^{(4)}_j)^{\by_H}}_{\calG_\rho^{\otimes D}}} ~\le~ 5\delta \label{eqn:NIS-5}
	\end{equation}
	Additionally, note that we also have that $\Inf_\ell((A^{(5)}_i))^{\bx_H} \le \tau$ and $\Inf_\ell((B^{(5)}_i))^{\by_H} \le \tau$ for all $i \in [k]$ and $\ell \in [n]$. This is helpful for us to apply the invariance principle next.
	
	\item {\bf Invariance Principle (from $\bbR$ to $\calZ$)}: We apply \Cref{lem:our_invariance} on $(A^{(5)})^{\bx_H}$ and $(B^{(5)})^{\by_H}$ to get functions $(A^{(6)})^{\bx_H} : \calZ^{D_1} \to \Delta_k$ and $(B^{(6)})^{\by_H} : \calZ^{D_1} \to \Delta_k$, such that for every $i, j \in [k]$,
	\begin{equation}
	\inabs{\inangle{(A^{(6)}_i)^{\bx_H}, (B^{(6)}_j)^{\by_H}}_{\mu^{\otimes D_1}} - \inangle{\calR_i\inparen{(A^{(5)})^{\bx_H}}, \calR_j\inparen{(B^{(5)})^{\by_H}}}_{\calG_\rho^{\otimes D_1}}} ~\le~ \delta \label{eqn:NIS-6}
	\end{equation}
	Note again that strictly speaking \Cref{lem:our_invariance} gives us functions mapping to $\bbR^k$ and not $\Delta_k$. However, we consider their rounded versions, which exactly gives us the statement above.
	
%
\end{enumerate}

\paragraph{Putting it together.} We now show how to put together \Cref{eqn:NIS-1,eqn:NIS-2,eqn:NIS-3,eqn:NIS-4,eqn:NIS-5,eqn:NIS-6} to get our final conclusion. We now define our final functions $\wtilde{A} : \calZ^{n_0} \to \Delta_k$ and $\wtilde{B} : \calZ^{n_0} \to \Delta_k$ as follows. Firstly, we interpret the $n_0 = h + D_1$ coordinates of $\bx$ as two parts: {\em head} coordinates $\bx_H \in \calZ^h$ and the remaining coordinates $\bx_R \in \calZ^{D_1}$. Similarly for $\by$.
\[ \wtilde{A}(\bx) = \wtilde{A}(\bx_H, \bx_R) = (A^{(6)})^{\bx_H}(\bx_R) \qquad \sAND \qquad \wtilde{B}(\by) = \wtilde{B}(\by_H, \by_R) = (B^{(6)})^{\by_H}(\by_R)\;.\]
We now show that for all $i, j \in [k]$, it holds that,
\[ 	\inabs{\inangle{\wtilde{A}_i, \wtilde{B}_j}_{\mu^{\otimes n_0}} - \inangle{A_i, B_j}_{\mu^{\otimes N}}} ~\le~ O(\delta) \]
We note that,
\begin{align*}
& \inangle{\wtilde{A}_i, \wtilde{B}_j}_{\mu^{\otimes n_0}} \\
&~=~ \Ex_{(\bx_H, \by_H) \sim \mu^{\otimes h}} \inangle{(A^{(6)}_i)^{\bx_H}, (B^{(6)}_j)^{\by_H}}_{\mu^{\otimes D_1}}\\
&~=~ \Pr\inbmat{\text{$(\bx_H, \by_H)$ is} \\ \text{({\em good} and {\em lucky})}} \cdot \Ex_{\substack{(\bx_H, \by_H) \sim \mu^{\otimes h}\\ | \ \text{{\em good} \& {\em lucky}}}} \inangle{(A^{(6)}_i)^{\bx_H}, (B^{(6)}_j)^{\by_H}}_{\mu^{\otimes D_1}}\\
& \quad ~+~ \Pr\inbmat{\text{$(\bx_H, \by_H)$ is not} \\ \text{({\em good} and {\em lucky})}} \cdot \Ex_{\substack{(\bx_H, \by_H) \sim \mu^{\otimes h}\\ | \ \text{not ({\em good} \& {\em lucky})}}} \inangle{(A^{(6)}_i)^{\bx_H}, (B^{(6)}_j)^{\by_H}}_{\mu^{\otimes D_1}}\\
&~=~ \Pr\inbmat{\text{$(\bx_H, \by_H)$ is} \\ \text{({\em good} and {\em lucky})}} \cdot \Ex_{\substack{(\bx_H, \by_H) \sim \mu^{\otimes h}\\ | \ \text{{\em good} \& {\em lucky}}}} \inangle{(A^{(6)}_i)^{\bx_H}, (B^{(6)}_j)^{\by_H}}_{\mu^{\otimes D_1}} ~\pm~ O(\delta)\\
& \qquad \qquad \qquad \qquad \qquad \qquad \qquad \qquad \ldots \inparen{\text{since } \Pr\inbmat{\text{$(\bx_H, \by_H)$ is not} \\ \text{({\em good} and {\em lucky})}} \le O(\delta)}\\
&~=~ \Pr\inbmat{\text{$(\bx_H, \by_H)$ is} \\ \text{({\em good} and {\em lucky})}} \cdot \Ex_{\substack{(\bx_H, \by_H) \sim \mu^{\otimes h}\\ | \ \text{{\em good} \& {\em lucky}}}} \inangle{\calR_i\inparen{(A^{(1)})^{\bx_H}}, \calR_j\inparen{(B^{(1)})^{\by_H}}}_{\mu^{\otimes N_1}} ~\pm~ O(\delta)\\
& \qquad \qquad \qquad \qquad \qquad \qquad \qquad \qquad \ldots \inparen{\inmat{\text{combining \Cref{eqn:NIS-2,eqn:NIS-3,eqn:NIS-4,eqn:NIS-5,eqn:NIS-6}}\\ \text{which hold for {\em good} and {\em lucky} } (\bx_H,\by_H)}}\\
&~=~ \Ex_{(\bx_H, \by_H) \sim \mu^{\otimes h}} \inangle{\calR_i\inparen{(A^{(1)})^{\bx_H}}, \calR_j\inparen{(B^{(1)})^{\by_H}}}_{\mu^{\otimes N_1}} ~\pm~ O(\delta)\\
& \qquad \qquad \qquad \qquad \qquad \qquad \qquad \qquad \ldots \inparen{\text{since } \Pr\inbmat{\text{$(\bx_H, \by_H)$ is} \\ \text{({\em good} and {\em lucky})}} \in [1-O(\delta), \ 1]}\\
&~=~ \inangle{\calR_i(A^{(1)}), \calR_j(B^{(1)})}_{\mu^{\otimes N}} ~\pm~ O(\delta)\\
&~=~ \inangle{A_i, B_j}_{\mu^{\otimes N}} ~\pm~ O(\delta) \qquad \qquad \quad \ldots\text{(using \Cref{eqn:NIS-1})}
\end{align*}

\noindent Thus, more strongly, if we instantiate $\delta$ as $O(\eps/k^2)$, then we get that
\[\dTV\inparen{(A(\bx),B(\by))_{(\bx,\by)\sim \mu^{\otimes N}}, \ (\wtilde{A}(\ba), \wtilde{B}(\bb))_{\ba, \bb \sim \mu^{\otimes n_0}}} \le \eps\;.\]
\end{proofof}

\section{Acknowledgments}
The authors are extremely grateful to Madhu Sudan for very many helpful conversations throughout all the stages of this project. The authors would also like to thank Anindya De, Elchanan Mossel and Joe Neeman for clarifying explanations of their papers and helpful discussions.

\newcommand{\etalchar}[1]{$^{#1}$}

\appendix

\section{Proofs of Mean and Variance Bounds in Dimension Reduction}\label{sec:app_dim_red}

In this section, we provide the proof of \Cref{lem:mean_var_bound}. This is the main new technical component introduced in this paper. Even though the calculations might seem cumbersome, they involve mostly elementary steps. The proof breaks down into three modular steps, which we describe first.

Recall that we are given degree $d$ multilinear polynomials $A : \bbR^N \to \bbR$ and $B : \bbR^N \to \bbR$. For a matrix $M$ sampled from $\calN(0,1)^{\otimes (N \times D)}$, we defined functions $A_M : \bbR^D \to \bbR$ and $B_M : \bbR^D \to \bbR$ as
\[A_M(\ba) ~=~ A\inparen{\frac{M \ba}{\norm{2}{\ba}}} \qquad \sAND \qquad B_M(\bb) = B\inparen{\frac{M \bb}{\norm{2}{\bb}}}\]
and we defined their correlation as $F(M) ~\defeq~ \inangle{A_M, B_M}_{\calG_{\rho}^{\otimes D}}$. \Cref{lem:mean_var_bound} proves bounds on the mean and variance of $F(M)$, which we restate below for convenience.

\meanvarbound*

\noindent The proof of the above Lemma proceeds in three main steps.
\begin{enumerate}
	\item In \Cref{subsec:meta-lemma}, we first prove a {\em meta-lemma} (\Cref{le:norm_Gauss}) that will help us prove both the mean and variance bounds; indeed this meta-lemma is at the heart of why \Cref{thm:dim-reduction} holds. Morally, this lemma says that if we have an expectation of a product of a small number of inner products of normalized correlated Gaussian vectors, then, we can exchange the product and the expectations while incurring only a small additive error.
	\item In \Cref{subsec:bds_for_multilinear_monom}, we prove strong enough bounds on the mean and co-variances of degree-$d$ multilinear monomials, under the above transformation of replacing $\bX, \bY \in \bbR^N$ (inputs to $A$ and $B$) by $\frac{M\ba}{\|\ba\|_2}$ and $\frac{M\bb}{\|\bb\|_2}$ respectively. 
	\item In \Cref{subec:bds_for_gen_multilinear_polys}, we finally use the above bounds on mean and co-variances of degree-$d$ multilinear monomials in order to prove \Cref{lem:mean_var_bound}.
\end{enumerate}
\paragraph{Remark.} To make our notations convenient, we will often write equations such as $\alpha = \beta \pm \eps$ which is to be interpreted as $\inabs{\alpha - \beta} \le \eps$.

\subsection{Product of Inner Products of Normalized Correlated Gaussian Vectors}\label{subsec:meta-lemma}


The following is the main lemma in this subsection (this is the {\em meta-lemma} alluded to earlier).
\begin{lem}\label{le:norm_Gauss}
	Given parameter $d$ and $\delta$, there exists an explicitly computable $D := D(d, \delta)$ such that the following holds:\\
	Let $(\bu_1,\ldots,\bu_d, \bv_1,\ldots, \bv_d)$ be a multivariate Gaussian distribution such that,
	\begin{itemize}
	\item each $\bu_i,\bv_i \in \bbR^D$ are marginally distributed as $D$-dimensional standard Gaussians, i.e. $\gamma_D$.
	\item for each $j \in [D]$, the joint distribution of the $j$-th coordinates, i.e., $(u_{1,j}, \ldots, u_{d,j}, v_{1,j}, \ldots, v_{d,j})$, is independent across different values of $j$.
	\end{itemize}
	Then,
	$$\inabs{\Ex\limits_{\set{\bu_i, \bv_i}_i} \ \insquare{\prod_{i=1}^d \frac{\inangle{\bu_i, \bv_i}}{\norm{2}{\bu_i} \norm{2}{\bv_i}}} - \prod_{i=1}^d \ \Ex\limits_{\set{\bu_i, \bv_i}_i} \ \insquare{\frac{\inangle{\bu_i, \bv_i}}{D}}} ~\le~ \delta.$$
	In particular, one may take $D = \frac{d^{O(d)}}{\delta^2}$.
\end{lem}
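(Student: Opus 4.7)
The plan is to split the quantity to bound into two pieces via the identity
\[
\Ex\Big[\prod_{i=1}^d E_i\Big] - \prod_{i=1}^d \Ex[F_i] \;=\; \underbrace{\Big(\Ex\Big[\prod_i E_i\Big] - \Ex\Big[\prod_i F_i\Big]\Big)}_{(\mathrm{I})} \;+\; \underbrace{\Big(\Ex\Big[\prod_i F_i\Big] - \prod_i \Ex[F_i]\Big)}_{(\mathrm{II})},
\]
where $E_i := \inangle{\bu_i,\bv_i}/(\norm{2}{\bu_i}\norm{2}{\bv_i})$, $F_i := \inangle{\bu_i,\bv_i}/D$, and $W_i := \norm{2}{\bu_i}\norm{2}{\bv_i}/D$ so that $F_i = E_i W_i$ and $|E_i|\le 1$ pointwise. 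Part $(\mathrm{I})$ measures the cost of removing the normalization (a norm-concentration issue), while part $(\mathrm{II})$ measures the failure of the expectation of a product to factor (a coordinate-independence issue).

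For part $(\mathrm{I})$, I would run a hybrid argument, replacing $F_i$ by $E_i$ one index at a time, yielding
\[
\Ex\Big[\prod_i E_i\Big] - \Ex\Big[\prod_i F_i\Big] \;=\; \sum_{i=1}^d \Ex\Big[\prod_{j<i} E_j \cdot (E_i - F_i) \cdot \prod_{j>i} F_j\Big].
\]
Using the identity $E_i - F_i = E_i(1 - W_i)$ together with the pointwise bound $|E_j|\le 1$, Cauchy--Schwarz controls each summand by $\Ex[(1-W_i)^2]^{1/2} \cdot \Ex[\prod_{j>i} F_j^2]^{1/2}$. The first factor is $O(1/\sqrt{D})$: with $X := \norm{2}{\bu_i}^2/D$ and $Y := \norm{2}{\bv_i}^2/D$, the elementary inequality $|1-\sqrt{XY}| \le |1-XY|$ reduces the task to computing chi-squared moments, which yield $\Ex[(1-XY)^2] = O(1/D)$. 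The second factor is controlled by Hölder's inequality in terms of $L^{2k}$-moments of the individual $F_j$'s, and the crucial uniform-in-$D$ bound $\Ex[F_j^{2k}] = O_k(1)$ follows either by direct computation, exploiting that $F_j = \frac{1}{D}\sum_{\ell}Z_{j,\ell}$ is an average of i.i.d.\ bounded-moment variables $Z_{j,\ell} := u_{j,\ell}v_{j,\ell}$, or by Gaussian hypercontractivity applied to $F_j$ as a degree-$2$ polynomial in the underlying Gaussians. Summing the $d$ hybrid terms gives $|(\mathrm{I})| \le d^{O(d)}/\sqrt{D}$.

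For part $(\mathrm{II})$, I would expand
\[
\Ex\Big[\prod_{i=1}^d F_i\Big] \;=\; \frac{1}{D^d} \sum_{(j_1,\ldots,j_d)\in[D]^d} \Ex\Big[\prod_{i=1}^d Z_{i,j_i}\Big]
\]
and split tuples by whether $j_1,\ldots,j_d$ are all distinct. For a distinct tuple, the coordinate-independence hypothesis gives $\Ex[\prod_i Z_{i,j_i}] = \prod_i \Ex[Z_{i,j_i}] = \prod_i \Ex[F_i]$; such tuples comprise a $D!/((D-d)!\,D^d) = 1 - O(d^2/D)$ fraction of all $D^d$ tuples. The remaining ``collision'' tuples each contribute at most $\prod_i \Ex[|Z_{i,j_i}|^d]^{1/d} = d^{O(d)}$ in magnitude, by Hölder and the Gaussian moment estimate $\Ex[|uv|^d]^{1/d}=O(d)$, so the collision contribution is $d^{O(d)}/D$ in total, giving $|(\mathrm{II})| \le d^{O(d)}/D$.

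Adding the two bounds yields $\inabs{\Ex[\prod_i E_i] - \prod_i \Ex[F_i]} \le d^{O(d)}/\sqrt{D}$, so $D = d^{O(d)}/\delta^2$ suffices, matching the claimed bound. The main obstacle I expect is part $(\mathrm{I})$: one must keep the moment estimate $\Ex[\prod_{j>i} F_j^2]^{1/2} \le d^{O(d)}$ uniform in the ambient dimension $D$, which requires controlling \emph{all} low-order moments of $F_j$ by constants depending only on $d$. This uniformity is exactly what allows the normalization $1/D$ to dominate the otherwise unbounded inner product $\inangle{\bu_i,\bv_i}$, and it is here that the $d^{O(d)}$ factor in the final quantitative bound originates.
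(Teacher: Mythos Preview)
Your decomposition into parts $(\mathrm{I})$ and $(\mathrm{II})$ is exactly the paper's split (its equations for $\Ex[\prod E_i]-\Ex[\prod F_i]$ and $\Ex[\prod F_i]-\prod\Ex[F_i]$), and your treatment of $(\mathrm{II})$ is essentially identical to the paper's: expand, separate distinct tuples from collisions, factor the distinct ones by coordinate independence, and bound the $O(d^2 D^{d-1})$ collision terms by a $d^{O(d)}$ Gaussian moment estimate. One cosmetic point: the lemma only asserts independence across $j$, not identical distribution, so your line ``$\prod_i \Ex[Z_{i,j_i}]=\prod_i\Ex[F_i]$'' is not literally correct per tuple; but the same computation goes through once you also expand $\prod_i\Ex[F_i]$ over tuples and cancel the distinct ones, which is what the paper does.

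For part $(\mathrm{I})$ you take a genuinely different route. The paper writes the difference as $\Ex[WZ]$ with $W=\prod_i\inangle{\bu_i,\bv_i}$ and $Z=\prod_i(\norm{2}{\bu_i}\norm{2}{\bv_i})^{-1}-D^{-d}$, then controls $|\Ex[WZ]|\le|\Ex[W]||\Ex[Z]|+\sqrt{\Var[W]\Var[Z]}$; this forces it to develop a separate inductive lemma on products of negative moments of Gaussian norms to bound $\Ex[Z]$ and $\Var[Z]$. Your hybrid argument instead exploits the pointwise bound $|E_i|\le 1$ (Cauchy--Schwarz for cosines), reducing each step to $\Ex[(1-W_i)^2]^{1/2}\cdot\Ex[\prod_{j>i}F_j^2]^{1/2}$; the first factor is $O(D^{-1/2})$ by elementary chi-square moment calculations, and the second is $d^{O(d)}$ uniformly in $D$ by hypercontractivity for the degree-$2$ polynomial $F_j$. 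This is correct and arguably cleaner: it avoids ever touching negative moments of norms, at the cost of needing a uniform $L^{2k}$ bound on the $F_j$ (which you correctly identify as the key point). Both approaches give the same $d^{O(d)}/\sqrt{D}$ bound for $(\mathrm{I})$ and hence the same final $D=d^{O(d)}/\delta^2$.
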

\noindent We point out that there are two steps taking place in \Cref{le:norm_Gauss}:\\
(i) the replacement of $\norm{2}{\bu_i}$ (and $\norm{2}{\bv_i}$) by $\sqrt{D}$ (around which it is tightly concentrated), and\\
(ii) the interchanging of the expectation and the product.\\

\noindent We will handle each of these changes one by one.

\subsubsection*{Product of Negative Moments of $\ell_2$-norm of Correlated Gaussian vectors}

In order to handle the replacement of $\norm{2}{\bu_i}$ (and $\norm{2}{\bv_i}$) by $\sqrt{D}$, we will show the following lemma which gives us useful bounds on the mean and variance of products of negative powers of the $\ell_2$-norm of a standard Gaussian vector. 

\begin{lem}\label{le:corr_norms_rec_ex}
	Let $\bw_1$, $\bw_2$, $\dots$, $\bw_{\ell}$ be (possibly correlated) multivariate Gaussians where each $\bw_i \in \mathbb{R}^D$ is marginally distributed as a $D$-dimensional standard Gaussian (i.e., $\gamma_D$), and let $d_1, d_2, \dots, d_{\ell}$ be non-negative integers with $d := \sum_{i=1}^{\ell} d_i$. Then,
	\begin{align*}
	\inabs{\Ex\bigg[\prod\limits_{i=1}^{\ell} \frac{1}{\|\bw_i\|_2^{d_i}}\bigg] - \frac{1}{D^{d/2}}} &~\le~ \frac{1}{D^{\frac{d}{2}}} \cdot O\bigg(\frac{d^5}{D} \bigg),\\
	\Var\insquare{\prod\limits_{i=1}^{\ell} \frac{1}{\|\bw_i\|_2^{d_i}}} &~\le~ \frac{1}{D^d} \cdot O\inparen{\frac{d^5}{D}}.
	\end{align*}
\end{lem}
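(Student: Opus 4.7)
The plan is to first establish the mean bound and then derive the variance bound as an immediate consequence. Letting $R := \prod_i \|\bw_i\|_2^{-d_i}$, we observe that $R^2 = \prod_i \|\bw_i\|_2^{-2d_i}$ has exactly the same form as $R$, with exponents $2d_i$ summing to $2d$. Applying the mean bound both to $R$ (with total degree $d$) and to $R^2$ (with total degree $2d$) gives $\Ex[R^2] = D^{-d}(1 + O(d^5/D))$ and $\Ex[R]^2 = D^{-d}(1 + O(d^5/D))$, from which $\Var(R) = \Ex[R^2] - \Ex[R]^2 = O(D^{-d} \cdot d^5/D)$ follows.

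For the mean bound, I would normalize by writing $Y_i = (\|\bw_i\|_2^2 - D)/D$, reducing the goal to showing $\Ex[F(Y)] = 1 + O(d^5/D)$ where $F(Y) = \prod_i (1+Y_i)^{-d_i/2}$. Since $\|\bw_i\|_2^2 \sim \chi^2_D$ marginally, each $Y_i$ satisfies $\Ex[Y_i]=0$, $\Ex[Y_i^2]=2/D$, sub-exponential tails $\Pr[|Y_i|\ge t] \le 2\exp(-cD\min(t,t^2))$, and moment bounds $\|Y_i\|_p \le C\sqrt{p/D}$ for $p \le D$; cross moments are controlled by Cauchy--Schwarz or H\"older, so no assumption on the joint correlation of the $\bw_i$'s is needed. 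The argument combines truncation with a second-order Taylor expansion: introduce the \emph{good event} $\calG = \{\max_i |Y_i| \le 1/(10d)\}$, so that $\Pr[\calG^c] \le 2\ell\exp(-cD/(100d^2))$ is super-polynomially small when $d^2 = o(D)$. On $\calG^c$, Cauchy--Schwarz combined with the coarse bound $\Ex[F(Y)^2] \le 2^{O(d)}$ (via H\"older and the formula $\Ex[W^{-k}] = \prod_{j=1}^k (D-2j)^{-1}$ for $W\sim\chi^2_D$) makes the contribution negligible.

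On $\calG$, I would Taylor expand $F$ around the origin to second order using the integral form of the remainder. The key estimate is that on $\calG$ the mixed partial derivative
\[
\partial^\alpha F(sY) = \prod_i (-1)^{\alpha_i}\binom{d_i/2+\alpha_i-1}{\alpha_i}\alpha_i! \cdot (1+sY_i)^{-d_i/2-\alpha_i}
\]
satisfies $|\partial^\alpha F(sY)| \le (d+|\alpha|)^{|\alpha|} \cdot e^{O(1)}$, because the truncation radius $1/(10d)$ is chosen precisely so that $\prod_i (1+sY_i)^{-d_i/2-\alpha_i} \le \exp\!\inparen{(d/2+|\alpha|)/(5d)} = e^{O(1)}$. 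The degree-two Taylor polynomial has expectation $1 + O(d^2/D)$ (linear terms vanish since $\Ex[Y_i]=0$, and quadratic terms are bounded using $\Ex[Y_i^2]=2/D$ and $|\Ex[Y_iY_j]| \le 2/D$), while the integral Taylor remainder satisfies
\[
\Ex[|F - T_2|\cdot\mathbf{1}_{\calG}] \le O(d^3)\sum_{|\alpha|=3}\Ex[|Y^\alpha|]/\alpha! \le O\!\inparen{d^3 \cdot (d^3/6) \cdot (C/\sqrt{D})^3} = O(d^6/D^{3/2}),
\]
which is $O(d^5/D)$ in the regime $d \le \sqrt{D}$, i.e., the regime in which the target bound $O(d^5/D)$ is itself non-trivial.

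The main obstacle will be establishing that bound on $|\partial^\alpha F|$ on $\calG$ without picking up exponential-in-$d$ factors. A naive truncation at $|Y_i| \le 1/2$ combined with a Lagrange-form remainder immediately introduces factors like $(1/2)^{-d_i/2-|\alpha|}$, which are exponential in $d_i$; the tighter choice of truncation radius $1/(10d)$ is exactly what controls $\sum_i (d_i/2 + \alpha_i)\cdot|\log(1+sY_i)|$ by an absolute constant, keeping the final dependence polynomial in $d$. Once that bound is in place, the rest is a careful but routine computation of low-order joint moments of centered chi-squared variables.
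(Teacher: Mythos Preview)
Your proposal is correct and takes a genuinely different route from the paper.

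The paper proves the mean bound by \emph{induction on $\log\ell$} (doubling): the base case $\ell=1$ is handled by the explicit formula $\Ex[\|\bw\|^{-d}] = 2^{-d/2}\Gamma((D-d)/2)/\Gamma(D/2)$ together with Stirling's approximation, and the inductive step splits the product into two halves, writes $\Ex[XY]=\Ex[X]\Ex[Y]+\Cov(X,Y)$, and bounds $|\Cov(X,Y)|\le\sqrt{\Var(X)\Var(Y)}$ using the inductive variance bound. This yields the explicit form $C\ell^3 d^2/D$ for the relative error, and since $\ell\le d$ one gets $O(d^5/D)$. Both you and the paper derive the variance bound from the mean bound in exactly the same way, by applying the mean bound to the product with doubled exponents.

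Your approach is more analytic and avoids induction entirely: you normalize, truncate using chi-squared concentration, and Taylor-expand directly. The crucial observation---choosing the truncation radius $1/(10d)$ so that $\sum_i(d_i/2+\alpha_i)|\log(1+sY_i)|=O(1)$ and hence $|\partial^\alpha F|$ stays polynomial in $d$---is the right idea and is what prevents the naive exponential loss. Your intermediate bounds ($O(d^2/D)$ from the quadratic term, $O(d^6/D^{3/2})$ from the cubic remainder) are actually a bit sharper than the paper's in some regimes, though both collapse to $O(d^5/D)$. The paper's argument is more elementary (only Cauchy--Schwarz and Stirling) and makes the $d^5=\ell^3\cdot d^2$ structure transparent; yours is more flexible and would adapt more readily to other smooth functionals of $(\|\bw_1\|,\ldots,\|\bw_\ell\|)$.
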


\begin{remark}
	It is conceivable that the bounds in \Cref{le:corr_norms_rec_ex} could be improved in terms of the dependence on $d$. However, this was not central to our application, so we go ahead with the stated bounds. The main point to note in the above lemma is the extra factor of $D$ in the denominator.
\end{remark}

\noindent We start out by first proving the base case where we have a single vector $\bw$, that is, $\ell = 1$.

\begin{proposition}\label{prop:inv_chi_mom_dist}
	There exists an absolute constant $C$ such that for sufficiently large $d, D \in \bbZ_{> 0}$, such that $D > Cd^2$, we have that for $\bw \sim \gamma_D$,
	\begin{align}
	\bigg| \Ex_{\bw}\bigg[\frac{1}{\|\bw\|_2^d}\bigg] - \frac{1}{D^{d/2}} \bigg| &~\le~ C \cdot \inparen{\frac{d^2}{D^{\frac{d}{2}+1}}}, \label{eq:mean_univ_bd}\\
	\Var_{\bw}\bigg[ \frac{1}{\| \bw \|_2^d}\bigg] &~\le~ 8 C \cdot \inparen{\frac{d^2}{D^{d+1}}}.\label{eq:var_univ_bd}
	\end{align}
\end{proposition}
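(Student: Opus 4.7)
The plan is to exploit the fact that $\|\bw\|_2^2$ is a chi-squared random variable with $D$ degrees of freedom, so that $1/\|\bw\|_2^{2s}$ is exactly a negative moment of a chi-squared distribution. Specifically, for $X \sim \chi^2_D$ and $0 \le s < D/2$,
\[
\Ex[X^{-s}] \;=\; \frac{\Gamma(D/2 - s)}{2^{s}\,\Gamma(D/2)}.
\]
Taking $s = d/2$ gives the first moment $\Ex[\|\bw\|_2^{-d}]$ needed for \eqref{eq:mean_univ_bd}, while taking $s = d$ gives the second moment $\Ex[\|\bw\|_2^{-2d}]$ needed (together with the first moment) for the variance \eqref{eq:var_univ_bd}. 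The entire proof therefore reduces to estimating the ratio $\Gamma(D/2-s)/\Gamma(D/2)$ quantitatively when $D$ is large compared to $s^2$.

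Next, I would derive the estimate
\[
\frac{\Gamma(D/2 - s)}{\Gamma(D/2)} \;=\; \Bigl(\tfrac{D}{2}\Bigr)^{-s}\Bigl(1 + \tfrac{s(s+1)}{D} + O(s^3/D^2)\Bigr)
\]
valid whenever $D \ge C s^2$ for a sufficiently large absolute constant $C$. This follows either from the quantitative Stirling expansion $\log\Gamma(z+a) - \log\Gamma(z) = a\log z + \tfrac{a(a-1)}{2z} + O(a^3/z^2)$ applied with $z = D/2 - s$ and $a = s$, or (for integer or half-integer $s$) from the telescoping identity $\Gamma(D/2)/\Gamma(D/2-s) = \prod_{k=1}^{s}(D/2 - k) = (D/2)^{s}\prod_{k=1}^{s}(1 - 2k/D)$ followed by Taylor-expanding $\sum_k \log(1 - 2k/D) = -s(s+1)/D + O(s^3/D^2)$. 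Either way one obtains
\[
\Ex[\|\bw\|_2^{-2s}] \;=\; \frac{1}{D^{s}}\Bigl(1 + \tfrac{s(s+1)}{D} + O(s^3/D^2)\Bigr).
\]

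Plugging in $s = d/2$ immediately yields \eqref{eq:mean_univ_bd}: the leading term is $1/D^{d/2}$ and the correction is of order $d^2/D^{d/2+1}$, as required. For \eqref{eq:var_univ_bd} I would expand
\[
\Var\bigl[\|\bw\|_2^{-d}\bigr] \;=\; \Ex[\|\bw\|_2^{-2d}] \,-\, \bigl(\Ex[\|\bw\|_2^{-d}]\bigr)^{2}
\]
and apply the displayed formula with $s = d$ to the first term and with $s = d/2$ (then squared) to the second. The two leading $1/D^{d}$ contributions cancel exactly, and a short arithmetic check shows that the $s(s+1)/D$ corrections do not: they combine to leave a residual of the form $\tfrac{1}{D^{d}} \cdot \tfrac{d^2/2 + O(d^3/D)}{D}$, giving the claimed $O(d^2/D^{d+1})$ bound. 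This is precisely the concentration phenomenon that $\|\bw\|_2$ is tightly concentrated around $\sqrt{D}$.

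The only subtle step is bookkeeping: one must verify that the $O(s^3/D^2)$ tail of the gamma expansion is genuinely of lower order than the $O(s^2/D)$ correction that survives after the leading-order cancellation in the variance, so that the cancellation is not spoiled by higher-order terms. This is where the hypothesis $D > Cd^2$ (with $C$ absolute and large enough) is essential; under it, $d^3/D^2 \le d/(CD)$, so the residual is dominated by the $d^2/(2D^{d+1})$ main term and the final constant $8C$ in \eqref{eq:var_univ_bd} absorbs all the implicit constants from the Taylor remainders. No other non-trivial issue arises.
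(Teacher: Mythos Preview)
Your proposal is correct and takes essentially the same approach as the paper: compute the exact negative moments of $\|\bw\|_2$ via a gamma-function ratio (the paper uses the $\chi$-density directly, you use the equivalent $\chi^2$ formulation), approximate that ratio via Stirling, and obtain the variance bound from $\Ex[X^{-2d}] - (\Ex[X^{-d}])^2$ with the leading $1/D^d$ terms cancelling. Your expansion $\Gamma(D/2-s)/\Gamma(D/2) = (D/2)^{-s}(1 + s(s+1)/D + O(s^3/D^2))$ is in fact slightly more explicit than the paper's $(1 \pm O(d^2/D))$, which makes the variance cancellation more transparent, but the method is the same.
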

\begin{proof}
	It is well-known that the distribution of $\norm{2}{\bw}$ follows a $\chi$-distribution with parameter $D$, and whose probability density function is given by
	\begin{equation*}
		f_D(x) = \frac{x^{D-1} \cdot e^{- \frac{x^2}{2}}}{2^{\frac{D}{2} - 1} \cdot \Gamma(\frac{D}{2})},
	\end{equation*}
	for every $x \geq 0$ (and where $\Gamma(\cdot)$ denotes the Gamma function). Thus, we have that
	\begin{align*}
		\Ex_{\bw}\bigg[\frac{1}{\|\bw\|^d}\bigg] &= \int_{0}^{\infty} \frac{1}{x^d} \cdot f_D(x) dx\\ 
		&=  \int_{0}^{\infty} \frac{x^{D-d-1} \cdot e^{- \frac{x^2}{2}}}{2^{\frac{D}{2} - 1} \cdot \Gamma(\frac{D}{2})} dx\\ 
		&= \frac{2^{\frac{D-d-1}{2}} \cdot \Gamma\inparen{\frac{D-d}{2}}}{2^{\frac{D}{2} - 1} \cdot \Gamma(\frac{D}{2})}\\ 
		&= \frac{1}{D^{d/2}} \cdot \inparen{1 \pm O\inparen{\frac{d^2}{D}}},
	\end{align*}
	where the last equality follows from the following Stirling's approximation of the Gamma function, which holds for every real number $z > 0$:
	\[ \Gamma(z+1) = \sqrt{ 2 \pi z } \cdot \inparen{\frac{z}{e}}^z \cdot \inparen{1 \pm O\inparen{\frac{1}{z}}}. \]
	This completes the proof of \Cref{eq:mean_univ_bd}, for the explicit constant $C$ that can be derived from the Stirling's approximation. Now, \Cref{eq:var_univ_bd} immediately follows as:
	\begin{align*}
		\Var_{\bw}\insquare{\frac{1}{\| \bw \|^d}} &~=~ \Ex_{\bw} \insquare{\frac{1}{\|\bw\|^{2d}}} -  \Ex_{\bw}\insquare{\frac{1}{\|\bw\|^d}}^2\\ 
		&~=~ \inparen{\frac{1}{D^d} \pm C \cdot \inparen{\frac{(2d)^2}{D^{d+1}}}} - \inparen{\frac{1}{D^{d/2}} \pm C \cdot \inparen{\frac{d^2}{D^{d/2+1}}}}^2\\ 
		&~\le~ 8C \cdot \inparen{\frac{d^2}{D^{d+1}}},
	\end{align*}
	where, we use that $D$ is sufficiently large that $C^2 \inparen{\frac{d^4}{D^{d+2}}} < 2C \cdot \inparen{\frac{d^2}{D^{d+1}}}$, i.e. $D > Cd^2$.
\end{proof}

\noindent We now show how to generalize the above to prove \Cref{le:corr_norms_rec_ex}.

\begin{proof}[Proof of \Cref{le:corr_norms_rec_ex}]
	More specifically, we will show that,
	\begin{align}
	\inabs{\Ex\bigg[\prod\limits_{i=1}^{\ell} \frac{1}{\|\bw_i\|_2^{d_i}}\bigg] - \frac{1}{D^{d/2}}} &~\le~ C \cdot \ell^3 \cdot \bigg(\frac{d^2}{D^{\frac{d}{2}+1}} \bigg) \label{eq:exp_prod_dep}\\
	\Var\insquare{\prod\limits_{i=1}^{\ell} \frac{1}{\|\bw_i\|_2^{d_i}}} &~\le~ 8C \cdot \ell^3 \cdot \inparen{\frac{d^2}{D^{d+1}}}\label{eq:exp_prod_dep_var}
	\end{align}
	where $C$ is the absolute constant (as obtained in \Cref{prop:inv_chi_mom_dist}). This implies the lemma since $\ell \le d$.
	
	We proceed by induction on $\ell$ (more specifically on $\log \ell$). For $\ell = 1$, the bound immediately follows from \Cref{prop:inv_chi_mom_dist}. For the inductive step, we assume that the bound in \Cref{eq:exp_prod_dep,eq:exp_prod_dep_var} holds for $\ell$, and we prove that the bound also holds for $2\ell$. While it may seem that our bounds are being proven only when $\ell$ is a power of $2$, it is not hard to see that our proof could be done for non powers of $2$ as well, giving a bound that is monotonically increasing in $\ell$ and hence it suffices having proved it for $\ell$ that are powers of $2$.	Let $d_1, d_2, \dots, d_{2\ell}$ be non-negative integers with $d:= \sum_{i=1}^{2\ell} d_i$. For notational convenience, let $s_1 = \sum_{i=1}^{\ell} d_i$ and $s_2 = \sum_{i=\ell+1}^{2\ell} d_i$, and so $d = s_1 + s_2$. 
	
	We will first prove \Cref{eq:exp_prod_dep}. The main idea that we use to prove this inductively is: for any two random variables $X$ and $Y$, it holds that $\Ex[XY] - \Ex[X] \Ex[Y] = \Cov[X,Y]$, and that the covariance satisfies $|\Cov[X,Y]| \le \sqrt{\Var[X] \cdot \Var[Y]}$ (by Cauchy-Schwarz inequality). Thus, we get,
	\begin{align}\label{eq:cov_manip}
		&\Ex\bigg[\prod_{i=1}^{2\ell} \frac{1}{\norm{2}{\bw_i}^{d_i}} \bigg]
		~=~ \Ex\bigg[\prod_{i=1}^{\ell} \frac{1}{\norm{2}{\bw_i}^{d_i}} \ \cdot \  \prod_{i=\ell+1}^{2\ell} \frac{1}{\norm{2}{\bw_i}^{d_i}} \bigg]\nonumber\\
		&~=~ \Ex\bigg[\prod_{i=1}^{\ell} \frac{1}{\norm{2}{\bw_i}^{d_i}} \bigg] \cdot \Ex\bigg[ \prod_{i=\ell+1}^{2\ell} \frac{1}{\norm{2}{\bw_i}^{d_i}}\bigg] ~+~ \Cov\bigg[\prod_{i=1}^{\ell} \frac{1}{\norm{2}{\bw_i}^{d_i}} , \prod_{i=\ell+1}^{2\ell} \frac{1}{\norm{2}{\bw_i}^{d_i}} \bigg] \nonumber\\ 
		&~=~ \Ex\bigg[\prod_{i=1}^{\ell} \frac{1}{\norm{2}{\bw_i}^{d_i}} \bigg] \cdot \Ex\bigg[ \prod_{i=\ell+1}^{2\ell} \frac{1}{\norm{2}{\bw_i}^{d_i}}\bigg] ~\pm~ \sqrt{\Var\bigg[\prod_{i=1}^{\ell} \frac{1}{\norm{2}{\bw_i}^{d_i}} \bigg] \cdot \Var\bigg[ \prod_{i=\ell+1}^{2\ell} \frac{1}{\norm{2}{\bw_i}^{d_i}}\bigg]}\;.
	\end{align}
	Using the inductive assumption w.r.t. $\ell$, we get that,
	\begin{align}
	\Ex\bigg[ \prod_{i=1}^{\ell}       \frac{1}{\norm{2}{\bw_i}^{d_i}}\bigg] &~=~ \frac{1}{D^{s_1/2}} \inparen{1 \pm C \cdot \ell^3 \cdot \bigg(\frac{s_1^2}{D} \bigg)} \label{eq:exp_ind1}\\
	\Ex\bigg[ \prod_{i=\ell+1}^{2\ell} \frac{1}{\norm{2}{\bw_i}^{d_i}}\bigg] &~=~ \frac{1}{D^{s_2/2}} \inparen{1 \pm C \cdot \ell^3 \cdot \bigg(\frac{s_2^2}{D} \bigg)} \label{eq:exp_ind2}
	\end{align}
	and
	\begin{align}
		\Var\bigg[ \prod_{i=1}^{\ell}       \frac{1}{\norm{2}{\bw_i}^{d_i}}\bigg] &~\le~ \frac{1}{D^{s_1}} \cdot 8C \cdot \ell^3 \cdot \bigg(\frac{s_1^2}{D} \bigg) \label{eq:var_ind1}\\
		\Var\bigg[ \prod_{i=\ell+1}^{2\ell} \frac{1}{\norm{2}{\bw_i}^{d_i}}\bigg] &~\le~ \frac{1}{D^{s_2}} \cdot 8C \cdot \ell^3 \cdot \bigg(\frac{s_2^2}{D} \bigg) \label{eq:var_ind2}
	\end{align}
	Plugging \Cref{eq:exp_ind1,eq:exp_ind2,eq:var_ind1,eq:var_ind2} in \Cref{eq:cov_manip}, it is not hard to see that,
	\begin{equation*}
		\Ex\bigg[\prod_{i=1}^{2\ell} \frac{1}{\norm{2}{\bw_i}^{d_i}} \bigg] = \frac{1}{D^{d/2}} \inparen{1 \pm C \cdot (2\ell)^3 \cdot \bigg(\frac{d^2}{D} \bigg)}\;.
	\end{equation*}
	This completes the proof of \Cref{eq:exp_prod_dep}. Now, \Cref{eq:exp_prod_dep_var} follows easily as,
	\begin{align*}
	\Var\insquare{\prod_{i=1}^{2\ell} \frac{1}{\norm{2}{\bw_i}^{d_i}}}
	&~=~ \Ex \insquare{\prod_{i=1}^{2\ell} \frac{1}{\norm{2}{\bw_i}^{2d_i}}} -  \Ex_{\bw}\insquare{\prod_{i=1}^{2\ell} \frac{1}{\norm{2}{\bw_i}^{d_i}}}^2\\ 
	&~=~ \inparen{\frac{1}{D^d} \pm C \cdot (2\ell)^3 \inparen{\frac{(2d)^2}{D^{d+1}}}} - \inparen{\frac{1}{D^{d/2}} \pm C \cdot (2\ell)^3 \cdot \inparen{\frac{d^2}{D^{d/2+1}}}}^2\\ 
	&~\le~ 8C \cdot (2\ell)^3 \cdot \inparen{\frac{d^2}{D^{d+1}}}\;.
	\end{align*}
\end{proof}

\subsubsection*{Interchanging Product and Expectation}

In order to handle the interchanging of the product and expectation operations, we will show the following lemma.

\begin{lem}\label{le:swap_exp_prod}
	Let $(\bu_1,\ldots,\bu_d, \bv_1,\ldots, \bv_d)$ be a multivariate Gaussian distribution such that,
	\begin{itemize}
		\item each of $\bu_i,\bv_i \in \bbR^D$ is marginally distributed as a $D$-dimensional standard Gaussian, i.e., $\gamma_D$.
		\item for each $j \in [D]$, the joint distribution of the $j$-th coordinates, i.e., $(u_{1,j}, \ldots, u_{d,j}, v_{1,j}, \ldots, v_{d,j})$, is independent across different values of $j$.
	\end{itemize}
	Then,
	\begin{equation*}
	\inabs{\Ex\limits_{\set{\bu_i, \bv_i}_i} \ \insquare{\prod_{i=1}^d \inangle{\bu_i, \bv_i}} - \prod_{i=1}^d \ \Ex\limits_{\set{\bu_i, \bv_i}_i} \ \insquare{\inangle{\bu_i, \bv_i}}} ~\le~ d^{O(d)} \cdot D^{d-1}.
	\end{equation*}
\end{lem}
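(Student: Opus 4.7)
The plan is to expand both sides in coordinates and exploit the assumed coordinate-wise independence directly. Writing $\langle \bu_i, \bv_i\rangle = \sum_{j=1}^D u_{i,j} v_{i,j}$ and multiplying out, the left-hand side becomes
\[
\Ex\left[\prod_{i=1}^d \langle \bu_i, \bv_i\rangle\right] = \sum_{(j_1,\ldots,j_d) \in [D]^d} \Ex\left[\prod_{i=1}^d u_{i,j_i} v_{i,j_i}\right],
\]
and the right-hand side expands the same way but with the expectation pulled inside the product: $\sum_{(j_1,\ldots,j_d)} \prod_{i=1}^d \Ex[u_{i,j_i} v_{i,j_i}]$.

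The first key step is the observation that whenever $j_1,\ldots,j_d$ are \emph{pairwise distinct}, the random variables $\{u_{i,j_i} v_{i,j_i}\}_{i=1}^d$ are functions of disjoint coordinate-slices $(u_{1,j_i},\ldots,v_{d,j_i})$ for distinct $j_i$'s, and hence are mutually independent by the hypothesis of the lemma. Consequently the two sums agree term-by-term on all such ``generic'' tuples, and the entire difference comes solely from ``degenerate'' tuples in which at least one value of $j$ is repeated.

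The second step is a counting-plus-moments estimate on the degenerate contribution. The number of tuples in $[D]^d$ with at least one repeated coordinate is $D^d - D(D-1)\cdots(D-d+1) = O(d^2 D^{d-1})$. For each such summand on the left, I would apply the generalized H\"older inequality with exponents $p_i = d$ (so that $\sum_{i=1}^d 1/p_i = 1$) to get $|\Ex[\prod_i u_{i,j_i} v_{i,j_i}]| \le \prod_i \|u_{i,j_i} v_{i,j_i}\|_d$, then combine with Cauchy--Schwarz and the standard Gaussian moment identity $\Ex_{Z\sim \gamma_1}[Z^{2d}] = (2d-1)!! \le (2d)^d$ to obtain a uniform bound of $d^{O(d)}$ per summand. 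The corresponding terms in $\prod_i \Ex[\langle \bu_i,\bv_i\rangle]$ are even smaller, since each factor $|\Ex[u_{i,j_i} v_{i,j_i}]| \le 1$ by Cauchy--Schwarz. Multiplying the per-summand bound $d^{O(d)}$ by the $O(d^2 D^{d-1})$ degenerate tuples yields the claimed estimate $d^{O(d)} \cdot D^{d-1}$.

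I do not expect any serious obstacle: the exact cancellation on all-distinct index tuples is immediate from the coordinate-independence hypothesis, and the rest reduces to routine counting and Gaussian moment bounds. The only mild point to track is that the bound on individual degenerate summands must hold uniformly over arbitrary joint distributions at each coordinate $j$ (not just product distributions), which is precisely why H\"older combined with the (coordinate-wise) standard-Gaussian marginals is the right tool.
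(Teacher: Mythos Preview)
Your proposal is correct and follows essentially the same approach as the paper: expand both sides over index tuples $(j_1,\ldots,j_d)\in[D]^d$, observe exact cancellation on the distinct-index tuples via coordinate-wise independence, count the degenerate tuples as $O(d^2 D^{d-1})$, and bound each degenerate summand by $d^{O(d)}$. The only minor difference is in how that last moment bound is obtained---the paper proves it via an inductive covariance/Cauchy--Schwarz argument (its Proposition on moments of multivariate Gaussians), whereas your generalized H\"older plus $\Ex[Z^{2d}]=(2d-1)!!$ route is a bit more direct; both yield the same $d^{O(d)}$.
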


\begin{remark}
	The $d^{O(d)}$ term has an explicit expression, although we only highlight its qualitative nature for clarity. Again, it is conceivable that the bounds in \Cref{le:swap_exp_prod} could be improved in terms of the dependence on $d$, although we suspect that it is tight upto constant factors in the exponent. Anyhow, this was not central to our application, so we go ahead with the stated bounds. The main point to note in the above lemma is that the exponent of $D$ is $(d-1)$ instead of $d$.
\end{remark}

\noindent To prove the lemma, we first obtain the following proposition on moments of a multivariate Gaussian.

\begin{proposition}\label{prop:mag_mom_multi_Gauss}
	Let $\bw \in \mathbb{R}^{\ell}$ be any multivariate Gaussian vector with each coordinate marginally distributed according to $\gamma_1$. Let $d_1, d_2, \dots, d_{\ell}$ be non-negative integers such that $d := \sum_{i=1}^{\ell} d_i$. Then,
	\begin{equation}
	\bigg|  \Ex \bigg[ \prod_{i=1}^{\ell} w_i^{d_i} \bigg] \bigg| \le (2d)^{3d}.
	\end{equation}
\end{proposition}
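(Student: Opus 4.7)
The approach is to reduce the multivariate moment to a product of univariate moments via the generalized Hölder inequality, and then invoke the standard moment estimate for a $\gamma_1$ random variable. Observe first that $w_i^0 \equiv 1$, so we may discard all indices with $d_i = 0$ and assume without loss of generality that $d_i \ge 1$ for every $i \in [\ell]$; in particular this forces $\ell \le d$. This trivial reduction is what lets us use $\ell \le d$ freely in the final bookkeeping.

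Next, apply the generalized Hölder inequality with conjugate exponents $p_i = \ell$ for every $i$ (so $\sum_{i=1}^\ell 1/p_i = 1$):
\[
\left| \Ex \left[ \prod_{i=1}^{\ell} w_i^{d_i} \right] \right| ~\le~ \prod_{i=1}^{\ell} \Ex\left[ |w_i|^{d_i \ell} \right]^{1/\ell}.
\]
The crucial point is that each factor on the right depends only on the marginal law of $w_i$, which is $\gamma_1$ by hypothesis; the potentially complicated joint covariance structure of $\bw$ plays no role in this step. Hence $\Ex[|w_i|^{d_i \ell}]$ equals the $(d_i \ell)$-th absolute moment of a univariate standard Gaussian.

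Now invoke the classical bound $\Ex_{X \sim \gamma_1}[|X|^p] \le (2p)^{p/2}$, which follows from the closed form $2^{p/2}\Gamma((p+1)/2)/\sqrt{\pi}$ combined with Stirling's approximation (equivalently, from $\Ex[X^{2k}] = (2k-1)!! \le (2k)^k$ and Cauchy--Schwarz for odd $p$). Applying this with $p = d_i \ell$ yields $\Ex[|w_i|^{d_i \ell}]^{1/\ell} \le (2 d_i \ell)^{d_i/2}$. Taking the product over $i$ and using $d_i \le d$ and $\ell \le d$ gives
\[
\prod_{i=1}^{\ell}(2 d_i \ell)^{d_i/2} ~\le~ (2 d^2)^{\frac{1}{2}\sum_{i=1}^\ell d_i} ~=~ (2 d^2)^{d/2} ~\le~ (2d)^{3d},
\]
which is the desired estimate.

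There is no genuine obstacle to overcome here; the only ``choice'' in the proof is to pick the Hölder exponents $p_i = \ell$, after which the remainder is routine bookkeeping. A slightly sharper alternative is available via Isserlis' theorem: when $d$ is odd the expectation vanishes, and when $d$ is even it expands into a sum over the $(d-1)!! \le d^{d/2}$ perfect matchings of $d$ symbols (with $d_i$ copies of $w_i$), each term being a product of pairwise covariances $\Ex[w_i w_j]$ that are bounded by $1$ in absolute value via Cauchy--Schwarz. This gives the sharper bound $d^{d/2}$, but the looser Hölder-based bound $(2d)^{3d}$ is all that is needed in the sequel.
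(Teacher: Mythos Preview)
Your proof is correct and takes a genuinely different route from the paper's own argument. The paper proves the intermediate bound $2^{\ell-1}(\ell d)^d$ by an induction on $\log \ell$: the base case $\ell=1$ uses the double-factorial formula for Gaussian moments, and the inductive step splits the product in half, writes $\Ex[XY] = \Ex[X]\Ex[Y] + \Cov[X,Y]$, bounds $|\Cov[X,Y]| \le \sqrt{\Ex[X^2]\Ex[Y^2]}$, and applies the hypothesis to each half. Your approach bypasses this recursion entirely: after the reduction to $\ell \le d$, a single application of generalized H\"older with exponents $p_i = \ell$ reduces everything to univariate absolute moments, which the marginal assumption handles directly. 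This is shorter and more transparent; the paper's inductive scheme is the same mechanism it uses in the surrounding lemmas (\Cref{le:corr_norms_rec_ex}), so there is some uniformity gained there, but for this proposition in isolation your argument is cleaner. Your closing remark about Isserlis' theorem giving the sharper bound $(d-1)!! \le d^{d/2}$ is also correct and is arguably the most natural proof of all.
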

\begin{proof}
	More specifically we will show that when $\ell$ is a power of $2$,
	\begin{equation}\label{eq:mag_bd_mult_Gauss}
	\bigg|  \Ex \bigg[ \prod_{i=1}^{\ell} w_i^{d_i} \bigg] \bigg| \le 2^{\ell-1} (\ell d)^{d}.
	\end{equation}
	It is easy to see that this immediately implies the bound of $2^d \cdot d^{2d}$ in the main lemma, since $\ell \le d$. However if $\ell$ is not a power of $2$ we can round it up to the nearest power of $2$, which amounts to substituting $\ell \le 2d$ in the above, obtaining a bound of $2^{3d} \cdot d^{2d} \le (2d)^{3d}$.\\
	
	\noindent We proceed by induction on $\ell$ (more specifically on $\log \ell$). For $\ell = 1$, we use the well-known fact that for $w \sim \gamma_1$,
	$$\inabs{\Ex[w^d]} ~=~ \inbrace{\inmat{0 & \sIF d \text{ is odd} \\ (d-1)!! & \sIF d \text{ is even}}} ~\le~ d^{d},$$
	where $(d-1)!!$ denotes the double factorial of $(d-1)$, i.e., the product of all integers from $1$ to $d-1$ that have the same parity as $d-1$. For the inductive step, we assume that the bound in (\ref{eq:mag_bd_mult_Gauss}) holds for $\ell$ and we show that it also holds for $2\ell$. For notational convenience, let $s_1 = \sum_{i=1}^{\ell} d_i$ and $s_2 = \sum_{i=\ell+1}^{2\ell} d_i$, and so $d = s_1 + s_2$.
	
	The main idea that we use to prove the inductive step is: for any two random variables $X$ and $Y$, it holds that $\Ex[XY] = \Ex[X] \Ex[Y] + \Cov[X,Y]$, and that the covariance satisfies $|\Cov[X,Y]| \le \sqrt{\Var[X] \cdot \Var[Y]}$ (by Cauchy-Schwarz inequality). Additionally, we use that $\Var[X] \le \Ex[X^2]$. Thus, we get,
	\begin{align*}
	&\bigg|  \Ex \bigg[ \prod_{i=1}^{2\ell} w_i^{d_i} \bigg] \bigg|
	~=~ \bigg|  \Ex \bigg[ \prod_{i=1}^{\ell} w_i^{d_i} \cdot \prod_{i=\ell+1}^{2\ell} w_i^{d_i} \bigg] \bigg|\\
	&~=~ \bigg|  \Ex\bigg[ \prod_{i=1}^{\ell} w_i^{d_i} \bigg] \cdot \Ex \bigg[ \prod_{i=\ell+1}^{2\ell} w_i^{d_i} \bigg] ~+~ \Cov\bigg[ \prod_{i=1}^{\ell} w_i^{d_i}\ , \prod_{i=\ell+1}^{2\ell} w_i^{d_i}\bigg]  \bigg| \\ 
	&~\le~ \bigg|  \Ex\bigg[ \prod_{i=1}^{\ell} w_i^{d_i} \bigg] \cdot \Ex \bigg[ \prod_{i=\ell+1}^{2\ell} w_i^{d_i} \bigg] \bigg|  ~+~ \sqrt{\Var\bigg[ \prod_{i=1}^{\ell} w_i^{d_i} \bigg] \cdot \Var\bigg[ \prod_{i=\ell+1}^{2\ell} w_i^{d_i}\bigg]}\\
	&~\le~ \bigg|  \Ex\bigg[ \prod_{i=1}^{\ell} w_i^{d_i} \bigg] \cdot \Ex \bigg[ \prod_{i=\ell+1}^{2\ell} w_i^{d_i} \bigg] \bigg| ~+~ \sqrt{\Ex\bigg[ \prod_{i=1}^{\ell} w_i^{2d_i} \bigg] \cdot \Ex\bigg[ \prod_{i=\ell+1}^{2\ell} w_i^{2d_i}\bigg]}\\
	&~\le~ 2^{\ell-1} (\ell s_1)^{s_1} \cdot 2^{\ell-1} (\ell s_2)^{s_2} + \sqrt{2^{\ell - 1} (2\ell s_1)^{2s_1} \cdot 2^{\ell-1} (2\ell s_2)^{2s_2}}\\
	&~\le~ 2^{2\ell-1} (2\ell d)^{d}\;,
	\end{align*}
	where, the second last inequality uses the inductive assumption regarding product of $\ell$ terms. The last inequality follows from $s_1^{s_1} \cdot s_2^{s_2} \le d^{s_1} \cdot d^{s_2} = d^d$.
\end{proof}
\noindent Using the above proposition, we are now able to prove \Cref{le:swap_exp_prod}.
\begin{proof}[Proof of \Cref{le:swap_exp_prod}]
	Let $S \subseteq [D]^d$ be the set of all tuples $\bc \in [D]^d$ such that $c_j \neq c_k$ for all $j \neq k \in [d]$. Let $\overline{S}$ denote the complement of $S$ in $[D]^d$. Note that $|\overline{S}| \le d^2 \cdot D^{d-1}$. We have that
	\begin{align}
	\Ex\bigg[ \prod_{i=1}^d \langle \bu_i, \bv_i \rangle \bigg] &~=~ \Ex\bigg[ \prod_{i=1}^d \sum_{k=1}^D u_{i,k} v_{i,k}  \bigg] \nonumber \\ 
	&~=~ \Ex \bigg[ \sum_{\bc \in [D]^d} \prod_{i=1}^d u_{i, c_i} v_{i, c_i} \bigg] \nonumber \\ 
	&~=~ \sum_{\bc \in [D]^d} \Ex \bigg[ \prod_{i=1}^d u_{i, c_i} v_{i, c_i} \bigg] \nonumber \\ 
	&~=~ \sum_{\bc \in S} \Ex \bigg[ \prod_{i=1}^d u_{i, c_i} v_{i, c_i} \bigg] + \sum_{\bc \in \overline{S}} \Ex \bigg[ \prod_{i=1}^d u_{i, c_i} v_{i, c_i} \bigg] \nonumber\\ 
	&~=~ \sum_{\bc \in S} \prod_{i=1}^d \Ex[ u_{i, c_i} v_{i, c_i}] + \sum_{\bc \in \overline{S}} \Ex \bigg[ \prod_{i=1}^d u_{i, c_i} v_{i, c_i} \bigg], \label{eq:exp_of_prod}
	\end{align}
	where the last equality follows from the assumption that the distribution of $(u_{1,j}, \ldots, u_{d,j}, v_{1,j}, \ldots, v_{d,j})$ is independent across $j \in [D]$. On the other hand, we have that
	\begin{align}
	\prod_{i=1}^d \ \Ex [ \langle \bu_i, \bv_i \rangle ] &~=~ \prod_{i=1}^d \ \Ex \bigg[ \sum_{k=1}^D u_{i,k} v_{i,k} \bigg] \nonumber\\
	&~=~ \sum_{\bc \in [D]^d} \prod_{i=1}^d \Ex[ u_{i, c_i} v_{i, c_i}] \nonumber\\
	&~=~ \sum_{\bc \in S} \prod_{i=1}^d \Ex[ u_{i, c_i} v_{i, c_i}] + \sum_{\bc \in \overline{S}} \prod_{i=1}^d \Ex[ u_{i, c_i} v_{i, c_i}] \label{eq:prod_of_exp}
	\end{align}
	Combining \Cref{eq:exp_of_prod,eq:prod_of_exp}, we get
	\begin{align*}
	\bigg| \Ex\bigg[ \prod_{i=1}^d \langle \bu_i, \bv_i \rangle \bigg] - \prod_{i=1}^d \ \Ex [ \langle \bu_i, \bv_i \rangle ] \bigg|
	&~=~  \bigg|  \sum_{\bc \in \overline{S}} \bigg( \Ex \bigg[ \prod_{i=1}^d u_{i, c_i} v_{i, c_i}\bigg] - \prod_{i=1}^d \Ex[ u_{i, c_i} v_{i, c_i}]  \bigg) \bigg|\\ 
	&~\le~ |\overline{S}| \cdot \max_{\bc \in \overline{S}} \bigg|  \Ex \bigg[ \prod_{i=1}^d u_{i, c_i} v_{i, c_i}\bigg] - \prod_{i=1}^d \Ex[ u_{i, c_i} v_{i, c_i}]  \bigg|\\ 
	&~\le~ d^2 \cdot D^{d-1} \cdot \inparen{(2d)^{3d} + 1}\\
	&~\le~ d^{O(d)} \cdot D^{d-1},
	\end{align*}
	where the second last inequality follows from the fact that $|\overline{S}| \le d^2 \cdot D^{d-1}$ and from \Cref{prop:mag_mom_multi_Gauss}.
\end{proof}

\subsubsection*{Putting things together to prove \Cref{le:norm_Gauss}}

\begin{proof}[Proof of \Cref{le:norm_Gauss}]
	We prove this lemma in two steps. We show the following two bounds,
	\begin{align}
	\inabs{\Ex\limits_{\set{\bu_i, \bv_i}_i} \ \insquare{\prod_{i=1}^d \frac{\inangle{\bu_i, \bv_i}}{\norm{2}{\bu_i} \norm{2}{\bv_i}}} - \Ex\limits_{\set{\bu_i, \bv_i}_i} \ \insquare{\prod_{i=1}^d \frac{\inangle{\bu_i, \bv_i}}{D}}} &~\le~ O\inparen{\frac{d^{2.5}}{\sqrt{D}} + \frac{d^{O(d)}}{D}}. \label{eq:norm_ineq_norms}\\
	\inabs{\Ex\limits_{\set{\bu_i, \bv_i}_i} \ \insquare{\prod_{i=1}^d \frac{\inangle{\bu_i, \bv_i}}{D}} - \prod_{i=1}^d \ \Ex\limits_{\set{\ba_i, \bb_i}_i} \ \insquare{\frac{\inangle{\bu_i, \bv_i}}{D}}} &~\le~ \frac{d^{O(d)}}{D}.\label{eq:unorm_ineq_norms}
	\end{align}
	\Cref{le:norm_Gauss} now follows immediately by the triangle inequality, for an explicit choice of $D$ that is upper bounded by $\frac{d^{O(d)}}{\delta^2}$.\\
	
	\noindent Note that \Cref{eq:unorm_ineq_norms} is simply a restatement of \Cref{le:swap_exp_prod}. To prove \Cref{eq:norm_ineq_norms}, we define the random variables
	\[ W ~:=~ \prod_{i=1}^d \inangle{\bu_i, \bv_i} \qquad \sAND \qquad Z ~:=~ \prod_{i=1}^d \frac{1}{\norm{2}{\bu_i} \norm{2}{\bv_i}} ~-~ \frac{1}{D^d}\;. \]
	Note that \Cref{eq:norm_ineq_norms} is equivalent to showing bounds on $|\Ex[W \cdot Z]|$. In order to do so, we show the following four bounds:
	\begin{enumerate}
		\item\label{it:W_mean} $\inabs{\Ex[W]} = D^d + d^{O(d)} \cdot D^{d-1}$.
		\item\label{it:W_var} $\Var[W] = D^{2d} + d^{O(d)} \cdot D^{2d-1}$.
		\item\label{it:Z_mean} $\inabs{\Ex[Z]} = O\inparen{\frac{d^5}{D^{d+1}}}$. 
		\item\label{it:Z_var} $\Var[Z] = O\inparen{\frac{d^5}{D^{2d+1}}}$. 
	\end{enumerate}
	We now prove each of these four bounds. We start by proving \Cref{it:W_mean}. By \Cref{le:swap_exp_prod}, we have that
	\begin{align*}
		|\Ex[W]|
		&~\le~ \bigg| \prod_{i=1}^d  \Ex [ \langle \bu_i, \bv_i \rangle ] \bigg| + d^{O(d)} \cdot D^{d-1}\\
		&~=~ \bigg| \prod_{i=1}^d \sum_{j=1}^D \Ex[\bu_{i,j} \bv_{i,j}] \bigg| + d^{O(d)} \cdot D^{d-1}\\
		&~\le~ D^d + d^{O(d)} \cdot D^{d-1}
	\end{align*}
	To prove \Cref{it:W_var}, we can again apply \Cref{le:swap_exp_prod} as follows:
	\begin{align*}
		\Var[W] &~\le~ \Ex[W^2]\\
		&~=~ \Ex \insquare{\prod_{i=1}^{d} \inangle{\bu_i, \bv_i}^2}\\
		&~\le~ D^{2d} + d^{O(d)} \cdot D^{2d-1}
	\end{align*}
	Finally, we note that \Cref{it:Z_mean,it:Z_var} follows exactly from \Cref{le:corr_norms_rec_ex}.\\
	
	\noindent To put this together, recall the definition of covariance $\Cov[W,Z] := \Ex[W \cdot Z] - \Ex[W] \cdot \Ex[Z]$, and that by Cauchy-Schwarz inequality, $|\Cov[W,Z]| \le \sqrt{\Var[W] \cdot \Var[Z]}$. Thus,
	\begin{align*}
	\inabs{\Ex[W \cdot Z]} &= \inabs{\Ex[W] \cdot \Ex[Z] + \Cov(W,Z)}\\ 
	&~\le~ \inabs{\Ex[W] \cdot \Ex[Z]} + \inabs{\Cov(W,Z)}\\ 
	&~\le~ \inabs{\Ex[W]} \cdot \inabs{\Ex[Z]} + \sqrt{\Var[W] \cdot \Var[Z]}\\
	&~\le~ O\inparen{\frac{d^{2.5}}{\sqrt{D}} + \frac{d^{O(d)}}{D}}
	\end{align*}
	where the last line combines \Cref{it:W_mean,it:W_var,it:Z_mean,it:Z_var}.
\end{proof}

\subsection{Mean \& Variance Bounds for Multilinear Monomials}\label{subsec:bds_for_multilinear_monom}

For the rest of this section, we simplify our notations as follows:
\begin{itemize}
	\item For vectors $\ba, \bb \in \bbR^D$, we will use $\wtilde{\ba}$ and $\wtilde{\bb}$ to denote the normalized vectors $\frac{\ba}{\|\ba\|_2}$ and $\frac{\bb}{\|\bb\|_2}$ respectively.
	\item We will use $\bU \in \bbR^N$ to denote $M\wtilde{\ba}$ and similarly $\bV \in \bbR^N$ to denote $M\wtilde{\bb}$. We will also have independent variables $\ba'$ and $\bb'$, for which we use $\bU' = M\wtilde{\ba}'$ and $\bV' = M\wtilde{\bb}'$.
	\item $U_i$ denotes the $i$-th coordinate of $U$. Similarly, $\m_i$ denotes the vector corresponding to the $i$-th row of $M$. Note that $U_i = \inangle{\m_i, \wtilde{\ba}}$.  For $S \subseteq [N]$, let $\bU_S$ denote $\prod_{i \in S} U_i = \prod_{i \in S} \inangle{\m_i, \wtilde{\ba}}$. Similarly for $\bV_S$.
	\item We will take expectations over random variables $M$, $\ba$, $\bb$, $\ba'$, $\bb'$. It will be understood that we are sampling $M \sim \calN(0,1)^{\otimes (N \times D)}$. Also, $(\ba, \bb)$ and $(\ba', \bb')$ are independently sampled from $\calG_{\rho}^{\otimes D}$.
\end{itemize}

\begin{lem}[Mean bounds for monomials]\label{lem:mean_bound_monomial}
	Given parameter $d$ and $\delta$, there exists an explicitly computable $D := D(d, \delta)$ such that the following holds:	
	For any subsets $S, T \subseteq [N]$ satisfying $|S|, |T| \le d$, it holds that,
	\begin{align*}
		\text{if } S \ne T : & \qquad \Ex_{M} \Ex_{\ba, \bb} \bU_S \bV_T ~=~ 0\,.\\
		\text{if } S = T : & \qquad \inabs{\Ex\limits_{M} \Ex\limits_{\ba, \bb} \bU_S \bV_T - \rho^{|S|}} ~\le~ \delta\,.
	\end{align*}
	In particular, one may take $D = \frac{d^{O(d)}}{\delta^2}$.
\end{lem}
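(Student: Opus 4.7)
The plan is to handle the two cases separately. For the case $S \ne T$, I would exploit the mutual independence of the rows $\m_1, \ldots, \m_N$ of $M$. Writing
\[
\Ex_M\insquare{\bU_S \bV_T} ~=~ \Ex_M \insquare{\prod_{i \in S} \inangle{\m_i, \wtilde{\ba}} \cdot \prod_{j \in T} \inangle{\m_j, \wtilde{\bb}}},
\]
I would factor this as a product over distinct indices in $S \cup T$ and integrate out the $\m_i$'s one at a time, holding $(\ba, \bb)$ fixed. Any index $i$ in the symmetric difference $S \triangle T$ contributes a single linear factor in $\m_i$, whose expectation over the mean-zero row $\m_i \sim \gamma_N$ vanishes; since $S \ne T$ forces $S \triangle T \ne \emptyset$, the whole product is zero pointwise in $(\ba,\bb)$, and hence also after averaging.

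For the case $S = T$, the same row-by-row argument applies, but now every index $i \in S$ contributes the factor $\Ex_{\m_i}\insquare{\inangle{\m_i, \wtilde{\ba}}\inangle{\m_i, \wtilde{\bb}}} = \inangle{\wtilde{\ba}, \wtilde{\bb}}$ (using $\Ex[\m_i \m_i^\top] = I_N$). This reduces the quantity of interest to
\[
\Ex_{M}\Ex_{\ba,\bb}\insquare{\bU_S \bV_S} ~=~ \Ex_{\ba,\bb}\insquare{\inangle{\wtilde{\ba}, \wtilde{\bb}}^{|S|}} ~=~ \Ex_{\ba, \bb}\insquare{\frac{\inangle{\ba, \bb}^{|S|}}{\norm{2}{\ba}^{|S|}\norm{2}{\bb}^{|S|}}},
\]
and the problem becomes showing that this quantity lies within $\delta$ of $\rho^{|S|}$.

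At this point I would invoke the meta-lemma \Cref{le:norm_Gauss} in its degenerate instantiation $\bu_1 = \cdots = \bu_{|S|} = \ba$ and $\bv_1 = \cdots = \bv_{|S|} = \bb$, where $(\ba, \bb) \sim \calG_\rho^{\otimes D}$. The coordinate-independence hypothesis of \Cref{le:norm_Gauss} is satisfied essentially for free: along each coordinate $j \in [D]$ the relevant tuple is just the $|S|$-fold repetition of the single pair $(a_j, b_j)$, and these pairs are i.i.d.\ across $j$ by construction. Applying the meta-lemma then gives
\[
\inabs{\Ex\insquare{\prod_{i=1}^{|S|}\frac{\inangle{\bu_i, \bv_i}}{\norm{2}{\bu_i}\norm{2}{\bv_i}}} ~-~ \prod_{i=1}^{|S|}\Ex\insquare{\frac{\inangle{\bu_i, \bv_i}}{D}}} ~\le~ \delta,
\]
and since $\Ex[\inangle{\ba, \bb}/D] = \rho$, the right-hand product collapses to $\rho^{|S|}$, giving the claim. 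The choice $D = d^{O(d)}/\delta^2$ is inherited verbatim from \Cref{le:norm_Gauss}.

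I do not expect any serious obstacle in this lemma on its own: the whole argument is a direct reduction to \Cref{le:norm_Gauss}, which already contains all the analytic work (moment estimates for $\chi$-norms in \Cref{le:corr_norms_rec_ex} and the combinatorial product-expectation swap in \Cref{le:swap_exp_prod}). The only subtle point worth verifying carefully is that the degenerate use of the meta-lemma, with all $\bu_i$'s (resp.\ $\bv_i$'s) coinciding, still satisfies its hypothesis; this reduces to the observation that the hypothesis constrains only the joint distribution of each coordinate, not the choice of individual vectors.
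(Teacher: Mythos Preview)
Your proposal is correct and follows essentially the same route as the paper: factor over the independent rows $\m_i$ to kill the $S\ne T$ case via a lone linear factor, reduce the $S=T$ case to $\Ex_{\ba,\bb}[\inangle{\wtilde\ba,\wtilde\bb}^{|S|}]$ using $\Ex[\m_i\m_i^\top]=I_D$, and then invoke \Cref{le:norm_Gauss} with the degenerate choice $\bu_1=\cdots=\bu_{|S|}=\ba$, $\bv_1=\cdots=\bv_{|S|}=\bb$. The only slip is notational: the rows $\m_i$ are $D$-dimensional (so $\m_i\sim\gamma_D$, not $\gamma_N$), but this does not affect the argument.
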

\begin{proof}
	We have that
	\begin{align}\label{eq:mean_manip}
		\Ex_{M} \Ex_{\ba, \bb} \bU_S \bV_T &~=~ \Ex_{M} \Ex_{\ba, \bb} \bigg[ \prod_{i \in S} U_i  \cdot \prod_{i \in T} V_i \bigg]\nonumber\\ 
		&~=~ \Ex_{\ba, \bb} \Ex_{M} \bigg[  \prod_{i \in S \cap T} U_i V_i \cdot \prod_{i \in S \setminus T} U_i \cdot \prod_{i \in T \setminus S} V_i  \bigg]\nonumber\\ 
		&~=~ \Ex_{\ba, \bb} \Ex_{M} \bigg[  \prod_{i \in S \cap T} \inangle{\m_i, \wtilde{\ba}} \inangle{\m_i , \wtilde{\bb}} \cdot \prod_{i \in S \setminus T} \inangle{\m_i, \wtilde{\ba}} \cdot \prod_{i \in T \setminus S} \inangle{\m_i , \wtilde{\bb}} \bigg]\nonumber\\ 
		&~=~ \Ex_{\ba, \bb} \bigg[\prod_{i \in S \cap T} \Ex_{\m_i} \inangle{\m_i, \wtilde{\ba}} \inangle{\m_i , \wtilde{\bb}} \cdot \prod_{i \in S \setminus T} \Ex_{\m_i} \inangle{\m_i, \wtilde{\ba}} \cdot \prod_{i \in T \setminus S} \Ex_{\m_i} \inangle{\m_i , \wtilde{\bb}} \bigg],
	\end{align}
	where the last equality follows from the independence of the $\m_i$'s.\\
	
	\noindent If $S \neq T$, then either $S \setminus T$ is non-empty in which case $\prod_{i \in S \setminus T} \Ex_{\m_i}[\inangle{\m_i, \wtilde{\ba}}] = 0$ or $T \setminus S$ is non-empty in which case $\prod_{i \in T \setminus S} \Ex_{\m_i}[\inangle{\m_i , \wtilde{\bb}}] = 0$.  This is because for any fixed vector $\ba$ and for each $i \in [N]$, the random variable $\inangle{\m_i, \wtilde{\ba}}$ has zero-mean (and similarly for $\inangle{\m_i, \wtilde{\bb}}$). The first part of the lemma now follows from \Cref{eq:mean_manip}.\\
	
	\noindent If $S = T$, \Cref{eq:mean_manip} becomes
	\begin{align*}
		\Ex_{M} \Ex_{\ba, \bb} \bU_S \bV_T
		&~=~ \Ex_{\ba, \bb} \bigg[ \prod_{i \in S} \Ex_{\m_i}\frac{\langle \m_i, \ba \rangle}{\|\ba\|_2} \frac{\langle \m_i , \bb \rangle}{\|\bb\|_2} \bigg]\\
		&~=~ \Ex_{\ba, \bb} \bigg[\prod_{i \in S} \frac{ \langle \ba, \bb \rangle}{\|\ba\|_2 \|\bb\|_2} \bigg] \qquad \insquare{\text{since $\Ex_{\m_i} \m_i \cdot \m_i^T = I_{D \times D}$.}}\\
		&~=~ \prod_{i \in S} \bigg[ \frac{ \Ex_{\ba, \bb}\langle \ba, \bb \rangle}{D} \bigg] \pm \delta\\
		&~=~ \rho^{|S|} \pm \delta,
	\end{align*}
	where the penultimate equality above follows from \Cref{le:norm_Gauss} for an explicit choice of $D$ that is upper bounded by $\frac{d^{O(d)}}{\delta^2}$.
\end{proof}

\begin{lem}[Covariance bounds for monomials]\label{lem:var_bound_monomial}
	Given parameters $d$ and $\delta$, there exists an explicitly computable $D := D(d, \delta)$ such that the following holds:
	For any subsets $S, T, S', T' \subseteq [N]$ satisfying $|S|, |T|, |S'|, |T'| \le d$, it holds that,
	\begin{align*}
		\text{if } \ S \triangle T \triangle S' \triangle T' \ne \emptyset ~:~
		& \ \inabs{\Ex\limits_{M} \Ex\limits_{\ba, \bb} \Ex\limits_{\ba', \bb'} \insquare{\bU_S \bV_T \bU'_{S'} \bV'_{T'}} - \inparen{\Ex\limits_{M} \Ex\limits_{\ba, \bb} \insquare{\bU_S \bV_T}} \cdot \inparen{\Ex\limits_{M} \Ex\limits_{\ba', \bb'} \insquare{\bU'_{S'} \bV'_{T'}}}} ~=~ 0\;,\\
		\text{if }\ S \triangle T \triangle S' \triangle T' = \emptyset ~:~
		& \ \inabs{\Ex\limits_{M} \Ex\limits_{\ba, \bb} \Ex\limits_{\ba', \bb'} \insquare{\bU_S \bV_T \bU'_{S'} \bV'_{T'}} - \inparen{\Ex\limits_{M} \Ex\limits_{\ba, \bb} \insquare{\bU_S \bV_T}} \cdot \inparen{\Ex\limits_{M} \Ex\limits_{\ba', \bb'} \insquare{\bU'_{S'} \bV'_{T'}}}} ~\le~ \delta.
	\end{align*}
	Here, $S \triangle T \triangle S' \triangle T'$ is the symmetric difference of the sets $S, T, S', T'$, equivalently, the set of all $i \in [N]$ which appear an odd number of times in the multiset $S \sqcup T \sqcup S' \sqcup T'$.\\
	In particular, one may take $D = \frac{d^{O(d)}}{\delta^2}$.
\end{lem}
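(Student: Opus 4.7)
The plan is to mirror the proof of \Cref{lem:mean_bound_monomial} but on the larger product that couples the two independent Gaussian samples $(\ba,\bb)$ and $(\ba',\bb')$. As a first move, push $\Ex_M$ inside using independence of the rows $\m_1,\dots,\m_N$ of $M$. Writing $s_i = \indicator[i\in S]$, $t_i = \indicator[i\in T]$, $s'_i = \indicator[i\in S']$, $t'_i = \indicator[i\in T']$ and $m(i) := s_i+t_i+s'_i+t'_i$, Isserlis' theorem applied to $\Ex_{\m_i}[\langle \m_i,\wtilde\ba\rangle^{s_i}\langle \m_i,\wtilde\bb\rangle^{t_i}\langle \m_i,\wtilde\ba'\rangle^{s'_i}\langle \m_i,\wtilde\bb'\rangle^{t'_i}]$ gives $0$ if $m(i)$ is odd, and otherwise a sum over perfect matchings of the ``active slots'' into inner products $\langle w_a, w_b\rangle$ with $w_a, w_b \in \{\wtilde\ba,\wtilde\bb,\wtilde\ba',\wtilde\bb'\}$.

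In the first case of the lemma, $S\triangle T\triangle S'\triangle T'$ is nonempty, so some row has $m(i)$ odd, immediately forcing $\Ex_M[\bU_S\bV_T\bU'_{S'}\bV'_{T'}]=0$. At the same time, a nonempty symmetric difference forces $S\ne T$ or $S'\ne T'$, so by \Cref{lem:mean_bound_monomial} at least one of $\Ex_M\Ex_{\ba,\bb}[\bU_S\bV_T]$ or $\Ex_M\Ex_{\ba',\bb'}[\bU'_{S'}\bV'_{T'}]$ is exactly $0$; hence the covariance is exactly $0$. For the second case, every $m(i) \in \{0,2,4\}$, and expanding the per-row Isserlis formula yields
\[
\Ex_M[\bU_S\bV_T\bU'_{S'}\bV'_{T'}] \;=\; \sum_\pi \prod_\alpha \langle w_{i_\alpha^\pi}, w_{j_\alpha^\pi}\rangle,
\]
where $\pi$ ranges over products of per-row matchings (rows with $m(i)=2$ are forced; rows with $m(i)=4$ each offer three choices) and each term is a product of at most $(|S|+|T|+|S'|+|T'|)/2 \le 2d$ inner products of unit-normalized vectors. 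For every fixed $\pi$ we then invoke the meta-lemma \Cref{le:norm_Gauss}, setting $\bu_\alpha := w_{i_\alpha^\pi}$ and $\bv_\alpha := w_{j_\alpha^\pi}$ (the required cross-coordinate independence holds because the two samples are drawn i.i.d.\ from $\calG_\rho$), so that the expectation over $(\ba,\bb,\ba',\bb')$ of the product equals the product of per-pair expectations $\Ex[\langle w,w'\rangle]/D$ up to a meta-lemma error $\delta_{\mathrm{meta}}$.

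The crucial arithmetic is that this per-pair value equals $\rho$ for the within-sample pairs $\{\wtilde\ba,\wtilde\bb\}$ and $\{\wtilde\ba',\wtilde\bb'\}$ and equals $0$ for every cross-sample pair, since $(\ba,\bb)$ is independent of $(\ba',\bb')$. A matching $\pi$ therefore contributes a pure product of $\rho$'s when all its pairs are within-sample, and vanishes up to $\delta_{\mathrm{meta}}$ otherwise. A short case analysis shows that an all-within-sample $\pi$ exists iff $S=T$ and $S'=T'$: for any $i\in S\triangle T$, the even-parity constraint on $m(i)$ also places $i \in S'\triangle T'$, forcing a cross-pair at row $i$. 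When such $\pi$ exists it is unique and contributes exactly $\langle \wtilde\ba,\wtilde\bb\rangle^{|S|}\langle \wtilde\ba',\wtilde\bb'\rangle^{|S'|}$, whose expectation (again via \Cref{le:norm_Gauss}) is $\rho^{|S|+|S'|}$, reproducing $\Ex_M\Ex_{\ba,\bb}[\bU_S\bV_T] \cdot \Ex_M\Ex_{\ba',\bb'}[\bU'_{S'}\bV'_{T'}]$ as computed in \Cref{lem:mean_bound_monomial}. Summing over the at most $3^{|S\cap T\cap S'\cap T'|}\le 3^d$ pairings and choosing $\delta_{\mathrm{meta}} = \Theta(\delta/3^d)$ in the meta-lemma (applied with $2d$ factors) gives a cumulative error of $\delta$ and yields the claimed $D = d^{O(d)}/\delta^2$. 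The main obstacle is precisely this bookkeeping at $m(i)=4$ rows: verifying that the unique ``within-sample'' choice at each such row is exactly what reassembles into $(\Ex[X])(\Ex[Y])$, while every other choice is negligible under the meta-lemma.
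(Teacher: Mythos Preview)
Your argument is correct, and it takes a genuinely different route from the paper's. The paper does not expand the per-row Isserlis sum into matchings and apply \Cref{le:norm_Gauss} term-by-term. Instead, it defines $g_i(\ba,\bb,\ba',\bb') := \Ex_{\m_i}[U_i^{s_i}V_i^{t_i}{U'_i}^{s'_i}{V'_i}^{t'_i}]$ and $h_i(\ba,\bb) := \Ex_{\m_i}[U_i^{s_i}V_i^{t_i}]$, $h'_i(\ba',\bb') := \Ex_{\m_i}[{U'_i}^{s'_i}{V'_i}^{t'_i}]$, and writes the covariance as $\Ex_{\ba,\bb,\ba',\bb'}[\prod_i g_i - \prod_i h_i h'_i]$. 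It then telescopes this difference into $O(d)$ terms of the form $(\text{prefix of }h_ih'_i)\cdot(g_j - h_jh'_j)\cdot(\text{suffix of }g_i)$, bounds each by Cauchy--Schwarz as $\sqrt{\tau\kappa}$, and invokes a separate auxiliary lemma (\Cref{le:square}) to show $\kappa := \max_j \Ex[(g_j - h_jh'_j)^2] = O(1/D)$, while using \Cref{le:norm_Gauss} only to bound $\tau := \max_j \Ex[(\text{prefix}\cdot\text{suffix})^2] \le 2^{O(d)}$. Your approach is arguably cleaner: by expanding Isserlis fully you bypass both the telescoping hybrid and \Cref{le:square}, and the identification of the unique ``all within-sample'' matching with the target $\rho^{|S|+|S'|}$ is transparent. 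The paper's approach, on the other hand, localizes the entire discrepancy to a single per-row quantity $g_j - h_jh'_j$, which is closer in spirit to a Lindeberg-style replacement and might generalize more readily if one wanted to handle non-multilinear monomials (where the per-row Isserlis expansion blows up). Both routes land on the same $D = d^{O(d)}/\delta^2$.
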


\noindent In order to prove \Cref{lem:var_bound_monomial}, we need the following lemma.

\begin{lem}\label{le:square}
	Let $\m$ be distributed as $\calN(0,1)^{\otimes D}$. Then,
	\begin{equation*}
		\Ex_{\ba, \bb, \ba', \bb'}\bigg[ \bigg( \Ex_{\m}[ \inangle{ \m, \wtilde{\ba} } \inangle{ \m, \wtilde{\bb} } \inangle{ \m, \wtilde{\ba}' } \inangle{ \m, \wtilde{\bb}' }] - \Ex_{\m}[\inangle{ \m, \wtilde{\ba} } \inangle{ m , \wtilde{\bb} } ] \cdot \Ex_{\m}[\inangle{ m, \wtilde{\ba}' } \inangle{ \m, \wtilde{\bb}' }] \bigg)^2 \bigg] \le O\bigg(\frac{1}{D^2}\bigg)
	\end{equation*}
	and
	\begin{equation*}
		\Ex_{\ba, \ba'}\bigg[ \bigg( \Ex_{\m}[\inangle{ \m, \wtilde{\ba} } \inangle{ \m , \wtilde{\ba}' }] - \Ex_{\m}[\inangle{ \m, \wtilde{\ba} }] \cdot \Ex_{\m}[\inangle{ \m, \wtilde{\ba}' }]\bigg)^2
		\bigg] \le O\bigg(\frac{1}{D}\bigg).
	\end{equation*}
\end{lem}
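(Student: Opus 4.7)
The plan is to evaluate the inner $\m$-expectations exactly using Isserlis' (Wick's) theorem, reducing both identities to bounds on low-order moments of inner products of normalized independent Gaussian vectors, which can then be handled by spherical symmetry.

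First, since $\m \sim \calN(0,1)^{\otimes D}$, for any unit vectors $\bu_1, \bu_2, \bu_3, \bu_4 \in \bbR^D$, Isserlis' theorem gives
\[
\Ex_{\m}\insquare{\prod_{i=1}^4 \inangle{\m, \bu_i}} ~=~ \inangle{\bu_1,\bu_2}\inangle{\bu_3,\bu_4} + \inangle{\bu_1,\bu_3}\inangle{\bu_2,\bu_4} + \inangle{\bu_1,\bu_4}\inangle{\bu_2,\bu_3},
\]
and $\Ex_\m[\inangle{\m,\bu}\inangle{\m,\bv}] = \inangle{\bu,\bv}$ and $\Ex_\m[\inangle{\m,\bu}]=0$. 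Applying this with $(\bu_1,\bu_2,\bu_3,\bu_4) = (\wtilde{\ba},\wtilde{\bb},\wtilde{\ba}',\wtilde{\bb}')$, the expression inside the first square simplifies to
\[
\inangle{\wtilde{\ba},\wtilde{\ba}'}\inangle{\wtilde{\bb},\wtilde{\bb}'} ~+~ \inangle{\wtilde{\ba},\wtilde{\bb}'}\inangle{\wtilde{\bb},\wtilde{\ba}'},
\]
and the expression inside the second square simplifies to $\inangle{\wtilde{\ba},\wtilde{\ba}'}$. Thus the lemma reduces to showing
\[
\Ex\Bigl[\bigl(\inangle{\wtilde{\ba},\wtilde{\ba}'}\inangle{\wtilde{\bb},\wtilde{\bb}'} + \inangle{\wtilde{\ba},\wtilde{\bb}'}\inangle{\wtilde{\bb},\wtilde{\ba}'}\bigr)^2\Bigr] \le O(1/D^2) \qquad \sAND \qquad \Ex\bigl[\inangle{\wtilde{\ba},\wtilde{\ba}'}^2\bigr] \le O(1/D).
\]

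The key observation is that for each of $\ba,\bb,\ba',\bb'\sim\gamma_D$, the normalized vector is \emph{marginally} uniform on the unit sphere $S^{D-1}$ by spherical symmetry of $\gamma_D$ (even though the pair $(\wtilde{\ba},\wtilde{\bb})$ and similarly $(\wtilde{\ba}',\wtilde{\bb}')$ are correlated). Consequently, for any fixed unit vector $\bu$, if $\bg\sim\gamma_D$, then $\inangle{\bu,\wtilde{\bg}}$ has the distribution of a single coordinate of a uniform point on $S^{D-1}$, which gives the standard moments
\[
\Ex\bigl[\inangle{\bu,\wtilde{\bg}}^2\bigr] = \frac{1}{D}, \qquad \Ex\bigl[\inangle{\bu,\wtilde{\bg}}^4\bigr] = \frac{3}{D(D+2)} \le \frac{3}{D^2}.
\]
This immediately yields the second claim, since $\ba$ and $\ba'$ are independent, so conditioning on $\ba$ gives $\Ex_{\ba'}[\inangle{\wtilde{\ba},\wtilde{\ba}'}^2\mid\ba] = 1/D$.

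For the first claim, expand the square and use $2|xy|\le x^2+y^2$ on the cross term to reduce to bounding $\Ex[\inangle{\wtilde{\ba},\wtilde{\ba}'}^2\inangle{\wtilde{\bb},\wtilde{\bb}'}^2]$ and $\Ex[\inangle{\wtilde{\ba},\wtilde{\bb}'}^2\inangle{\wtilde{\bb},\wtilde{\ba}'}^2]$. Conditioning on $(\ba,\bb)$ (which is independent of $(\ba',\bb')$) and applying Cauchy--Schwarz in $(\ba',\bb')$,
\[
\Ex_{\ba',\bb'}\bigl[\inangle{\wtilde{\ba},\wtilde{\ba}'}^2\inangle{\wtilde{\bb},\wtilde{\bb}'}^2\bigr] ~\le~ \sqrt{\Ex_{\ba'}\bigl[\inangle{\wtilde{\ba},\wtilde{\ba}'}^4\bigr]\cdot \Ex_{\bb'}\bigl[\inangle{\wtilde{\bb},\wtilde{\bb}'}^4\bigr]} ~\le~ \frac{3}{D^2},
\]
where the last step uses only the \emph{marginal} uniformity of $\wtilde{\ba}'$ and $\wtilde{\bb}'$ on the sphere, thereby sidestepping the correlation between $\ba'$ and $\bb'$. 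The analogous bound applies to the other term, completing the proof. The only subtle point is precisely this decoupling via Cauchy--Schwarz, which I expect to be the ``hard'' part to present cleanly; everything else is a straightforward application of Wick's formula and the moments of the uniform distribution on $S^{D-1}$.
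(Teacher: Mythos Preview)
Your proof is correct and follows the same route as the paper: both compute the inner $\m$-expectations via Isserlis' theorem to obtain $\inangle{\wtilde{\ba},\wtilde{\ba}'}\inangle{\wtilde{\bb},\wtilde{\bb}'}+\inangle{\wtilde{\ba},\wtilde{\bb}'}\inangle{\wtilde{\ba}',\wtilde{\bb}}$ (respectively $\inangle{\wtilde{\ba},\wtilde{\ba}'}$), then expand the square and bound the two resulting cross-moment terms separately. The only divergence is in how those cross-moments are bounded: the paper invokes its meta-lemma (Lemma~\ref{le:norm_Gauss}) with $d=4$ and $d=2$, whereas you argue directly via Cauchy--Schwarz over $(\ba',\bb')$ together with the exact moments $\Ex[\inangle{\bu,\wtilde{\bg}}^2]=1/D$ and $\Ex[\inangle{\bu,\wtilde{\bg}}^4]=3/(D(D+2))$ of a coordinate of the uniform distribution on $S^{D-1}$. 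Your endgame is more elementary and self-contained (it does not rely on Lemma~\ref{le:norm_Gauss} at all) and gives explicit constants; the paper's route has the virtue of reusing its general machinery rather than introducing a separate computation.
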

\begin{proof}
	To prove the first part of the lemma, consider the quantity
	\begin{align*}
		T(\ba, \bb, \ba', \bb') &~:=~  \Ex_{\m}[ \inangle{ \m, \wtilde{\ba} } \inangle{ \m, \wtilde{\bb} } \inangle{ \m, \wtilde{\ba}' } \inangle{ \m, \wtilde{\bb}'}] - \Ex_{\m}[\inangle{ \m, \wtilde{\ba} } \inangle{ m , \wtilde{\bb} } ] \cdot \Ex_{\m}[\inangle{ m, \wtilde{\ba}' } \inangle{ \m, \wtilde{\bb}' }]\\ 
		&~=~ \inangle{ \wtilde{\ba} , \wtilde{\bb} } \inangle{ \wtilde{\ba}', \wtilde{\bb}' } + \inangle{ \wtilde{\ba} , \wtilde{\ba}' } \inangle{ \wtilde{\bb}, \wtilde{\bb}' } + \inangle{ \wtilde{\ba} , \wtilde{\bb}' } \inangle{ \wtilde{\ba}', \wtilde{\bb} } - \inangle{ \wtilde{\ba} , \wtilde{\bb} } \inangle{ \wtilde{\ba}' , \wtilde{\bb}' }\\ 
		&~=~ \inangle{ \wtilde{\ba} , \wtilde{\ba}' } \inangle{ \wtilde{\bb}, \wtilde{\bb}' } + \inangle{ \wtilde{\ba} , \wtilde{\bb}' } \inangle{ \wtilde{\ba}', \wtilde{\bb} }.
	\end{align*}
	where we use that for any $i \in [D]$, it holds that $\Ex_{\m} [m_i^4] = 3$ and $\Ex_{\m} [m_i^2] = 1$.
	Thus,
	\begin{align*}
		\Ex_{\ba, \bb, \ba', \bb'} \insquare{T(\ba, \bb, \ba', \bb')^2} &~=~\Ex_{\ba, \bb, \ba', \bb'} \insquare{ \insquare{\inangle{ \wtilde{\ba} , \wtilde{\ba}' } \inangle{ \wtilde{\bb}, \wtilde{\bb}' } + \inangle{ \wtilde{\ba} , \wtilde{\bb}' } \inangle{ \wtilde{\ba}', \wtilde{\bb} } }^2 }\\ 
		&~\le~ 2 \cdot \Ex_{\ba, \bb, \ba', \bb'} \insquare{ \inangle{ \wtilde{\ba} , \wtilde{\ba}' }^2 \inangle{ \wtilde{\bb}, \wtilde{\bb}' }^2} ~+~ 2 \cdot \Ex_{\ba, \bb, \ba', \bb'} \insquare{ \inangle{ \wtilde{\ba} , \wtilde{\bb}' }^2 \inangle{ \wtilde{\ba}', \wtilde{\bb} }^2 }\\ 
		&~\le~ O \inparen{\frac{1}{D^2}},
	\end{align*}
	where the last step follows by two applications of \Cref{le:norm_Gauss} (with $d = 4$). This completes the proof of the first part of the lemma. The second part of the lemma similarly follows from \Cref{le:norm_Gauss} (with $d = 2$) along with the fact that $\Ex_{\m}[\inangle{ \m, \wtilde{\ba} }] = 0 $.
\end{proof}

\begin{proof}[Proof of \Cref{lem:var_bound_monomial}]	
	Let $\mathbb{1}(E)$ denote the $0/1$ indicator function of an event $E$. We have that
	\begin{align}\label{eq:joint}
		\Ex\limits_{M} \Ex\limits_{\ba, \bb} \Ex\limits_{\ba', \bb'} \insquare{\bU_S \bV_T \bU'_{S'} \bV'_{T'}} &= \Ex\limits_{M} \Ex\limits_{\ba, \bb} \Ex\limits_{\ba', \bb'} \insquare{ \prod_{i \in S \cup T \cup S' \cup T'} U_i^{\mathbb{1}(i \in S)} V_i^{\mathbb{1}(i \in T)} {U'}_i^{\mathbb{1}(i \in S')} {V'}_i^{\mathbb{1}(i \in T')}} \nonumber \\ 
		&= \Ex\limits_{\ba, \bb} \Ex\limits_{\ba', \bb'} \insquare{ \prod_{i \in S \cup T \cup S' \cup T'} \Ex\limits_{\m_i} \insquare{ U_i^{\mathbb{1}(i \in S)} V_i^{\mathbb{1}(i \in T)} {U'}_i^{\mathbb{1}(i \in S')} {V'}_i^{\mathbb{1}(i \in T')}}}.
	\end{align}
	On the other hand, we have that
	\begin{align}\label{eq:first_ind}
		\Ex\limits_{M} \Ex\limits_{\ba, \bb} \insquare{\bU_S \bV_T} &= \Ex\limits_{\ba, \bb} \Ex\limits_{M} \insquare{ \prod_{i \in S \cup T} U_i^{\mathbb{1}(i \in S)} V_i^{\mathbb{1}(i \in T)}} = \Ex\limits_{\ba, \bb} \insquare{ \prod_{i \in S \cup T} \Ex_{\m_i} \insquare{U_i^{\mathbb{1}(i \in S)} V_i^{\mathbb{1}(i \in T)}}},
	\end{align}
	and similarly
	\begin{align}\label{eq:sec_ind}
		\Ex\limits_{M} \Ex\limits_{\ba', \bb'} \insquare{\bU'_{S'} \bV'_{T'}} &=  \Ex\limits_{\ba', \bb'} \insquare{ \prod_{i \in S' \cup T'} \Ex_{\m_i} \insquare{U_i^{\mathbb{1}(i \in S')} V_i^{\mathbb{1}(i \in T')}}}.
	\end{align}
	If there exists $i \in S \cup T \cup S' \cup T'$ that appears in an odd number of $S$, $T$, $S'$ and $T'$, then it can be seen that the expectation in \Cref{eq:joint} is equal to $0$, and that at least one of the expectations in \Cref{eq:first_ind,eq:sec_ind} is equal to $0$. This already handles the case that $S \triangle T \triangle S' \triangle T' \ne \emptyset$.
	
	Henceforth, we assume that each $i \in S \cup T \cup S' \cup T'$ appears in an even number of $S$, $T$, $S'$ and $T'$.
	Assume for ease of notation that $S \cup T \cup S' \cup T' \subseteq [4d]$. Define
	\begin{align}
	g_i(\ba, \bb, \ba', \bb') &~:=~ \Ex_{\m_i} \insquare{U_i^{\mathbb{1}(i \in S)} V_i^{\mathbb{1}(i \in T)} {U'}_i^{\mathbb{1}(i \in S')} {V'}_i^{\mathbb{1}(i \in T')}} \label{eq:g_function}\\
	h_i(\ba, \bb) &~:=~ \Ex_{\m_i} \insquare{U_i^{\mathbb{1}(i \in S)} V_i^{\mathbb{1}(i \in T)}}.\label{eq:h_function}\\
	h_i'(\ba', \bb') &~:=~ \Ex_{\m_i} \insquare{U_i'^{\mathbb{1}(i \in S')} V_i'^{\mathbb{1}(i \in T')}}.\label{eq:h'_function}
	\end{align}
	Combining \Cref{eq:joint,eq:first_ind,eq:sec_ind} along with the definitions in \Cref{eq:g_function,eq:h_function,eq:h'_function}, we get
	\begin{align*}
		& \bigg| \Ex\limits_{M} \Ex\limits_{\ba, \bb} \Ex\limits_{\ba', \bb'} \insquare{\bU_S \bV_T \bU'_{S'} \bV'_{T'}} - \Ex\limits_{M} \Ex\limits_{\ba, \bb} \insquare{\bU_S \bV_T} \cdot \Ex\limits_{M} \Ex\limits_{\ba', \bb'} \insquare{\bU'_{S'} \bV'_{T'}} \bigg|\\ 
		&~=~ \bigg|\Ex\limits_{\ba, \bb} \Ex\limits_{\ba', \bb'} \insquare{ \prod_{i = 1}^{4 d} g_i(\ba, \bb, \ba', \bb') - \prod_{i = 1}^{4 d} h_i(\ba, \bb) \cdot h_i'(\ba', \bb')}\bigg|\\ 
		&~=~ \bigg|\Ex\limits_{\ba, \bb} \Ex\limits_{\ba', \bb'} \insquare{ \sum_{j=1}^{4d} \insquare{ \prod_{i = 1}^{j-1} h_i(\ba, \bb) \cdot h_i'(\ba', \bb') \prod_{i = j}^{4d} g_i(\ba, \bb, \ba', \bb') - \prod_{i = 1}^{j} h_i(\ba, \bb) \cdot h_i'(\ba', \bb') \prod_{i = j+1}^{4d} g_i(\ba, \bb, \ba', \bb')}}\bigg|\\ 
		&~\le~ \sum_{j=1}^{4d} \ \bigg|\Ex\limits_{\ba, \bb} \Ex\limits_{\ba', \bb'} \insquare{ \prod_{i = 1}^{j-1} h_i(\ba, \bb) \cdot h_i'(\ba', \bb') \prod_{i = j+1}^{4d} g_i(\ba, \bb, \ba', \bb') \cdot \insquare{g_j(\ba, \bb, \ba', \bb') - h_j(\ba, \bb) \cdot h_j'(\ba', \bb')}}\bigg|\\ 
		&~\le~ 4 \cdot d \cdot \sqrt{\tau \cdot \kappa},
	\end{align*}
	where the last inequality follows from the Cauchy-Schwarz inequality with
	\begin{align*}
		\tau &~:=~ \max_{j \in [4d]} ~\Ex\limits_{\ba, \bb} ~\Ex\limits_{\ba', \bb'} ~\insquare{
			\prod_{i = 1}^{j-1} h_i(\ba, \bb)^2 \cdot h_i(\ba', \bb')^2 \prod_{i = j+1}^{4d} g_i(\ba, \bb, \ba', \bb')^2 }\\
		\kappa &~:=~ \max_{j \in [4d]} ~\Ex\limits_{\ba, \bb} ~\Ex\limits_{\ba', \bb'} ~\insquare{g_j(\ba, \bb, \ba', \bb') - h_j(\ba, \bb) \cdot h_j(\ba', \bb')}^2
	\end{align*}
	\Cref{le:square} implies that $\kappa \le O(1/D)$. We now show that $\tau \le 2^{O(d)}$. Note that for any $i \in [D]$, it holds that,
	\[
		h_i(\ba, \bb) ~=~ \infork{\inangle{\wtilde{\ba}, \wtilde{\bb}} & \sIF i \in S \sAND i \in T\\ 1 & \sIF i \notin S \sAND i \notin T \\ 0 & \text{otherwise}}
		\qquad \sAND \qquad
		h_i'(\ba', \bb') ~=~ \infork{\inangle{\wtilde{\ba}', \wtilde{\bb}'} & \sIF i \in S' \sAND i \in T'\\ 1 & \sIF i \notin S' \sAND i \notin T' \\ 0 & \text{otherwise}}
	\]
	\[
	g_i(\ba, \bb, \ba', \bb') ~=~ \infork{
		\inangle{\wtilde{\ba}, \wtilde{\bb}}\inangle{\wtilde{\ba}', \wtilde{\bb}'} + \inangle{\wtilde{\ba}, \wtilde{\ba}'}\inangle{\wtilde{\bb}, \wtilde{\bb}'} + \inangle{\wtilde{\ba}, \wtilde{\bb}'}\inangle{\wtilde{\ba}', \wtilde{\bb}} & \sIF i \in S \cap T \cap S' \cap T' \\
		\inangle{\wtilde{\ba}, \wtilde{\bb}} & \sIF i \in S \cap T \sAND i \notin S' \cup T' \\
		\inangle{\wtilde{\ba}, \wtilde{\ba}'} & \sIF i \in S \cap S' \sAND i \notin T \cup T' \\
		\inangle{\wtilde{\ba}, \wtilde{\bb}'} & \sIF i \in S \cap T' \sAND i \notin S' \cup T \\
		\inangle{\wtilde{\ba}', \wtilde{\bb}} & \sIF i \in S' \cap T \sAND i \notin S \cup T' \\
		\inangle{\wtilde{\ba}', \wtilde{\bb}'} & \sIF i \in S' \cap T' \sAND i \notin S \cup T \\
		\inangle{\wtilde{\bb}, \wtilde{\bb}'} & \sIF i \in T \cap T' \sAND i \notin S \cup S' \\
		1 & \text{ otherwise}
		}
	\]
	Thus, if we expand out a single term $\prod_{i = 1}^{j-1} h_i(\ba, \bb)^2 \cdot h_i(\ba', \bb')^2 \prod_{i = j+1}^{4d} g_i(\ba, \bb, \ba', \bb')^2$, we get at most $3^{8d}$ terms (since each $g_i$ can multiply the number of terms by at most $3$). Each of these terms is the expectation of the product of inner product of some correlated Gaussian vectors. We thus have from \Cref{le:norm_Gauss} that each such term is at most $1 + \delta$. Thus, we have that $\tau \le 2^{O(d)}$. For an explicit choice of $D$ that is upper bounded by $d^{O(d)}/\delta^2$, we get that $4 d \sqrt{\tau \kappa} \le \delta$, which concludes the proof of the lemma.
\end{proof}

\subsection{Mean \& Variance Bounds for Multilinear Polynomials}\label{subec:bds_for_gen_multilinear_polys}

We are now ready to prove \Cref{lem:mean_var_bound}. Recall that,
\[ F(M) = \Ex\limits_{\ba, \bb} \insquare{A(\bU) \cdot B(\bV)} \qquad \text{where, }\ \bU = \frac{M\ba}{\norm{2}{\ba}} \ \sAND \ \bV = \frac{M\bb}{\norm{2}{\bb}}.\]
We wish to bound the mean and variance of $F(M)$. These proofs work by considering the Hermite expansions of $A$ and $B$ given by,
\[A(\bX) = \sum_{S \subseteq [N]} \what{A}_S \bX_S \qquad \sAND \qquad B(\bX) = \sum_{T \subseteq [N]} \what{B}_T \bY_T.\]
The basic definitions and facts related to Hermite polynomials were given in \Cref{sec:prelim}.

\begin{proofof}{\Cref{lem:mean_var_bound}}
	We start out by proving the bound on $\inabs{\Ex_{M} F(M) - \inangle{A, B}_{\calG_{\rho}^{\otimes N}}}$. To this end, we will use \Cref{lem:mean_bound_monomial} with parameters $d$ and $\delta$. Thus, for a choice of $D = d^{O(d)}/\delta^2$, we have that,
	\begin{align*}
		& \inabs{\Ex_{M} F(M) - \inangle{A, B}_{\calG_{\rho}^{\otimes N}}}\\
		&~=~ \inabs{\Ex\limits_M \ \Ex\limits_{\ba, \bb} \insquare{A(\bU) \cdot B(\bV)} - \Ex\limits_{\bX, \bY \sim \calG_{\rho}^{\otimes N}} \insquare{A(\bX) \cdot B(\bY)}}\\
		&~=~ \inabs{\sum\limits_{S, T \subseteq [N]} \what{A}_S \what{B}_T \cdot \inparen{\Ex\limits_M \ \Ex\limits_{\ba, \bb} \insquare{\bU_S \cdot \bV_T} - \Ex_{\bX, \bY \sim \calG_{\rho}^{\otimes N}} \insquare{\bX_S \cdot \bY_T}}}\\
		&~=~ \inabs{\sum\limits_{S \subseteq [N]} \what{A}_S \what{B}_S \cdot \inparen{\Ex\limits_M \ \Ex\limits_{\ba, \bb} \insquare{\bU_S \cdot \bV_S} - \rho^{|S|}}} \qquad \ldots \ \text{(terms corresponding to $S \ne T$ are $0$.)}\\
		&~\le~ \sum\limits_{S \subseteq [N]} \inabs{\what{A}_S \what{B}_S} \cdot \delta \qquad \qquad \ldots \ldots \ \text{(using \Cref{lem:mean_bound_monomial})}\\
		&~\le~ \norm{2}{A} \cdot \norm{2}{B} \cdot \delta \qquad \qquad \ldots \ldots \ \text{(Cauchy-Schwarz inequality)}\\
		&~\le~ \delta \qquad \qquad \qquad \qquad \qquad \ldots \ldots \ \text{($\norm{2}{A}, \norm{2}{B} \le 1$)}\qedhere
	\end{align*}
	We now move to proving the bound on $\Var_M(F(M))$. To this end, we will use \Cref{lem:var_bound_monomial} with parameters $d$ and $\delta/9^d$. Thus, for a choice of $D = d^{O(d)}/\delta^2$, we have that,
	\begin{align*}
		& \Ex_M \inparen{\Ex_{\ba, \bb} A(\bU) \cdot B(\bV)}^2 - \inparen{\Ex_M \Ex_{\ba, \bb} A(\bU) \cdot B(\bV)}^2\\
		& ~=~\inabs{\Ex\limits_{M} \Ex\limits_{\ba, \bb} \Ex\limits_{\ba', \bb'} \insquare{A(\bU) B(\bV) A(\bU') B(\bV')} - \inparen{\Ex\limits_{M} \Ex\limits_{\ba, \bb} \insquare{A(\bU) B(\bV)}} \cdot \inparen{\Ex\limits_{M} \Ex\limits_{\ba', \bb'} \insquare{A(\bU') B(\bV')}}}\\
		& ~\le~ \sum_{\substack{S, T \subseteq [N] \\ S', T' \subseteq [N]}} \inabs{\what{A}_S \what{B}_T \what{A}_{S'} \what{B}_{T'}} \cdot \inabs{\Ex\limits_{M} \Ex\limits_{\ba, \bb} \Ex\limits_{\ba', \bb'} \insquare{\bU_S \bV_T \bU'_{S'} \bV'_{T'}} - \inparen{\Ex\limits_{M} \Ex\limits_{\ba, \bb} \insquare{\bU_S \bV_T}} \cdot \inparen{\Ex\limits_{M} \Ex\limits_{\ba', \bb'} \insquare{\bU'_{S'} \bV'_{T'}}}}\\
		& ~\le~ \frac{\delta}{9^d} \cdot \sum_{\substack{S, T, S', T' \subseteq [N] \\ S \triangle T \triangle S' \triangle T' = \emptyset}} \inabs{\what{A}_S \what{B}_T \what{A}_{S'} \what{B}_{T'}}\;.
	\end{align*}
	To finish the proof, we will show that,
	\[\sum_{\substack{S, T, S', T' \subseteq [N] \\ S \triangle T \triangle S' \triangle T' = \emptyset}} \inabs{\what{A}_S \what{B}_T \what{A}_{S'} \what{B}_{T'}} ~\le~ 9^d \cdot \norm{2}{A}^2 \cdot \norm{2}{B}^2\;.\]
	Define functions $f : \sbit^N \to \bbR$, $g : \sbit^N \to \bbR$ over the boolean hypercube as,
	\[f(x) = \sum_{\substack{S \subseteq [N] \\ |S| \le d}} \what{A}_S \calX_S(x) \quad \sAND \quad g(x) = \sum_{\substack{S \subseteq [N] \\ |S| \le d}} \what{B}_S \calX_S(x)\;.\]
	Hypercontractivity bounds \cite{wolff2007hypercontractivity} for degree-$d$ polynomials over the boolean hypercube imply that,
	\[\Ex_x \insquare{f(x)^4} \le 9^d \inparen{\Ex_x \insquare{f(x)^2}}^2 \quad \sAND \quad \Ex_x \insquare{g(x)^4} \le 9^d \inparen{\Ex_x \insquare{g(x)^2}}^2\;.\]
	We now finish the proof as follows,
	\begin{align*}
		\sum_{\substack{S, T, S', T' \subseteq [N] \\ S \triangle T \triangle S' \triangle T' = \emptyset}} \inabs{\what{A}_S \what{B}_T \what{A}_{S'} \what{B}_{T'}}
		&~=~ \Ex_{x} \insquare{f(x)^2 g(x)^2}\\
		&~\le~ \inparen{\Ex_x \insquare{f(x)^4}}^{1/2} \cdot \inparen{\Ex_x \insquare{g(x)^4}}^{1/2}\\
		&~\le~ 9^d \cdot \inparen{\Ex_x \insquare{f(x)^2}} \cdot \inparen{\Ex_x \insquare{g(x)^2}}\\
		&~=~ 9^d \cdot \inparen{\sum_S \what{A}_S^2} \cdot \inparen{\sum_S \what{B}_S^2}\\
		&~=~ 9^d \cdot \norm{2}{A}^2 \cdot \norm{2}{B}^2.
	\end{align*}
	Thus, overall we get that, $\Var_M(F(M)) \le \delta$.\\
	
	\noindent This completes the proof of \Cref{lem:mean_var_bound} for an explicit choice of $D$ that is upper bounded by $d^{O(d)}/\delta^2$.
\end{proofof}

\section{Proof of Main Smoothing Lemma} \label{apx:smoothing}

\noindent In order to prove \Cref{lem:smoothing_main}, we consider the definition of {\em low-degree truncation}.

\begin{defn}[Low-degree truncation]\label{def:low_deg_truncation}
	We define this for functions in $L^2(\calZ^n, \mu^{\otimes n})$ and also for those in $L^2(\bbR^n, \gamma_n)$.\\
	
	\noindent {\bf Discrete Hypercube:} Suppose $A \in L^2(\calZ^n, \mu^{\otimes n})$ is given by the Fourier expansion $A(\bx) = \sum\limits_{\bsigma \in \bbZ_q^n} \what{A}_{\bsigma} \calX_{\bsigma}(\bx)$. The {\em degree-$d$ truncation} of $A$ is defined as the function $A^{\le d} \in L^2(\calZ^n, \mu^{\otimes n})$ given by
	\[A^{\le d}(\bx) ~:=~ \sum_{\substack{\bsigma \in \bbZ_q^n \\ |\bsigma| \le d}} \what{A}_{\bsigma} \calX_{\bsigma}(\bx).\]
	That is, $A^{\le d}$ is obtained by retaining only the terms with degree at most $d$ in the Fourier expansion of $A$, where recall that for $\bsigma \in \bbZ_q^n$, its degree is defined as $|\bsigma| = \setdef{i \in [n]}{\bsigma_i \ne 0}$.\\
	
	\noindent {\bf Gaussian:} Suppose $A \in L^2(\bbR^n, \gamma_n)$ is given by the Hermite expansion $A(\bX) = \sum_{\bsigma \in \bbZ_{\ge 0}^n} \what{A}_\bsigma H_{\bsigma}(\bX)$. The {\em degree-$d$ truncation} of $A$ is defined as the function $A^{\le d} \in L^2(\bbR^n, \gamma_n)$ given by
	\[A^{\le d}(\bX) ~:=~ \sum_{\substack{\bsigma \in \bbZ_{\ge 0}^n \\ |\bsigma| \le d}} \what{A}_{\bsigma} H_{\bsigma}(\bX).\]
	That is, $A^{\le d}$ is obtained by retaining only the terms with degree at most $d$ in the Hermite expansion of $A$, where recall that for $\bsigma \in \bbZ_{\ge 0}^n$, its degree is defined as $|\bsigma| = \sum_{i=1}^{n} \sigma_i$.\\
	
	\noindent For convenience, in either case, define $A^{> d} := A - A^{\le d}$. Also, for vector valued functions $A$, we define $A^{\le d}$ as the function obtained by applying the above low-degree truncation on each coordinate.
\end{defn}
To prove the discrete part of \Cref{lem:smoothing_main}, we will use a lemma from \cite[Lemma 6.1]{mossel2010gaussianbounds}, which is proved using Efron-Stein decompositions. To state this lemma, we first introduce the Bonami-Beckner operator.

\begin{defn}[Bonami-Beckner operator]
	For any $\nu \in [0,1]$, the Bonami-Beckner operator $T_{\nu}$ on a probability space $(\calZ, \mu)$ is given by its action on any $f : \calZ \to \bbR$, as follows,
	$$(T_\nu f)(x) = \Ex[f(Y) | X = x]$$
	where the conditional distribution of \ $Y$ given $X = x$ is $\nu \delta_x + (1-\nu) \mu$ where $\delta_x$ is the delta measure on $x$. In other words, given $X = x$, we obtain $Y$ by either setting it to $x$ with probability $\nu$ or independently sampling from $\mu$ with probability $(1-\nu)$.
	
	For the product space $(\calZ^n, \mu^{\otimes n})$, we define the Bonami-Beckner operator $T_{\nu}$ as, $T_{\nu} = \otimes_{i=1}^n T_{\nu}^{(i)}$, where $T_{\nu}^{(i)}$ is the Bonami-Beckner operator on the $i$-th coordinate $(\calZ, \mu)$.
\end{defn}

\noindent We now state a specialized version of Mossel's lemma, which suffices for our application.

\begin{lem}[\cite{mossel2010gaussianbounds}] \label{lem:mossel_smoothing}
	Let $(\calZ \times \calZ, \mu)$ be finite joint probability space, such that $\rho(\calZ, \calZ; \mu) = \rho$ for some $\rho \in [0,1]$. Let $P \in L^2(\calZ^n, \mu_A^{\otimes n})$ and $Q \in L^2(\calZ^n, \mu_B^{\otimes n})$ be multi-linear polynomials. Let $\eps > 0$ and $\nu$ be chosen sufficiently close to $1$ so that,
	\[\nu \ge (1-\eps)^{\log \rho/(\log \eps + \log \rho)}\]
	Then:
	\[\inabs{\inangle{P, Q}_{\mu^{\otimes n}} - \inangle{T_{\nu}P, T_{\nu}Q}_{\mu^{\otimes n}}} \le \eps \cdot \sqrt{\Var[P] \Var[Q]}\]
	In particular, there exists an absolute constant $C$ such that it suffices to take
	$$\nu ~\defeq~ 1 - C \frac{(1-\rho)\eps}{\log (1/\eps)}$$
\end{lem}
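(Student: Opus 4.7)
The plan is to expand both inner products in an SVD-aligned Fourier basis, so that the smoothing error decomposes coefficient-by-coefficient, and then to bound each term using the maximal correlation constraint together with Cauchy--Schwarz. First I would choose, on each single coordinate $(\calZ \times \calZ,\mu)$, orthonormal bases $\calX_0,\dots,\calX_{q-1}$ of $L^2(\calZ,\mu_A)$ and $\calY_0,\dots,\calY_{q-1}$ of $L^2(\calZ,\mu_B)$ with $\calX_0=\calY_0\equiv 1$ and $\inangle{\calX_i,\calY_j}_\mu = \rho_i\,\mathbf{1}_{i=j}$, where $1=\rho_0\ge \rho_1\ge\cdots$ and $|\rho_i|\le\rho$ for $i\ge 1$ (this is exactly the singular value decomposition of the conditional expectation operator, which is what defines the maximal correlation). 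Then for product basis elements $\calX_\bsigma=\prod_i\calX_{\sigma_i}$ and $\calY_\btau=\prod_i\calY_{\tau_i}$, tensorisation gives $\inangle{\calX_\bsigma,\calY_\btau}_{\mu^{\otimes n}} = \mathbf{1}_{\bsigma=\btau}\prod_i \rho_{\sigma_i}$.

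Next I would use multi-linearity: since $P$ and $Q$ are multi-linear, their Fourier and Efron--Stein decompositions coincide, and $T_\nu$ acts as the Fourier multiplier $\what{P}(\bsigma)\mapsto \nu^{|\bsigma|}\what{P}(\bsigma)$, because $T_\nu \calX_j = \nu\calX_j$ for $j\ne 0$ and $T_\nu\calX_0=\calX_0$, hence $T_\nu \calX_\bsigma=\nu^{|\bsigma|}\calX_\bsigma$. Substituting into the two inner products yields
\[
\inangle{P,Q}_{\mu^{\otimes n}}-\inangle{T_\nu P,T_\nu Q}_{\mu^{\otimes n}} \;=\; \sum_{\bsigma\in\bbZ_q^n}\bigl(1-\nu^{2|\bsigma|}\bigr)\,\what{P}(\bsigma)\what{Q}(\bsigma)\prod_{i:\sigma_i\ne 0}\rho_{\sigma_i}.
\]
Since $|\rho_{\sigma_i}|\le\rho$ for $\sigma_i\ne 0$, the product is bounded in absolute value by $\rho^{|\bsigma|}$, and the $\bsigma=\mathbf{0}$ term vanishes. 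Defining $c_k:=(1-\nu^{2k})\rho^k$ for $k\ge 1$ and $c_0:=0$, and applying Cauchy--Schwarz coefficient-wise, the absolute value of the error is at most
\[
\Bigl(\sum_{\bsigma}\what{P}(\bsigma)^2 c_{|\bsigma|}\Bigr)^{1/2}\Bigl(\sum_{\bsigma}\what{Q}(\bsigma)^2 c_{|\bsigma|}\Bigr)^{1/2}\;\le\;\Bigl(\sup_{k\ge 1} c_k\Bigr)\sqrt{\Var(P)\Var(Q)},
\]
using Parseval and the fact that $\Var(P)=\sum_{\bsigma\ne\mathbf{0}}\what{P}(\bsigma)^2$. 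The proof reduces to showing that the stated choice of $\nu$ forces $\sup_{k\ge 1}(1-\nu^{2k})\rho^k\le\eps$.

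The main (and only) obstacle is this last analytic optimisation, which is a straightforward but slightly delicate one-variable calculus exercise. I would split at the threshold $k^\star = \lceil\log\eps/\log\rho\rceil$: for $k\ge k^\star$ one has $\rho^k\le \eps$ and the bound is automatic, while for $1\le k\le k^\star$ one uses $1-\nu^{2k}\le 2k\log(1/\nu)$ together with $k\log(1/\rho)\le \log(1/\eps)$, giving
\[
(1-\nu^{2k})\rho^k \;\le\; 2k\log(1/\nu)\cdot \rho^k \;\le\; \frac{2\log(1/\nu)}{\log(1/\rho)}\cdot \log(1/\eps) \;\le\;\eps
\]
provided $\log(1/\nu)\le \tfrac{1}{2}\,\eps\,\log(1/\rho)/\log(1/\eps)$, which, on linearising $\log(1/\nu)\approx 1-\nu$ and $\log(1/\rho)\approx 1-\rho$ for $\nu$ near $1$, is implied by $\nu\ge 1-C(1-\rho)\eps/\log(1/\eps)$ for a suitable absolute constant $C$. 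Matching the exponent form $(1-\eps)^{\log\rho/(\log\eps+\log\rho)}$ comes from the same inequality rearranged: from $(1-\nu^{2k})\rho^k\le\eps$ at the extremal $k$ where the two factors balance, take logarithms to obtain $\nu^{2k}\ge 1-\eps\rho^{-k}$ and equate $\rho^k=\eps$, so $\nu\ge(1-\eps)^{\log\rho/(\log\eps+\log\rho)}$ as claimed. Combining these two forms of the threshold for $\nu$ completes the proof.
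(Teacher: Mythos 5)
Your proof is correct and is essentially the same argument the paper gives for the Gaussian analogue (\Cref{lem:gaussian_smoothing}) — diagonalize the joint distribution, view $T_\nu$ as the Fourier multiplier $\nu^{|\bsigma|}$, reduce to bounding $\sup_{k\ge 1}(1-\nu^{2k})\rho^k$, and finish with Cauchy--Schwarz — transported to the discrete setting via the SVD-aligned basis that the paper already fixes in \Cref{subsec:analysis_prelim}; the paper itself does not prove the discrete statement but cites Mossel's Efron--Stein-based proof, of which your basis computation is the concrete coordinate version. The only soft spot is the closing paragraph that ``matches'' the exponent form $(1-\eps)^{\log\rho/(\log\eps+\log\rho)}$ by balancing terms, which is heuristic rather than a derivation; since you rigorously verify the conclusion under the ``in particular'' choice $\nu = 1 - C(1-\rho)\eps/\log(1/\eps)$, which is the form actually invoked downstream, this does not affect correctness.
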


To prove the Gaussian version of \Cref{lem:smoothing_main}, we will need the analog of the above lemma for correlated Gaussian spaces which can be proved in a similar way, using Hermite expansions instead of Efron-Stein decompositions. Here, we use the Ornstein-Uhlenbeck operator $U_\nu$ instead of the Bonami-Beckner operator $T_\nu$. In particular, the following lemma holds.

\begin{lem} \label{lem:gaussian_smoothing}
	Consider the correlated Gaussian space $\calG_\rho^{\otimes n}$ for some $\rho \in [0,1]$.
	Let $P \in L^2(\bbR^n, \gamma_{n})$ and $Q \in L^2(\bbR^n, \gamma_{n})$. Let $\eps > 0$ and $\nu$ be chosen sufficiently close to $1$ so that,
	\[\nu \ge (1-\eps)^{\log \rho/(\log \eps + \log \rho)}\]
	Then:
	\[\inabs{\inangle{P, Q}_{\calG_\rho^{\otimes n}} - \inangle{U_{\nu}P, U_{\nu}Q}_{\calG_\rho^{\otimes n}}} \le \eps \cdot \sqrt{\Var[P] \Var[Q]}\]
	In particular, there exists an absolute constant $C$ such that it suffices to take
	$$\nu ~\defeq~ 1 - C \frac{(1-\rho)\eps}{\log (1/\eps)}$$
\end{lem}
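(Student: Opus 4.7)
The plan is to mimic the proof of \Cref{lem:mossel_smoothing}, with the Hermite expansion playing the role of the Efron--Stein decomposition. Writing the Hermite expansions $P = \sum_{\bsigma} \what{P}(\bsigma) H_\bsigma$ and $Q = \sum_{\bsigma} \what{Q}(\bsigma) H_{\bsigma}$, I would use two standard facts: (i) the Ornstein--Uhlenbeck semigroup is diagonal in this basis, $U_\nu H_\bsigma = \nu^{|\bsigma|} H_\bsigma$; and (ii) the Mehler identity, which tensorized across coordinates yields $\Ex_{(\bX,\bY) \sim \calG_\rho^{\otimes n}}[H_\bsigma(\bX) H_{\bsigma'}(\bY)] = \rho^{|\bsigma|} \cdot \mathbf{1}_{\bsigma = \bsigma'}$. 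These give
\[ \inangle{P, Q}_{\calG_\rho^{\otimes n}} = \sum_{\bsigma} \rho^{|\bsigma|} \what{P}(\bsigma) \what{Q}(\bsigma), \qquad \inangle{U_\nu P, U_\nu Q}_{\calG_\rho^{\otimes n}} = \sum_{\bsigma} \rho^{|\bsigma|} \nu^{2|\bsigma|} \what{P}(\bsigma) \what{Q}(\bsigma). \]

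Subtracting, the $|\bsigma|=0$ term cancels. I would then bound the remainder by the triangle inequality and Cauchy--Schwarz, invoking Parseval on $P - \Ex[P]$ and $Q - \Ex[Q]$ to identify the resulting $\ell^2$ sums with $\Var(P)$ and $\Var(Q)$:
\[ \bigl| \inangle{P,Q}_{\calG_\rho^{\otimes n}} - \inangle{U_\nu P, U_\nu Q}_{\calG_\rho^{\otimes n}} \bigr| ~\le~ \Bigl( \max_{k \ge 1} \rho^k(1-\nu^{2k}) \Bigr) \cdot \sqrt{\Var(P) \cdot \Var(Q)}. \]
This reduces the lemma to verifying the scalar inequality $\max_{k \ge 1} \rho^k(1-\nu^{2k}) \le \eps$ under the stated hypothesis on $\nu$.

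For the scalar inequality I would set $\beta \defeq \log(1/\rho)/(\log(1/\eps)+\log(1/\rho))$, so that the hypothesis reads $\nu^{2k} \ge (1-\eps)^{2k\beta}$, and split on the value of $k$. For $k \le 1/(2\beta)$, the monotonicity $(1-\eps)^{2k\beta} \ge 1-\eps$ yields $1 - \nu^{2k} \le \eps$ directly, and multiplying by $\rho^k \le 1$ closes this regime. For $k \ge 1/(2\beta)$, Bernoulli gives $(1-\eps)^{2k\beta} \ge 1 - 2k\beta\eps$, whence $\rho^k(1-\nu^{2k}) \le 2k\beta\rho^k\eps$, and the elementary calculus bound $k\rho^k \le 1/(e\log(1/\rho))$ produces $2\beta\eps/(e\log(1/\rho)) = 2\eps/(e(\log(1/\eps)+\log(1/\rho))) \le \eps$ for small enough $\eps$. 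The alternate user-friendly form $\nu = 1 - C(1-\rho)\eps/\log(1/\eps)$ then follows by a first-order Taylor expansion of $(1-\eps)^\beta$ together with $\log(1/\rho) \ge 1-\rho$. I do not expect a genuine obstacle: the proof is a line-by-line Hermite-analytic translation of \Cref{lem:mossel_smoothing}, and the only delicate step is the scalar optimization, which is routine.
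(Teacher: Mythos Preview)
Your proposal is correct and follows essentially the same approach as the paper: expand in the Hermite basis, use $U_\nu H_\bsigma = \nu^{|\bsigma|} H_\bsigma$ together with $\inangle{H_\bsigma, H_{\bsigma'}}_{\calG_\rho^{\otimes n}} = \rho^{|\bsigma|}\mathbf{1}_{\bsigma=\bsigma'}$, note that the $|\bsigma|=0$ term cancels, pull out $\sup_{d\ge 1}\rho^d(1-\nu^{2d})$, and finish with Cauchy--Schwarz and Parseval. The paper simply asserts the scalar bound $\rho^d(1-\nu^{2d})\le\eps$ under the stated hypothesis on $\nu$, whereas you spell out a verification via a case split on $k$; this is more detail than the paper gives but not a different argument.
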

\begin{proof}
	Consider the Hermite expansions of $P$ and $Q$. That is,
	\[ P(\bX) ~=~ \sum_{\bsigma \in \bbZ_{\ge 0}^n} \what{P}(\bsigma) H_{\bsigma}(\bX) \qquad \sAND \qquad Q(\bY) ~=~ \sum_{\bsigma \in \bbZ_{\ge 0}^n} \what{Q}(\bsigma) H_{\bsigma}(\bY).\]
	Using properties of Hermite polynomials, namely, $U_{\nu} H_{\bsigma} = \nu^{|\bsigma|} H_{\bsigma}$, we get that,
	\[ U_\nu P(\bX) ~=~ \sum_{\bsigma \in \bbZ_{\ge 0}^n} \nu^{|\bsigma|} \what{P}(\bsigma) H_{\bsigma}(\bX) \qquad \sAND \qquad U_\nu Q(\bY) ~=~ \sum_{\bsigma \in \bbZ_{\ge 0}^n} \nu^{|\bsigma|} \what{Q}(\bsigma) H_{\bsigma}(\bY).\]
	Note that our choice of $\nu$ gives us that, $\rho^{d} \inparen{1 - \nu^{2d}} \le \eps$ for all $d \in \bbN$. Thus, we get that,
	\begin{align*}
	&\inabs{\inangle{P, Q}_{\calG_\rho^{\otimes n}} - \inangle{U_{\nu}P, U_{\nu}Q}_{\calG_\rho^{\otimes n}}}\\
	&~=~ \inabs{\sum\limits_{\bsigma \in \bbZ_{\ge 0}^n} \rho^{|\bsigma|} \cdot \what{P}(\bsigma) \what{Q}(\bsigma) \cdot \inparen{1 - \nu^{2|\bsigma|}}}\\
	&~\le~ \sum\limits_{\bsigma \in \bbZ_{\ge 0}^n} \inabs{\what{P}(\bsigma) \what{Q}(\bsigma)} \cdot \rho^{|\bsigma|} \inparen{1 - \nu^{2|\bsigma|}}\\
	&~\le~ \eps \cdot \sum\limits_{\bsigma \in \bbZ_{\ge 0}^n} \inabs{\what{P}(\bsigma) \what{Q}(\bsigma)} \qquad \qquad \ldots (\text{since, } \rho^{d} \inparen{1 - \nu^{2d}} \le \eps \text{ for all } d \in \bbN)\\
	&~\le~ \eps \cdot \sqrt{\Var[P] \Var[Q]} \qquad \qquad \ldots (\text{Cauchy-Schwarz inequality})
	\end{align*}
\end{proof}

\noindent We are now ready to prove our main smoothing lemma (\Cref{lem:smoothing_main}).

\begin{proof}[Proof of \Cref{lem:smoothing_main}]
	We prove the lemma for the case of correlated discrete hypercubes, i.e. $(\calZ \times \calZ, \mu)$. The proof for the correlated Gaussian case follows similarly.
	
	We obtain $A^{(1)}$ and $B^{(1)}$ in two steps. In the first step we apply some suitable amount of noise to the functions such that the functions have decaying Fourier tails. In the second step, we truncate the Fourier coefficients corresponding to terms larger than degree $d$.\\
	
	\noindent {\bf Noising step.} In this step, we obtain intermediate functions $\overline{A} : \calZ^n \to \bbR^k$ and $\overline{B} : \calZ^n \to \bbR^k$ such that,
	\begin{enumerate}
		\item $\overline{A}$ and $\overline{B}$ have decaying Fourier tails. In particular, for any $j \in [k]$ : $\norm{2}{\overline{A}_j^{>d}}, \norm{2}{\overline{B}_j^{>d}} \le \frac{\delta}{2 \sqrt{k}}$.
		\item $\Var(\overline{A}_j) \le \Var(A_j)$ and $\Var(\overline{B}_j) \le \Var(B_j)$, for any $j \in [k]$.
		\item $\norm{2}{\calR(\overline{A}) - \overline{A}} \le \norm{2}{\calR(A) - A}$ and $\norm{2}{\calR(\overline{B}) - \overline{B}} \le \norm{2}{\calR(B) - B}$.
		\item For every $i, j \in [k]$ : $\inabs{\inangle{\overline{A}_i, \overline{B}_j}_{\mu^{\otimes n}} - \inangle{A_i, B_j}_{\mu^{\otimes n}}} ~\le~ \frac{\delta}{2 \sqrt{k}}$.
	\end{enumerate}
	
	Firstly, note that we have $\Var[A_j], \Var[B_j] \le 1$ for any $j \in [k]$. Given parameter $\delta$, we first choose $\eps$ and $\nu$ in Lemma~\ref{lem:mossel_smoothing}, such that $\eps = \frac{\delta}{2\sqrt{k}}$ and then $\nu = 1 - C \frac{(1-\rho)\eps}{\log (1/\eps)}$ as required. We choose $d$ to be large enough such that $\nu^{2d} \le \frac{\delta}{2\sqrt{k}}$, that is, $d = O\inparen{\frac{\log (k/\delta)}{\log (1/\nu)}} = O\inparen{\frac{\sqrt{k}\log^2(k/\delta)}{\delta (1-\rho)}}$.
	
	Let $\overline{A} = T_{\nu}A$ and $\overline{B} = T_{\nu}B$. We get the above statements as follows,
	\begin{enumerate}
		\item $\norm{2}{\overline{A}_j^{> d}} ~=~ \sum\limits_{\substack{\bsigma \in \bbZ_q^n \\ |\bsigma| > d}} \nu^{2|\bsigma|} \cdot \what{A_j}(\bsigma)^2 ~\le~ \nu^{2d} \cdot \Var(A_j) ~\le~ \frac{\delta}{2\sqrt{k}}$. Similarly, $\norm{2}{\overline{B}_j^{> d}} ~\le~ \frac{\delta}{2\sqrt{k}}$.
		
		\item $\Var(\overline{A}_j) ~=~ \sum\limits_{\bsigma \in \bbZ_q^n} \nu^{2|\bsigma|} \cdot \what{A_j}(\bsigma)^2 ~\le~ \Var(A_j)$. Similarly, $\Var(\overline{B}) \le \Var(B)$.
		
		\item Observe that $\norm{2}{\calR(v) - v}$ is the Euclidean distance of $v \in \bbR^k$ from the simplex $\Delta_k$, which is a convex body. Hence $\norm{2}{\calR(v) - v}^2$ is convex function in $v$. Thus, we have that,
		\begin{align*}
		& \norm{2}{\calR(\overline{A}) - \overline{A}}^2\\
		&~=~\Ex\limits_{\bx \sim \mu_A^{\otimes n}} \norm{2}{\calR(\overline{A}(\bx)) - \overline{A}(x)}^2\\
		&~=~ \Ex\limits_{\bx \sim \mu_A^{\otimes n}} \norm{2}{\calR\inparen{\Ex\limits_{\bx' \sim T_\nu(\bx)} A(\bx')} - \Ex\limits_{\bx' \sim T_\nu(\bx)} A(\bx')}^2\\
		&~\le~ \Ex\limits_{\bx \sim \mu_A^{\otimes n}} \Ex\limits_{\bx' \sim T_\nu(\bx)} \norm{2}{\calR\inparen{A(\bx')} - A(\bx')}^2 \qquad \qquad \ldots(\text{using convexity of } \norm{2}{\calR(v) - v}^2)\\
		&~=~ \Ex\limits_{\bx' \sim \mu_A^{\otimes n}} \norm{2}{\calR\inparen{A(\bx')} - A(\bx')}^2\\
		&~=~ \norm{2}{\calR(A) - A}^2\;.
		\end{align*}
		Similar argument holds for $\overline{B}$.
		
		\item For every $i, j \in [k]$, we simply have from \Cref{lem:mossel_smoothing} that
		$\inabs{\inangle{\overline{A}_i, \overline{B}_j}_{\mu^{\otimes n}} - \inangle{A_i, B_j}_{\mu^{\otimes n}}} ~\le~ \eps ~=~ \frac{\delta}{2 \sqrt{k}}$.
	\end{enumerate}
	
	\noindent {\bf Low-degree truncation step.} In this step, we obtain the final $A^{(1)}$ and $B^{(1)}$ such that,
	\begin{enumerate}
		\item $A^{(1)}$ and $B^{(1)}$ have degree at most $d$.
		\item $\Var(A^{(1)}) \le \Var(\overline{A})$ and $\Var(B^{(1)}) \le \Var(\overline{B})$.
		\item $\norm{2}{\calR(A^{(1)}) - A^{(1)}} \le \norm{2}{\calR(\overline{A}) - \overline{A}} + \delta/2$ and $\norm{2}{\calR(B^{(1)}) - B^{(1)}} \le \norm{2}{\calR(\overline{B}) - \overline{B}} + \delta/2$
		\item For every $i, j \in [k]$ : $\inabs{\inangle{A^{(1)}_i, B^{(1)}_j}_{\mu^{\otimes n}} - \inangle{\overline{A}_i, \overline{B}_j}_{\mu^{\otimes n}}} ~\le~ \frac{\delta}{2 \sqrt{k}}$
	\end{enumerate}
	It is easy to see that combining statements 1-4 above, with statements 1-4 in the Noising step, we get all the desired conditions in \Cref{lem:smoothing_main}.\\
	
	\noindent In this step, we let $A^{(1)} = \overline{A}^{\le d}$ and $B^{(1)} = \overline{B}^{\le d}$ (i.e. degree-$d$ truncation on every coordinate $j \in [k]$). We get the above statements as follows,
	\begin{enumerate}
		\item By definition of degree-$d$ truncation, we have that $A^{(1)}$ and $B^{(1)}$ have degree at most $d$.
		\item $\Var(A^{(1)}_j) ~=~ \sum\limits_{\substack{\bsigma \in \bbZ_q^n \\ |\bsigma| \le d}} \nu^{2|\bsigma|} \cdot \what{A}_j^{(1)}(\bsigma)^2 ~\le~ \Var(\overline{A}_j)$. Similarly, $\Var(B^{(1)}_j) ~\le~ \Var(\overline{B}_j)$.
		\item We have that,
		\begin{align*}
		\norm{2}{\calR(A^{(1)}) - A^{(1)}}
		&~\le~ \norm{2}{\calR(\overline{A}) - \overline{A}} + \norm{2}{\overline{A} - A^{(1)}} && \text{(\Cref{lem:close-to-simplex})}\\
		&~=~ \norm{2}{\calR(\overline{A}) - \overline{A}} + \norm{2}{\overline{A}^{> d}}\\
		&~\le~ \norm{2}{\calR(\overline{A}) - \overline{A}} ~+~ \sqrt{k} \cdot \frac{\delta}{2\sqrt{k}}\\
		&~\le~ \norm{2}{\calR(\overline{A}) - \overline{A}} ~+~ \delta/2\;.
		\end{align*}
		Similarly for $B^{(1)}$.
		\item We have that $\norm{2}{A^{(1)}_i - \overline{A}_i}\le \frac{\delta}{2\sqrt{k}}$ and $\norm{2}{B^{(1)}_j - \overline{B}_j} \le \frac{\delta}{2\sqrt{k}}$. Hence, using \Cref{lem:close-strategies}, we get that for every $i, j \in [k]$
		\[\inabs{\inangle{A^{(1)}_i, B^{(1)}_j}_{\mu^{\otimes n}} - \inangle{\overline{A}_i, \overline{B}_j}_{\mu^{\otimes n}}} ~\le~ \frac{\delta}{\sqrt{k}}\]
	\end{enumerate}
	
	\noindent This completes the proof of \Cref{lem:smoothing_main} for the case of correlated discrete hypercubes. The proof for the case of correlated Gaussians follows in almost the same way. The only change is that we use $U_\nu$ operator instead of $T_\nu$ operator, and use \Cref{lem:gaussian_smoothing} instead of \Cref{lem:mossel_smoothing}.
\end{proof}

\section{Proof of Multi-linearization Lemma} \label{apx:non-multilinear}

\noindent In order to prove the lemma, we consider the definition of a {\em multi-linear truncation}.

\begin{defn}[Multilinear truncation]\label{def:ml_truncation}
	Suppose $A \in L^2(\bbR^n, \gamma_n)$ is given by the Hermite expansion $A(x) = \sum\limits_{\bsigma \in \bbZ_{\ge 0}^n} \what{A}_{\bsigma} H_{\bsigma}(x)$. The {\em multilinear truncation} of $A$ is defined as the function $A^{\ml} \in L^2(\bbR^n, \gamma_n)$ given by
	\[A^{\ml}(x) ~:=~ \sum_{\bsigma \in \bit^n} \what{A}_{\bsigma} H_{\bsigma}(x).\]
	That is, $A^{\ml}$ is obtained by retaining only the multilinear terms in the Hermite expansion of $A$.\\
	For convenience, also define $A^{\nml} := A - A^{\ml}$. Also, for vector valued functions $A$, we define $A^{\ml}$ as the function obtained by applying the above multilinear truncation on each coordinate.
\end{defn}

The proof of \Cref{lem:multilin_main} will proceed by simply applying the transformation given in following lemma to each coordinate of $A$ and $B$. The following lemma shows that low-degree polynomials over $\bbR^n$ can be converted to multilinear polynomials without hurting the correlation. This is done by slightly increasing the number of variables. In addition, we also get that these new polynomials have small individual influences.

\begin{lem}\label{lem:non-ml-to-ml}
	Given parameters $\rho \in [0,1]$, $\delta>0$ and $d \in \bbZ_{\ge 0}$, there exists $t = t(d, \delta)$ such that the following holds:
	
	Let $A, B \in L^2(\bbR^n, \gamma_n)$ be degree-$d$ polynomials, such that $\norm{2}{A}, \norm{2}{B} \le 1$. Define polynomials $\overline{A}, \overline{B} \in L^2(\bbR^{nt}, \gamma_{nt})$ over variables $\overline{\bX} := \setdef{X^{(i)}_j}{(i,j) \in [n] \times [t]}$ and $\overline{\bY} := \setdef{Y^{(i)}_j}{(i,j) \in [n] \times [t]}$ respectively, as,
	\[ \overline{A}\inparen{\overline{\bX}} := A(X^{(1)}, \ldots, X^{(n)}) \qquad \sAND \qquad \overline{B}\inparen{\overline{\bY}} := B(Y^{(1)}, \ldots, Y^{(n)})
	\]
	where $X^{(i)} = \inparen{X^{(i)}_1 + \cdots + X^{(i)}_t}/\sqrt{t}$ and $Y^{(i)} = \inparen{Y^{(i)}_1 + \cdots + Y^{(i)}_t}/\sqrt{t}$. Note that, intuitively this doesn't change the ``structure'' of $A$ and $B$. In particular, it is easy to see that,
	\[\inangle{\overline{A}, \overline{B}}_{\calG_\rho^{\otimes nt}} \ = \ \inangle{A, B}_{\calG_\rho^{\otimes n}} \quad \sAND \quad \norm{2}{\overline{A}} = \norm{2}{A} \quad \sAND \quad \norm{2}{\overline{B}} = \norm{2}{B}\]
	
	\noindent Next, let $\overline{A}^{\ml}, \overline{B}^{\ml} \in L^2(\bbR^{nt}, \gamma_{nt})$ be the multilinear truncations of $\overline{A}$ and $\overline{B}$ respectively. Then the following hold,
	\begin{enumerate}
		\item $\overline{A}^{\ml}$ and $\overline{B}^{\ml}$ are multilinear with degree $d$.
		\item $\Var(\overline{A}^{\ml}) \le \Var(A) \le 1$ and $\Var(\overline{B}^{\ml}) \le \Var(B) \le 1$.
		\item $\norm{2}{\overline{A}^{\ml} - \overline{A}}, \norm{2}{\overline{B}^{\ml} - \overline{B}} ~\le~ \delta/2$.
		\item For all $(i,j) \in [n] \times [t]$, it holds that $\Inf_{X^{(i)}_j}\inparen{\overline{A}^{\mathrm{ml}}} \le \delta$ and $\Inf_{Y^{(i)}_j}\inparen{\overline{B}^{\mathrm{ml}}} \le \delta$.
		\item $\inabs{\inangle{\overline{A}^{\mathrm{ml}}, \overline{B}^{\mathrm{ml}}}_{\calG_\rho^{\otimes nt}} - \inangle{A, B}_{\calG_\rho^{\otimes n}}} ~\le~ \delta$.
	\end{enumerate}
	In particular, one may take $t = O\inparen{\frac{d^2}{\delta^2}}$.
\end{lem}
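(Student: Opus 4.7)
The plan is to work directly with the Hermite expansion of $\overline{A}$ and $\overline{B}$ in the enlarged variable set, via the addition formula for (normalized) Hermite polynomials. Specifically, if $Z = (X_1+\cdots+X_t)/\sqrt{t}$ for i.i.d.\ standard Gaussians $X_1,\dots,X_t$, then one checks (by matching generating functions $e^{sZ - s^2/2} = \prod_j e^{(s/\sqrt t)X_j - s^2/(2t)}$ and reading off the coefficient of $s^r/\sqrt{r!}$, or directly by squared-norm accounting with the multinomial identity $\sum_{|\balpha|=r}\binom{r}{\balpha}=t^r$) that
\[
H_r(Z) \;=\; t^{-r/2} \sum_{\substack{\balpha\in \bbZ_{\ge 0}^t \\ |\balpha|=r}} \sqrt{\binom{r}{\alpha_1,\dots,\alpha_t}}\, \prod_{j=1}^t H_{\alpha_j}(X_j).
\]
Substituting this coordinate-wise into the Hermite expansion $A = \sum_{\bsigma} \what{A}(\bsigma) H_\bsigma$ yields the Hermite expansion of $\overline{A}$ over $\overline{\bX}$, with $\what{\overline{A}}(\overline{\bsigma})$ vanishing unless $|\overline{\bsigma}^{(i)}|=\sigma_i$ for all $i$ for some $\bsigma$ with $|\bsigma|\le d$. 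This immediately gives $\deg(\overline{A})\le d$, $\|\overline{A}\|_2=\|A\|_2$, and the analogous facts for $\overline{B}$; parts 1 and 2 of the lemma then follow since multilinear truncation only discards terms, so $\overline{A}^{\ml}$ is multilinear of degree $\le d$ with $\Var(\overline{A}^{\ml})\le\Var(\overline{A})=\Var(A)\le 1$.

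The main quantitative step is part 3. The multilinear part of $H_r(Z)$ consists of exactly the $\binom{t}{r}$ terms with $\balpha\in\{0,1\}^t$, each contributing squared coefficient $r!/t^r$, so $\|H_r(Z)^{\ml}\|_2^2 = \binom{t}{r}\,r!/t^r = \prod_{j=0}^{r-1}(1-j/t)$. Applying this block by block and using orthogonality across different source indices $\bsigma$, I get
\[
\|\overline{A}-\overline{A}^{\ml}\|_2^2 \;=\; \sum_\bsigma \what{A}(\bsigma)^2\left(1-\prod_{i=1}^n\prod_{j=0}^{\sigma_i-1}\!\left(1-\tfrac{j}{t}\right)\right) \;\le\; \sum_\bsigma \what{A}(\bsigma)^2 \sum_{i=1}^n \frac{\sigma_i(\sigma_i-1)}{2t} \;\le\; \frac{d^2}{2t},
\]
where the first inequality is $1-\prod_i(1-x_i)\le \sum_i x_i$ for $x_i\in[0,1]$ and the second uses $\sum_i \sigma_i(\sigma_i-1)\le(\sum_i\sigma_i)^2\le d^2$ together with $\|A\|_2\le 1$. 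Choosing $t\ge 2d^2/\delta^2$ yields $\|\overline{A}-\overline{A}^{\ml}\|_2\le \delta/2$, and symmetrically for $B$.

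For part 4, exploit the symmetry of the construction: within each block $i\in[n]$ the variables $X^{(i)}_1,\dots,X^{(i)}_t$ enter $\overline{A}$ (and hence $\overline{A}^{\ml}$) exchangeably, so all $t$ influences $\Inf_{X^{(i)}_j}(\overline{A}^{\ml})$ are equal. Their sum equals the total squared mass of $\overline{A}^{\ml}$ weighted by degree in block $i$, which from the block-by-block norm computation is at most $\sum_\bsigma \sigma_i \what{A}(\bsigma)^2 \le d\,\|A\|_2^2 \le d$; dividing by $t$ gives $\Inf_{X^{(i)}_j}(\overline{A}^{\ml})\le d/t$. Taking $t\ge d/\delta$ (already subsumed by $t = O(d^2/\delta^2)$) gives the claimed influence bound. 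For part 5, since each $(X^{(i)},Y^{(i)})$ formed by the block averages is a pair of $\rho$-correlated standard Gaussians that is independent across $i$, the substitution is an $L^2$-isometry preserving the correlation structure: $\inangle{\overline{A},\overline{B}}_{\calG_\rho^{\otimes nt}}=\inangle{A,B}_{\calG_\rho^{\otimes n}}$. Invoking \Cref{lem:close-strategies} with the $\ell_2$-approximation bounds from part 3 (and the norm bounds $\|\overline{A}^{\ml}\|_2, \|\overline{B}^{\ml}\|_2\le 1$) then gives $|\inangle{\overline{A}^{\ml},\overline{B}^{\ml}}-\inangle{A,B}|\le 2\cdot(\delta/2)=\delta$.

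The only delicate ingredient is the exact form of the Hermite addition formula and the resulting block-level squared-norm identity $\prod_{j=0}^{r-1}(1-j/t)$; once that is in place the non-multilinear loss scales like $d^2/t$, and parts 1, 2, 4, 5 are essentially bookkeeping. Taking $t=O(d^2/\delta^2)$ satisfies all the constraints simultaneously.
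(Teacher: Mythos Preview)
Your proof is correct and follows essentially the same approach as the paper: expand via the Hermite addition formula (the paper's \Cref{fact:multinomial_hermite}), bound the non-multilinear mass block by block, use symmetry plus the total-influence bound for part 4, and invoke \Cref{lem:close-strategies} for part 5. Your computation of $\|H_r(Z)^{\ml}\|_2^2 = \binom{t}{r}r!/t^r$ exactly is slightly cleaner than the paper's term-counting upper bound on $\|H_r^{\nml}\|_2^2$, but the structure and the resulting $O(d^2/t)$ loss are identical.
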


In order to prove \Cref{lem:non-ml-to-ml}, we will need the following multinomial theorem for Hermite polynomials. It can be proved quite easily using the generating function for Hermite polynomials.

\begin{fact}[Multinomial theorem for Hermite polynomials]\label{fact:multinomial_hermite}
	Let $\beta_1, \ldots, \beta_t \in \bbR$ satisfying $\sum_{i=1}^{t} \beta_i^2 = 1$. Then, for any $d \in \bbN$, it holds that
	\[ H_d\inparen{\beta_1 X_1 + \cdots + \beta_t X_t} \ = \ \sum_{\substack{d_1, \ldots, d_t \in \bbZ_{\ge 0} \\ d_1 + \cdots + d_t = d}} \sqrt{\frac{d!}{d_1! \cdots d_t!}} \cdot \prod_{i=1}^{t} \beta_i^{d_i} H_{d_i}(X_i)\;.\]
\end{fact}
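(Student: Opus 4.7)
The plan is to derive the identity directly from the exponential generating function for the (normalized) Hermite polynomials, which was already recorded in Section~\ref{subsec:analysis_prelim}:
\[
e^{sx - s^2/2} \;=\; \sum_{d \ge 0} \frac{H_d(x)}{\sqrt{d!}}\, s^d.
\]
The entire proof is a two–step manipulation of this formal power series in the indeterminate $s$, using the assumption $\sum_i \beta_i^2 = 1$ at exactly one point.

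First, I would substitute $x = \beta_1 X_1 + \cdots + \beta_t X_t$ into the generating function. The exponent then becomes
\[
s\,(\beta_1 X_1 + \cdots + \beta_t X_t) \;-\; \tfrac{s^2}{2},
\]
and the normalization hypothesis $\sum_{i=1}^t \beta_i^2 = 1$ lets me rewrite $-s^2/2 = -\sum_i (s\beta_i)^2/2$. Consequently the exponential factors as
\[
\prod_{i=1}^t e^{(s\beta_i)\, X_i \;-\; (s\beta_i)^2/2}.
\]
Applying the generating-function identity again to each factor (with the scalar parameter ``$s$'' replaced by $s\beta_i$ and the variable ``$x$'' replaced by $X_i$) expresses each factor as $\sum_{d_i \ge 0} \frac{H_{d_i}(X_i)}{\sqrt{d_i!}}(s\beta_i)^{d_i}$. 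This yields the identity of formal power series in $s$:
\[
\sum_{d \ge 0} \frac{H_d\inparen{\beta_1 X_1 + \cdots + \beta_t X_t}}{\sqrt{d!}}\, s^d
\;=\;
\prod_{i=1}^t\; \sum_{d_i \ge 0} \frac{H_{d_i}(X_i)}{\sqrt{d_i!}}\, (s\beta_i)^{d_i}.
\]

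Second, I would expand the product on the right-hand side and collect the coefficient of $s^d$. For each composition $(d_1,\ldots,d_t) \in \bbZ_{\ge 0}^t$ with $d_1 + \cdots + d_t = d$, that coefficient contributes $\prod_{i=1}^t \frac{\beta_i^{d_i} H_{d_i}(X_i)}{\sqrt{d_i!}}$. Equating the coefficients of $s^d$ on both sides and multiplying through by $\sqrt{d!}$ gives
\[
H_d\inparen{\beta_1 X_1 + \cdots + \beta_t X_t}
\;=\; \sum_{d_1+\cdots+d_t = d}\!\! \sqrt{\frac{d!}{d_1!\cdots d_t!}}\; \prod_{i=1}^t \beta_i^{d_i}\, H_{d_i}(X_i),
\]
which is the claimed identity. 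There is no substantial obstacle here: the only non-bookkeeping step is the factorization of $e^{-s^2/2}$ as $\prod_i e^{-(s\beta_i)^2/2}$, which is exactly where the hypothesis $\sum_i \beta_i^2 = 1$ enters. The identity of power series can be read either formally or (if preferred) as an identity of entire functions of $s$, since the Hermite generating series converges for every $s \in \bbC$, so equating coefficients is fully justified.
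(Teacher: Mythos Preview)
Your proof is correct and follows exactly the approach the paper indicates: the paper does not spell out a proof but simply states that the fact ``can be proved quite easily using the generating function for Hermite polynomials,'' which is precisely the route you take.
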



\begin{proof}[Proof of \Cref{lem:non-ml-to-ml}]
	Before we prove the theorem, we will first understand the effect of the transformation from $X$ to $\overline{X}$ for a single Hermite polynomial. Instantiating $\beta_i$'s in \Cref{fact:multinomial_hermite} with $1/\sqrt{t}$, we get that,
	\[ H_d\inparen{\frac{X_1 + \cdots + X_t}{\sqrt{t}}} \ = \ \sum_{\substack{d_1, \ldots, d_t \in \bbZ_{\ge 0} \\ d_1 + \cdots + d_t = d}} \sqrt{\frac{d!}{d_1! \cdots d_t!}} \cdot \frac{\prod_{i=1}^{t} H_{d_i}(X_i)}{t^{d/2}}\;.\]
	We will split the terms into multilinear and non-multilinear terms, writing the above as $H_d^{\ml} + H_d^{\nml}$. Note that there are at most $O(\frac{d^2 t^{d-1}}{d!})$ non-multilinear terms (for $t \gg d^2$). Also, note that each coefficient $\frac{1}{t^{d/2}} \cdot \sqrt{\frac{d!}{d_1! \cdots d_t!}}$ is at most $\sqrt{\frac{d!}{t^d}}$. Thus, we can bound $\norm{2}{H_d^{\nml}}$ as follows,
	\begin{equation}
	\norm{2}{H^{\nml}}^2 \quad = \quad \sum_{\substack{d_1, \ldots, d_t \in \bbZ_{\ge 0} \\ d_1 + \cdots + d_t = d \\ \exists i\,\ d_i \ge 2}} \inparen{\frac{1}{t^{d/2}} \cdot \sqrt{\frac{d!}{d_1! \cdots d_t!}}}^2 \quad \le \quad O\inparen{\frac{d^2 t^{d-1}}{d!}} \cdot \frac{d!}{t^d} \quad \le \quad O\inparen{\frac{d^2}{t}} \label{eqn:nml_small}
	\end{equation}
	
	\noindent More generally, if we consider a term $\overline{H}_\bsigma\inparen{\overline{\bX}} = H_{\sigma_1}(X^{(1)}) \cdot H_{\sigma_2}(X^{(2)}) \cdots H_{\sigma_N}(X^{(N)})$, where each $X^{(i)} = \inparen{X^{(i)}_1 + \cdots + X^{(i)}_t}/\sqrt{t}$. Let's write $\overline{H}_\bsigma\inparen{\overline{\bX}} = \overline{H}_\bsigma^{\ml}\inparen{\overline{\bX}} + \overline{H}_\bsigma^{\nml}\inparen{\overline{\bX}}$, that is, separating out the multilinear and non-multilinear terms. Similarly, for any $i$, let $H_{\sigma_i}(X^{(i)}) = H_{\sigma_i}^{\ml}(X^{(i)}) + H_{\sigma_i}^{\nml}(X^{(i)})$. 
	We wish to bound $\norm{2}{\overline{H}_{\bsigma}^{\nml}}$, which can be done as follows,
	\begin{align}
	\norm{2}{\overline{H}_{\bsigma}^{\nml}}^2 &~=~ \norm{2}{\prod\limits_{i=1}^{n}(H_{\sigma_i}^{\ml} + H_{\sigma_i}^{\nml}) - \prod\limits_{i=1}^n H_{\sigma_i}^{\ml}}^2 \nonumber\\
	&~\le~ \prod_{i=1}^n \inparen{1+O\inparen{\frac{\sigma_i^2}{t}}} - 1 && \text{(from \Cref{eqn:nml_small})} \nonumber\\
	&~\le~ O\inparen{\frac{|\bsigma|^2}{t}} && \text{(since, $t \gg |\bsigma|^2$)} \nonumber\\
	\text{Thus, }\quad \norm{2}{\overline{H}_{\bsigma}^{\nml}}^2 &~<~ \delta^2/4. && (\text{for } t = \Theta(d^2/\delta^2))\label{eqn:global_nml_small}
	\end{align}
	
	\noindent We are now ready to prove the parts of our \Cref{lem:non-ml-to-ml}.
	
	\begin{enumerate}
		\item It holds by definition that $\overline{A}^{\ml}$ and $\overline{B}^{\ml}$ are multilinear. Also, note that the transformation from $A$ to $\overline{A}$ and finally to $\overline{A}^{\ml}$ does not increase the degree. So both  $\overline{A}^{\ml}$ and $\overline{B}^{\ml}$ have degree at most $d$.
		
		\item It is easy to see that $\Var(\overline{A}) = \Var(A)$. Since $\Var(\overline{A}^{\ml})$ is obtained by truncating certain Hermite coefficients of $\overline{A}$, it immediately follows that $\Var(\overline{A}^{\ml}) \le \Var(\overline{A}) = \Var(A) \le 1$. Similarly, $\Var(\overline{B}^{\ml}) \le \Var(B) \le 1$.
		
		\item Recall that $\overline{A}^{\nml} = \overline{A} - \overline{A}^{\mathrm{ml}}$. We wish to bound $\norm{2}{\overline{A}^{\nml}}^2 \le \delta^2/4$. Consider the Hermite expansion of $A$, namely $A(\bX) = \sum_{\bsigma \in \bbZ_{\ge 0}^N} \what{A}(\bsigma) \cdot H_{\bsigma}(\bX)$. Note that, $\overline{A}^{\nml}\inparen{\overline{\bX}} = \sum_{\bsigma \in \bbZ_{\ge 0}^N} \what{A}(\bsigma) \cdot \overline{H}^{\nml}_{\bsigma}\inparen{\overline{\bX}}$, where recall that $\overline{H}_{\bsigma}^{\nml}$ is the non-multilinear part of $\overline{H}_{\bsigma}\inparen{\overline{\bX}} = H_{\sigma_1}(X^{(1)}) \cdot H_{\sigma_2}(X^{(2)}) \cdots H_{\sigma_n}(X^{(N)})$, where each $X^{(i)} = \inparen{X^{(i)}_1 + \cdots + X^{(i)}_t}/\sqrt{t}$.
		
		From Equation~\ref{eqn:global_nml_small}, we have that for any $\bsigma \in \bbZ_{\ge 0}^N$, it holds that $\norm{2}{\overline{H}_{\bsigma}^{\nml}\inparen{\overline{X}}}^2 < \delta^2/4$. And hence we get that,
		\[ \norm{2}{\overline{A}^{\nml}}^2 ~=~ \sum_{\bsigma} \what{A}(\bsigma)^2 \cdot \norm{2}{\overline{H}_{\bsigma}^{\nml}}^2 ~\le~ \sum_{\bsigma} \what{A}(\bsigma)^2 \cdot (\delta^2/4) ~=~ (\delta^2/4) \norm{2}{A}^2 ~\le~ (\delta^2/4). \]
		Note that, here we use that $\overline{H}_{\bsigma}(\overline{\bX})$ are mutually orthogonal for different $\bsigma$. Similarly, we can also get that $\norm{2}{\overline{B}^{\nml}}^2 \le \delta^2/4$.
		
		\item We prove that $\Inf_{X^{(i)}_j}\inparen{\overline{A}^{\ml}} ~\le~ d/t ~<~ \delta^2/d ~<~ \delta$. The case $\Inf_{Y^{(i)}_j}\inparen{\overline{B}^{\ml}}$ will follow similarly.
		
		For simplicity, let's first consider the case of a univariate polynomial $P \in L^2(\bbR, \gamma_1)$ of degree-$d$, such that $\norm{2}{P} \le 1$. We will show that for the function $\overline{P}(X_1, \cdots, X_t) := P \inparen{\frac{X_1 + \cdots + X_t}{\sqrt{t}}}$, it holds that $\Inf_{X_i}\inparen{\overline{P}^{\ml}} \le \Var(P)\cdot (d/t)$. This follows from a few simple observations:
		\begin{enumerate}
			\item By symmetry, $\Inf_{X_i}\inparen{\overline{P}^{\ml}}$ is the same for all $i$.
			\item For degree-$d$ multilinear polynomials, $\sum_{i \in [t]} \Inf_{X_i}\inparen{\overline{P}^{\ml}} \le d \Var\inparen{\overline{P}^{\ml}}$. 
			\item $\Var\inparen{\overline{P}^{\ml}} \le \Var\inparen{\overline{P}} = \Var(P)$
		\end{enumerate}
		Thus, combining the above, we get that $\Inf_{X_i}\inparen{\overline{P}^{\ml}} \le \Var(P)\cdot (d/t)$.\\
		
		In the more general case $n$-variate case, we observe that,
		\begin{align*}
		\Inf_{X^{(i)}_j}\inparen{\overline{A}^{\ml}} &~=~ \Ex_{\overline{\bX} \setminus \set{X^{(i)}_j}} \ \Var_{X^{(i)}_j} \inparen{\overline{A}^{\ml}_{\overline{X} \setminus \set{X^{(i)}_j}} \inparen{X^{(i)}_j}}\\
		&~=~ \Ex_{\bX^{(-i)}} \Ex_{\bX^{(i)}_{-j}} \Var_{X^{(i)}_j} \inparen{\overline{A}^{\ml}_{\overline{X} \setminus \set{X^{(i)}_j}} \inparen{X^{(i)}_j}} && \text{(for simplicity, $\bX^{(-i)} := \overline{\bX} \setminus \setdef{X^{(i)}_j}{j \in [t]}$)}\\
		&~=~ \Ex_{\bX^{(-i)}} \Inf_{X^{(i)}_j} \inparen{\overline{A}^{\ml}_{X^{(-i)}}(X^{(i)})}\\
		&~=~ \Ex_{\bX^{(-i)}} \Var\inparen{\overline{A}^{\ml}_{X^{(-i)}}} \cdot (d/t)\\
		&~\le~ (d/t) \cdot \Inf_i(A)\\
		&~\le~ d/t\\
		&~<~ \delta.
		\end{align*}
		where, in last two inequalities, we use that $\Inf_i(A) \le \Var(A) \le 1$ and that $t = \Theta(d^2/\delta^2)$.
		
		\item Note that we already have,
		\[ \inangle{\overline{A}, \overline{B}}_{\calG_\rho^{\otimes nt}} \ = \ \inangle{A, B}_{\calG_\rho^{\otimes n}}\;. \]
		And combining Part 3 and \Cref{lem:close-strategies}, we immediately get that
		\[\inabs{\inangle{\overline{A}^{\mathrm{ml}}, \overline{B}^{\mathrm{ml}}}_{\calG_\rho^{\otimes nt}} - \inangle{\overline{A}, \overline{B}}_{\calG_\rho^{\otimes nt}}} ~\le~ \delta\] 
		where we use that $\norm{2}{\overline{B}^{\ml}} \le \norm{2}{\overline{B}} \le 1$ and $\norm{2}{\overline{A}^{\ml}} \le \norm{2}{\overline{A}} \le 1$.
	\end{enumerate}
\end{proof}

\noindent We are now able to prove \Cref{lem:multilin_main}.

\begin{proof}[Proof of \Cref{lem:multilin_main}]
	We apply the transformation in \Cref{lem:non-ml-to-ml}, with parameter $\delta$ being $\delta/\sqrt{k}$, to each of the $k$-coordinates of $A : \bbR^n \to \bbR^k$ and $B : \bbR^n \to \bbR^k$ to get function $A^{(1)} : \bbR^{nt} \to \bbR^k$ and $B^{(1)} : \bbR^n \to \bbR^k$. Namely, for any $j \in [k]$, we set $A^{(1)}_j(\overline{\bX}) = \overline{A}^{\ml}_j(\overline{\bX})$ and $B^{(1)}_j(\overline{\bY}) = \overline{B}^{\ml}_j(\overline{\bY})$as described in \Cref{lem:non-ml-to-ml}.
	
	It is easy to see that parts 1, 2, 4, 5 follow immediately from the conditions satisfied in \Cref{lem:non-ml-to-ml}. For part 3, we note that we have that $\norm{2}{\overline{A}^{\ml}_j - \overline{A}_j} \le \delta/\sqrt{k}$ for every $j \in [k]$. Thus, combining these for all $j \in [k]$, we get that $\norm{2}{\overline{A}^{\ml} - \overline{A}} \le \delta$. Now, using \Cref{lem:close-to-simplex}, we immediately get that,
	\[\norm{2}{\calR(\overline{A}^{\ml}) - \overline{A}^{\ml}} \le \norm{2}{\calR(\overline{A}) - \overline{A}} + \delta\;.\]
	Finally, we observe that $\norm{2}{\calR(\overline{A}) - \overline{A}} = \norm{2}{\calR(A) - A}$, to conclude that 
	\[\norm{2}{\calR(A^{(1)}) - A^{(1)}} \le \norm{2}{\calR(A) - A} + \delta\;.\]
	Similarly, $\norm{2}{\calR(B^{(1)}) - B^{(1)}} \le \norm{2}{\calR(B) - B} + \delta$. This concludes the proof.
	
\end{proof}

\section{Regularity Lemma for low-degree functions} \label{sec:regularity}
In this section we state and prove a regularity lemma that we need for proving \Cref{th:non-int-sim}, i.e. non-interactive simulation from discrete sources. Our regularity lemma follows immediately from the version stated in \cite{GKS_NIS_decidable}, which was inspired from \cite{diakonikolas2010regularity}.

We begin by first recalling the basic notions of influences and partial restrictions of functions over product spaces. 

\begin{definition}[Influence]
	For every coordinate $i \in [n]$, $\Inf_i(f)$ is the $i$-th influence of $f$, and $\Inf(f)$ is the total influence, which are defined as
	$$\Inf_i(f) \defeq \Ex_{\bx_{-i}} \insquare{\Var_{x_i} \, [f(\bx)]} \quad \quad \quad \Inf(f) \defeq \sum\limits_{i=1}^n \Inf_i(f)$$
\end{definition}

\noindent The basic properties of influence are summarized in the following fact.

\begin{fact}\label{fact:influences}
	For any function $f \in L^2(\calZ^n, \mu_A^{\otimes n})$, we have the following:
	\begin{enumerate}
		\item[(i)] $\Inf_i(f) = \sum\limits_{\bsigma : \sigma_i \ne \mathbf{0}} \what{f}(\bsigma)^2$ and hence, for all $i$, $\Inf_i(f) \le \Var(f)$
		\item[(ii)] $\Inf(f) = \sum\limits_{\bsigma} |\bsigma| \cdot \what{f}(\bsigma)^2$
		\item[(iii)] If $\deg(f) = d$, then $\Inf(f) \le d \cdot \Var[f]$.
	\end{enumerate}
\end{fact}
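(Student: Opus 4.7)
The plan is to prove all three parts by working in the orthonormal Fourier basis $\{\calX_\bsigma\}_{\bsigma \in \bbZ_q^n}$ of $L^2(\calZ^n, \mu_A^{\otimes n})$ introduced in Section~2.2, exploiting that $\calX_0 \equiv \mathbf{1}$ and that the basis is tensor-structured across coordinates. The key observation powering everything is that fixing all coordinates but one reduces the Fourier expansion to a univariate one, so Parseval on the remaining coordinate computes the conditional variance cleanly.

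For part (i), I would fix $\bx_{-i}$ and regroup the expansion $f(\bx) = \sum_\bsigma \what{f}(\bsigma)\calX_\bsigma(\bx)$ according to whether $\sigma_i = 0$. Since $\{\calX_{\sigma_i}\}_{\sigma_i \in \bbZ_q}$ is orthonormal in $L^2(\calZ,\mu_A)$ with $\calX_0 \equiv \mathbf{1}$, the terms with $\sigma_i = 0$ collapse into $\Ex_{x_i}[f(\bx)]$ while the terms with $\sigma_i \ne 0$ contribute to $\Var_{x_i}[f(\bx)]$. Specifically, writing $f(\bx) = \sum_{\sigma_i \ne 0}\, g_\bsigma(\bx_{-i})\calX_{\sigma_i}(x_i) + c(\bx_{-i})$ where $g_\bsigma(\bx_{-i}) = \sum_{\bsigma' : \sigma'_i = \sigma_i}\what{f}(\bsigma')\calX_{\bsigma'_{-i}}(\bx_{-i})$, Parseval over $x_i$ gives $\Var_{x_i}[f(\bx)] = \sum_{\sigma_i \ne 0} g_\bsigma(\bx_{-i})^2$. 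Taking $\Ex_{\bx_{-i}}$ and invoking Parseval a second time on the remaining coordinates (using orthonormality across the product structure) yields $\Inf_i(f) = \sum_{\bsigma : \sigma_i \ne 0}\what{f}(\bsigma)^2$. The bound $\Inf_i(f) \le \Var(f) = \sum_{\bsigma \ne \mathbf{0}}\what{f}(\bsigma)^2$ is then immediate.

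Part (ii) is a straightforward double-counting argument: summing the identity in (i) over $i \in [n]$ and swapping the order of summation, each $\what{f}(\bsigma)^2$ is counted exactly once for every coordinate $i$ with $\sigma_i \ne 0$, which by definition of $|\bsigma|$ gives $\Inf(f) = \sum_\bsigma |\bsigma|\cdot \what{f}(\bsigma)^2$. Finally, part (iii) is a one-line consequence of (ii): if $\deg(f) = d$, then $\what{f}(\bsigma) = 0$ whenever $|\bsigma| > d$, so $\Inf(f) = \sum_{0 < |\bsigma| \le d} |\bsigma|\cdot \what{f}(\bsigma)^2 \le d \sum_{\bsigma \ne \mathbf{0}}\what{f}(\bsigma)^2 = d\cdot \Var(f)$. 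There is no substantive obstacle here; the only care required is correctly invoking the tensor structure and the normalization $\calX_0 \equiv \mathbf{1}$ when reducing to a single-coordinate Parseval in part (i).
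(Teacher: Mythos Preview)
Your proof is correct. The paper states this as a standard fact without proof, so there is no argument to compare against; your Parseval-plus-double-counting approach is exactly the canonical derivation one would supply if asked.
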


\noindent In the regularity lemma, we deal with restrictions of polynomials. For any subset $H \subseteq [n]$, we will use $\bx_H$ to denote the tuple of variables in $\bx$ with indices in $H$. For any function $P \in L^2(\calZ^n, \mu^{\otimes n})$, and any $\xi \in \calZ^{H}$, we will use $P^{\xi}$ to denote the function obtained by restriction of $\bx_H$ to $\xi$, that is, $P^{\xi}(\bx_{T}) = P(\bx_H \gets \xi, \bx_T)$ (where $T = [n] \setminus H$); whenever we use such terminology, the subset $H$ will be clear from context.	We will use the phrase ``$\xi$ fixes $H$ over $\calA$'' to mean such a restriction. We will use $\set{\bsigma_H}$ to denote all degree sequences in $\bbZ_q^H$, and similarly $\set{\bsigma_T}$ to denote all degree sequences in $\bbZ_q^T$. We use $\bsigma_H \circ \bsigma_T$ to denote $\bsigma \in \bbZ_q^n$ such that $\sigma_i = (\sigma_H)_i \text{ if } i \in H \text{ or } (\sigma_T)_i \text{ if } i \in T$.\\

\noindent We now state our main Joint Regularity Lemma.

\begin{lem}[Joint Regularity Lemma]\label{lem:joint_regularity}
	Let $(\calZ \times \calZ, \mu)$ be a joint probability space. Let $k$, $d \in \bbN$ and $\tau > 0$ be any given constant parameters. There exists $h \defeq h((\calZ \times \calZ, \mu), k, d, \tau)$ such that the following holds:
	
	For all degree-$d$ polynomials $P : \calZ^n \to \bbR^k$ and $Q : \calZ^n \to \bbR^k$ such that, $\Var[P_{j}] \le 1$ and $\Var[Q_{j}] \le 1$ for all $j \in [k]$, there exists a subset of indices $H \subseteq [n]$ with $|H| \le h$ such that with probability at least $(1-\tau)$ over the assignment $(\xi_A, \xi_B) \sim \mu^{\otimes h}$, the following holds for any $j \in [k]$ (where we denote $T = [n] \setminus H$),
	\begin{itemize}
		\item the restriction $P_{j}^{\xi_A}(x_T)$ is such that for all $i \in T$, it holds that $\Inf_i(P_{j}^{\xi_A}(x_T)) \le \tau$,
		\item the restriction $Q_{j}^{\xi_B}(y_T)$ is such that for all $i \in T$, it holds that $\Inf_i(Q_{j}^{\xi_B}(y_T)) \le \tau$.
	\end{itemize}
	\noindent In particular, one may take $h = \frac{dk^2}{\tau} \cdot \inparen{\frac{C_4(\alpha)}{\alpha} \log \frac{C_4(\alpha) \cdot k}{\alpha \cdot d \cdot \tau}}^{O(d)}$ which is a constant that depends on $k$, $d$, $\tau$ and $\alpha \defeq \alpha(\mu)$, which is the minimum non-zero probability in $\mu$.
\end{lem}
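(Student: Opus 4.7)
The plan is to reduce the joint statement to a single-function regularity lemma applied in parallel to each of the $2k$ coordinate polynomials $P_1, \ldots, P_k, Q_1, \ldots, Q_k$. This is essentially the strategy used in \cite{GKS_NIS_decidable} for the case $k=2$, and the key observation is that the influence of $P_j^{\xi_A}$ only depends on the marginal $\mu_A$ of $\xi_A$ (and similarly for $Q_j^{\xi_B}$ on $\mu_B$), so the joint distribution $\mu$ plays no role in the influence analysis itself; it only enters through the $\alpha(\mu)$ parameter controlling hypercontractivity.

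First, I would invoke the single-function regularity lemma (as developed in \cite{diakonikolas2010regularity} and adapted to general discrete product spaces in \cite{GKS_NIS_decidable}) on each $P_j$ with degree $d$, influence threshold $\tau/2$, and failure probability $\tau/(4k)$. This produces a set $H_j^P \subseteq [n]$ with $|H_j^P| \le h'(d, \tau, \alpha, k)$ such that, with probability at least $1 - \tau/(4k)$ over $\xi_A \sim \mu_A^{\otimes H_j^P}$, every $i \notin H_j^P$ satisfies $\Inf_i(P_j^{\xi_A}) \le \tau/2$. I would do the same for each $Q_j$, obtaining sets $H_j^Q$. Taking $H := \bigcup_{j \in [k]} (H_j^P \cup H_j^Q)$ yields $|H| \le 2k \cdot h'$, which matches the claimed bound on $h$ after substituting the quantitative bound from the single-function lemma.

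The main obstacle is to argue that restricting the \emph{larger} set $H$, rather than just $H_j^P$, preserves the low-influence property for each $P_j$. The subtlety is that $P_j^{\xi_A|_H}$ can be viewed as further restricting $P_j^{\xi_A|_{H_j^P}}$ by $\xi_A|_{H \setminus H_j^P}$, and while $\Ex_{\xi_A|_{H \setminus H_j^P}}[\Inf_i(P_j^{\xi_A|_H})] = \Inf_i(P_j^{\xi_A|_{H_j^P}})$ in expectation, the maximum over $i$ could in principle inflate on specific further restrictions. To handle this, I would use the stronger form of the single-function lemma which provides a bound on $\Ex_{\xi}[\sum_{i \notin H_j^P} \Inf_i(P_j^{\xi})^2]$ (derived from hypercontractivity for degree-$d$ polynomials, yielding the $(C_4(\alpha)/\alpha)^{O(d)}$ factor), and then apply Markov's inequality twice: once over the further restriction $\xi_A|_{H \setminus H_j^P}$ to bound the max influence past $\tau$ by $\tau/(4k)$, and once as a union bound across the $2k$ polynomials.

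Combining the failure probabilities across the two Markov steps and the $2k$ polynomials, the joint regularity property fails with probability at most $2k \cdot 2 \cdot \tau/(4k) = \tau$, as required. The quantitative bound $h \le 2k \cdot h'$ with $h' = \frac{d}{\tau} \cdot (\frac{C_4(\alpha)}{\alpha} \log \frac{C_4(\alpha) k}{\alpha d \tau})^{O(d)}$ then gives the stated $\frac{dk^2}{\tau} \cdot (\frac{C_4(\alpha)}{\alpha} \log \frac{C_4(\alpha) k}{\alpha d \tau})^{O(d)}$ after accounting for the $k$ factor absorbed into $\tau' = \tau/(4k)$.
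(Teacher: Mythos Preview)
Your high-level strategy matches the paper exactly: apply the single-function regularity lemma to each of the $2k$ coordinate polynomials $P_1,\ldots,P_k,Q_1,\ldots,Q_k$ with parameter $\tau/(2k)$, take $H$ to be the union of the resulting sets, and conclude by a union bound over the $2k$ events.

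The difference is in how the ``superset obstacle'' is handled. You flag as the main difficulty that restricting the larger set $H \supseteq H_j^P$ might destroy the low-influence guarantee obtained for $H_j^P$, and you propose to patch this via a second-moment bound on $\sum_i \Inf_i^2$ combined with two layers of Markov. In the paper this obstacle does not arise at all: the single-function lemma it quotes (\Cref{lem:single_reg_lem}, from \cite{GKS_NIS_decidable}) is stated in the stronger form that for \emph{any} superset $H \supseteq H_0$, a random restriction of $H$ already yields low influence on the complement with probability $1-\tau$. With that form in hand, taking $H$ to be the union is immediate and no further argument is needed. Your workaround is therefore unnecessary; moreover, as sketched it is a bit loose (controlling $\max_i \Inf_i$ via $\sum_i \Inf_i^2$ and Markov requires care about how many coordinates $i$ you are implicitly union-bounding over), whereas the superset clause sidesteps this entirely.
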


The proof of the Joint Regularity lemma follows quite easily by applying the Regularity Lemma for degree-$d$ polynomials (cf. Lemma 5.2 in \cite{GKS_NIS_decidable}).

\begin{lem}[Regularity Lemma for degree-$d$ functions]\label{lem:single_reg_lem}
	Let $(\calZ, \mu_A)$ be a probability space. Let $d \in \bbN$ and $\tau > 0$ be any given constant parameters. There exists $h \defeq h((\calZ, \mu_A), d, \tau)$ such that the following holds:
	
	For all degree-$d$ multilinear polynomials $P \in L^2(\calZ^n, \mu_{A}^{\otimes n})$ with $\Var[P] \le 1$, there exists a subset of indices $H_0 \subseteq [n]$ with $|H_0| \le h$, such that for any superset $H \supseteq H_0$, the restrictions of $P$ obtained by evaluating the coordinates in $H$ according to distribution $\mu_A$, satisfies the following (where we denote $T = [n] \setminus H$):
	$$\Pr\limits_{\xi \sim \mu_A^{\otimes |H|}} \insquare{\forall i \in T : \Inf_i(P^{\xi}(x_T)) \le \tau} \ge 1-\tau$$
	In other words, with probability at least $1-\tau$ over the random restriction $\xi \sim \mu_{A}^{\otimes |H|}$, the restricted function $P^{\xi}(x_T)$ is such that $\Inf_i(P^{\xi}(x_T)) \le \tau$ for all $i \in T$.
	
	\noindent In particular, one may take $h = \frac{d}{\tau} \cdot \inparen{\frac{C_4(\alpha)}{\alpha} \log \frac{C_4(\alpha)}{\alpha \cdot d \cdot \tau}}^{O(d)}$ which is a constant that depends on $d$, $\tau$ and $\alpha \defeq \alpha(\mu_A)$.
\end{lem}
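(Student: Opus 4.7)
The plan is to take $H_0$ to be, statically, the set of coordinates of $P$ whose global influence exceeds an appropriately chosen threshold $\tau' \ll \tau$. The key observation making this work is the identity $\Ex_\xi [\Inf_i(P^\xi)] = \Inf_i(P)$ for any $H \not\ni i$ and $\xi \sim \mu_A^{\otimes |H|}$, which follows from the law of total variance: the expected residual influence of $i$ after any restriction equals its global influence in $P$. This identity means that the ``bad'' coordinates (those with large expected residual influence) are exactly the globally influential coordinates of $P$, independently of which superset $H \supseteq H_0$ is chosen.

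First I would define $H_0 \defeq \{i \in [n] : \Inf_i(P) > \tau'\}$. By \Cref{fact:influences} and the degree-$d$, unit-variance hypotheses on $P$, we have $\sum_i \Inf_i(P) \le d \cdot \Var(P) \le d$, so $|H_0| \le d/\tau'$, which will give the target upper bound on $h$ once $\tau'$ is pinned down. Now fix any $H \supseteq H_0$ and set $T = [n] \setminus H$; for every $i \in T$ we have $\Ex_\xi[Y_i(\xi)] = \Inf_i(P) \le \tau'$, where $Y_i(\xi) \defeq \Inf_i(P^\xi)$ with $\xi \sim \mu_A^{\otimes |H|}$. It remains to upgrade this mean bound into a tail bound whose sum over $i \in T$ is at most $\tau$.

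The central tool will be a hypercontractive moment bound of the form $\Ex_\xi[Y_i^p] \le (K_p \cdot \Ex_\xi Y_i)^p$ for each integer $p \ge 2$, with $K_p = (C_4(\alpha)/\alpha)^{O(d)} \cdot (p-1)^{O(d)}$. To obtain this, I would expand $\what{P^\xi}(\bsigma_T) = \sum_{\bsigma_H} \what{P}(\bsigma_H \circ \bsigma_T) \calX_{\bsigma_H}(\xi)$ as a polynomial in $\xi$ of degree $\le d$, then write $Y_i(\xi) = \sum_{\bsigma_T : (\sigma_T)_i \ne 0} \what{P^\xi}(\bsigma_T)^2$ as a sum of squares of such polynomials, expand the $p$-th power using H\"older, and apply $(2p,2)$-hypercontractivity $\|Q\|_{2p} \le (C_4(\alpha)\sqrt{2p-1})^d \|Q\|_2$ to each factor. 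The $1/\alpha$ in the stated $h$-bound enters through the pointwise control $\|\calX_\bsigma\|_\infty \le \alpha^{-|\bsigma|/2}$ needed to move between the orthonormal basis and absolute bounds in the hypercontractive argument. The main obstacle will be carrying out the combinatorial expansion of $\Ex[Y_i^p]$ cleanly enough that only a single $d$-fold exponentiation in $C_4(\alpha)/\alpha$ appears in $K_p$; a Cauchy--Schwarz pairing of the summands $D_{\bsigma_T} = \what{P^\xi}(\bsigma_T)$ and a crude bound $\Ex[D_j^2 D_k^2] \le C_4(\alpha)^{4d}\, \Ex[D_j^2]\,\Ex[D_k^2]$ already handles the $p=2$ case and generalizes inductively.

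Finally I would apply Markov to the $p$-th moment and sum over $i \in T$:
\[
\sum_{i \in T} \Pr_\xi[Y_i > \tau] ~\le~ \sum_{i \in T} \frac{(K_p \cdot \Inf_i(P))^p}{\tau^p} ~\le~ \left(\frac{K_p}{\tau}\right)^p (\tau')^{p-1} \sum_{i \in T} \Inf_i(P) ~\le~ \frac{d}{\tau'} \cdot \left(\frac{K_p \, \tau'}{\tau}\right)^p,
\]
using $\Inf_i(P) \le \tau'$ on $T$ and $\sum_i \Inf_i(P) \le d$. Setting $\tau' \defeq \tau/(2K_p)$ collapses the parenthesized factor to $1/2$, leaving $(d/\tau') \cdot 2^{-p} \le \tau$, which is solved by choosing $p = \Theta\bigl(d \log(C_4(\alpha)/\alpha) + \log(d/\tau)\bigr)$. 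Back-substituting yields $h = |H_0| \le d/\tau' = \frac{d}{\tau} \cdot \bigl(\tfrac{C_4(\alpha)}{\alpha}\log \tfrac{C_4(\alpha)}{\alpha d \tau}\bigr)^{O(d)}$, exactly as stated. A standard union-bound and monotonicity check then verifies that the guarantee persists for every superset $H \supseteq H_0$, since enlarging $H$ only re-averages $Y_i$ over a strictly finer partition and preserves the identity $\Ex_\xi Y_i = \Inf_i(P)$.
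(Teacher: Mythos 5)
The paper does not actually prove this lemma --- it imports it verbatim from \cite{GKS_NIS_decidable} (Lemma 5.2 there) --- so there is no in-paper proof to compare against; your proposal is, however, essentially the standard argument used in that source and it is sound. All the load-bearing pieces are correct: $H_0$ is the set of coordinates with $\Inf_i(P)>\tau'$, so $|H_0|\le d/\tau'$ by $\sum_i\Inf_i(P)\le d\,\Var(P)\le d$; the identity $\Ex_\xi[\Inf_i(P^\xi)]=\Inf_i(P)$ holds for every $i\notin H$ by Parseval applied to $\what{P^\xi}(\bsigma_T)=\sum_{\bsigma_H}\what{P}(\bsigma_H\circ\bsigma_T)\calX_{\bsigma_H}(\xi)$ (and is uniform in the choice of superset $H\supseteq H_0$, so no separate monotonicity argument is even needed); and the moment bound follows most cleanly by Minkowski rather than H\"older, namely $\|Y_i\|_p\le\sum_{\bsigma_T}\|D_{\bsigma_T}\|_{2p}^2\le C_{2p}(\alpha)^{2d}\,\Ex[Y_i]$, after which your Markov-plus-union-bound calculation goes through as written. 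The only slack is quantitative: since you must take $p=\Theta(d\log(\cdot)+\log(d/\tau))$ and $K_p$ grows like $(p/\alpha)^{O(d)}$, back-substitution yields an extra $d^{O(d)}$ factor relative to the displayed bound on $h$; this does not match the stated constant ``exactly'' but is immaterial for every use of the lemma in the paper.
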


\noindent \Cref{lem:joint_regularity} follows quite easily from the above lemma.

\begin{proof}[Proof of \Cref{lem:joint_regularity}]
	We invoke the Regularity Lemma for individual degree-$d$ polynomials \Cref{lem:single_reg_lem}, namely $P_{j}$'s and $Q_{j}$'s and then applying a union bound.
	
	In particular, given our parameters $d$ and $\tau$, we invoke \Cref{lem:single_reg_lem} with parameters $d$ and $\tau/2k$ for each $P_{j}$ and $Q_{j}$. Suppose we get the set $H_A^{(j)}$ (resp. $H_B^{(j)}$) when applying the regularity lemma on $P_{j}$ (resp. $Q_{j}$). We let $H = \bigcup_{j \in [k]} \inparen{H_A^{(j)} \cup H_B^{(j)}}$. Note that, $|H| \le 2k \cdot \frac{2dk}{\tau} \cdot \inparen{\frac{C_4(\alpha)}{\alpha} \log \frac{C_4(\alpha) \cdot k}{\alpha \cdot d \cdot \tau}}^{O(d)}$.
	
	\Cref{lem:single_reg_lem} gives us that for this $H$, it holds for any $P_{j}$ and $Q_{j}$ that,
	\[ \Pr\limits_{\xi_A \sim \mu_A^{\otimes |H|}} \insquare{\forall i \in T : \Inf_i\inparen{P_{j}^{\xi_A}(x_T)} \le \tau/2k} \ge 1-(\tau/2k) \]
	\[ \Pr\limits_{\xi_B \sim \mu_B^{\otimes |H|}} \insquare{\forall i \in T : \Inf_i\inparen{Q_{j}^{\xi_B}(y_T)} \le \tau/2k} \ge 1-(\tau/2k) \]
	Taking a union bound over $2k$ such statements in total, we get the conclusion we desire.
	
\end{proof}

\section{Invariance Principle} \label{sec:invariance}

In this section, we prove an invariance principle statement tailor-made for our application, that follows readily as a special case of known invariance principle statements \cite{mossel2005noise, mossel2010gaussianbounds, isaksson2012maximally}. In particular, we desire a statement as follows.

\begin{lem}\label{lem:our_invariance}
	Let $(\calZ \times \calZ, \mu)$ be a finite joint probability space, such that $|\calZ| = q$ and $\alpha := \alpha(\mu) > 0$ is the minimum probability of any atom in $\mu$. Given parameters $k, d \in \bbN$ and $\delta > 0$, there exists $\tau = \tau((\calZ \times \calZ, \mu), k, d, \delta)$ such that the following holds:
	
	Let $A : \calZ^n \to \bbR^k$ and $B : \calZ^n \to \bbR^k$ be degree-$d$ multilinear polynomials, such that, $\Var(A_j), \Var(B_j) \le 1$ and $\Inf_\ell(A_j), \Inf_\ell(B_j) \le \tau$ for all $\ell \in [n]$ and $j \in [k]$. Then, there exist degree-$d$ multilinear polynomials $\wtilde{A} : \bbR^{n \cdot (q-1)} \to \bbR^k$ and $\wtilde{B} : \bbR^{n \cdot (q-1)} \to \bbR^k$, such that, for all $i, j \in [k]$, it holds that,
	\begin{equation}\label{eqn:our_invariance}
		\inabs{\inangle{\calR_i(\wtilde{A}), \calR_j(\wtilde{B})}_{\calG_\rho^{\otimes n (q-1)}} ~-~ \inangle{\calR_i(A), \calR_j(B)}_{\mu^{\otimes n}}} \le \delta\;,
	\end{equation}
	where $\rho = \rho(\calZ, \calZ; \mu)$ is the maximal correlation of $\mu$. In particular, one may take $\tau =  O\inparen{\frac{\delta^{1.5} \cdot \alpha^{d/2}}{2^{O(d)} \cdot 2^{O(k)}}}$.
	
	Additionally, the theorem also works in reverse, namely, given degree-$d$ multilinear polynomials $\wtilde{A} : \bbR^{n} \to \bbR^k$ and $\wtilde{B} : \bbR^{n} \to \bbR^k$, such that $\Var(\wtilde{A}_j), \Var(\wtilde{B}_j) \le 1$ and  $\Inf_\ell(\wtilde{A}_j), \Inf_\ell(\wtilde{B}_j) \le \tau$ for all $\ell \in [n]$ and $j \in [k]$, there exist $A : \calZ^n \to \bbR^k$ and $B : \calZ^n \to \bbR^k$, such that \Cref{eqn:our_invariance} holds.
\end{lem}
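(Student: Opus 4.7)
} The strategy is to realize $\wtilde{A}$ and $\wtilde{B}$ as the natural ``Gaussian analogues'' of $A$ and $B$ obtained by substituting the correlation-diagonalizing Fourier basis with matched Hermite variables, and to conclude via the multidimensional invariance principle of Mossel \cite{mossel2010gaussianbounds} and Isaksson--Mossel \cite{isaksson2012maximally} applied to the Lipschitz test function $\Psi_{i,j}(\bu, \bv) \defeq \calR_i(\bu) \cdot \calR_j(\bv)$.

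The first step is to fix the matched orthonormal bases $\calX_0, \ldots, \calX_{q-1}$ for $L^2(\calZ, \mu_A)$ and $\calY_0, \ldots, \calY_{q-1}$ for $L^2(\calZ, \mu_B)$ with $\calX_0 = \calY_0 = \mathbf{1}$ and $\inangle{\calX_s, \calY_t}_\mu = \rho_s \cdot \indicator_{s = t}$ where $\rho_1 = \rho(\mu) =: \rho$ and $|\rho_s| \le \rho$ for all $s \ge 1$; the existence of such a singular-value-decomposition basis is standard. Writing $A(\bx) = \sum_{\bsigma} \what{A}(\bsigma) \calX_\bsigma(\bx)$ and analogously for $B$, define on each of the $n$ coordinate blocks a vector of $q-1$ Gaussian pairs $(Z^A_{i,s}, Z^B_{i,s})_{s \in [q-1]}$ sampled i.i.d.\ from $\calG_\rho$ across $(i,s)$, giving the target space $\calG_\rho^{\otimes n(q-1)}$. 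I then set
\[
\wtilde{A}(\bZ^A) \defeq \sum_{\bsigma} \what{A}(\bsigma) \prod_{i : \sigma_i \ne 0} \alpha_{\sigma_i} \cdot Z^A_{i, \sigma_i}, \qquad \wtilde{B}(\bZ^B) \defeq \sum_{\bsigma} \what{B}(\bsigma) \prod_{i : \sigma_i \ne 0} \alpha_{\sigma_i} \cdot Z^B_{i, \sigma_i},
\]
where the scalars $\alpha_s \in [0,1]$ are chosen so that $\alpha_s^2 \cdot \rho = \rho_s$ (well-defined since $|\rho_s| \le \rho$; signs are absorbed into the basis). This makes $\wtilde{A}, \wtilde{B}$ multilinear of degree $d$, and a direct expansion shows that every degree-$\bsigma$ Fourier/Hermite pair contributes exactly $\prod_i \rho_{\sigma_i} \cdot \what{A}(\bsigma) \what{B}(\bsigma)$ to $\inangle{\wtilde{A}, \wtilde{B}}_{\calG_\rho^{\otimes n(q-1)}}$, which matches $\inangle{A,B}_{\mu^{\otimes n}}$ exactly. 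More importantly, the $(i,s)$-th influence of each coordinate of $\wtilde{A}$ is at most $\alpha_s^2 \cdot \Inf_i(A_j) \le \Inf_i(A_j) \le \tau$, so low-influence is inherited.

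The third step is to invoke the vector-valued invariance principle of \cite{mossel2010gaussianbounds, isaksson2012maximally} for the Lipschitz test function $\Psi_{i,j}$. Since $\calR$ maps into $\Delta_k$, the product $\calR_i(\bu) \cdot \calR_j(\bv)$ is bounded in $[0,1]$ and $2$-Lipschitz, so the principle gives
\[
\left| \Ex_{\mu^{\otimes n}} \Psi_{i,j}(A, B) - \Ex_{\calG_\rho^{\otimes n(q-1)}} \Psi_{i,j}(\wtilde{A}, \wtilde{B}) \right| ~\le~ C_k \cdot \tau^{\Omega(1/d)} \cdot \inparen{\tfrac{1}{\alpha}}^{O(d)} \cdot 2^{O(k)},
\]
for an absolute constant $C_k$ depending on the Lipschitz constant, which is $\le \delta$ for the stated choice $\tau = O\bigl(\delta^{1.5} \alpha^{d/2} / (2^{O(d)} 2^{O(k)}) \bigr)$ after rebalancing. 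Summing the $k^2$ pairs $(i,j)$ and combining with the definition of $\inangle{\calR_i(\cdot), \calR_j(\cdot)}$ yields the desired bound. The reverse direction, starting from low-influence multilinear $\wtilde{A}, \wtilde{B}$ on Gaussian space and producing discrete $A, B$, proceeds by the symmetric application of the principle using the same matched basis and the inverse scalings $\alpha_s^{-1}$ on coefficients of non-trivial frequencies; low-influence is again preserved since the scalings are bounded.

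The main obstacle is purely bookkeeping: making sure that the scalar adjustments $\alpha_s$ do not inflate either the variances (which would hurt the Lipschitz invariance bound) or the influences (which would violate the hypothesis of the invariance principle). Because $|\alpha_s| \le 1$, both quantities only shrink under the substitution, so the low-influence regime is preserved intact; this is the key reason one can route through $\calG_\rho^{\otimes n(q-1)}$ rather than a product of heterogeneously correlated Gaussians, and it is the only nontrivial observation beyond a direct invocation of \cite{mossel2010gaussianbounds, isaksson2012maximally}.
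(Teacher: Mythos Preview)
Your high-level framework matches the paper's: fix the SVD basis with $\inangle{\calX_s, \calY_t}_\mu = \rho_s \,\indicator_{s=t}$, view $(A,B)$ as a single $2k$-dimensional multilinear polynomial, and apply the Isaksson--Mossel vector-valued invariance principle with the Lipschitz test function $\Psi_{i,j}(\bu,\bv) = \calR_i(\bu)\,\calR_j(\bv)$. The divergence is in how you convert from the Gaussian ensemble that invariance actually hands you---which has heterogeneous cross-correlations $\rho_s$---to the uniformly $\rho$-correlated target space $\calG_\rho^{\otimes n(q-1)}$.

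Your two-sided scaling by $\alpha_s = \sqrt{\rho_s/\rho}$ does not work here. The invariance principle (\Cref{lem:invariance}) compares the \emph{same} multilinear polynomial on two ensembles with \emph{identical second-moment structure}. If you regard your $\wtilde A$ as $A$ evaluated on the ensemble $\{\alpha_s Z^A_{\ell,s}\}$, the cross-covariance is indeed $\alpha_s^2 \rho = \rho_s$, but the marginal variance is $\alpha_s^2$, not $1$; since $\calX_s$ has unit variance, the covariance structures do not match and the principle does not apply between $(A,B)$ on $\mu^{\otimes n}$ and $(\wtilde A,\wtilde B)$ on $\calG_\rho^{\otimes n(q-1)}$. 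Concretely, your $\wtilde A_j$ has strictly smaller variance than $A_j$ whenever some $\rho_s < \rho$, so the marginal laws of $\calR(\wtilde A)$ and $\calR(A)$ need not be close; the identity $\inangle{\wtilde A, \wtilde B}_{\calG_\rho} = \inangle{A,B}_\mu$ you verify is a purely bilinear fact and does not survive composition with the nonlinear rounding map $\calR$. The paper instead first applies invariance to land on Gaussians with the matching covariances $\rho_s$, and only then passes to $\calG_\rho$ via a \emph{one-sided} adjustment, applying $U_{\rho_s/\rho}$ in each variable to Bob's polynomial only, so that all marginals remain standard Gaussian. Your reverse direction has a separate problem: the scalings $\alpha_s^{-1}$ are unbounded as $\rho_s \to 0$, so the claim that ``low-influence is again preserved since the scalings are bounded'' is false. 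The paper avoids this by mapping each Gaussian coordinate to the single top character $\calX_1$ (respectively $\calY_1$), whose correlation is exactly $\rho$, so no scaling is needed at all.
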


\noindent The proof is pretty standard, nevertheless we provide a proof for completeness. We use the vector-valued invariance principle from \cite{isaksson2012maximally}, which builds on \cite{mossel2005noise, mossel2010gaussianbounds}. We state a version that is more tailored to our application, modified from \cite[Theorem 3.4]{isaksson2012maximally}.

\begin{lem}[Invariance Principle (cf. \cite{isaksson2012maximally})] \label{lem:invariance}
	Let $(\Omega^n, \mu^{\otimes n})$ be a finite probability space, such that $\alpha > 0$ is the minimum probability of any atom in $\mu$. Let $\calF = (\calF_1, \ldots, \calF_n)$ be an independent sequence of orthonormal ensembles such that $\calF_\ell$  is a basis for functions $\Omega \to \bbR$, for any $\ell \in [n]$. Let $P$ be a $K$-dimensional multilinear polynomial such that for every $j \in [K]$, it holds that $\Var(P_j) \le 1$, $\deg(P_j) \le d$ and $\Inf_i (P_j) \le \tau$ for every $i \in [n]$. Finally, let $\Psi: \bbR^K \to \bbR$ be Lipschitz continuous with Lipschitz constant $L$. Then,
	\[ \inabs{\Ex_{\calF} \Psi(P(\calF)) - \Ex_{\calG} \Psi(P(\calG))} ~\le~ D_K \cdot L \cdot \inparen{d (8/\sqrt{\alpha})^d \sqrt{\tau}}^{1/3}\;,\]
	where $\calG$ is an independent sequence of Gaussian ensembles with same covariance structure as $\calF$ and $D_K = 2^{O(K)}$.\footnote{Note that the lemma stated in \cite{isaksson2012maximally} does not state the explicit bound of $D_K = 2^{O(K)}$. However, it is possible to infer this bound from their proof.}
	
\end{lem}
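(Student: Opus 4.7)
The plan is to apply the vector-valued invariance principle \Cref{lem:invariance} to the combined $2k$-dimensional multilinear polynomial $P := (A_1,\ldots,A_k, B_1,\ldots,B_k)$ over the product space $(\calZ \times \calZ, \mu)^n$, with the bounded-Lipschitz test function $\Psi_{i,j}(u,v) := \calR_i(u)\cdot\calR_j(v)$ that encodes the rounded bilinear correlations appearing in \Cref{eqn:our_invariance}.

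I would first set up the right orthonormal bases. Choose SVD-compatible bases $\{\calX_0 = 1, \calX_1, \ldots, \calX_{q-1}\}$ of $L^2(\calZ, \mu_A)$ and $\{\calY_0 = 1, \calY_1, \ldots, \calY_{q-1}\}$ of $L^2(\calZ, \mu_B)$ satisfying $\inangle{\calX_i, \calY_j}_\mu = \rho_i \,\delta_{ij}$ with $\rho_1 = \rho$. An orthonormal ensemble for the joint space $L^2(\calZ \times \calZ, \mu)$ is then $\{\calX_i(x), \calZ_j(x,y)\}_{i,j \in [q-1]}$, where $\calZ_j := (\calY_j - \rho_j \calX_j)/\sqrt{1 - \rho_j^2}$ is the Gram--Schmidt correction making the set orthonormal. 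Expanding $A$ in products of $\calX_{\sigma_\ell}$, and $B$ in products of $\calY_{\sigma_\ell} = \rho_{\sigma_\ell}\calX_{\sigma_\ell} + \sqrt{1-\rho_{\sigma_\ell}^2}\,\calZ_{\sigma_\ell}$, both become multilinear polynomials of degree at most $d$ in this joint ensemble, which is the format required by \Cref{lem:invariance}.

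Next I would identify the Gaussian counterpart as functions on $\calG_\rho^{\otimes n(q-1)}$. For $(\ba, \bb) \sim \calG_\rho^{\otimes n(q-1)}$, define the scaled substitution $\wtilde A(\ba) := \sum_\bsigma \what A(\bsigma) \prod_{\ell : \sigma_\ell \neq 0} a_{\ell, \sigma_\ell}$ and $\wtilde B(\bb) := \sum_\bsigma \what B(\bsigma) \prod_{\ell : \sigma_\ell \neq 0} (\rho_{\sigma_\ell}/\rho)\, b_{\ell, \sigma_\ell}$. A direct calculation, using $\Ex[a_{\ell,i} b_{\ell',j}] = \rho\,\delta_{\ell,\ell'}\delta_{i,j}$ and the vanishing of expectations of odd Gaussian monomials, shows that $\Ex[\wtilde A_i\, \wtilde B_j] = \sum_\bsigma \what{A_i}(\bsigma) \what{B_j}(\bsigma) \prod_{\ell : \sigma_\ell \neq 0} \rho_{\sigma_\ell} = \inangle{A_i, B_j}_{\mu^{\otimes n}}$ exactly; that $\Var(\wtilde A_j) \leq \Var(A_j) \leq 1$ and $\Var(\wtilde B_j) \leq \Var(B_j) \leq 1$ (using $(\rho_{\sigma_\ell}/\rho)^2 \leq 1$); and that multilinearity and degree bounds carry over. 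Crucially, every pairwise second moment between the Gaussian ensemble $\{a_{\ell,i},\, (\rho_j/\rho)\, b_{\ell,j}\}$ and the discrete ensemble $\{\calX_i(x_\ell), \calY_j(y_\ell)\}$ agrees, which is the matching required to invoke invariance.

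The invariance bound then lifts these second-moment matches to approximate matches of Lipschitz-test-function expectations. Since $\calR : \bbR^k \to \Delta_k$ is a $1$-Lipschitz projection with image bounded in $[0,1]^k$, the test function $\Psi_{i,j}$ is Lipschitz with constant $O(1)$ and uniformly bounded. Applying \Cref{lem:invariance} to $P$ with $K = 2k$, using the low-influence hypothesis $\Inf_\ell(A_j), \Inf_\ell(B_j) \leq \tau$, yields error at most $2^{O(k)} \cdot \bigl(d \cdot (8/\sqrt{\alpha})^d \sqrt{\tau}\bigr)^{1/3}$; choosing $\tau$ at the stated level bounds this by $\delta$ for every $(i, j)$, which establishes \Cref{eqn:our_invariance}. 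The reverse direction follows symmetrically since \Cref{lem:invariance} compares $\calF$ and $\calG$ in both directions. The main obstacle is justifying that the $(\rho_{\sigma_\ell}/\rho)$-scaled construction --- which produces variances $(\rho_j/\rho)^2 \leq 1$ rather than $1$ on Bob's side --- remains within the scope of the invariance bound; the resolution is that \Cref{lem:invariance} requires only matched second moments of the ensemble (not exact matched marginal distributions of the substituted variables), and the bounded Lipschitz constant of $\Psi_{i,j}$ absorbs the scaling cleanly.
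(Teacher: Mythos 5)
You have proved the wrong statement. The lemma in question is \Cref{lem:invariance} itself --- the Isaksson--Mossel vector-valued invariance principle for a $K$-dimensional low-influence multilinear polynomial $P$ and a Lipschitz test function $\Psi$, with the quantitative bound $D_K\cdot L\cdot\bigl(d(8/\sqrt{\alpha})^d\sqrt{\tau}\bigr)^{1/3}$. Your proposal opens with ``apply the vector-valued invariance principle \Cref{lem:invariance}\dots'', i.e., it invokes as a black box exactly the result it is supposed to establish, and then derives the downstream statement \Cref{eqn:our_invariance}. As a proof of \Cref{lem:invariance} this is circular and contains none of the required content: a genuine proof needs the Lindeberg-style hybrid argument (replace the ensembles $\calF_\ell$ by Gaussian ensembles with matching covariance one coordinate at a time), a Taylor expansion of a smooth mollification of $\Psi$ to third order, control of the third-order error terms via hypercontractivity for spaces with minimum atom $\alpha$ (this is where the $(8/\sqrt{\alpha})^d$ factor originates), summation over coordinates using $\sum_i \Inf_i(P_j) \le d$ together with $\max_i \Inf_i(P_j) \le \tau$, and finally an optimization of the mollification parameter that produces the exponent $1/3$. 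For what it is worth, the paper does not prove this lemma either --- it cites \cite{isaksson2012maximally} (Theorem 3.4 there), noting only in a footnote that the explicit constant $D_K = 2^{O(K)}$ can be extracted from that proof --- so the honest options here are either to cite it or to reproduce the Lindeberg argument; your proposal does neither.

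What you actually wrote is, in substance, a proof of \Cref{lem:our_invariance}, and judged as such it tracks the paper's argument closely: the paper likewise concatenates $A$ and $B$ into a $2k$-dimensional polynomial over the joint ensemble, uses the test function $\Psi(\bu,\bv) = \calR_i(\bu)\cdot\calR_j(\bv)$ with Lipschitz constant $O(1)$, and invokes \Cref{lem:invariance}. The one real difference is how the mismatch between the singular values $\rho_i \le \rho$ of $\mu$ and the uniform correlation $\rho$ of $\calG_\rho$ is repaired: the paper post-composes with the Ornstein--Uhlenbeck operator $U_{\rho_i/\rho}$ on Bob's Gaussian coordinates, whereas you rescale Bob's Hermite coefficients by $\rho_{\sigma_\ell}/\rho$; these are the same operation in disguise, and your observation that the scaling only decreases variances is the correct justification. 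But none of this discharges the obligation to prove \Cref{lem:invariance}.
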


\begin{proof}[Proof of \Cref{lem:our_invariance}]
We will apply \Cref{lem:invariance} on the space $(\Omega^n, \mu^{\otimes n}) = (\calZ^n \times \calZ^n, \mu^{\otimes n})$. We consider the independent sequence of orthonormal ensembles $\calF$ given by $\calF_\ell = \setdef{\calX^{(\ell)}_i(x) \calY^{(\ell)}_j(y)}{i, j \in \set{0, \ldots, q-1}}$ for any $\ell \in [n]$ (where, recall that $|\calZ| = q$). Although as defined $\calF_\ell$ has $q^2$ elements, the polynomials we consider will only depend on the subset of characters $\setdef{\calX^{(\ell)}_1, \ldots, \calX^{(\ell)}_{q-1}, \calY^{(\ell)}_1, \ldots, \calY^{(\ell)}_{q-1}}{\ell \in [n]}$. Also, observe that we can choose the characters $\set{\calX^{(\ell)}_i}$ and $\set{\calY^{(\ell)}_i}$ such that $\inangle{\calX^{(\ell)}_i, \calY^{(\ell)}_j}_\mu = \rho_i \cdot \mathbf{1}_{\set{i=j}}$, where $\rho_1 = \rho(\calZ, \calZ; \mu)$ is the maximal correlation, and $\rho_i \le \rho$ for all $1 \le i \le q-1$.

Given degree-$d$ multilinear polynomials $A : \calZ^n \to \bbR^k$ and $B : \calZ^n \to \bbR^k$, we consider the polynomial $P : \calZ \times \calZ \to \bbR^{2k}$ given by $P = (P_1, \ldots, P_{2k})$, where for $j \le k$, we take $P_j(\bx, \by) = A_j(\bx)$ and for $j > k$, we take $P_j(\bx, \by) = B_{j-k}(\by)$. In other words, $P$ is a {\em concatenation} of $A$ and $B$. It is clear that $P$ is also a degree-$d$ multilinear polynomial in $\calF$. Also, since $A$ and $B$ are such that $\Var(A_j), \Var(B_j) \le 1$ and $\Inf_\ell(A_j), \Inf_\ell(B_j) \le \tau$ for all $\ell \in [n]$ and $j \in [k]$, we have that for all $j \in [2k]$, it holds that, $\Var(P_j) \le 1$ and $\Inf_\ell(P_j) \le \tau$. Thus, $P$ satisfies all the conditions needed to prove \Cref{lem:invariance}.

We will use the test function $\Psi : \bbR^{2k} \to \bbR$, given by $\Psi(\bu,\bv) := \calR_i(\bu) \cdot \calR_j(\bv)$, for any given $i, j \in [k]$. It is easy to show that $\calR(\cdot)$ is a contraction map (since it is rounding to $\Delta_k$ which is a convex body), and hence any coordinate $\calR_i(\cdot)$ has Lipschitz constant of at most $1$. This gives us that our test function $\Psi(\bu, \bv) = \calR_i(\bu) \cdot \calR_j(\bv)$ also has a Lipschitz constant of at most $1$.

We can interpret the invariance principle, as substituting $\setdef{\calX^{(\ell)}_1, \ldots, \calX^{(\ell)}_{q-1}, \calY^{(\ell)}_1, \ldots, \calY^{(\ell)}_{q-1}}{\ell \in [n]}$ by correlated multivariate Gaussians $({\bf g}^{(\ell)}, {\bf h}^{(\ell)}) = (g^{(\ell)}_1, \cdots, g^{(\ell)}_{q-1}, h^{(\ell)}_1, \ldots, h^{(\ell)}_{q-1})$, such that $\Ex g^{(\ell)}_i h^{(\ell)}_j = \rho_i \cdot \mathbf{1}_{\set{i=j}}$ and $\Ex g^{(\ell)}_i g^{(\ell)}_j = \Ex h^{(\ell)}_i h^{(\ell)}_j = \mathbf{1}_{\set{i=j}}$. Thus the invariance principle is taking $P : \calZ^n \times \calZ^n \to \bbR^{2k}$ and producing $P' : \bbR^{(q-1)n} \times \bbR^{(q-1)n} \to \bbR^{2k}$. Note that the first $k$ coordinates of $P$ are polynomials over the subset $\setdef{\calX^{(\ell)}_1, \ldots, \calX^{(\ell)}_{q-1}}{\ell \in [n]}$ and the latter $k$ coordinates are polynomials over $\setdef{\calY^{(\ell)}_1, \ldots, \calY^{(\ell)}_{q-1}}{\ell \in [n]}$. Hence we can interpret the first $k$ coordinates as $A' : \bbR^{(q-1)n} \to \bbR^k$ and latter $k$ coordinates as $B' : \bbR^{(q-1)n} \to \bbR^k$, and \Cref{lem:invariance} gives us that,
\[ \inabs{\inangle{\calR_i(A'), \calR_j(B')}_{\calG^{\otimes n}} - \inangle{\calR_i(A), \calR_j(B)}_{\mu^{\otimes n}}} ~\le~ 2^{O(k)} \cdot \inparen{d (8/\sqrt{\alpha})^d \sqrt{\tau}}^{1/3} ~\le~ \delta \]
for a choice of $\tau \le O\inparen{\frac{\delta^{1.5} \cdot \alpha^{d/2}}{2^{O(d)} \cdot 2^{O(k)}}}$.

We are still not done though! We want $\wtilde{A}$ and $\wtilde{B}$ which act on coorrelated inputs sampled from $\calG_\rho^{\otimes (q-1)n}$, that is, $({\bf g}^{(\ell)}, {\bf h}^{(\ell)}) = (g^{(\ell)}_1, \cdots, g^{(\ell)}_{q-1}, h^{(\ell)}_1, \ldots, h^{(\ell)}_{q-1})$, such that $\Ex g^{(\ell)}_i h^{(\ell)}_j = \rho \cdot \mathbf{1}_{\set{i=j}}$ and $\Ex g^{(\ell)}_i g^{(\ell)}_j = \Ex h^{(\ell)}_i h^{(\ell)}_j = \mathbf{1}_{\set{i=j}}$. However the correlation pattern obtained in $\calG$ is not exactly this. But note that each $\rho_i \le \rho$. Thus, given $(g^{(\ell)}_i, h^{(\ell)}_i)$ with correlation $\rho$, we can simply apply $U_{\rho_i/\rho}$ operator on $h^{(\ell)}_i$ to bring down the correlation from $\rho$ to $\rho_i$. Applying this appropriate operation for every $\ell \in [n]$ and $i \in [q]$, we get our desired $\wtilde{A} : \bbR^{n \cdot (q-1)} \to \bbR^k$ and $\wtilde{B} : \bbR^{n \cdot (q-1)} \to \bbR^k$.

The reverse part of the theorem follows similarly as well. Here, we get polynomials $A$ and $B$, that depend only on $\set{\calX^{(\ell)}_1}$ and $\set{\calY^{(\ell)}_1}$ respectively.
\end{proof}


\section{Decidability of Non-Interactive Simulation} \label{sec:decidability}

In this section, we prove \Cref{thm:decidability} showing the decidability of the $\ANIS$ problem.

\begin{proof}[Proof of \Cref{thm:decidability}]
	If we were in the YES case of $\ANIS((\calZ \times \calZ, \mu), V, k, \eps)$, then we have that there exists an $N$ and functions $A : \calZ^N \to \Delta_k$ and $B : \calZ^N \to \Delta_k$, such that the distribution $\nu' = (A(\bx), B(\by))_{(\bx, \by) \sim \mu^{\otimes N}}$ is such that $\dTV(\nu', \nu) \le \eps$ for some $\nu \in V$. Using \Cref{thm:non-int-sim}, with parameter $\eps/3$, we get that there exists functions $\wtilde{A} : \calZ^{n_0} \to \Delta_k$ and $\wtilde{B} : \calZ^{n_0} \to \Delta_k$ such that the distribution $\nu'' = (\wtilde{A}(\bx), \wtilde{B}(\by))_{(\bx, \by) \sim \mu^{\otimes n_0}}$ is such that $\dTV(\nu'', \nu') \le \eps/3$. Hence, $\dTV(\nu'', \nu) \le 4\eps/3$ for some $\nu \in V$.
	
	In the NO case of $\ANIS((\calZ \times \calZ, \mu), V, k, \eps)$, we have that for all $N$, in particular for $N = n_0$, and for all functions $A : \calZ^{n_0} \to \Delta_k$ and $B : \calZ^{n_0} \to \Delta_k$ it holds that the distribution $\nu'' = (A(\bx), B(\by))_{(\bx, \by) \sim \mu^{\otimes n_0}}$ satisfies $\dTV(\nu'', \nu) > 2\eps$ for all $\nu \in V$.
	
	This naturally gives us a brute force algorithm: Analyze all possible functions $\wtilde{A} : \calZ^{n_0} \to \Delta_k$ and $\wtilde{B} : \calZ^{n_0} \to \Delta_k$ to check if there exist functions $\wtilde{A}$ and $\wtilde{B}$ with distribution $\nu'' = (\wtilde{A}(\bx), \wtilde{B}(\by))_{(\bx, \by) \sim \mu^{\otimes n_0}}$ satisfying $\dTV(\nu'', \nu) \le 4\eps/3$ for some $\nu \in V$.
	
	For purposes of our algorithm we can replace the range $\Delta_k$ by any $(\eps/3)$-cover $C$, that is, a set of discrete points in $\Delta_k$ such that any point in $\Delta_k$ is within an $\ell_1$ distance of $\eps/3$ from some point in $C$. Note that we could choose such a $C$ of size at most $(1/\eps)^{\wtilde{O}(k)}$. This ensures that if indeed such a desired $\wtilde{A}$ and $\wtilde{B}$ exist, then we will find functions $\wtilde{A}' : \calZ^{n_0} \to C$ and $\wtilde{B}' : \calZ^{n_0} \to C$ such that distribution $\nu''' = (\wtilde{A}'(\bx), \wtilde{B}'(\by))_{(\bx, \by) \sim \mu^{\otimes n_0}}$ satisfying $\dTV(\nu''', \nu) \le 5\eps/3 < 2\eps$ for some $\nu \in V$. In the YES case, we will find such functions, whereas in the NO case, $\wtilde{A}'$ and $\wtilde{B}'$ as above simply don't exist.
	
	The number of pair of functions $(\wtilde{A}, \wtilde{B})$ to brute force over is $|C|^{O(|\calZ|^{n_0})}$, which gives us an upper bound on the running time as
	\[\exp\exp\exp\inparen{\poly\inparen{k, \ \frac{1}{\eps}, \ \frac{1}{1-\rho_0}, \ \log\inparen{\frac{1}{\alpha}}}}\;.\]
\end{proof}

\end{document}